\newcommand{\mc}{\mathcal}
\newcommand{\e}{\mathrm{e}}
\newcommand{\E}{\mathcal{E}}
\newcommand{\R}{\mathbb{R}}
\newcommand{\N}{\mathbb{N}}
\newcommand{\argmax}[1]{\underset{#1}{\mathrm{argmax\,}}}
\newcommand{\mb}{\boldsymbol}
\newcommand{\se}{\text{ if }}
\newcommand{\ba}{\begin{array}}
\newcommand{\ea}{\end{array}}
\newcommand{\Z}{\mathbb{Z}}  
\newcommand{\G}{\mathcal{G}} % graph
\newcommand{\V}{\mathcal{V}} % node set
\newcommand{\1}{\boldsymbol{1}} % vector of ones
\newcommand{\ind}{\boldsymbol{1}} % indicator function
\newcommand{\equaref}[1]{(\ref{eq:#1})}
\newtheorem{corollary}{Corollary}
\newtheorem{theorem}{Theorem}
\newtheorem{lemma}{Lemma}
\newtheorem{assumption}{Assumption}
\newtheorem{proposition}{Proposition}
\newtheorem{definition}{Definition}
\newtheorem{remark}{Remark}
\newtheorem{example}{Example}
\newcommand{\sidebyside}[4]{
\begin{figure}[phtb]
\begin{minipage}[phtb]{6cm}
\includegraphics[width=0.98\textwidth]{#1.pdf}
\caption{#2\label{fig:#1}}
\end{minipage}
\hfill \hspace{0.1cm}
\begin{minipage}[phtb]{6cm}
\includegraphics[width=0.98\textwidth]{#3.pdf}
\caption{#4\label{fig:#3}}
\end{minipage}
\hfill
\vspace{-0mm}
\end{figure}
}
\date{}
\title{Asynchronous semi-anonymous dynamics over large-scale networks} 
\author{ C. Ravazzi, G. Como, M. Garetto, E. Leonardi, A. Tarable}
\begin{document}
\maketitle

\begin{abstract}
We analyze a class of stochastic processes, referred to as {\em asynchronous} and {\em semi-anonymous dynamics} (ASD), 
over directed labeled random networks. 
These processes are a natural tool to describe general best-response and noisy best-response dynamics in network games where each agent, at random times governed by  
 independent Poisson clocks, can choose among a {\em finite set of actions}.  The payoff is determined by  the relative popularity of different 
actions among neighbors, while being {\em independent of the specific identities of neighbors}.

Using a mean-field approach, we prove that, under certain conditions on the network and initial node configuration, 
the evolution of ASD can be approximated, in the limit
of large network sizes, by the solution of a system of non-linear ordinary differential equations. Our framework is very general and applies to a large class of 
graph ensembles for which the typical random graph locally behaves like a tree. 
In particular, we will focus on labeled configuration-model random graphs, a generalization of the traditional configuration model
which allows different classes of nodes to be mixed together in the network, permitting us, for example, to incorporate a community
structure in the system. Our analysis also applies to configuration-model graphs having a power-law degree distribution, an essential feature of many real systems.  
To demonstrate the power and flexibility of our framework, we consider several
examples of dynamics belonging to our class of stochastic processes.
Moreover, we illustrate by simulation the applicability of our analysis to realistic scenarios by 
running our example dynamics over a real social network graph.   
\end{abstract}

% REQUIRED
%\begin{keywords}
%Asynchronous semi-anonymous dynamics, Networks games, Best-response dynamics, Evolutionary game theory, 
%Linear threshold models, Local Weak Convergence. % Noisy best-response
%\end{keywords}

% !TEX root = ~/DB-Poli/Work/Threshold_models/Main/ASD-COVID-19.tex
\section{Introduction}

Many complex systems arising in different domains exhibit cascading phenomena that spread through networks of local interactions. Examples of such cascades include, but are not limited to, infrastructure failures \cite{Wang2013}, adoption of innovations, conventions and technologies \cite{UBHD2028615,Toole2012,Montanari20196}, diffusion of beliefs, opinions, fake news \cite{Petter2018}, memes, and the like \cite{Vosoughi1146}. These phenomena can have profound effects on politics \cite{10.5555/1805895}, social norms \cite{doi:10.1111/mafi.12051}, financial networks \cite{Granovetter78}, marketing campaigns \cite{Jackson2008}.

The standard mathematical approach to modeling cascading processes is to consider a graph (finite or infinite) in which nodes stand for individuals that can be in one of several (discrete or continuous) states, and edges (directed or undirected, possibly weighted) represent interactions with neighboring nodes. Individuals are supposed to repeatedly update their state over (discrete or continuous) time, depending on the current state of their neighbors \cite{schelling1978micromotives,BLUME1993387,RePEc:ecm:emetrp:v:61:y:1993:i:5:p:1047-71}. 

Simple epidemic models in which nodes can change their state as consequence of a single contact with a neighboring node \cite{Newman02} turn out to be too simplistic to describe systems in which individuals tend to react to the joint states of their neighbors. To represent such combined effect, one of the most commonly used models in the literature is the linear threshold model, originally introduced by Granovetter \cite{Granovetter78} and widely investigated in several variants \cite{Watts02,Zhong17}. The general idea behind such models is to assume that a node adopts a given state if the fraction of neighbors (possibly weighted by edges) currently adopting that state exceeds a certain threshold.          
More in general, researchers have considered so-called networked coordinated games, in which nodes adopt the best-response (according to some payoff matrix) in reaction to the strategies adopted by neighbors \cite{Broere2017}.     

Some fundamental distinctions in this wide class of models are the following. In progressive processes, state transitions are irreversible: once a node joins a given state, it keeps such state indefinitely, irrespective of what happens to neighbors \cite{Kempe03,RePEc:eee:phsmap:v:374:y:2007:i:1:p:449-456,6160999,Lim16,doi:10.1111/mafi.12051}. In non-progressive processes transitions are, instead, reversible, since nodes still remain under the influence of their neighbors after the adoption of a given state \cite{6426073,Morris2000Contagion,RCF18}. Another crucial distinction concerns the update rule of the nodes: does the future state of a node also depend on the current state of the node, in addition to the states of neighbors, or is it uniquely determined by the neighbors? Indeed the above distinctions, combined to the nature of edges (i.e., directed or undirected), lead to models of widely different nature and analytical tractability.

In this paper we analyze a class of cascading processes, referred to as Asynchronous and Semi-anonymous Dynamics (ASD), with the following characteristics: i) nodes update their (discrete) state over continuous time, according to independent Poisson clocks, 
following an arbitrary rule that depends on the number (or relative fraction) of neighbors in each possible state, but not on the specific identities of neighbors (which are order-independent); ii) edges are directed; iii) state transitions are reversible (non-progressive model).

In this paper we extends previous work in \cite{RCF18}, where authors analyze
a two-state, deterministic linear threshold model with synchronous node update, adopting
a mean-field approach. Here, we seek to understand how this 
approach can be pushed to its greatest generality, extending the class of 
networks and underlying node dynamics to which it can be applied.  
We mention, however, that in \cite{RCF18} authors consider also the progressive variant
of their model, while here we focus only on the non-progressive model. 
The interested reader can refer to \cite{DBLP:conf/sigecom/ErolPT20} for new methods to approximate the dynamics on large networks sampled from a graphon. Although this framework is flexible, it works well on dense network formation models, such as stochastic block models, but not on sparse networks.

One stream of related work \cite{Blume95, Morris2000Contagion, Lelarge12} analyzes the possible equilibria of 
a binary-decision game in the case of undirected graphs and  synchronous update.
Similarly to these works, we study a game where players'payoffs depend on the actions taken by their neighbors in the network 
but not on the specific identities of these neighbors. 
   
Another huge stream of related work is concerned with the algorithmic aspects of influence maximization \cite{Kempe03,Lynn2016}, where
the goal is to find the initial node configuration that maximizes the final size of the cascade. In contrast to such stream of work,
here we assume the initial node configuration to be randomly selected according to a given node statistics.    

%We conclude this section with some notational conventions that we will use throughout this paper.

\subsection{Overview of main results}In our analysis we will make use of a mean-field approximation. This is a powerful and standard tool 
for studying complex systems that are constituted by a large number of interacting units \cite{Bordenave2010APS}. 
The core idea in the mean-field theory is to describe a complex stochastic system 
using a simpler deterministic dynamical system \cite{Kurtz70,Bordenave2010APS}, assuming that each unit interacts 
with an average of other units. When the states can take values in a finite or 
countable space, the evolution of the fraction of units in a given state can be described by
the solution of a system of non-linear ordinary differential equations.
The main goal is then to quantify and estimate the accuracy of this approximation. 
We anticipate that our results guarantee that the discrepancy between the behavior of the actual system
and the solution given by the approximation can be made small when the network size grows and the time horizon scales only logarithmically with the network size. 

In Section \ref{sec:bp} we provide the complete analysis of ASD over 
a labeled branching process, i.e., an infinite ensemble of labeled graphs with a rooted 
tree structure. In this case it can be easily shown that the evolution of the fraction of nodes 
in a given state indeed corresponds to the solution of some ordinary differential equations (see Proposition \ref{prop:ODE}). 

In Section \ref{sec:rn} we turn our attention to general graph ensembles. 
The core message is the following: if the graph exhibits a local tree structure, then the analysis 
on a suitably chosen labeled branching process provides a 
good approximation of the expected fraction of nodes in a given state (see Proposition \ref{thm:ASD_over_gen_net}). A property of Local Weak Convergence is the key feature that provides this link and formalizes the idea that, 
for large $n$, the local structure of the graph near a vertex chosen uniformly at random 
is approximately a branching process.
Finally, the analysis of the concentration around the expectation allows us to derive 
the accuracy of the above approximation.

We will then focus in Section \ref{sec:labconfmod}
on the labeled configuration model, a general mix of heterogeneous nodes 
with class-specific node statistics, which includes the classic configuration model (CM) 
as special case. We will explore conditions for Local Weak Convergence (see Theorem \ref{thm:top_res_modgraphs}) and the concentration of the ASD evolution around the expectation (see Theorem \ref{thm:concentration_modgraphs}).
In Section \ref{sec:asdconfig} we will show that the sequence of node degrees is allowed to follow a 
power-law distribution scaling with the network size. 
This is particularly important for applications to social network graphs.

As a second example, in Section \ref{section:CBM2} we will consider a labeled configuration model with a community structure,
which is another fundamental feature found in many real systems.
Indeed, by considering as label of a node its membership to a given community,
we can represent graphs with  a general distribution of in/out degrees
among nodes belonging to the same or different communities. This allows us
to describe, for example, \lq\lq assortative"  graphs, in which
intra-community edges are denser than 
inter-community edges.

In Section \ref{sec:mf} and \ref{sec:experiments} we present numerical results of ASD previously introduced in large but finite networks, considering both synthetically generated graphs and real-world social networks. More precisely, we compare the solutions of the differential equations derived through mean field approximation with results obtained by running Monte-Carlo simulations.

\subsection{General notation}\label{sec1} 
Throughout this paper, we use the following notational conventions. 
Let $\mathbb{N}$, $\mathbb{Z}_+$, $\mathbb{ R}$ be the set of natural, non-negative integers and real numbers, respectively. Given $n\in\N$ we use the notation $[n]=\{1,\ldots, n\}$.The symbol $|\cdot|$ denotes the absolute value if applied to a scalar value and the cardinality if applied to a set. We denote the indicator function of set $A$ with the notation $\ind_A$.
Given a finite set $\mathcal{V}$, $\R^{\mathcal{V}}$ denotes the space of real vectors with components labelled  by elements of $\mathcal{V}$.
If $x\in\R^n$, we denote the $\ell$-th entry by $x_{\ell}$ and $x_{[\ell]}$ is the projection of $x$ on the sub-space generated by  the first $\ell$ elements, i.e. $x_{[\ell]}=(x_1,x_2,\ldots, x_{\ell})\in\R^{\ell}$. 

This paper makes frequent use of Landau symbols.  The notation ``$f(N) = O(g(N))$ when $N \rightarrow\infty$'' means that positive constants $c$ and $N_0$ exist, so that $f(N) \leq cg(N)$ for all $N > N_0$. The expression ``$f(N) = o(g(N))$ when  $N \rightarrow\infty$'' means that $\lim_{N\rightarrow \infty}\frac{f(N)}{g(N)}=0$.

A labeled directed multigraph is a $6$-ple $(\V,\E,\mathcal A,\lambda,\sigma,\tau)$,  where $\mc V$, $\mathcal E$, and $\mc A$ are the sets of nodes, links, and labels,  respectively, all finite; $\lambda:\mc V\to\mc A$ is the map giving the label $\lambda(v)$ of a node $v\in\mc V$; and $\sigma,\tau:\mc E\to\mc V$ are the maps giving the tail node $\sigma(e)$ and head node $\tau(e)$ of a link $e\in\mc E$ so that $e$ is directed from $\sigma(e)$ to $\tau(e)$. 
 The set of in-neighbors and out-neighbors of a node $v\in\V$ are defined as 
$\mathcal{N}^{-}_v=\{w\in\V\setminus\{v\}: (w,v)\in\E\}$ and $\mathcal{N}^{+}_v=\{w\in\V\setminus\{v\}: (v,w)\in\E\}$, respectively, and the corresponding in-degree and out-degree as $d_v=|\mathcal{N}^{-}_v|$ and $k_v=|\mathcal{N}^{+}_v|$. We define its out-degree vector  $\textbf{k}_v\in\mathbb{Z}_+^{\mathcal{A}}$ as the vector whose component $a\in \mathcal{A}$ 
represents the number of out-neighbors of $v$  belonging to class  
$a$. Similarly, we define for node $v$ the in-degree vector  $\textbf{d}_v\in\mathbb{Z}_+^{\mathcal{A}}$. 

A path from a vertex $u\in\mathcal{V}$ to a vertex $v\in\mathcal{V}$ (i.e. a path  $u\to v$) is a  finite  sequence of edges $(u_i, v_i)_{1\le i\le L}$ with $u_1=u$, $v_L=v$, $v_i=u_{i+1}$. If there is at least a path from $u$ to $v$, we say that $u$ is connected to $v$, and the graph distance from $u$ to $v$ is then defined as the minimum length of a path from $u$ to $v$.  If all (ordered) vertex pairs  are connected, the graph $\mathcal{G}$ is said strongly connected.  If, instead, for any pair $(u,v)$ either a path $u\to v$ or $v\to u$ exists, we say that the graph is weakly connected.
 We define a simple path as a path along which all vertices are distinct. A directed tree is a weakly connected graph  in which no more than one
  path exists between every  pair of vertices $(u,v)$.

\section{Asynchronous semi-anonymous dynamics}
\subsection{Mathematical model}
Let us consider a finite population of $n$ agents interacting in a connected network, which we map onto the nodes of a labeled directed multigraph $\mathcal{G}=(\mathcal{V}, \mathcal{E},\mc A,\sigma,\tau,\lambda)$, whereby a link $e\in\mc E$ represents a direct influence of its head node $\tau(e)$ on its tail node $\sigma(e)$ and each class of nodes $\mc V_a=\{v\in\mc V:\,\lambda(v)=a\}$, for $a\in\mc A$ may have a different behavior, thus allowing to account for heterogeneity.

Let each agent $v\in\mc V$ be endowed with a time-varying state $Z_v(t)$ taking values from a finite set $\mc X$  for every $t\geq0$. 
We shall denote the vector of all agents' states by $\textbf{Z}(t)= (Z_v(t))_{v \in\V }$ and refer to it as the network configuration at time $t$. 
We shall consider {\em asynchronous} and {\em semi-anonymous dynamics} (ASD) where the state of each agent is updated at random activation times by choosing a new state in response to the state of its out-neighbors, according to a conditional probability distribution
that is invariant with respect to permutations of such out-neighbors. 

Formally, let $\mb Z(t)$ be a continuous-time Markov chain with finite state space equal to the set of configurations $\mc Z=\mc X^{\mc V}$ and the structure illustrated below.  

\begin{definition}[Asynchronous semi-anonymous dynamics] \label{def:ASD}
Let $\mc P=\{\theta\in\R_+^{\mc X}:\,\1'\theta=1\}$ be the simplex of probability vectors over $\mc X$.  For every label  $a\in\mc A$, let 
\begin{equation}\label{eq:Theta}\Theta^{(a)}:\Z_+^{\mc A\times\mc X}\to\mc P\end{equation}
be a stochastic kernel, and, for every node $v\in\mc V$, let 
$$\Upsilon^{\mc G}_v:\mc Z\to\Z_+^{\mc A\times\mc X}$$
be defined by 
$$\left(\Upsilon^{\mc G}_v(\mb z)\right)_{ax}=\left|\left\{e\in\mc E:\,\sigma(e)=v,\lambda(\tau(e))=a,z_{\tau(e)}=x\right\}\right|\,,\quad a\in\mc A,\,x\in\mc X\,.$$
Then, $\mb Z(t)$ evolves as a continuous-time Markov chain on $\mc Z$ with transition rates: 
$$\Lambda_{\mb z,\mb z^+}=
\left\{\ba{lcl}
\textcolor{black}{\gamma}\, \Theta_{z^+_v}^{(\lambda(v))}(\Upsilon^{\mc G}_v(\mb z))&\se& \mb z\text{ and }\mb z^+\text{ differ in the }v\text{-th entry only}\\
0&\se&\mb z\text{ and }\mb z^+\text{ differ in more than one entry}\ea\right.$$
where $\gamma$ denotes the Poisson rate at which node $v$ updates its state.  
\end{definition}

The formulation above in Definition \ref{def:ASD} is very general. Some remarks are in order.
\begin{remark}
Classes can describe heterogeneous nodes in a variety of ways. In our examples, we will consider the following three cases: 
i) classes describing different update rules of the nodes;
ii) classes describing nodes with different 
degree distributions;
iii) classes describing node membership to different \lq communities'.
In the most general scenario, a class might represent nodes belonging to a 
specific community, with a given update rule and a particular 
degree distribution.
\end{remark}

\begin{remark}
We emphasize that the new state of an agent, when it gets updated, does not need
to be a deterministic function of its neighborhood. Indeed, we explicitly allow for a stochastic
rule of adopting a certain state.
This allows us to model noisy or mixed-strategy best-response dynamics 
in networked games. 
\end{remark}

Our main interest in this paper is to track the evolution of some macroscopic features, e.g., the evolution of the fraction of nodes belonging to a specific class that are in a given state at time $t$. 
We will demonstrate that a mean-field approximation can yield insight into this analysis for a large class of random networks.

\subsection{Examples of ASD dynamics}\label{Examples_ASD}
To clarify the general formulation introduced above, we provide three examples of ASD dynamics 
that will later be studied in more details in our numerical illustration section (see Section \ref{sec:experiments}).
Since in our examples the node update rule depends only on the total number of neighbors in a given state,
and not on their label, we simplify the general notation introduced before and define:
\begin{equation} \label{eq:xix}
\xi_x(\mb z) = \sum_{a \in \mc A} \left(\Upsilon^{\mc G}_v(\mb z)\right)_{ax} \,,\quad x\in\mc X\,.
\end{equation}
The explicit dependence on the Markov-chain state $\mb z$ will be omitted in the following, whenever possible.
 % color red 
 
\subsubsection{Ternary Linear Threshold Model (TLTM)}\label{ex:LTM}
Let $\mc X=\{-1,0,1\}$ be the set of admissible states and let $\G=(\V,\E,\mathcal{A},\lambda, \sigma,\tau)$ be a labeled multigraph. The label 
${a}_v = (a^+_v,a^-_v)$ of node $v$ determines two given 
(in general, asymmetric) thresholds $a^+_v$, $a^-_v$,
which trigger the transition to state 1 and $-1$, respectively. 
Specifically, when activated, the update of  node $v$  is given by
$$
Z_{v}(t)=\begin{cases}
1&\text{if }\sum_{j\in\mathcal{N}_v^{+}}Z_j(t^{-}) \geq a^+_v\\
0&\text{if }\sum_{j\in\mathcal{N}_v^{+}}Z_j(t^{-}) \in (-a_v^-,a_v^+)\\
-1&\text{if }\sum_{j\in\mathcal{N}_v^{+}}Z_j(t^{-})\leq -a^-_v
\end{cases}
$$
where $Z_v(t^{-}) = \lim_{x\uparrow t}Z_v(x).$
The above rule can be encoded in our general formulation by considering the functions: 
\begin{align*}
\Theta^{(a)}_1(\xi_1,\xi_{-1},\xi_0)&=\boldsymbol{1}_{\{\xi_1-\xi_{-1}\geq a^+\}}\\
\Theta^{(a)}_0(\xi_1,\xi_{-1},\xi_0)&=\boldsymbol{1}_{\{\xi_1-\xi_{-1}\in(-a^-,a^+)\}}\\
\Theta^{(a)}_{-1}(\xi_1,\xi_{-1},\xi_0)&=\boldsymbol{1}_{\{\xi_1-\xi_{-1}\leq- a^-\}}
\end{align*}
which depend only on the numbers $\xi_1$, $\xi_{-1}$ and $\xi_0$ of out-neighbors in state 1, $-1$ and $0$, respectively.

\subsubsection{Binary Response with Coordinating and Anti-coordinating agents (BRCA)}
Inspired by the model in \cite{Ramazi2016NetworksOC}, we consider a network game where each agent can choose between two actions in $\mc X=\{-1,1\}$. 
The network consists of two classes of nodes, i.e. $\mathcal{A}=\{+,-\}$ and $\V=\mathcal{V_+}\cup\V_-$. We assume that ${\V_+}$ and $\V_-$ represent agents 
following the majority (i.e., coordinating) or the minority (i.e., anti-coordinating)
of their out-neighbors, respectively. 

Specifically, an agent is updated according to the following rule: if $v\in\V_+$ then
$$
Z_{v}(t)=\begin{cases}
1&\text{if }\sum_{j\in\mathcal{N}_v^{+}}Z_j(t^{-})> 0\\
-1&\text{if }\sum_{j\in\mathcal{N}_v^{+}}Z_j(t^{-})<0\\
\pm1&\text{if }\sum_{j\in\mathcal{N}_v^{+}}Z_j(t^{-})=0
\end{cases}$$
and if $v\in\V_-$ then
$$ 
Z_{v}(t)=\begin{cases}
1&\text{if }\sum_{j\in\mathcal{N}_v^{+}}Z_j(t^{-})< 0\\
-1&\text{if }\sum_{j\in\mathcal{N}_v^{+}}Z_j(t^{-})>0\\
\pm1&\text{if }\sum_{j\in\mathcal{N}_v^{+}}Z_j(t^{-})=0.
\end{cases}
$$
In essence, when a node is updated, it counts the number of neighbors in state $-1$ and $1$, and 
adopts the state of the majority of its neighbors if $v\in\V_+$, or it adopts the state of the 
minority of its neighbors if $v\in\V_-$. In the case of a tie, it chooses
uniformly at random between states 1 and $-1$. 

The above rule corresponds 
in our general framework to the functions
\begin{align*}
\Theta^{(+)}_1(\xi_1, \xi_{-1})&=\boldsymbol{1}_{\{\xi_1>\xi_{-1} \}} + \frac1{2} \boldsymbol{1}_{\{\xi_1=\xi_{-1} \}} \\
\Theta^{(+)}_{-1}(\xi_1, \xi_{-1} )&= 1 - \Theta_1^{(+)}(\xi_1,\xi_{-1}) \\
\Theta^{(-)}_1(\xi_1, \xi_{-1})&=\boldsymbol{1}_{\{\xi_1< \xi_{-1}\}} + \frac1{2} \boldsymbol{1}_{\{\xi_1=\xi_{-1} \}}\\
\Theta^{(-)}_{-1}(\xi_1, \xi_{-1})&=1 - \Theta_1^{(-)}(\xi_1,\xi_{-1})
\end{align*}
which depend only on the number $\xi_1$ and $\xi_{-1}$ of neighboors in state 1 and $-1$ respectively.

\subsubsection{Evolutionary Roshambo Game (ERG)}
In this example, the best response of an agent follows the 
same rationale of the popular rock-paper-scissors game. 
Specifically, we assume that nodes have three possible states, i.e. $\mc X=\{R, P, S\}$. 
When an agent is updated, it performs the following computation:
\begin{itemize}
\item[i)] for each out-neighbor, it determines its best pairwise response
according to the two-player game 
\begin{center}
\begin{tabular}{c| c c c }
$\varphi$& R&P&S\\
\hline
R& $b$ & 0 & $c $\\ 
P& $c$ & $b$ & 0 \\  
S& 0 & $c$ & $b    $
\end{tabular}
\end{center}
with $c>b$, i.e. R wins over S, S wins over P, P wins over R. 
In Section \ref{sec:experiments} we will consider the case with $b=c/2$;
\item[ii)] the new state of the agent is selected so as to maximize 
the sum of payoffs provided by pairwise interactions: 
$$Z_v(t)=\argmax{\omega\in\mc X}\sum_{j\in\mathcal{N}_v^{+}}\varphi(\omega,Z_j(t^{-})).$$
When the maximizing set is not unique, the new state is selected uniformly at random among the 
maximizing alternatives.
\end{itemize}
In this case we have (ignoring ties for simplicity)
\begin{align*}
\Theta^{(R)}(\xi_R,\xi_{P},\xi_{S})&=\boldsymbol{1}_{\{b\xi_R+c\xi_S> \max\{c\xi_R+b\xi_{P},c\xi_P+b\xi_S\}\}} \\ 
\Theta^{(P)}(\xi_R,\xi_{P},\xi_{S})&=\boldsymbol{1}_{\{c\xi_R+b\xi_{P}> \max\{b\xi_R+c\xi_S,c\xi_P+b\xi_S\}\}} \\
\Theta^{(S)}(\xi_R,\xi_{P},\xi_{S})&=\boldsymbol{1}_{\{c\xi_P+b\xi_S> \max\{b\xi_R+c\xi_S,c\xi_R+b\xi_{P}\}\}}.
\end{align*}

\section{ASD on the labeled branching process}\label{sec:bp}
 
In this section we consider a labeled branching process, i.e. a particular ensemble of infinite labeled directed graphs with rooted tree structure, and then analyze ASD on it. 
As already said, the reason why we introduce this special graph is that the analysis of  ASD on it provides fundamental hints for the analysis
of ASD  on a general locally tree-like ensemble of graphs. 

More precisely, we will consider a labeled branching process completely described by probabilities distributions 
$p_{\textbf{k},a}= p_{\textbf{k}|a} p_a$ and $q_{\textbf{k}|a}^{b}$. 
The first is the joint probability distribution that characterizes the root, i.e. the probability that the root has label $a\in \mathcal{A}$ and out-degree vector $\textbf{k}\in\mathbb{Z}_+^{\mathcal{A}}$. We recall that the component $k_b$ represents 
the number of out-neighbors belonging to class $b\in \mathcal{A}$.
The latter is the vectorial out-degree distribution  for a non-root node with label $a$, whose parent has label $b$.
In next sections we will show that probability distributions $p_{\textbf{k},a}$ and $q_{\textbf{k}|a}^{b}$ specifying the
\lq\lq approximating"  labeled branching process, will be chosen so to exactly  match  statistics' of the  network under investigation. For this reason, we inform the reader that  the same notation will be adopted 
to denote  statistics on both the network and the  associated labelled branching process.

\subsection{Labeled branching process}\label{subsec:tau}
Recall that in our notation $\textbf{k}_v\in\mathbb{Z}_+^{\mathcal{A}}$ denotes the 
the out-degree vector of vertex $v$, whose component $a\in \mathcal{A}$ represents 
the number of out-neighbors of $v$ belonging to class  $a\in \mathcal{A}$.

We will call labeled branching process $\mathcal{T}$ with node 
set $\V = \{v_0, v_1, . . .\}$ and label set $\mathcal{A}$ the rooted tree 
built through the following procedure: 

\begin{itemize}
\item Step $0$: Start with a root node $v_0$ and assign to it a random label $A_0\in\mathcal{A}$ and a random out-degree vector $\textbf{K}^{(0)}\in\mathbb{Z}_+^{\mathcal{A}}$ with joint probability distribution $$\mathbb{P}(A_0=a,\textbf{K}^{(0)}=\textbf{k})=p_{\textbf{k},a}\,.$$  
For every $a\in \mathcal{A}$, add $\textbf{K}^{(0)}_a$  out-edges with label $(A_0,a)$ to the root  $v_0$ and declare all these edges active. Note that  an edge label is defined as  the ordered pair of the labels associated to adjacent nodes.

\end{itemize}
Then, for $h = 1,2,\dots$
\begin{itemize}
\item Step $h$: If there are no active edges, stop. Otherwise, take any active edge $e$, let $(a,b)$ be its label  and declare the edge inactive. 
Assign to  edge $e$ a head node $\tau(e)=v_h$ with label $\lambda(v_h)=b$ and generate a random vector  $\textbf{K}^{(h)}=\textbf{k}$ in $\mathbb{Z}_+^{\mathcal{A}}$ with conditional probability distribution 
$q_{\textbf{k}|b}^a$,  then for every label $c\in\mc A$ add $\textbf{K}^{(h)}_{c}$ new active outgoing edges to $v_h$ with label $(b,c)$.
\end{itemize}

\subsection{Ordinary differential equations of ASD}
Let us now consider the ASD process over the graph $\mathcal{T}$ built above.
In the following matrix notation, vectors are meant to be column vectors, unless otherwise specified.

\begin{proposition}\label{prop:ODE}
Let $\textbf{Z}(t)$, for $t \geq 0$, be the state vector of the ASD on $\mathcal{T}$. 
Then, for every fixed time $t \geq 0$, the following facts hold 
\begin{enumerate}
\item For every $i\in\V$, the states $\{Z_{\tau(e)}(t)|\,e\in\mc E\,:\,\sigma(e)=i\}$ of the offsprings $j$ of $i$  in $\mathcal{T}$ are independent and identically distributed random variables with $\zeta_{\omega|a,b}(t)=\mathbb{P}(Z_j(t)=\omega \mid A_j  =a, A_i = b)$, $\omega\in\mc X$, $a, b \in \mathcal{A}$ satisfying
\begin{equation}\label{eq:zeta}
\frac{\mathrm{d}\zeta_{\omega | a,b}(t)}{\mathrm{d}t} = \gamma \left( \phi_{\omega | a,b}(\boldsymbol{\zeta}(t))-\zeta_{\omega | a,b}(t) \right),
\end{equation}
where 
\begin{gather*}
\phi_{\omega | a,b}(\boldsymbol{\zeta})= \sum_{\textbf{k}\in\mathbb{Z}_+^{\mathcal{A}}} \varphi^{(\textbf{k},a)}_{\omega}(\boldsymbol{\zeta} )q_{\textbf{k}|a}^{b}
\end{gather*}
and 

$$\varphi^{(\textbf{k},a)}_{\omega}(\boldsymbol{\zeta}) = 
\sum_{\substack{\boldsymbol{\xi}\in\Z_+^{\mc A\times\mc X}:\\
\boldsymbol{\xi}\1={\color{black}\textbf{k}}}} \Theta_\omega^{(a)}(\boldsymbol{\xi})  {{\color{black}\textbf{k}} \choose \boldsymbol{\xi} } \prod_{c \in \mathcal{A}}\prod_{g\in\mc X}[\zeta_{g | c,a}]^{\xi_{cg}}.$$

where  
 $$ {{\color{black}\textbf{k}} \choose \boldsymbol{\xi} }= \prod_c  {k_c \choose \boldsymbol{\xi}_c }, $$ 
 $\boldsymbol{\xi}_c$ is the c-th row of matrix $\boldsymbol{\xi}$ and
 $ {\xi_{cg}}$ denotes the $(c,g)$-th  element  of  matrix $\boldsymbol{\xi}$.
\item The state $Z_{v_0}(t)$ of the root node $v_0$ is a random variable with $y_{\omega|a}(t)=\mathbb{P}(Z_{v_0}=\omega \mid A_0  =a)$ satisfying
\begin{equation}\label{eq:zeta2}
\frac{\mathrm{d}y_{\omega|a}(t)}{\mathrm{d}t}=\gamma \left(\psi_{\omega|a}(\boldsymbol{\zeta}(t))-y_{\omega|a}(t) \right),
\end{equation}
with \begin{gather*}
\psi_{\omega|a}(\boldsymbol{\zeta})=  \sum_{\textbf{k}\in \mathbb{Z}_+^{\mathcal{A}}} \varphi^{(\textbf{k},a)}_{\omega}(\boldsymbol{\zeta})p_{\textbf{k}\mid a} 
\end{gather*}
\end{enumerate}
\end{proposition}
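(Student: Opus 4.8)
The plan is to prove the two ODE systems by a combination of a structural (distributional) argument and an infinitesimal-generator computation. First I would establish the claimed independence and the identical-distribution structure. Fix a vertex $i$ in $\mathcal{T}$ with label $b$. By the recursive construction of the labeled branching process in Section~\ref{subsec:tau}, the subtrees rooted at the distinct out-neighbors $j$ of $i$ are conditionally independent given the labels, and each such subtree rooted at $j$ with $\lambda(j)=a$ and parent label $b$ is distributed exactly as a branching process with offspring law $q^{b}_{\mathbf{k}|a}$ at the root and $q^{\cdot}_{\cdot|\cdot}$ thereafter. Crucially, the ASD dynamics are \emph{local and directed}: the transition rate of node $v$ depends only on the states of its out-neighbors, hence on the states of vertices in the subtree hanging below $v$. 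Therefore the processes $(Z_w(s))_{s\le t}$ for $w$ ranging over the distinct subtrees below $i$ are driven by disjoint sets of Poisson clocks and disjoint collections of i.i.d. subtree structures, so they are independent for every $t$; and two out-neighbors $j,j'$ of $i$ with the same label $a$ (and common parent label $b$) have identically distributed subtrees and clocks, hence $Z_j(t)\stackrel{d}{=}Z_{j'}(t)$. This is the step I expect to require the most care: one must argue that conditioning on the label $A_j=a$ and parent label $A_i=b$ is exactly the right sigma-field, and that no information "flows upward" through the tree to couple the offsprings — this is precisely where directedness of the edges and the tree (loop-free) structure are used, and it also needs a short argument that the relevant quantities are well-defined despite $\mathcal{T}$ being infinite (each $Z_w(t)$ depends only on finitely many clocks up to time $t$ almost surely, by a standard domination of the activity by a pure-birth/Yule-type process).

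Second, granting this structure, I would derive \eqref{eq:zeta} via the Kolmogorov forward (master) equation for the marginal law of a single offspring state. Let $j$ be an out-neighbor of $i$ with $\lambda(j)=a$, $\lambda(i)=b$, and write $\zeta_{\omega|a,b}(t)=\mathbb{P}(Z_j(t)=\omega)$ under this conditioning. Over an interval $[t,t+h]$, with probability $\gamma h + o(h)$ node $j$'s Poisson clock rings, and upon ringing it jumps to state $\omega$ with probability $\Theta^{(a)}_{\omega}\bigl(\Upsilon_j(\mathbf{Z}(t))\bigr)$, where $\Upsilon_j(\mathbf{Z}(t))$ is the $\mathcal{A}\times\mathcal{X}$ count matrix of $j$'s out-neighbors' states. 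Now condition on $j$'s out-degree vector $\mathbf{K}^{(j)}=\mathbf{k}$ (which occurs with probability $q^{b}_{\mathbf{k}|a}$): by the independence established in step~1, the states of $j$'s out-neighbors are independent, and an out-neighbor of class $c$ is in state $g$ with probability $\zeta_{g|c,a}(t)$ (note $j$ plays the role of the parent, with label $a$). Hence the count matrix $\Upsilon_j(\mathbf{Z}(t))$ is multinomially distributed: $\mathbb{P}(\Upsilon_j=\boldsymbol{\xi}) = {\mathbf{k}\choose\boldsymbol{\xi}}\prod_{c\in\mathcal{A}}\prod_{g\in\mathcal{X}}[\zeta_{g|c,a}(t)]^{\xi_{cg}}$ over matrices $\boldsymbol{\xi}$ with row sums $\mathbf{k}$. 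Averaging the post-jump law $\Theta^{(a)}_\omega(\boldsymbol{\xi})$ against this distribution yields exactly $\varphi^{(\mathbf{k},a)}_\omega(\boldsymbol{\zeta}(t))$, and then averaging over $\mathbf{k}\sim q^{b}_{\cdot|a}$ yields $\phi_{\omega|a,b}(\boldsymbol{\zeta}(t))$. Assembling the balance equation,
\begin{equation*}
\zeta_{\omega|a,b}(t+h) = (1-\gamma h)\,\zeta_{\omega|a,b}(t) + \gamma h\,\phi_{\omega|a,b}(\boldsymbol{\zeta}(t)) + o(h),
\end{equation*}
dividing by $h$ and letting $h\downarrow 0$ gives \eqref{eq:zeta}. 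One should remark that differentiability of $t\mapsto\zeta_{\omega|a,b}(t)$ is not assumed a priori but follows: the right-hand side is a (locally Lipschitz) function of $\boldsymbol{\zeta}(t)$, the map $t\mapsto\boldsymbol{\zeta}(t)$ is continuous because jump rates are bounded by $\gamma$, and the integrated form of the balance equation shows $\boldsymbol{\zeta}$ is an absolutely continuous solution; existence/uniqueness for the ODE then upgrades this to $C^1$.

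Third, the root equation \eqref{eq:zeta2} follows by the identical computation applied to $v_0$, the only change being that $v_0$ has no parent, so its out-degree vector is drawn from the root law $p_{\mathbf{k}|a}$ instead of $q^{b}_{\mathbf{k}|a}$, and its out-neighbors of class $c$ are in state $g$ with probability $\zeta_{g|c,a}(t)$ (here $a=A_0$ plays the parent role for its children). Averaging $\varphi^{(\mathbf{k},a)}_\omega(\boldsymbol{\zeta}(t))$ against $p_{\mathbf{k}|a}$ produces $\psi_{\omega|a}(\boldsymbol{\zeta}(t))$, and the same infinitesimal argument gives $\tfrac{\mathrm d}{\mathrm dt} y_{\omega|a}(t) = \gamma(\psi_{\omega|a}(\boldsymbol{\zeta}(t)) - y_{\omega|a}(t))$. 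Note the root ODE is driven by $\boldsymbol{\zeta}$ but does not feed back into it — consistent with the directed, downward-only dependence — so once \eqref{eq:zeta} is solved, \eqref{eq:zeta2} is a linear (inhomogeneous) scalar ODE in $y_{\omega|a}$. I would close by noting that the whole argument is made rigorous by truncating $\mathcal{T}$ at depth $L$, running the above on the finite truncation (where everything is a genuine finite-state Markov chain), and letting $L\to\infty$, using that up to any fixed time $t$ the state $Z_j(t)$ depends only on clocks and structure within a random but a.s. finite neighborhood of $j$.
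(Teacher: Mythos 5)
Your proposal is correct and follows essentially the same route as the paper: an infinitesimal balance equation over $[t,t+\Delta t]$ in which, conditionally on the out-degree vector, the offspring states are independent so the count matrix is multinomial, averaging $\Theta^{(a)}_\omega$ gives $\varphi^{(\mathbf{k},a)}_\omega$, and averaging over $\mathbf{k}\sim q^b_{\cdot|a}$ (resp. $p_{\cdot|a}$) gives $\phi_{\omega|a,b}$ (resp. $\psi_{\omega|a}$), after which one divides by $\Delta t$ and passes to the limit. The only difference is that you spell out the independence/well-posedness points (directedness, disjoint clocks, a.s. finite dependence neighborhoods, differentiability) that the paper simply asserts.
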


\begin{proof}
\begin{enumerate}
\item Let $v_0$ be the root of $\mathcal{T}$. Then, For every $i \in\V$, the states ${Z_j(t)}:(i,j)\in\E$ of the offsprings of $v_i$ in $\mathcal{T}$ are independent and identically distributed Bernoulli random variables. 
Define $\zeta_{\omega|a,b}(t)=\mathbb{P}[Z_j(t)=\omega \mid A_j  =a, A_i = b]$, $j\in\V\setminus\{v_0\}$, where $v_i$ is the father of $v_j$, we have
\begin{align*}
&\zeta_{\omega | a,b}(t+\Delta t)\\
&\ =\e^{-\gamma\Delta t}\zeta_{\omega| a,b}(t)+\left(1-\e^{-\gamma\Delta t}\right)\times\\
&\ \times \sum_{\textbf{k}\in {\mathbb{Z}_+^{\mathcal{A}}}} \mathbb{P}\left[Z_j(t)=\omega|K_j=\textbf{k},A_j=a, A_i = b\right]\mathbb{P}\left[K_j=\textbf{k}|A_j=a, A_i = b\right]
\\
&\ =\left({\gamma\Delta t}+o(\Delta t)\right)\sum_{\textbf{k}\in\mathbb{Z}_+^{{\mathcal{A}}}} \varphi_{\omega}^{(\textbf{k}, a)}(\boldsymbol{\zeta}(t))q_{\textbf{k}|a}^{b} +\left(1-{\gamma\Delta t}+o(\Delta t)\right)\zeta_{\omega |a,b }(t)+o(\Delta t)\\
&\ =\left({\gamma \Delta t}+o(\Delta t)\right)\phi_{\omega|a,b}(\boldsymbol{\zeta}(t))+\left(1-{\gamma \Delta t}+o(\Delta t)\right)\zeta_{\omega|a,b}(t),
\end{align*}
from which we conclude
\begin{align*}
\frac{\mathrm{d}\zeta_{\omega|a,b}(t)}{\mathrm{d}t}=\lim_{\Delta t\rightarrow0}\frac{\zeta_{\omega|a,b}(t+\Delta t)-\zeta_{\omega|a,b}(t)}{{\Delta t}}&=\gamma(\phi_{\omega|a,b}(\boldsymbol{\zeta}(t))-\zeta_{\omega|a,b}(t)).
\end{align*}

\item Define $y(t)=\mathbb{P}[Z_{v_0}(t)=\omega\mid A_0  =a]$, then with the same arguments we have
$$
\frac{\mathrm{d}y_{\omega|a}}{\mathrm{d}t}=\gamma(\psi_{\omega|a}(\boldsymbol{\zeta}(t))-y_{\omega|a}(t)).
$$
with $$\psi_{\omega|a}(\boldsymbol{\zeta}(t))=  \sum_{\textbf{k}\in\mathbb{Z}_+^{\mathcal{A}}} \varphi^{(\textbf{k},a)}_{\omega}(\boldsymbol{\zeta}(t))p_{\textbf{k}\mid a}.$$
\end{enumerate}

\end{proof}

\begin{remark}
Whenever labels of neighbor nodes are independent, $p_{\textbf{k}\mid a}$ and 
$q_{\textbf{k}\mid a}^{b}$ depend on $a,b$ only through $k = \sum_i k_i$, and things become simpler. Indeed, if we define $\zeta_{\omega}(t)=\mathbb{P}[Z_j(t)=\omega ] = \sum_{a,b \in \mathcal{A}} \zeta_{\omega|a,b}(t) p_a p_{b}$, then, it can be easily shown that, similarly to \eqref{eq:zeta}, we can derive an ODE for $ \zeta_{\omega}(t)$ in the form
\begin{equation}\label{eq:zeta_bis}
\frac{\mathrm{d}\zeta_{\omega }(t)}{\mathrm{d}t} = \gamma \left(\phi_{\omega}(\boldsymbol{\zeta}(t))-\zeta_{\omega }(t) \right),
\end{equation}
where
\begin{gather*}
\phi_{\omega}(\boldsymbol{\zeta}(t))=\sum_{a,b\in\mathcal{A}}\sum_{k\in\mathbb{Z}_+} \varphi^{(k,a)}_{\omega}(\boldsymbol{\zeta}(t))q_{k\mid a,b} p_a p_{b},\\
q_{k\mid a,b}=\sum_{\substack{\textbf{k}\in  \mathbb{Z}_+^{\mathcal{A}}:\\ {\color{black}\textbf{k}}^T\1=k}} q_{\textbf{k}|a}^{b},
\end{gather*}
and $$\varphi^{(k,a)}_{\omega}(\boldsymbol{\zeta}(t))=\sum_{\substack{\boldsymbol{\xi} \in  \mathbb{Z}_+^{\mathcal{X}} :\\ 
\boldsymbol{\xi}^T\1=k}}
\Theta_\omega^{(a)}(\boldsymbol{\xi}){k\choose \boldsymbol{\xi}}\prod_{g\in\mc X}[\zeta_{g}(t)]^{\xi_{g}}.$$

Analogously, defining $y_{\omega}(t)=\mathbb{P}[Z_0(t)=\omega ] = \sum_{a \in \mathcal{A}} y_{\omega|a}(t) p_a$, the ODE replacing \eqref{eq:zeta2} can be written as
\begin{equation}\label{eq:zeta2_bis}
\frac{\mathrm{d}y_{\omega}(t)}{\mathrm{d}t}=\gamma \left(\psi_{\omega}(\boldsymbol{\zeta}(t))-y_{\omega}(t) \right),
\end{equation}
where
\begin{gather*}
\psi_{\omega}(\boldsymbol{\zeta}(t))=\sum_{a\in\mathcal{A}} \sum_{k\in\mathbb{Z}_+}\varphi^{(k,a)}_{\omega}(\boldsymbol{\zeta}(t))p_{k|a}p_a,\\ 
p_{k\mid a}=\sum_{\substack{\textbf{k}\in  \mathbb{Z}_+^{\mathcal{A}}:\\ {\color{black}\textbf{k}}^T\1=k}}p_{\textbf{k}|a}.
\end{gather*}
\end{remark}

\begin{remark}
The above  analysis of  ASD over the ensemble $\mathcal{T}$ can be easily extended to the case in which the rate of activation of a vertex depends on  its label   ${ a} \in\mathcal{A}$. Without loss of generality we will assume $\gamma=1$ in the following.

\end{remark}%From Introduction to ASD over labeled branching process
% !TEX root = ~/DB-Poli/Work/Threshold_models/Main/ASD-COVID-19.tex
\section{ASD on labeled random networks}\label{sec:rn}
In this section we consider the evolution of ASD process over a multigraph $\mathcal{G}$ taken from a general ensemble of labeled directed graphs $\mathfrak{E}^{(n)}$ of size $n$. In particular, we show that, under certain conditions on the ensemble and on the initial node configuration, the ASD process over    
$\mathcal{G}$ can be well approximated by the same process over a labeled 
branching process $\mathcal{T}$.
The ensemble $\mathfrak{E}^{(n)}$ is described by the \lq node statistics' 
$p_{\textbf{d},\textbf{k},a,s}$, which provides the probability that a node picked at random has in-degree vector $\textbf{d}$, out-degree vector $\textbf{k}$, 
label  $a\in\mathcal{A}$ and initial state $s \in \mc X$. We shall assume that $p_{\textbf{d},\textbf{k},a,s}$ factorizes as
 $$p_{\textbf{d},\textbf{k},a,s}=p_{\textbf{d},\textbf{k},a}p_{s|a}.$$
The above node statistics clearly provides all information
needed to compute any marginal or conditional distribution 
we might be interested in. For example, 
$p_{\textbf{d},\textbf{k},a} = \sum_{s \in \mc X} p_{\textbf{d},\textbf{k},a,s}$
is the distribution of in-degree vector, out-degree vector and label
of a generic node. As another example, $p_{\textbf{k},a} = \sum_{\textbf{d}}
p_{\textbf{d},\textbf{k},a}$ provides the distribution of out-degree vector 
and label of a generic node. We denote $p_a = \sum_{\textbf{k}}
p_{\textbf{k},a}$ the probability for a  node to be associated with label $a$.  
With  intuitive notation, $p_{\textbf{k}|a} = p_{\textbf{k},a}/p_a$ denotes the
distribution of out-degree vector of a node with label $a$, and so on.  

As it always happens in graphs with heterogeneous degrees, we will need to distinguish the probability law of $\textbf{k}_v$ for a generic node $v$ picked uniformly at random, and the probability law of $\textbf{k}_v$ for a node $v$ reached by traversing an edge. This because, in general, we could have correlation between in-degree and out-degree. Moreover, when we reach a node by following a certain edge, it is also important to distinguish the label of the node originating the traversed edge  picked uniformly at random. To account for the above generality, we need to introduce some additional notation.
Specifically, we define:
\begin{equation}\label{eq:qnew}\
q^{a}_{{\textbf{d}},{\textbf{k}}|b}= \frac{d_a p_{{\textbf{d}},{\textbf{k}},b}}{ \sum_{{\textbf{d}},{\textbf{k}}}d_a p_{{\textbf{d}},{\textbf{k}},b}},
\end{equation}
which is the distribution of in-degree vector ${\textbf{d}}$ and out-degree vector ${\textbf{k}}$
of a node with label $b$, reached by traversing an edge from a 
node with label $a$. 
Similarly, $q^{a}_{{\textbf{k}}|b} = \sum_{\textbf{d}} q^{a}_{{\textbf{d}},{\textbf{k}}|b}$
is the marginal distribution of out-degree vector 
of a node with label $b$, reached by traversing an edge from a 
node with label $a$.

\subsection{Relevant neighborhood at time $t$}
We first  observe that, since the process evolves through local interactions,
the state of a generic node $v$ on a multigraph $\mathcal{G}=(\mathcal{V}, \mathcal{E},\mc A,\lambda,\sigma,\tau)$ at time $t$
is determined  only by the structure and state of a relatively small neighborhood around $v$. 
Given a generic node $v\in \mathcal{V}$, we define the {\em relevant neighborhood} $\mathcal{N}_{t}$ 
of $v$ as the subgraph induced by the  set of all nodes in $\mathcal{V}$ having an impact on $Z_v(t)$, i.e., 
on the state of $v$ at time $t$. 
Similarly, we  define the {\em relevant neighborhood} $\mathcal{T}_t$ 
as the subtree induced by the set of all nodes in $\mathcal{T}$ having an impact on $Z_{v_0}(t)$, where $v_0$ is the root node
of $\mathcal{T}$.

The relevant neighborhood can be built by looking backward in time, identifying dependencies 
between neighboring nodes. First, observe that the state of $v$ at time $t$ depends on its out-neighbors $v'$ (one-hop away nodes) 
if and only if $v$ has updated its state in $[0,t]$ at least once, i.e., we can find an update time of $v$, 
$\vartheta_v(t)\le t$ . The state of node $v$ depends on a two-hop away node 
$v''$,  if and only if we can find a common neighbor $v'$ of $v$ and $v''$, 
such that $\vartheta_{v'}(t)< \vartheta_{v}(t)\le t$.  
Similarly the state of $v$ depends on a three-hops away node $v'''$ only if we can find 
two nodes $v'$ and $v''$ along a directed path from $v$ to $v'''$ such that 
$\vartheta_{v''}(t) <\vartheta_{v'}(t)< \vartheta_v(t)\le t$, and so on.

Due to the fact that update times of each node form independent Poisson processes with rate $\gamma=1$, 
we can exploit well-known properties of the Poisson process (time-reversibility, memoryless
property) to obtain $\mathcal{N}_{t}$ (or $\mathcal{T}_t$)
as the result of a process evolving forward in time, and exploring progressively the neighborhood 
of $v$ by adding an exponentially distributed delay (of mean 1) on each explored node, up to time $t$.
 
More precisely, the relevant neighborhood of $v$ is obtained by the following process. 
Vertices can be active, neutral or inactive. Initially, the relevant neighborhood is  empty.
\begin{enumerate}
\item The process starts by activating node $v$ at time $t=0$. All of the other nodes are set neutral.
\item Upon the activation of a node, a random timer is associated to it, taken from an exponential distribution
of mean $\gamma=1$. Moreover, the node is added to the relevant neighborhood, together with the outgoing edges. 
\item Upon expiration of its associated timer: i) an active node is set inactive; ii) all of its
neutral out-neighbors are set active and added to the relevant neighborhood, together with outgoing edges. 
\end{enumerate} 

For $t\geq0$, we can stop the above exploration process at time $t$ (i.e., we no longer add
nodes to the relevant neighborhood after time $t$), obtaining a truncated version $\mathcal{N}_{t}$
of $\mathcal{G}$, composed of all the nodes that have been activated. Similarly, we obtain a truncated version $\mathcal{T}_t$ 
of $\mathcal{T}$. 

\subsection{Approximation result}

Let $\mathcal{G}=(\V,\E,\mathcal A,\lambda,\sigma,\tau)$ be a multigraph \textcolor{black}{sampled from a given labeled network ensemble $\mathfrak{E}^{(n)}$} of size $n$. 
For $t\geq0$, let $\mathcal{N}_{t}$ be the relevant neighborhood at time $t$ of a node $v$ chosen uniformly at random from $\mathcal{V}$, and let $\mu_{\mathcal{N}_{t}}$ be its distribution on the multigraph space. Let $\mathcal{T}_t$ be a labeled branching process
as defined in Sec. \ref{subsec:tau}, truncated at time $t$, and let 
$\mu_{\mathcal{T}_t}$ be its distribution.
 
Proposition \ref{thm:ASD_over_gen_net} identifies some sufficient conditions to guarantee that the ASD process over a network is well approximated by the solution of the differential equation in \eqref{eq:zeta2}.

\begin{proposition}\label{thm:ASD_over_gen_net}

For $t\geq0$, let $Z(t)$ be the state vector of the ASD at time $t$ on $\mathcal{G}$. Let $z_{\omega}(t) =\frac{1}{n}|\{v\in\V:Z_v(t)=\omega\}|$ be the 
fraction of state-$\omega$ adopters at time $t$, and $\overline{z}_{\omega}(t) = \mathbb{E}[z_{\omega}(t)]$ be 
its expectation over the ensemble. For any $
\epsilon>0$, 
$$\mathbb{P}(|z_{\omega}(t)-y_{\omega}(t)|\geq \epsilon)\leq \mathbb{P}(|z_{\omega}(t)-\overline{z}_{\omega}(t)|\geq \epsilon-\|\mu_{\mathcal{N}_{t}}-
\mu_{\mathcal{T}_t}\|_{\mathrm{TV}} )
$$
where $y_{\omega}(t)$ is the solution of \eqref{eq:zeta2}.
\end{proposition}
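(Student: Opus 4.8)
\emph{Plan.} The inequality follows by combining a triangle inequality with a single estimate: the branching‑process prediction $y_\omega(t)$ differs from the ensemble average $\overline z_\omega(t)$ by at most $\|\mu_{\mathcal N_t}-\mu_{\mathcal T_t}\|_{\mathrm{TV}}$. Granting this, set $\delta_t:=|\overline z_\omega(t)-y_\omega(t)|\le \|\mu_{\mathcal N_t}-\mu_{\mathcal T_t}\|_{\mathrm{TV}}$; then by the triangle inequality $|z_\omega(t)-y_\omega(t)|\le |z_\omega(t)-\overline z_\omega(t)|+\delta_t$, so $\{|z_\omega(t)-y_\omega(t)|\ge\epsilon\}\subseteq\{|z_\omega(t)-\overline z_\omega(t)|\ge \epsilon-\delta_t\}\subseteq\{|z_\omega(t)-\overline z_\omega(t)|\ge \epsilon-\|\mu_{\mathcal N_t}-\mu_{\mathcal T_t}\|_{\mathrm{TV}}\}$, and taking probabilities gives the claim. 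Hence the whole content is the displayed estimate.

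\emph{The key estimate.} First, $\overline z_\omega(t)=\tfrac1n\sum_{v\in\V}\mathbb{P}(Z_v(t)=\omega)=\mathbb{P}(Z_V(t)=\omega)$, where $V$ is drawn uniformly at random from $\V$ independently of everything else; by definition $\mu_{\mathcal N_t}$ is exactly the law of the relevant neighborhood of this $V$. The crucial point is that, because ASD evolves through purely local interactions, $Z_V(t)$ is obtained by running the ASD update rule on the finite rooted structure $\mathcal N_t$, fed with the node labels carried by $\mathcal N_t$, with initial states drawn independently according to $p_{\cdot\mid a}$ and with independent update‑rule randomness — this is precisely what the time‑reversed exploration of Section \ref{sec:rn} formalizes, the exponential timers attached to explored nodes playing the role of the relevant update epochs. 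Consequently there is a measurable, \emph{ensemble‑independent} functional $g_t$ with values in $[0,1]$, equal to the probability that the root of its argument is in state $\omega$ at time $t$ when ASD is run on it, such that $\overline z_\omega(t)=\int g_t\,\mathrm d\mu_{\mathcal N_t}$.

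\emph{Comparison with the branching process.} Applying the same reasoning to ASD on $\mathcal T$: $\mathcal T_t$ is by construction the relevant neighborhood of the root $v_0$, and the branching process is built so that its out‑degree/label statistics are $p_{\textbf{k},a}$ and $q^{b}_{\textbf{k}\mid a}$ and its initial states are drawn from $p_{\cdot\mid a}$, i.e.\ matched to the ensemble, so the \emph{same} functional $g_t$ applies and $\mathbb{P}(Z_{v_0}(t)=\omega)=\int g_t\,\mathrm d\mu_{\mathcal T_t}$. By Proposition \ref{prop:ODE} this quantity equals $\sum_{a\in\mc A}p_a\,y_{\omega\mid a}(t)=y_\omega(t)$, built from the solution of \eqref{eq:zeta2} (coupled with \eqref{eq:zeta}). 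Therefore $|\overline z_\omega(t)-y_\omega(t)|=\big|\int g_t\,\mathrm d(\mu_{\mathcal N_t}-\mu_{\mathcal T_t})\big|\le \|\mu_{\mathcal N_t}-\mu_{\mathcal T_t}\|_{\mathrm{TV}}$, using the elementary bound $|\int f\,\mathrm d(\mu-\nu)|\le\|\mu-\nu\|_{\mathrm{TV}}$ for $[0,1]$‑valued $f$ (decompose $f=\int_0^1\mathbf 1_{\{f>s\}}\,\mathrm ds$). Note that $\mathcal N_t$ need not be a tree while $\mu_{\mathcal T_t}$ charges only trees; this is harmless since $g_t$ is defined on all finite rooted structures.

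\emph{Main obstacle.} The only delicate step is the precise sense in which ``$Z_V(t)$ is a functional of $\mathcal N_t$'': one must check that all the randomness on which $Z_V(t)$ depends beyond the graph/label structure — the relevant update epochs and the update‑rule coin flips — has a conditional law, given the explored structure, that is the same inside $\mathcal G$ as inside $\mathcal T$. This holds because the exploration rule and the local update rule are literally the same object in both cases; depending on whether one packages $\mathcal N_t$ as ``graph only'' or ``graph plus exploration data'', this randomness is either folded into $g_t$ or carried inside $\mathcal N_t$, and either way the displayed TV term is unchanged. Making this bookkeeping rigorous is the crux; the remaining steps are the triangle inequality and the elementary TV bound above.
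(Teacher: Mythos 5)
Your proposal is correct and follows essentially the same route as the paper: a triangle inequality reduces the claim to showing $|\overline z_\omega(t)-y_\omega(t)|\le\|\mu_{\mathcal N_t}-\mu_{\mathcal T_t}\|_{\mathrm{TV}}$, which both you and the paper obtain by writing $\overline z_\omega(t)$ and $y_\omega(t)$ as integrals of one and the same $[0,1]$-valued root-state functional (your $g_t$, the paper's $\chi_\omega$) against $\mu_{\mathcal N_t}$ and $\mu_{\mathcal T_t}$. The only cosmetic difference is the final elementary bound (your layer-cake decomposition versus the paper's centering of $\chi_\omega$ at $1/2$), and your explicit discussion of why the same functional applies to both laws is a point the paper leaves implicit.
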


\begin{proof}Notice that
\begin{align}
&\mathbb{P}(|z_{\omega}(t)-y_{\omega}(t)|\geq \epsilon)\leq \mathbb{P}(|z_{\omega}(t)-\overline{z}_{\omega}(t)|+|\overline{z}_{\omega}(t)-y_{\omega}(t)|\geq \epsilon) \label{eq:approximation}
\end{align}
We prove now that 
$|\overline{z}_{\omega}(t)-y_{\omega}(t)|\leq \|\mu_{\mathcal{N}_{t}}-\mu_{\mathcal{T}_t}\|_{\mathrm{TV}}$ from which we get the result.

Observe that, by definition, the state $Z_v(t)$ of node $v$  depends exclusively on the initial states $Z_j(0)=\sigma_j$ of the agents belonging to the relevant neighborhood $\mathcal{N}_{t}$ of node $v$ at time $t$, i.e., $\mathbb{P}(Z_v(t)=\omega)=\chi_{\omega}(\mathcal{N}_{t})$ 
where $\chi_{\omega}$ is a function in the range [0,1].

We thus have
\begin{align*}
\overline{z}_{\omega}(t) &= \mathbb{E}[z_{\omega}(t)]=\frac{1}{n}\sum_{v\in\V}\mathbb{P}(Z_v(t)=\omega)=\int\chi_{\omega}(g)\mathrm{d}\mu_{\mathcal{N}_{t}}(g).
\end{align*}
On the other hand, considering the state of the root in the labeled branching process $\mathcal{T}$,  
the output of the ODE \eqref{eq:zeta2} satisfies
\begin{align*}
y_{\omega}(t)=\int\chi_{\omega}(g)\mathrm{d}\mu_{\mathcal{T}_t}(g).
\end{align*}
It then follows
\begin{align*}
|\overline{z}_{\omega}(t)-y_{\omega}(t)|
&\leq\left|\int\chi_{\omega}(g)\mathrm{d}\mu_{\mathcal{N}_{t}}(g)-\int\chi_{\omega}(g)\mathrm{d}\mu_{\mathcal{T}_t}(g)\right|\\
&\leq\left|\int\left(\chi_{\omega}(g)-\frac{1}{2}\right)\mathrm{d}\mu_{\mathcal{N}_{t}}(g)-\int\left(\chi_{\omega}(g)-\frac{1}{2}\right)\mathrm{d}
\mu_{\mathcal{T}_t}(g)\right|\\
&\leq\|\mu_{\mathcal{N}_{t}}-\mu_{\mathcal{T}_t}\|_{\text{TV}}.
\end{align*}

\end{proof}

From Proposition \ref{thm:ASD_over_gen_net} we deduce that the evolution of the ASD process is well approximated by the solution of the differential equation in \eqref{eq:zeta2} for graph ensembles enjoying the following two fundamental properties:
\begin{itemize}
\item[(a)] Topological Property: Local Weak Convergence is required, in the sense that  $\|\mu_{\mathcal{N}_{t}}-\mu_{\mathcal{T}_t}\|_{\text{TV}}$ can be 
made arbitrarily small by increasing the graph size.
\item[(b)] Concentration Property: for large graph size, the fraction of state-$\omega$ adopters in the ASD 
process must concentrate around its expectation with probability close to one.
\end{itemize}

\section{ASD over labeled configuration model}\label{sec:labconfmod}
Considering the ensable $\mathfrak{E}^{(n)}$ of all labeled networks with given size $n$ and 
statistics $p_{\textbf{d},\textbf{k},a,s}$, we define the corresponding 
{\em labeled configuration model} ensamble $\mathfrak{C}_{n,p}$, on which we will restrict
our investigation in the rest of the paper. 
In particular, we provide general bounds for ASD evolution over $\mathfrak{C}_{n,p}$.

Next we consider two specific examples of labeled configuration model,
which we believe are particularly interesting, and apply to them the 
general bounds above, showing asymptotic convergence to the ODE solution
as the network size grows large.

\subsection{Labeled configuration model}
We first explicitly describe the construnction of the labeled configuration model $\mathfrak{C}_{n,p}$.
For each $v\in\mathcal{V}$, denote with $\boldsymbol{\kappa}_v=(\kappa_v^a)_{a\in\mathcal{A}}$ and $\boldsymbol{\delta}_v=(\delta_v^a)_{a\in\mathcal{A}}$ the out-degree and in-degree vectors, respectively, such that there is exactly a fraction $p_{{\textbf{d}},{\textbf{k}},a}$ of nodes $v\in\mathcal{V}$ with $(\boldsymbol{\delta}_v,\boldsymbol{\kappa}_v,a_v)=({\textbf{d}},{\textbf{k}},a)$. Denote with $\mathcal{L}_{a,a'}$ a set of stubs, and define arbitrary maps $\nu_{a,a'},\gamma_{a,a'}:\mathcal{L}_{a,a'}\rightarrow\mathcal{V}$, satisfying the property: $|\nu_{a,a'}^{-1}(v)| =\delta_{v}^{a}$ for nodes $v$ with label $\lambda(v)=a'$ and $|\gamma_{a,a'}^{-1}(v)|=\kappa_{v}^{a'}$ with $\lambda(v)=a$. For all $a, a' \in \mathcal{A}$, let $\pi_{a,a'}$ be chosen uniformly at random among all permutations of $\mathcal{L}_{a,a'}$ and define multigraph $\mathcal{G} = (\mathcal{V},\mathcal{E},\mathcal{A}, \lambda,  \sigma, \tau)$ with set of nodes $\mathcal{V}$ and $\mathcal{E} =\bigcup_{(a,a')\in\mathcal{A}\times\mathcal{A}} \mathcal{E}_{a,a'}$, where $\mathcal{E}_{a,a'}={(\gamma_{a,a'}(h), \nu_{a,a'}(\pi_{a,a'}(h)):  h\in \mathcal{L}_{a,a'}}$, and $\sigma (\gamma_{a,a'}(h), \nu_{a,a'}(\pi_{a,a'}(h)))=  \gamma_{a,a'}(h)$ and $\tau (\gamma_{a,a'}(h), \nu_{a,a'}(\pi_{a,a'}(h)))=  \nu_{a,a'}(\pi_{a,a'}(h))$.

Denote with $l_{a,a'} = |\mathcal{L}_{a,a'}|$ the total number of edges incoming to nodes with label $a'$, originating from nodes with label $a$, so that:  
$$
l_{a,a'}= n\sum_{{\textbf{d}},{\textbf{k}}}d_ap_{{\textbf{d}},{\textbf{k}},a'} = n\sum_{{\textbf{d}},{\textbf{k}}}k_{a'}p_{{\textbf{d}},{\textbf{k}},a}
$$
The total number of edges in the graph is $l = \sum_{a,a'} l_{a,a'}$.
The average in-degree of a node, which is equal to the average out-degree, 
will be denoted by $\bar{d} = l/n$.

We repeat here for readers' ease the expression of the fraction of nodes with label 
$a'$, reached from a node with label $a$, having 
in-degree vector ${\textbf{d}}$ and out-degree vector ${\textbf{k}}$: 
\begin{equation}\label{eq:statistics_edges}
q^{a}_{{\textbf{d}},{\textbf{k}}|a'}=\frac{d_a p_{{\textbf{d}},{\textbf{k}},a'} }{\sum_{{\textbf{d}},{\textbf{k}}} d_a p_{{\textbf{d}},{\textbf{k}},a'}  }
\end{equation}
We summarize the notations in Table \ref{tab:notations}. In Figure \ref{fig:Notation2}, we show an example of notation use for a simple case in which $\mathcal{A}=\{+,-\}$.

\begin{center}
\begin{table}[h!]
\boxed{\begin{tabular}{ l  }
$\mathcal{A}$ ordered set of labels\\
$ a_v = \lambda(v) $ label of node $v$  \\ 
$\mathcal{V}_a=\{v\in\V:a_v=a\}$\\
$ \boldsymbol{\delta}_v=(\delta_v^a)_{a\in\mathcal{A}}$ in-degree of node $v$  \\ 
$ \boldsymbol{\kappa}_v=(\kappa_v^a)_{a\in\mathcal{A}}$ out-degree of node $v$  \\ 
$\mathcal{L}_{a,a'}$ set of stubs from nodes with label $a$ 
to nodes with label $a'$  \\
$l_{a,a'} = |\mathcal{L}_{a,a'}|$ number of edges from nodes with label $a$ 
to nodes with label $a'$\\
$\pi_{a,a'}$ permutation of $\mathcal{L}_{a,a'}$\\
$\nu_{a,a'}:\ \mathcal{L}_{a,a'}\rightarrow \mathcal{V}$ map with the property $|\nu^{-1}_{a,a'}(v)|=\delta_v^a$ for all $v\in\mathcal{V}_{a'}$ \\
$\gamma_{a,a'}:\ \mathcal{L}_{a,a'}\rightarrow \mathcal{V}$  map with the property $|\gamma^{-1}_{a,a'}(v)|=\kappa_v^{a'}$ for all $v\in\mathcal{V}_a$ \\
$\mathcal{E}_{a,a'}={(\gamma_{a,a'}(h), \nu_{a,a'}(\pi_{a,a'}(h)):  h\in \mathcal{L}_{a,a'}}$\} set of edges from nodes\\$\qquad$ with label $a$ to nodes with label $a'$\\
$\mathcal{E}=\bigcup_{a,a'}\mathcal{E}_{a,a'}$
\end{tabular}}\caption{Notations}\label{tab:notations}
\end{table}
\end{center}

\begin{figure}[h!]
\begin{center}\begin{tikzpicture} [every node/.style={circle,fill=gray!20,inner sep=4pt}]
  \node (n1) at (3,8) {$+$};
  \node (n2) at (4,8)   {$+$};
  \node (n3) at (5,8) {$+$};
   \node (n4) at (6,8){$+$};
  \node (n5) at (7,8)   {$+$};
     \node (n6) at (9,8) {$-$};
  \node (n7) at (10,8)   {$-$};
  \node (n8) at (11,8) {$-$};

  \node (n9) at (6.5,5) {$v$};
  
    \node (n11) at (3,2) {$+$};
  \node (n12) at (4,2)   {$+$};
  \node (n13) at (5,2) {$+$};
   \node (n14) at (7,2){$-$};
  \node (n15) at (8,2)   {$-$};
     \node (n16) at (9,2) {$-$};
  \node (n17) at (10,2)   {$-$};
  \node (n18) at (11,2) {$-$};

 \tikzset{every node/.style={}} 
  \node (n10) at (10,5)   {$\begin{array}{c}a_v=+\\
  \boldsymbol{\kappa}_v=(\kappa_v^{+},\kappa_v^{-})=(3,5)\end{array}$};
  \node (n10) at (4,5)   {$\boldsymbol{\delta}_v=(\delta_v^{+},\delta_v^{-})=(5,3)$};
\foreach \from/\to in {n1/n9,n2/n9,n3/n9,n4/n9,n5/n9}
\path (\from) edge[->,bend right=3] (\to);
\foreach \from/\to in {n6/n9,n7/n9,n8/n9}
\path (\from) edge[->,bend right=3] (\to);
\tikzset{mystyle/.style={->}} 
\tikzset{every node/.style={fill=cyan!10}} 

\foreach \from/\to in {n9/n11,n9/n12,n9/n13,n9/n14,n9/n15,n9/n16,n9/n17,n9/n18}
\path (\from) edge[->,bend right=3] (\to);
\tikzset{mystyle/.style={->}} 
\tikzset{every node/.style={fill=cyan!10}} 
\tikzset{every node/.style={fill=cyan!10}} 
      \path (n3) edge[->,bend right=3] node { $\ \quad\nu_{+,+}^{-1}(v)\quad\ $}(n9);
      
      \tikzset{every node/.style={fill=cyan!10}} 
      \path (n7) edge[->,bend right=3] node { $\nu_{-,+}^{-1}(v) $}(n9);

      \tikzset{every node/.style={fill=cyan!10}} 
      \path (n9) edge[->,bend right=3] node { $\gamma_{+,+}^{-1}(v) $}(n12);
          \tikzset{every node/.style={fill=cyan!10}} 
  \path (n9) edge[->,bend right=3] node { $\ \ \  \gamma_{+,-}^{-1}(v)\  \quad$}(n16);

\end{tikzpicture}
\end{center}\caption{Labeled configuration model with two classes $\mathcal{A}=\{+,-\}$}\label{fig:Notation2}

\end{figure}
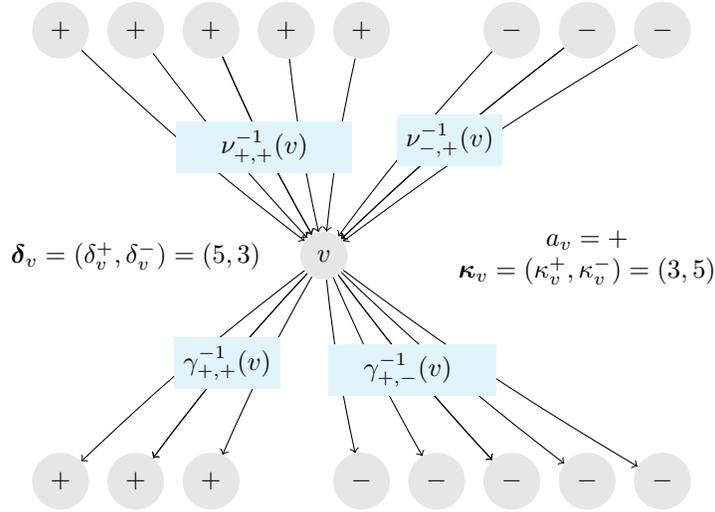
We emphasize that the graph ensemble defined above  extends the classical configuration model, which can be recovered 
as a particular case by setting  $|\mathcal{A}|=1.$

\subsection{Bounds to ASD dynamics}\label{Sec:generalbounds}
In order to apply the general approximation result stated in Proposition
\ref{thm:ASD_over_gen_net} to the labeled configuration model defined above,
we need to prove both Local Weak Convergence and Concentration Property
of ASD.
The following theorems actually provide the main results of our paper:   
 
\begin{itemize}
\item Theorem \ref{thm:top_res_modgraphs} is a topological result and is related exclusively to the properties of the labeled configuration model. More precisely, it provides a useful bound on  the total variation distance between the relevant neighborhood of a graph drawn uniformly at random from the labeled configuration model ensemble and the labeled branching process described in Section \ref{sec:bp}. 
The proof is rather technical and is postponed to Appendix \ref{app:Topological_result}.
\item Theorem \ref{thm:concentration_modgraphs} is related to the ASD evolution and 
to the specific properties of the labeled configuration model. 
It provides a bound on the distance between the fraction of state-$\omega$ adopters and 
its expectation. The proof can be found in Appendix \ref{appA_concentration}.
\end{itemize}

Let $\mathcal{G}=(\V,\E,\mathcal{A}, \lambda,\sigma, \tau)$ be a multigraph sampled from the ensemble $\mathfrak{C}_{n,p}$. For $t\geq 0$, let $\mathcal{N}_{t}$ be the relevant neighborhood at time $t$ of a node $v$ chosen uniformly at random from $\mathcal{V}$, and let $\mu_{\mathcal{N}_{t}}$ be its distribution. 
Let $\mathcal{T}_{t}$ be the labeled branching process truncated at time $t$  and let 
$\mu_{\mathcal{T}_t}$ be its distribution. Moreover, let  $W^{b,a}_t$ be the number of edges in $\mathcal{T}_t$ from nodes with label $b\in\mathcal{A}$ to nodes with label $a\in\mathcal{A}$ and let $\mathsf{F}_{W^{b,a}_t}(x_{b,a}) =\mathbb{P}(W^{b,a}_t>x_{b,a})$.

 \begin{theorem}[Topological Property]\label{thm:top_res_modgraphs} 
We have
\begin{align}\label{eq:bound_topologico}
&\|\mu_{\mathcal{T}_{t}}-\mu_{\mathcal{N}_{t}}\|_{\mathrm{TV}} \le \mathbb{P}(\mathcal{T}_{t}\neq \mathcal{N}_{t})\nonumber\\
&\quad\leq \inf_{\textbf{X} \in(\R^+)^{\mathcal{A}\times\mathcal{A}}}\sum_{a,b\in\mathcal{A}}\left[ \mathsf{F}_{W^{b,a}_t}(x_{b,a}) 
+
\frac{x_{b,a}(x_{b,a}+1)}{2} \frac{\sum_{{\textbf{d}},{\textbf{k}}}d_bq^{b}_{{\textbf{d}},{\textbf{k}}|a}}{l_{b,a}}  \right.\\
&\left.+
\sum_{b'\neq b} x_{b,a}x_{b',a} \frac{\sum_{{\textbf{d}},{\textbf{k}}}d_bq^{b'}_{{\textbf{d}},{\textbf{k}}|a} }{l_{b,a}}\right] \end{align}
 \end{theorem}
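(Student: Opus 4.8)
The plan is to construct an explicit coupling between the relevant neighborhood $\mathcal{N}_t$ of a uniformly random vertex in $\mathcal{G}\sim\mathfrak{C}_{n,p}$ and the truncated labeled branching process $\mathcal{T}_t$, and then to bound $\mathbb{P}(\mathcal{T}_t\neq\mathcal{N}_t)$ by a union bound over the "bad events" that force the two to differ. The first inequality $\|\mu_{\mathcal{T}_t}-\mu_{\mathcal{N}_t}\|_{\mathrm{TV}}\le\mathbb{P}(\mathcal{T}_t\neq\mathcal{N}_t)$ is the standard coupling inequality and requires no work beyond invoking it. For the coupling itself, I would run the two exploration processes (the one in Section \ref{sec:rn} that builds $\mathcal{N}_t$ by attaching independent mean-$1$ exponential timers, and the branching-process construction of Section \ref{subsec:tau}, truncated the same way) in lockstep: at each step we reveal the head of an active stub. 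In $\mathcal{T}_t$ the head's degree/label is drawn freshly from $q^{a}_{\mathbf{d},\mathbf{k}|b}$ and is always a "new" vertex; in $\mathcal{N}_t$, by the configuration-model matching, the head is obtained by pairing the stub uniformly with a not-yet-used in-stub, so its in-degree vector, out-degree vector and label have exactly the size-biased distribution $q^{a}_{\mathbf{d},\mathbf{k}|b}$ of \eqref{eq:statistics_edges} conditional on the stub not having already been consumed. Thus the two processes can be coupled to agree step by step as long as no stub points back into the already-explored set — i.e. as long as the explored subgraph remains a tree — and the discrepancy event is exactly the event that some revealed edge creates a collision with a previously discovered vertex.

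The next step is to bound the probability of a collision. Condition on the final branching tree $\mathcal{T}_t$ and, for each ordered label pair $(b,a)$, let $W^{b,a}_t$ be the number of $b\to a$ edges it contains. The total number of available in-stubs at nodes of label $a$ coming from label-$b$ tails is $l_{b,a}$, and when we reveal the $i$-th such edge the partner is chosen uniformly among the $l_{b,a}-(i-1)$ remaining in-stubs of the relevant type; a collision at that step occurs iff the partner in-stub belongs to one of the $\le i$ already-discovered label-$a$ vertices (counting $b\to a$ contributions), or more generally to an already-discovered label-$a$ vertex reached via some tail label $b'$. Summing the per-step collision probabilities over $i=1,\dots,W^{b,a}_t$, the $b'=b$ contribution telescopes to roughly $\binom{W^{b,a}_t+1}{2}$ times $\big(\sum_{\mathbf{d},\mathbf{k}}d_b\,q^{b}_{\mathbf{d},\mathbf{k}|a}\big)/l_{b,a}$ (the expected number of "spare" in-stubs per discovered label-$a$ vertex that were entered from a label-$b$ tail, divided by the stub pool size), while the cross term $b'\neq b$ gives $W^{b,a}_t W^{b',a}_t$ times $\big(\sum_{\mathbf{d},\mathbf{k}}d_b\,q^{b'}_{\mathbf{d},\mathbf{k}|a}\big)/l_{b,a}$. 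To turn the random $W^{b,a}_t$ into the deterministic cutoffs $x_{b,a}$, intersect with the event $\{W^{b,a}_t\le x_{b,a}\ \forall a,b\}$: on the complement we pay $\sum_{a,b}\mathsf{F}_{W^{b,a}_t}(x_{b,a})$, and on the event the quadratic bounds above hold with $W$ replaced by $x$. Taking the union bound over all pairs $(a,b)$ and then the infimum over $\mathbf{X}\in(\mathbb{R}^+)^{\mathcal{A}\times\mathcal{A}}$ yields exactly \eqref{eq:bound_topologico}.

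A few technical points need care. First, one must check that conditioning on $W^{b,a}_t$ does not distort the sequential uniform-partner law in $\mathcal{N}_t$ — this is the usual fact that configuration-model pairings can be revealed one edge at a time, the "principle of deferred decisions," so the partner of each newly revealed stub is uniform over the currently unmatched stubs regardless of how many edges we will eventually reveal. Second, the truncation at time $t$ is identical on both sides (each discovered vertex gets an independent mean-$1$ exponential timer and is allowed to spawn offspring only if its timer expires before $t$), so the exponential clocks couple trivially and the only source of mismatch is the topological collision; the branching process is exactly the $n\to\infty$ stub-pairing law with replacement, hence the "without replacement minus with replacement" discrepancy is precisely the collision probability we bounded. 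Third, we should confirm that a collision is the \emph{only} way $\mathcal{T}_t\neq\mathcal{N}_t$: as long as every revealed edge lands on a fresh vertex, the degree/label of that vertex has the same law in both models and the generated subgraphs are isomorphic as rooted labeled trees with timers.

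\textbf{Main obstacle.} The delicate part is the combinatorial accounting of the collision probability: correctly identifying, at the moment the $i$-th $b\to a$ edge is revealed, how many of the remaining in-stubs belong to already-explored label-$a$ vertices — this count depends on which tail labels those vertices were entered through, which is why the bound splits into the diagonal term with $\sum_{\mathbf{d},\mathbf{k}}d_b q^b_{\mathbf{d},\mathbf{k}|a}$ and the off-diagonal terms with $q^{b'}_{\mathbf{d},\mathbf{k}|a}$. Getting the quadratic factors $\tfrac{x_{b,a}(x_{b,a}+1)}{2}$ and $x_{b,a}x_{b',a}$ right, and justifying that the per-step denominators $l_{b,a}-(i-1)$ can be replaced by $l_{b,a}$ at the cost already absorbed in the stated bound, is where the "rather technical" work of Appendix \ref{app:Topological_result} lies.
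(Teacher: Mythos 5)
Your plan is correct and follows essentially the same route as the paper's proof in Appendix \ref{app:Topological_result}: a lockstep coupling of the two exploration processes realized by sampling stubs with replacement (for $\mathcal{T}_t$) versus without replacement (for $\mathcal{N}_t$), a reduction of $\{\mathcal{T}_t\neq\mathcal{N}_t\}$ to the union of the stub-repetition event and the node-collision events (diagonal and cross-label), and a conditioning on $\{W^{b,a}_t\le x_{b,a}\}$ that produces the tail terms $\mathsf{F}_{W^{b,a}_t}(x_{b,a})$ and the quadratic factors. The technical points you flag (deferred decisions, identical exponential truncation on both sides, and the per-step collision accounting split by the tail label $b'$) are exactly the ones the paper's argument handles.
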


\begin{example}[Topological Property for the classical configuration model]\label{es:clssical_cm}
If $|\mathcal{A}|=1$ then the model ensemble $\mathfrak{C}_{n,p}$ boils down to the classical configuration model and the bound derived in \eqref{eq:bound_topologico} reduces to
 \begin{align}\label{Top_res_conf_model}
 \|\mu_{\mathcal{T}_{t}}-\mu_{\mathcal{N}_{t}}\|_{\mathrm{TV}}&\leq  \mathbb{P}(\mathcal{T}_{t}\neq \mathcal{N}_{t})\leq
  \inf_{x>0} \left[\mathsf{F}_{\widetilde{W}_t}(x) + \frac{{\sum_{d,k}dq_{d,k}}x(x+1)}{2n\overline{d}} \right]
 \end{align}
where $\widetilde{W}_t$ is the number of nodes in $\mathcal{T}_t$ and $\mathsf{F}_{\widetilde{W}_t}(x)=\mathbb{P}(\widetilde{W}_t> x)$.

\end{example}

We next introduce the concentration result that allows us to 
estimate to what extent the fraction  of state-$\omega$ adopters in the ASD process concentrates around its expectation. 

\begin{theorem}[Concentration Property]\label{thm:concentration_modgraphs}
	Let $\mathcal{G}$ be a multigraph sampled from the ensemble  $\mathfrak{C}_{n,p}$.
	  We denote with $\mathcal{N}^{v}_t$ the relevant neighborhood at time $t$ of a node $v$, sampled with a probability proportional to its in-degree, and with $V_t^{v}$ the number of nodes in it.
	For $t \geq 0$, let $Z(t)$ be the state vector of the ASD dynamics on $\mathcal{G}$, $b(t) =|\{v\in\V:Z_v(t)=\omega\}|$ be the number of state-$\omega$ adopters at time $t$ conditioned to $\mathcal{G}$.  
	For any $\epsilon>0,\eta>0,x>0,s\geq 1$ we have 
	{\color{black}{\begin{align*}
&	\mathbb{P}(|b(t)-\mathbb{E}[{b}(t)]|> \eta n)\\
	&\qquad\leq
4\e^{-\frac{\eta^2 n}{1152(1+\epsilon)tx^2}} +\left(1+\frac{12}{\eta}\right)(1+\epsilon){\color{black}{tn}}\frac{2^s\mathbb{E}_v\left[|V^{v}_t|^s\right]}{x^s}+2\e^{-\frac{nt\epsilon^2}{2(1+\epsilon)}}+2\e^{-\frac{\eta^2 n}{288(t+\eta/12)}}\\
&\qquad+\left(1+\frac{4}{\eta}\right)\frac{{{2^s}}}{x^s}\sum_{w\in\mathcal{V}}|\boldsymbol{\delta}_w|{\left[\mathbb{E}[|V_t^{w}|^s\right]}
+2\e^{-\frac{\eta^2 n}{128\overline{d}x^2}}
	\end{align*}}}
\end{theorem}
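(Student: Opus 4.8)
The plan is to bound the fluctuation of $b(t)$ around its mean by an Azuma–Hoeffding / bounded-differences argument applied to a suitable exposure martingale, but with the complication that in the configuration model the "obvious" exposure (one stub-matching at a time) does not yield uniformly small increments — a single re-wiring can flip the state of many nodes if the relevant neighborhoods are large. The standard remedy, which the many terms in the statement clearly reflect, is to \emph{truncate}: work on the good event that all relevant neighborhoods explored in the process have size at most a threshold governed by $x$ and $s$, control the probability of the complementary bad event by a moment (Markov) bound on $|V_t^v|$ (this produces the two terms involving $\mathbb{E}_v[|V_t^v|^s]/x^s$ and $\sum_w|\boldsymbol\delta_w|\,\mathbb{E}[|V_t^w|^s]/x^s$, one for the "forward" neighborhood of a uniformly random node and one for the size-biased version appearing when one edge is re-matched and we must count how many vertices $w$ have $v$ in \emph{their} relevant neighborhood), and on the good event apply McDiarmid's inequality with increments bounded in terms of $x$, yielding the exponential term $4\exp(-\eta^2 n/(1152(1+\epsilon)tx^2))$.

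Concretely, first I would fix the auxiliary randomness of the ASD dynamics — the Poisson update clocks and the $\Theta$-coin flips — and express $b(t)$ as a function of (i) the random permutations $\pi_{a,a'}$ defining $\mathcal G$ and (ii) the dynamics' randomness, so that $\mathbb{E}[b(t)]$ is taken over both. I would then reveal the edge structure sequentially: the martingale is obtained by exposing the matched partner of one stub at a time (equivalently a Doob martingale for the product measure on the sequence of "next-partner" choices, using the standard fact that a uniform permutation can be generated by sequential uniform choices). Changing one such choice changes $\mathcal G$ by a bounded rewiring, and hence can alter $Z_v(t)$ only for vertices $v$ whose relevant neighborhood $\mathcal N_t^v$ touches the affected stubs; on the good event the number of such $v$, summed appropriately, is $O(x)$ per coordinate, so the bounded-difference constant per exposed stub is $O(x)$ and there are $O(\bar d\, n)= O(l)$ coordinates. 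The factor $t$ in the exponent comes from the (Poisson) bound $\mathbb{E}|\mathcal N_t^v|$-type estimates, and the $(1+\epsilon)$ factors and the terms $2\exp(-nt\epsilon^2/(2(1+\epsilon)))$, $2\exp(-\eta^2 n/(288(t+\eta/12)))$, $2\exp(-\eta^2 n/(128\bar d x^2))$ arise from auxiliary concentration statements needed to make the "number of affected vertices $\le O(x)$" and "number of exposed stubs $\le O((1+\epsilon)tn)$" claims hold with high probability (e.g. a Poisson-tail / Bernstein bound on the total number of vertices ever explored across all $n$ relevant neighborhoods, and a separate bound controlling how many relevant neighborhoods can simultaneously pass through a fixed edge).

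The technical heart — and the step I expect to be the main obstacle — is the \emph{stability} estimate: quantifying exactly how the rewiring of a single stub-pair propagates through the backward-in-time exploration that defines $\mathcal N_t$, and in particular showing that the set of vertices $v$ whose time-$t$ state is affected is contained in $\{v : \mathcal N_t^v \text{ contains an endpoint of the modified edge}\}$, whose cardinality is then dominated (after size-biasing, since we are asking "for how many roots does a given edge lie in the relevant neighborhood") by a sum of the form $\sum_w |\boldsymbol\delta_w|\,\mathbb 1\{\mathcal N_t^w\ni e\}$; bounding the contribution of this sum on the bad event is what forces the size-biased moment term with the $|\boldsymbol\delta_w|$ weights. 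Once this propagation lemma and the accompanying high-probability size bounds are in hand, assembling the pieces is a union bound: $\mathbb{P}(|b(t)-\mathbb{E} b(t)|>\eta n)\le \mathbb{P}(\text{bad event}) + \mathbb{P}(|b(t)-\mathbb{E} b(t)|>\eta n,\ \text{good event})$, the first summand expanded into the several moment/tail terms above and the second handled by McDiarmid on the good event, after absorbing the (small) discrepancy between $\mathbb{E} b(t)$ and the conditional mean on the good event into a slightly reduced deviation parameter $\eta \mapsto \eta - O(\eta)$, which is where the harmless constants like $12/\eta$ and $4/\eta$ enter. I would defer all the explicit constant-chasing to Appendix \ref{appA_concentration}.
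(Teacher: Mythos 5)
Your strategy is essentially correct and matches the architecture of the paper's proof: you correctly identify the split of the randomness into graph (stub matching) and dynamics, the exposure martingale over sequential stub matchings, the fact that re-matching one stub pair perturbs $b(t)$ by at most twice the size of a relevant neighborhood attached to the modified edge (whence the size-biased moment term $\sum_w|\boldsymbol{\delta}_w|\,\mathbb{E}[|V_t^w|^s]/x^s$), the Poisson/Chernoff control of the number of activations in $[0,t]$, and the final union-bound assembly. The one genuine difference is the concentration tool. You propose to condition on a good event (all explored neighborhoods of size at most $x$) and apply McDiarmid there; the paper instead invokes a generalization of Azuma's inequality for martingales with unbounded increments (Lemma \ref{lemma:Tao2015}, from \cite{chalker_godbole_hitczenko_radcliff_ruehr_1999}), which bounds $\mathbb{P}(|Y_N-Y_0|\ge\eta)$ by an exponential term in the truncation levels $c_\ell$ plus $\left(1+\frac{2\Delta^{\star}}{\eta}\right)\sum_\ell\mathbb{P}(|Y_\ell-Y_{\ell-1}|\ge c_\ell)$, the latter then controlled by Markov applied to $s$-th moments. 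This lemma is the clean formalization of your truncation idea, and it sidesteps the two delicate points of the naive ``McDiarmid on a good event'' route: conditioning on an event that depends on the whole matching destroys the uniformity of the permutation, so the increments are no longer easily bounded under the conditional law (you flag the mean-shift issue but not this one --- if you execute your plan literally you should use the Chalker et al.\ lemma or a Kutin-type extension rather than raw conditioning). Two smaller structural differences: the paper treats the two randomness sources in the opposite order (first the dynamics conditionally on the graph, Lemma \ref{lemma:unveiling_activation}, then the graph, Lemma \ref{lemma:unveiling_edges}, glued by a triangle inequality at deviation $\eta n/2$), and it splits the dynamics martingale into three pieces --- activation order, realized new states, and the random number $\iota(t)$ of activations --- which is exactly where the terms $2\e^{-\frac{nt\epsilon^2}{2(1+\epsilon)}}$ and $2\e^{-\frac{\eta^2 n}{288(t+\eta/12)}}$ come from (Lemma \ref{lemma:Ex.1.13_Schwartz}); your ``propagation lemma'' is realized in the paper by a one-coordinate resampling coupling ($\pi$ versus $\pi^\ell$, differing in at most two positions), which is the standard and painless way to get the increment bound you are worried about.
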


\begin{remark} 
We emphasize that the bounds presented in Theorem \ref{thm:top_res_modgraphs} and  Theorem \ref{thm:concentration_modgraphs}
represent an important step forward with respect to 
results already known in literature.
In particular, Lemma 5 and Proposition 2 in \cite{RCF18} states a similar result 
for a different, simplified version of our system dynamics in which:
i) node updates are synchronized, being triggered by a common discrete time step;
ii) the update rule is restricted to be the deterministic linear threshold model;
iii) both the maximum in-degree $d_{\max}$ and the maximum out-degree $k_{\max}$
of nodes are supposed to be finite.
In contrast to \cite{RCF18}, we introduce a much more general result 
along three different directions:
\begin{enumerate}
\item We consider asynchronous dynamics (each node is updated by an independent Poisson clock). 
Hence the neighborhood exploration process in the proof has to take into account this new source of randomness. Specifically, the estimation of the total variation \eqref{eq:bound_topologico} 
is split into two terms, which are obtained by conditioning on the number of nodes in $\mathcal{T}_t$. The necessity of this refined analysis will be clear in the next section.
\item We consider arbitrary semi-anonymous dynamics, with possible random (noisy) response to the state of neighbors. Moreover, we define our dynamics on a much more general ensemble of labelled 
random graphs, which allows us to differentiate the distribution of incoming/outgoing edges
for each pair of classes.  
\item We allow the maximum in- and out-degree of nodes to possibly scale 
with $n$ (under some technical constraints). This is crucial for applications to social networks
and many other complex systems in which the degree distribution has often been 
observed to follow a power law. But notice that even in the case of the classic 
Erd\"os-R\'enyi random graph $G(n,p)$, $d_{\max}$ or $k_{\max}$ of course  
are not  independent of $n$.
Note that, in the case of finite $d_{\max}, k_{\max}$, by taking $x=k_{\max}^t$ our bound in \eqref{Top_res_conf_model} leads to 
$$
\|\mu_{\mathcal{N}_t}-\mu_{\mathcal{T}_t}\|_{\text{TV}}\leq \frac{d_{\max}k_{\max}(k_{\max}^t+1)}{ 2n\overline{d}}.
$$
recovering the result in \cite{RCF18} (Lemma 5).
\end{enumerate}
\end{remark}

\begin{remark}
We emphasize that our results could be also extended to weighted directed networks, 
under the assumption that weights on edges are described by i.i.d. random variables. 
For example, one could consider the relative weighted popularity of a given state among the neighbors, to determine the next state of an agent. It is obvious that in this case the dynamics are not purely semi-anonymous since  different neighbors  exert different influence strengths. However, the randomness of the weights and i.i.d. assumption make our approach still valid.
\end{remark} 

In the next section, we show how the bounds derived above for Local Weak Convergence and concentration property
can be used to study asymptotic behavior of ASD on a labeled configuration model with power-law degree distribution.
More precisely, we will consider a sequence of labeled graphs with size $n$ and described by distributions $p_{\textbf{k}|a}^{(n)},q_{\textbf{k}|a}^{(n)}$ such that 
$$
p_{\textbf{k}|a}^{(n)}\stackrel{n\rightarrow\infty}{\longrightarrow}{p_{\textbf{k}|a}}, \ q_{\textbf{k}|a}^{(n)}\stackrel{n\rightarrow\infty}{\longrightarrow}{q_{\textbf{k}|a}}.
$$
Then we will consider the labeled branching process obtained by the construction above with asymptotic distributions. The following proposition quantifies the distance between the solution corresponding to the differential equation with distribution $p_{\textbf{k}|a}^{(n)},q_{\textbf{k}|a}^{(n)}$ and the solution corresponding to the differential equation with asymptotic distribution $p_{\textbf{k}|a},q_{\textbf{k}|a}$. 
\begin{proposition}\label{prop:asympdist1}
Let 
\begin{itemize} 
\item $\boldsymbol{\zeta}^{(n)}(t)$ be the solution of \eqref{eq:zeta} with $q_{\textbf{k}|a}^{(n)}$ and initial condition $\boldsymbol{\zeta}^{(n)}_0$;
\item $\boldsymbol{\zeta}(t)$ be the solution of \eqref{eq:zeta} with $q_{\textbf{k}|a}$ and initial condition $\boldsymbol{\zeta}_0$.
\item $\boldsymbol{y}^{(n)}(t)$ be the solution of \eqref{eq:zeta} with $p_{\textbf{k}|a}^{(n)}$ and initial condition $ \boldsymbol{y}^{(n)}_0$;
\item $\boldsymbol{y}(t)$ be the solution of \eqref{eq:zeta} with $p_{\textbf{k}|a}$ and initial condition $ \boldsymbol{y}_0$.
\end{itemize}
In addition let $\boldsymbol \phi( {\textbf{z}})-{\textbf{z}} $ and $\boldsymbol{\psi}{(\textbf{z}})$ be Lipschitz continuous in $[0,1]^{|\mathcal A|} $, and let $L$ and $M>0$ be the Lipschitz constants corresponding to infinity norm. Then
\begin{align*}
\sup_{t\in[0,m\Delta]}\|\boldsymbol{\zeta}^{(n)}(t)-\boldsymbol{\zeta}(t)\|_{\infty}\leq \frac{\|\boldsymbol{\zeta}^{(n)}_0-\boldsymbol{\zeta}_0\|_{\infty}}{(1-\Delta L)^m}
+\frac{1}{ L}\left(\frac{1}{(1-\Delta L)^m}-1\right)\|q^{(n)}_{\textbf{k}|a}-q_{\textbf{k}|a}\|_{\mathrm{TV}}\end{align*}
with $\Delta<1/L$ and 
\begin{align*}
&\sup_{t\in[0,m\Delta]}\|\boldsymbol{y}^{(n)}(t)-\boldsymbol{y}(t)\|_{\infty}\leq \frac{\|\boldsymbol{y}^{(n)}_0-\boldsymbol{y}_0\|_{\infty}}{(1-\Delta )^m}+\\
&
   \left(\frac{1}{(1-\Delta )^m}-1\right)  \left[ M\sup_{t\in[0,m\Delta]} {\|\boldsymbol{\zeta}^{(n)}(t)-\boldsymbol{\zeta}(t)\|_{\infty}}+ 
{\|p^{(n)}_{\textbf{k}|a}-p_{\textbf{k}|a}\|_{\mathrm{TV}}} \right].
\end{align*}
\end{proposition}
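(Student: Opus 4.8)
The plan is to prove both inequalities by a Gr\"onwall-type discrete iteration over time slabs of length $\Delta$, exploiting the fact that the ODE \eqref{eq:zeta} has the contraction-friendly form $\dot{\boldsymbol{\zeta}} = \gamma(\boldsymbol{\phi}(\boldsymbol{\zeta}) - \boldsymbol{\zeta})$ with $\gamma=1$. First I would write, for the two $\boldsymbol{\zeta}$-solutions, the integral equation on a slab $[j\Delta,(j+1)\Delta]$:
\begin{equation*}
\boldsymbol{\zeta}^{(n)}(t) - \boldsymbol{\zeta}(t) = \boldsymbol{\zeta}^{(n)}(j\Delta) - \boldsymbol{\zeta}(j\Delta) + \int_{j\Delta}^{t}\Bigl[\bigl(\boldsymbol{\phi}^{(n)}(\boldsymbol{\zeta}^{(n)}(s)) - \boldsymbol{\zeta}^{(n)}(s)\bigr) - \bigl(\boldsymbol{\phi}(\boldsymbol{\zeta}(s)) - \boldsymbol{\zeta}(s)\bigr)\Bigr]\mathrm{d}s,
\end{equation*}
where $\boldsymbol{\phi}^{(n)}$ denotes the map built from $q^{(n)}_{\textbf{k}|a}$. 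The integrand is split by adding and subtracting $\boldsymbol{\phi}(\boldsymbol{\zeta}^{(n)}(s))-\boldsymbol{\zeta}^{(n)}(s)$: the first piece $\|(\boldsymbol{\phi}^{(n)}-\boldsymbol{\phi})(\boldsymbol{\zeta}^{(n)}(s))\|_\infty$ is controlled by $\|q^{(n)}_{\textbf{k}|a}-q_{\textbf{k}|a}\|_{\mathrm{TV}}$ — since $\phi_{\omega|a,b}$ is an average of the bounded quantities $\varphi^{(\textbf{k},a)}_\omega(\boldsymbol{\zeta})\in[0,1]$ against the measure $q^{b}_{\textbf{k}|a}$, the difference of two such averages is at most the total variation distance of the measures; the second piece is bounded by $L\|\boldsymbol{\zeta}^{(n)}(s)-\boldsymbol{\zeta}(s)\|_\infty$ using the assumed Lipschitz constant of $\boldsymbol{\phi}(\textbf{z})-\textbf{z}$.

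Next, setting $e_j := \sup_{t\in[j\Delta,(j+1)\Delta]}\|\boldsymbol{\zeta}^{(n)}(t)-\boldsymbol{\zeta}(t)\|_\infty$ and $e_0^{\star} := \|\boldsymbol{\zeta}^{(n)}_0-\boldsymbol{\zeta}_0\|_\infty$, taking sup over $t$ in the slab and using $t-j\Delta\le\Delta$ gives the recursion
\begin{equation*}
e_j \le \tilde e_j + \Delta L\, e_j + \Delta\,\|q^{(n)}_{\textbf{k}|a}-q_{\textbf{k}|a}\|_{\mathrm{TV}},
\end{equation*}
where $\tilde e_j$ is the value of the error at the left endpoint $j\Delta$, hence $\tilde e_j\le e_{j-1}$. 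For $\Delta<1/L$ this rearranges to $e_j \le \frac{1}{1-\Delta L}\bigl(e_{j-1} + \Delta\,\|q^{(n)}-q\|_{\mathrm{TV}}\bigr)$. Unrolling the linear recursion $m$ times, with the geometric sum $\sum_{i=0}^{m-1}(1-\Delta L)^{-i-1}\Delta = \frac{1}{L}\bigl((1-\Delta L)^{-m}-1\bigr)$, yields exactly the stated bound
\begin{equation*}
\sup_{t\in[0,m\Delta]}\|\boldsymbol{\zeta}^{(n)}(t)-\boldsymbol{\zeta}(t)\|_\infty \le \frac{\|\boldsymbol{\zeta}^{(n)}_0-\boldsymbol{\zeta}_0\|_\infty}{(1-\Delta L)^m} + \frac{1}{L}\Bigl(\frac{1}{(1-\Delta L)^m}-1\Bigr)\|q^{(n)}_{\textbf{k}|a}-q_{\textbf{k}|a}\|_{\mathrm{TV}}.
\end{equation*}

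For the second inequality the argument is structurally identical but applied to \eqref{eq:zeta2}: here $\dot{\boldsymbol{y}} = \boldsymbol{\psi}(\boldsymbol{\zeta}) - \boldsymbol{y}$, so the "self" term contributes Lipschitz constant $1$ in $\boldsymbol{y}$ (explaining the $(1-\Delta)^m$ denominator), while the forcing term is $\boldsymbol{\psi}^{(n)}(\boldsymbol{\zeta}^{(n)}(s)) - \boldsymbol{\psi}(\boldsymbol{\zeta}(s))$, which I split as $[\boldsymbol{\psi}^{(n)}(\boldsymbol{\zeta}^{(n)}) - \boldsymbol{\psi}(\boldsymbol{\zeta}^{(n)})] + [\boldsymbol{\psi}(\boldsymbol{\zeta}^{(n)}) - \boldsymbol{\psi}(\boldsymbol{\zeta})]$; the first is $\le\|p^{(n)}_{\textbf{k}|a}-p_{\textbf{k}|a}\|_{\mathrm{TV}}$ as before, the second $\le M\sup_s\|\boldsymbol{\zeta}^{(n)}(s)-\boldsymbol{\zeta}(s)\|_\infty$ by the Lipschitz constant of $\boldsymbol{\psi}$. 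Treating $\sup_{t}\|\boldsymbol{\zeta}^{(n)}(t)-\boldsymbol{\zeta}(t)\|_\infty$ and $\|p^{(n)}-p\|_{\mathrm{TV}}$ as (constant-in-the-slab) inhomogeneous terms, the same slab recursion and geometric summation deliver the claimed bound. The only mildly delicate point — and the step I would double-check carefully — is the clean reduction of the $\boldsymbol{\phi}$- and $\boldsymbol{\psi}$-perturbation terms to a total-variation distance; this needs the observation that $\varphi^{(\textbf{k},a)}_\omega(\boldsymbol{\zeta})$ lies in $[0,1]$ uniformly (it is a probability, being $\sum_{\boldsymbol{\xi}}\Theta^{(a)}_\omega(\boldsymbol{\xi})\binom{\textbf{k}}{\boldsymbol{\xi}}\prod[\zeta]^{\xi}$, a convex combination of values of the stochastic kernel), so that $|\sum_{\textbf{k}}\varphi^{(\textbf{k},a)}_\omega(q^{(n)}_{\textbf{k}|a}-q_{\textbf{k}|a})|\le \|q^{(n)}_{\cdot|a}-q_{\cdot|a}\|_{\mathrm{TV}}$; everything else is the routine Gr\"onwall bookkeeping.
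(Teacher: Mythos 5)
Your proposal is correct and follows essentially the same route as the paper's proof: the same slab-by-slab integral-equation decomposition, the same split of the perturbation into a total-variation term (using that $\varphi^{(\textbf{k},a)}_\omega(\boldsymbol{\zeta})\in[0,1]$) plus a Lipschitz term, and the same unrolling of the recursion $e_j\le (1-\Delta L)^{-1}(e_{j-1}+\Delta\|q^{(n)}-q\|_{\mathrm{TV}})$ via the geometric sum. The second inequality is likewise handled identically, with the $\boldsymbol{\psi}$-term split into the $M$-Lipschitz part and the $p$-total-variation part.
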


The proof can be found in Appendix \ref{Convergence_ODE}. 

In particular, if $p^{(n)}_{\textbf{k}|a}$ is a truncated version of $p_{\textbf{k}|a}$, we can apply to the previous bound  the following statement of immediate verification:
\begin{proposition} \label{prop:asympdist2} Consider a generic distribution 	$p_{\textbf{k}|a}$ and its 
truncated version  	$p^{(n)}_{\textbf{k}|a}$, i.e.  	$p^{(n)}_{\textbf{k}|a}= \frac{p_{\textbf{k}|a}\ind_{{{\textbf{k}} \in {\cal B}_{n}  }}}  {\sum_{\textbf{k}\in{\cal B}_{n} }  p_{\textbf{k}|a} }$ for a generic compact set ${\cal B}_n\in \N^{|\mathcal A|}$, then  we have: 	
\[
\|p^{(n)}_{\textbf{k}|a}-p_{\textbf{k}|a}\|_{\mathrm{TV}}= 1- \sum_{\textbf{k}\in{\cal B}_{n} }  p_{\textbf{k}|a} .
\]
\end{proposition}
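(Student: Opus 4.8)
The plan is to read off the total variation distance directly from its definition, exploiting the special structure of a truncated–renormalized law. Write $S_n := \sum_{\textbf{k}\in\mathcal{B}_n} p_{\textbf{k}|a}$ for the mass that the original distribution places on the window $\mathcal{B}_n$; then $0< S_n\le 1$, and the strict positivity $S_n>0$ is exactly what makes $p^{(n)}_{\textbf{k}|a}$ well defined. By construction $p^{(n)}_{\textbf{k}|a}=p_{\textbf{k}|a}/S_n$ for $\textbf{k}\in\mathcal{B}_n$ and $p^{(n)}_{\textbf{k}|a}=0$ otherwise, so the sign of the difference $p^{(n)}_{\textbf{k}|a}-p_{\textbf{k}|a}$ is constant on each of the two pieces $\mathcal{B}_n$ and $\N^{|\mathcal{A}|}\setminus\mathcal{B}_n$.

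First I would split the sum $\|p^{(n)}_{\textbf{k}|a}-p_{\textbf{k}|a}\|_{\mathrm{TV}}=\tfrac12\sum_{\textbf{k}}|p^{(n)}_{\textbf{k}|a}-p_{\textbf{k}|a}|$ along this partition. On $\mathcal{B}_n$ the difference equals $p_{\textbf{k}|a}\bigl(\tfrac{1}{S_n}-1\bigr)=p_{\textbf{k}|a}\tfrac{1-S_n}{S_n}\ge 0$, whose sum over $\mathcal{B}_n$ is $\tfrac{1-S_n}{S_n}\,S_n=1-S_n$. On the complement the difference is $-p_{\textbf{k}|a}\le 0$, whose absolute sum is $\sum_{\textbf{k}\notin\mathcal{B}_n}p_{\textbf{k}|a}=1-S_n$. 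Adding the two contributions and halving yields $\|p^{(n)}_{\textbf{k}|a}-p_{\textbf{k}|a}\|_{\mathrm{TV}}=1-S_n=1-\sum_{\textbf{k}\in\mathcal{B}_n}p_{\textbf{k}|a}$, the asserted identity. An equivalent one-line route uses $\|\mu-\nu\|_{\mathrm{TV}}=\sup_A(\mu(A)-\nu(A))$ together with the observation that this supremum is attained at $A=\mathcal{B}_n$, precisely the set where $p^{(n)}\ge p$, giving $p^{(n)}(\mathcal{B}_n)-p(\mathcal{B}_n)=1-S_n$.

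There is no real obstacle here; the statement itself advertises the verification as immediate. The only points requiring a modicum of care are fixing the normalization convention for $\|\cdot\|_{\mathrm{TV}}$ (the factor $\tfrac12$, consistent with the usage in the proof of Proposition \ref{thm:ASD_over_gen_net}) so that the equality rather than twice it comes out, and recording the mild nondegeneracy $\sum_{\textbf{k}\in\mathcal{B}_n}p_{\textbf{k}|a}>0$ so that the truncated law exists. Everything else is bookkeeping.
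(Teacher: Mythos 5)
Your computation is correct and is exactly the immediate verification the paper alludes to (the paper omits the proof, calling the statement ``of immediate verification''): splitting the sum over $\mathcal{B}_n$ and its complement gives $1-S_n$ from each piece, and halving recovers the claimed identity. The only caveats you note --- the factor $\tfrac12$ in the total-variation convention and the nondegeneracy $\sum_{\textbf{k}\in\mathcal{B}_n}p_{\textbf{k}|a}>0$ --- are the right ones, so nothing is missing.
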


%ASD over general random networks
% !TEX root = ~/DB-Poli/Work/Threshold_models/Main/ASD-COVID-19.tex
\subsection{Asymptotic behavior on labeled configuration model with power-law degree distribution}
\label{sec:asdconfig}

In this section we consider the classical configuration model with a {\emph{truncated power-law degree distribution}, which is a particular case of labeled configuration model with $|\mathcal{A}|=1$. We simplify the notation:
let $p^{(n)}_{d,k}$ be the fraction of nodes with in-degree $d$ and out-degree $k$, where we have highlighted the number of nodes $n$, and let $\overline{d}=\sum_{d,k}dp^{(n)}_{d,k}$ be the average degree. 

\begin{center}
\begin{table}[h!]\centering
\boxed{\begin{tabular}{c|c}
 $\sum_{d,\textcolor{black}{k}}{d}q^{(n)}_{d,k} =\Theta(n^{\delta})$ & $0 \leq \delta<1/2$ \\
 $q_{k}^{(n)}\propto k^{-\beta}$ & $\beta>2$ \\
 $k_{\max}=\Theta(n^{\zeta})$ & $\zeta<\min\left\{\frac{1-\delta}{2},\frac{1}{\beta-1} \right\}$  \\
\end{tabular}}
\caption{Assumptions on power-law degree distribution}\label{tab:ass}
\end{table}
\end{center}
 
\begin{assumption}\label{ass1}
Let us assume that $$\sum_{d,\textcolor{black}{k}}{d}q^{(n)}_{d,k} =\sum_{d,\textcolor{black}{k}}\frac{d^2}{\overline d}p^{(n)}_{d,k} =\Theta(n^\delta)$$
with $0 \leq \delta < 1/2$. This means that we allow the average in-degree of a node, reached by  an edge selected uniformly at random,
to possibly scale with $n$. Moreover, we will assume that 
$$ q^{(n)}_{k}=\frac{1}{\overline{d}}\sum_{d}dp^{(n)}_{d,k} =O(k^{-\beta})$$
follows a power-law of exponent $\beta > 2$ and maximum value $k_{\max} = \Theta(n^{\zeta})$
with $\zeta<\min\left\{\frac{1-\delta}{2},\frac{1}{\beta-1} \right\}.$
\end{assumption}

Let $\mu_s $ be the $s$-th moment of $q = \{q^{(n)}_k\}_{k\geq0}$.
From Assumption \ref{ass1} we have
\begin{equation}\label{eq:mus1}
 \mu_s = \begin{cases} 
\Theta(1) &\mbox{if } \beta > s+1  \\
\Theta(n^{\zeta (s+1-\beta)}) &\mbox{if } 2 < \beta < s+1 .
\end{cases} 
\end{equation}
Notice that, being $\beta>2$,  $\mu_1$ is always finite and, therefore,  does not scale with $n.$
\medskip

In order to guarantee that the ASD over a network drawn uniformly at random from the configuration model ensemble is well approximated by the solution of ODE, it is sufficient that the terms in the upper bounds derived in \eqref{Top_res_conf_model} (see Example \ref{es:clssical_cm}), in Theorem \ref{thm:concentration_modgraphs}, and in Proposition \ref{prop:asympdist1} go to zero when $n\rightarrow\infty$. 
In the following, let $\mathcal{N}^{(n)}=(\V^{(n)},\E^{(n)})$ be a sequence of networks,  each one sampled from the corresponding model ensemble $\mathfrak{C}_{n,p^{(n)}}$, 
where $\{p^{(n)}\}_n$ is a sequence of truncated versions of a power law distribution of $p$ satisfying Assumption \ref{ass1}. 
For $t\geq0$, let $\mathcal{N}^{(n)}_{t}$ be the relevant neighborhood of a node $v$ chosen uniformly at random from the node set $\mathcal{V}^{(n)}$. Moreover, let $\mathcal{T}^{(n)}_{t}$ be the sequence of  truncated  Galton-Watson (GW)  processes  (see \cite{durrett_2010}) for which the root offspring  follows distribution $p^{(n)}$, while the degree of non-root nodes follow law $q^{(n)}$. Finally, let $p^{(n)} \stackrel{n \to \infty}{\to} p$ and $q^{(n)} \stackrel{n \to \infty}{\to} q$. We summarize the main assumptions and notations in Table \ref{tab:ass}.

Before presenting the topological result for the configuration model with power-law degree
distribution, we present two technical results, whose proofs are postponed to Appendix \ref{app:B}.

\begin{lemma}[Bound on the number of nodes/edges in $\mathcal{T}_t$] \label{lemma:F_branching}
Consider the GW process $\mathcal{T}$  in which the offspring distribution of the root follows law $p$, while the degree of remaining nodes follows law $q$. Let  $\mathcal{T}_t$ be  the corresponding random 
tree obtained by truncating $\mathcal{T}$ at time $t$, and $\widetilde{W}_t$ be the number of nodes in $\mathcal{T}_t$.  Let $h_n=c\log n$ for some $c>0$ and $t=o(h_n)$ as $n\rightarrow\infty$, then for any $s>0$
we have 
$$
\mathsf{F}_{\widetilde{W}_t}(x_n)\leq \frac{\mathbb{E}[N_{h_n}^s]}{x_n^s}+o(1/n)\qquad n\rightarrow\infty
$$
where  $\{N_{h}\}_{h\in\N}$  is  the number of nodes in a truncated   version of $\mathcal{T}$
with maximal width $h$. 
\end{lemma}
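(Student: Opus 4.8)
\textbf{Proof plan for Lemma \ref{lemma:F_branching}.}

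The plan is to decompose the truncated tree $\mathcal{T}_t$ into two pieces according to whether the exploration (with its exponential delays of mean $1$ on each explored node) has reached depth $h_n = c\log n$ by time $t$ or not. Write $D_t$ for the (random) depth of $\mathcal{T}_t$, i.e.\ the largest generation containing a node activated before time $t$. Then
\begin{align*}
\mathsf{F}_{\widetilde{W}_t}(x_n) = \mathbb{P}(\widetilde{W}_t > x_n) &\le \mathbb{P}(\widetilde{W}_t > x_n,\, D_t \le h_n) + \mathbb{P}(D_t > h_n).
\end{align*}
On the event $\{D_t \le h_n\}$ we have $\widetilde{W}_t \le N_{h_n}$, the number of nodes in the whole branching tree truncated at \emph{width} (generation) $h_n$, so the first term is bounded by $\mathbb{P}(N_{h_n} > x_n) \le \mathbb{E}[N_{h_n}^s]/x_n^s$ by Markov's inequality for any $s > 0$. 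The work is therefore concentrated in showing that the second term $\mathbb{P}(D_t > h_n)$ is $o(1/n)$.

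First I would control $\mathbb{P}(D_t > h_n)$ by a coupling with a pure "speed" process that ignores the branching structure and only tracks how fast the exploration front can descend. To reach generation $h_n$ by time $t$ one needs a directed path $v_0, v_1, \dots, v_{h_n}$ such that the sum of the $h_n$ i.i.d.\ Exp$(1)$ activation delays along it is at most $t$; activating $v_0$ at time $0$, then $v_1$, etc. So $\{D_t > h_n\} \subseteq \bigcup_{\text{paths of length } h_n} \{ E_1 + \dots + E_{h_n} \le t \}$ where the $E_i$ are i.i.d.\ Exp$(1)$. The number of such paths is at most the total number of nodes at generation $h_n$ in the full (untruncated) branching tree; taking expectations and using the offspring-mean $m := \mathbb{E}_q[k]$ (finite since $\beta > 2$ by Assumption \ref{ass1}, so $\mu_1 = \Theta(1)$) together with the root mean, the expected number of generation-$h_n$ nodes is at most $C m^{h_n}$ for a constant $C$. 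Hence by a union bound,
\begin{align*}
\mathbb{P}(D_t > h_n) \le C\, m^{h_n}\, \mathbb{P}\!\left(\sum_{i=1}^{h_n} E_i \le t\right).
\end{align*}
The sum $\sum_{i=1}^{h_n} E_i$ is Gamma$(h_n,1)$, and a standard Chernoff bound gives $\mathbb{P}(\sum_{i=1}^{h_n} E_i \le t) \le e^{-h_n}(e t / h_n)^{h_n}$ for $t \le h_n$. Combining, $\mathbb{P}(D_t > h_n) \le C \exp\big(h_n \log m - h_n + h_n \log(e t/h_n)\big) = C\exp\big(h_n(\log m + \log(t/h_n))\big)$. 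Since $t = o(h_n)$, we have $\log(t/h_n) \to -\infty$, so for $n$ large the exponent is $h_n \cdot (-\text{something} \to \infty)$, i.e.\ the bound decays faster than $e^{-K h_n} = n^{-Kc}$ for every fixed $K$; choosing $n$ large makes this $o(1/n)$, which is exactly the claim. (If $m \le 1$ the term $\log m$ only helps; if $m > 1$ it is dominated since $\log(t/h_n) \to -\infty$.)

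The main obstacle is making the coupling between the timed exploration process and the Gamma-tail estimate fully rigorous: one must be careful that the exponential delays are attached to \emph{nodes} (not edges) and that the "first passage time to generation $h_n$" is genuinely a minimum over root-to-generation-$h_n$ paths of sums of node delays — with the root delay possibly shared — so that the union bound over at most $C m^{h_n}$ paths is valid even though the delays along different paths are not independent (independence is not needed for a union bound, only a bound on the number of paths and the marginal Gamma tail along each). A secondary technical point is that $\mathbb{E}[N_{h_n}^s]$ must be shown finite — this follows because $N_{h_n} \le \prod$-type growth is controlled by the moments $\mu_s$, which by \eqref{eq:mus1} are at most $\Theta(n^{\zeta(s+1-\beta)})$, polynomial in $n$, so $\mathbb{E}[N_{h_n}^s] < \infty$ for each fixed $n$ — but the lemma only asserts the bound, not a rate, so this suffices. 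I would also double check that the regime $t = o(h_n) = o(\log n)$ is exactly what forces the exponent to beat any polynomial, since there the Gamma deviation is a genuine large-deviation (lower-tail) event whose probability is super-polynomially small while the branching growth $m^{h_n}$ is only polynomially large in $n$.
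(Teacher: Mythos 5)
Your proposal follows essentially the same route as the paper's proof: the same decomposition of $\{\widetilde{W}_t>x_n\}$ according to whether the depth of $\mathcal{T}_t$ exceeds $h_n$, Markov's inequality applied to $N_{h_n}$ on the bounded-depth event, and a union bound over root-to-generation-$h_n$ paths combined with the observation that the activation time of a generation-$h_n$ node is a $\mathrm{Gamma}(h_n,1)$ sum, whose lower tail (equivalently the upper tail of a rate-one Poisson count at level $h_n$, which is what the paper estimates via Stirling) is super-polynomially small when $t=o(h_n)$. The only differences are cosmetic — you use the first-moment prefactor $C\mu_1^{h_n}$ where the paper uses $n$, and a Chernoff bound in place of the explicit Poisson tail; note also that your Chernoff constant should read $e^{-t}(et/h_n)^{h_n}$ rather than $e^{-h_n}(et/h_n)^{h_n}$, a slip that does not affect the conclusion.
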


\begin{lemma}\label{lemma:powerlaw}
 Let  $\{N_h\}_{h\geq0}$ be a supercritical GW process, in which the offspring distribution of the root follows law $p$, while the degree of remaining nodes follows law $q$.
We have:
$\mathbb{E}[N_h^s] = O(\mu_s \cdot \mu_1^{s(h-1)})$, $\forall \beta > 1$, 
where $\mu_j$ is the $j$-th moment of $q$. 
\end{lemma}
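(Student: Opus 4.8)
\textbf{Proof plan for Lemma \ref{lemma:powerlaw}.}

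The plan is to estimate $\mathbb{E}[N_h^s]$ by peeling off one generation at a time and keeping track of how the $s$-th moment propagates through the recursion $N_h = 1 + \sum_{i=1}^{D} N_{h-1}^{(i)}$, where $D$ is the root offspring (distributed as $p$ for the first step, as $q$ thereafter) and the $N_{h-1}^{(i)}$ are i.i.d.\ copies of the corresponding quantity for the subtrees hanging off the root's children. Working with the generation-width counts is convenient; I would actually bound $\mathbb{E}[N_h^s]$ by a constant times $\mathbb{E}[Z_h^s]$ summed over $h$, where $Z_h$ is the number of nodes at depth exactly $h$, since $N_h = \sum_{j\le h} Z_j$ and $s$-th moments of sums are controlled by $s$-th moments of the summands up to a factor $h^{s-1}$ (which is polylogarithmic in $n$ and hence absorbed into the $O(\cdot)$ when $h = c\log n$), or more simply by noting the process is supercritical so the last generation dominates.

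The heart of the matter is a one-step moment inequality: if $S = \sum_{i=1}^{D} X_i$ with $X_i$ i.i.d.\ nonnegative having finite $s$-th moment and $D$ independent with $\mathbb{E}[D^s] < \infty$, then $\mathbb{E}[S^s] \le C_s\big(\mathbb{E}[D^s]\,\mathbb{E}[X_1]^s + \mathbb{E}[D]\,\mathbb{E}[X_1^s]\big)$ for a constant $C_s$ depending only on $s$ (this is a Rosenthal-type / Bahr--Esseen inequality; for integer $s$ it follows from multinomial expansion, for general $s\ge 1$ from the Marcinkiewicz--Zygmund and Rosenthal inequalities). Iterating this from depth $h$ down to the root, the dominant contribution is the term where we pay $\mu_1 = \mathbb{E}_q[D]$ at every level and the $s$-th-moment penalty $\mu_s$ exactly once (at the top or wherever it is cheapest), giving $\mathbb{E}[Z_h^s] = O(\mu_s\,\mu_1^{s(h-1)})$; the cross terms, where one pays $\mathbb{E}[D^s]$ at an intermediate level, are smaller because each such substitution trades a factor $\mu_1^{s}$ accumulated below that level for a single $\mu_s$, and supercriticality ($\mu_1 > 1$) together with $\mu_s/\mu_1^{s}$ growing in the relevant range means the ``$\mu_s$ at the top'' term wins. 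One must check the bookkeeping that the geometric sum over the $h$ possible locations of the single $\mu_s$-penalty is again only a polynomial-in-$h$ overhead, harmless at the scale $h = c\log n$. Finally I would feed in the explicit scaling $\mu_s = \Theta(n^{\zeta(s+1-\beta)})$ or $\Theta(1)$ from \eqref{eq:mus1} and $\mu_1 = \Theta(1)$, which makes the stated bound $O(\mu_s\mu_1^{s(h-1)})$ concrete, and note the claim is uniform over $\beta > 1$ since we only ever used $\mu_1, \mu_s < \infty$ qualitatively and $\mu_1 > 1$ for supercriticality.

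The main obstacle I anticipate is controlling the constant $C_s$ and the combinatorial explosion when iterating the one-step inequality across $h = \Theta(\log n)$ generations: a naive iteration produces a factor $C_s^{h}$, which is $n^{O(1)}$ and would ruin the bound. The fix is to be careful that the one-step inequality, when applied to the \emph{normalized} quantity $\mathbb{E}[Z_h^s]/\mu_1^{sh}$, has the form $a_h \le a_{h-1} + (\text{small})$ rather than $a_h \le C_s a_{h-1}$, i.e.\ the multiplicative constant attaches only to the lower-order correction and not to the leading term — this is exactly what the decomposition into ``$\mathbb{E}[D^s]\mathbb{E}[X_1]^s$'' (leading, constant-free once we track $\mu_1$ correctly) versus ``$\mathbb{E}[D]\mathbb{E}[X_1^s]$'' (lower order) buys us, provided one is slightly more precise than the crude Rosenthal bound and isolates the pure product term $\mathbb{E}[D]^s \mathbb{E}[X_1]^s \le \mathbb{E}[D^s]\mathbb{E}[X_1]^s$ with coefficient exactly $1$. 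Getting this separation clean is the only delicate point; everything else is routine.
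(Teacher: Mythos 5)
Your plan is correct and, once unwound, lands on the same moment recursion as the paper, but you get there by a genuinely different route, so a comparison is worth making. The paper first reduces the total population $N_h$ to a \emph{single} generation count by a coupling trick: $N_h$ is stochastically dominated by the depth-$h$ generation of a modified process whose offspring is $D+1$; it then works with the probability generating function $\Phi_h=\Psi_{\widetilde q}\circ\Phi_{h-1}$ and differentiates $s$ times via the Fa\`a di Bruno formula, obtaining a recursion for the factorial moments indexed by the partitions $\mathcal K_s=\{(k_1,\dots,k_s):\sum_j jk_j=s\}$, closed by strong induction on $h$ jointly over all moments $j\le s$. Your route — Minkowski over generations (or "last generation dominates") plus a Rosenthal-type bound $\mathbb{E}[S^s]\le C_s\bigl(\mathbb{E}[D^s]\mathbb{E}[X]^s+\mathbb{E}[D]\mathbb{E}[X^s]\bigr)$ iterated down the tree — is the raw-moment, multinomial-expansion version of exactly that partition sum, and it buys you a simpler two-term recursion at the price of having to control the Rosenthal constant. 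You correctly identify the one real danger (a naive iteration produces $C_s^{h}=n^{\Theta(1)}$) and the correct cure: the term that propagates $\mathbb{E}[Z_{h-1}^s]$ is the single-block term of the expansion, whose coefficient is exactly $1$, so after normalising by $\mu_1^{sh}$ its coefficient becomes $\mu_1^{1-s}<1$ by supercriticality and the recursion is a contraction plus a bounded forcing term; the intermediate partition terms can be pushed onto the two extreme ones by Lyapunov interpolation and Young's inequality at the cost of a constant that multiplies only the harmless $\mathbb{E}[D^s]\mathbb{E}[X]^s$ part. (The paper's induction needs the same $\widetilde\mu_1^{1-s}<1$ contraction to keep its big-$O$ constants from compounding over $h=\Theta(\log n)$ levels; you are arguably more explicit about this than the paper is.) Two small imprecisions to fix in the write-up: the $h^{s-1}$ overhead from crudely bounding the $s$-th moment of $\sum_{j\le h}Z_j$ is \emph{not} absorbed into $O(\mu_s\mu_1^{s(h-1)})$ — it would leave a spurious $(\log n)^{s-1}$ factor — so you must use your second option (geometric domination of $\|Z_j\|_s$, giving a constant overhead $\bigl(\mu_1/(\mu_1-1)\bigr)^s$) or the paper's $D+1$ coupling; and similarly the "geometric sum over the $h$ possible locations of the $\mu_s$-penalty" is a \emph{constant}, not merely polynomial-in-$h$, overhead precisely because its ratio is $\mu_1^{1-s}<1$ — if it were genuinely polynomial in $h$ the lemma as stated would again acquire a polylog factor.
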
   
 
\begin{theorem}[Topological result for configuration model with power-law degree distribution]\label{top_res_pl}
With the above definitions, let $\mu_{\mathcal{N}^{(n)}_{t}}$ and $\mu_{\mathcal{T}^{(n)}_{t}}$ be the distributions of $\mathcal{N}^{(n)}_{t}$ and 
$\mathcal{T}^{(n)}_{t}$, respectively.  
Under Assumption \ref{ass1}, for $t=o(\log n)$, we have
$\|\mu_{\mathcal{T}^{(n)}_{t}}-\mu_{\mathcal{N}^{(n)}_{t}}\|_{\mathrm{TV}}\le  \mathbb{P}(\mathcal{T}_{t}\neq \mathcal{N}_{t}) =o(1)$
 when $n\rightarrow \infty$.
\end{theorem}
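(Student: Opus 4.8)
The plan is to specialize the general bound in Example~\ref{es:clssical_cm}, namely
$$
\|\mu_{\mathcal{T}_{t}}-\mu_{\mathcal{N}_{t}}\|_{\mathrm{TV}}\le \mathbb{P}(\mathcal{T}_{t}\neq \mathcal{N}_{t})\le \inf_{x>0}\left[\mathsf{F}_{\widetilde W_t}(x)+\frac{\sum_{d,k}dq_{d,k}\, x(x+1)}{2n\overline d}\right],
$$
and to show that under Assumption~\ref{ass1}, with $t=o(\log n)$, one can pick a cutoff $x=x_n$ making both terms $o(1)$. First I would invoke Lemma~\ref{lemma:F_branching} to replace $\mathsf{F}_{\widetilde W_t}(x_n)$ by $\mathbb{E}[N_{h_n}^s]/x_n^s+o(1/n)$ for $h_n=c\log n$ and any $s>0$, and then Lemma~\ref{lemma:powerlaw} to bound $\mathbb{E}[N_{h_n}^s]=O(\mu_s\,\mu_1^{s(h_n-1)})$. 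Since $\mu_1=\Theta(1)$ (because $\beta>2$), the factor $\mu_1^{s(h_n-1)}$ is only quasi-polynomial in $n$: $\mu_1^{s c\log n}=n^{s c\log\mu_1}$, which can be absorbed by choosing the constant $c$ in $h_n$ small enough (equivalently, using $t=o(\log n)$ one actually only needs the truncation height $h_n$ to grow slightly faster than $t$, so $c$ can be taken arbitrarily small). The moment $\mu_s$ is, by \eqref{eq:mus1}, either $\Theta(1)$ or $\Theta(n^{\zeta(s+1-\beta)})$, i.e.\ at worst polynomial in $n$ with an exponent we can control through $\zeta$ and the choice of $s$.

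Next I would balance the two terms. Write $c_n:=\sum_{d,k}dq_{d,k}=\Theta(n^\delta)$ by Assumption~\ref{ass1}, so the second term is $\Theta(n^{\delta-1}x_n^2)$. The first term is (up to the $o(1/n)$) at most $C\,\mu_s\,n^{sc\log\mu_1}/x_n^{s}$. Choosing $x_n=n^{\theta}$ for a suitable $\theta$, the second term is $\Theta(n^{\delta-1+2\theta})$, which tends to $0$ iff $\theta<(1-\delta)/2$; and the first term behaves like $n^{(\text{poly exponent})-s\theta}$, which tends to $0$ for $s$ large once $\theta>0$, provided $c$ (hence the stray $\log\mu_1$ contribution) is fixed small. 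So the argument is: fix $\theta$ strictly between $0$ and $(1-\delta)/2$; then pick $c$ small enough and $s$ large enough that $\mu_s\,n^{sc\log\mu_1}\,n^{-s\theta}\to 0$; this is possible because $\mu_s=\Theta(n^{\zeta(s+1-\beta)})$ contributes an exponent $\zeta(s+1-\beta)$ that grows only linearly in $s$ with slope $\zeta<1/(\beta-1)<\theta$ achievable — here the constraint $\zeta<(1-\delta)/2$ from Table~\ref{tab:ass} is exactly what guarantees room to place $\theta$ between $\zeta$ and $(1-\delta)/2$. Then $\mathsf{F}_{\widetilde W_t}(x_n)+\Theta(n^{\delta-1}x_n^2)=o(1)$, which via Example~\ref{es:clssical_cm} gives $\|\mu_{\mathcal{T}^{(n)}_t}-\mu_{\mathcal{N}^{(n)}_t}\|_{\mathrm{TV}}\le\mathbb{P}(\mathcal{T}_t\neq\mathcal{N}_t)=o(1)$.

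The main obstacle I anticipate is the bookkeeping of the competing exponents: one must verify that the interval of admissible $\theta$ is nonempty, i.e.\ that $\zeta$ (the slope in $n$ of the per-unit-$s$ moment growth, after accounting for the $\mu_1^{sh_n}$ blow-up which is made negligible by shrinking $c$) is strictly below $(1-\delta)/2$, and that the maximal width $h_n=c\log n$ can be taken with $c$ as small as needed while still dominating $t=o(\log n)$ — this is where the hypothesis $t=o(\log n)$ (rather than $t=O(\log n)$) is essential, since it lets $c\to 0$. A secondary subtlety is that $p^{(n)}$ is a truncated power law, so strictly one should also check (via Propositions~\ref{prop:asympdist1}–\ref{prop:asympdist2}) that replacing the true asymptotic branching process by the truncated one costs only $o(1)$; but for the purely topological statement of this theorem the truncation only makes $\mathcal{T}^{(n)}_t$ stochastically smaller, so the same bounds apply a fortiori and no extra work is needed beyond noting $k_{\max}=\Theta(n^\zeta)$ is consistent with the chosen $x_n=n^\theta$ when $\theta\ge\zeta$.
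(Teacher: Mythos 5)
Your proposal matches the paper's proof in all essentials: the same starting bound from Example~\ref{es:clssical_cm}, the same use of Lemma~\ref{lemma:F_branching} with $h_n=c\log n$ and Lemma~\ref{lemma:powerlaw}, the same choice of $x_n$ as a power of $n$ strictly below $n^{(1-\delta)/2}$, and the same exponent bookkeeping exploiting $\zeta<\min\{(1-\delta)/2,1/(\beta-1)\}$ with $c$ and $\gamma$ (your $\theta$) taken small and $s$ large. The only cosmetic difference is that the paper splits into two cases according to whether $\mu_s$ is finite or scales with $n$, whereas you handle both at once by always taking $s$ large; both verifications are correct.
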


\begin{proof}
From inequality \eqref{Top_res_conf_model}, we have
 \begin{align}\label{Top_res_conf_model1}
 \|\mu_{\mathcal{T}_{t}}-\mu_{\mathcal{N}_{t}}\|_{\mathrm{TV}}&\leq  \mathbb{P}(\mathcal{T}_{t}\neq \mathcal{N}_{t})\leq
  \inf_{x>0} \left[\mathsf{F}_{\widetilde{W}_t}(x) + \frac{{\sum_{d,k}dq_{d,k}}x(x+1)}{2n\overline{d}} \right]
 \end{align}
where $\widetilde{W}_t$ is the number of nodes in $\mathcal{T}_t$.
Let $x_n=n^{(1-\delta)/2-\gamma}$ for some $\gamma>0$ and $h_n=c\log n$ for some $c>0$ then
\begin{align*}
  \|\mu_{\mathcal{T}_{t}}-\mu_{\mathcal{N}_{t}}\|_{\mathrm{TV}}&\leq \frac{{\sum_{d}dq_{d,k}}x_n(x_n+1)}{2n\overline{d}} +\mathsf{F}_{\widetilde{W}_t}(x_n) \qquad 
  n\rightarrow\infty\\
  &\leq \frac{{\sum_{d}dq_{d,k}}x_n(x_n+1)}{2n\overline{d}} +\frac{\mathbb{E}[N_{h_n}^s]}{x_n^s} +o(1/n)\qquad 
  \end{align*}
where the last inequality holds for any $s>1$ and  $\{N_{h}\}_{h\in\N}$  is  (the number of nodes of) a truncated  GW process of maximal width $h$, in which the offspring distribution of the root follows law $p$, while the degree of remaining nodes follow law $q$ (see Lemma \ref{lemma:F_branching}).

  Under Assumption \ref{ass1} we prove that there exists $s$ such that $\frac{\mathbb{E}[N_{h_n}^s]}{x_n^s}=o(1/n)$ as $n\rightarrow\infty$ and we conclude that for some $\gamma>0$
   \begin{align*}
  \|\mu_{\mathcal{T}_{t}}-\mu_{\mathcal{N}_{t}}\|_{\mathrm{TV}}&\leq\frac{1}{2\overline{d}n^{2\gamma}}+o(1/n)=o(1)\qquad n\rightarrow\infty.
 \end{align*}
To find a suitable value of $s$, we distinguish two cases.
\begin{itemize}
\item[(i)] 
If $\beta > \lfloor \frac{2}{1-\delta} \rfloor + 2$ we can simply choose $s=\lfloor \frac{2}{1-\delta} \rfloor + 1$. By so doing, we
stay in the case $\beta > s+1$, and from \eqref{eq:mus1} and 
Lemma \ref{lemma:powerlaw} we get: 
$ \mathbb{E}[N_{h_n}^s]=\Theta(n^{{cs}\log\mu_1})$ and
\begin{align*}
\frac{\mathbb{E}[N_{h_n}^s]}{x_n^s}=\Theta(n^{-s\left(\frac{1-\delta}{2}-\gamma-c\log\mu_1\right)})=o(1/n)\qquad n\rightarrow\infty
\end{align*}

\item[(ii)] If $\beta \leq \lfloor \frac{2}{1-\delta} \rfloor + 2$, we choose instead a sufficiently large value of $s$,
falling in the case $s > \beta - 1$ in which $\mu_s$ scales with $n$ as in \eqref{eq:mus1}.
In particular, from Lemma \ref{lemma:powerlaw} we have 
$$
\mathbb{E}[N_{h_n}^s]=\Theta(n^{\zeta(s+1-\beta)+{cs}\log\mu_1}) .
$$
Thus
$$
\mathbb{E}[N_{h_n}^s]/x_n^s=\Theta\left(n^{\zeta(s+1-\beta)+{cs}\log\mu_1-s\left(\frac{1-\delta}{2}-\gamma\right)}\right) .
$$
We now observe that if there exists $s\in\N$ such that
\begin{equation} \label{eq:vincolo3}
s \left( \frac{1-\delta}{2} - \zeta - c \log{\mu_1}-\gamma \right) > 1 - \zeta(\beta-1) 
\end{equation}
then $\mathbb{E}[N_{h_n}^s]/x_n^s=o(1/n)$ for $n\rightarrow\infty$.
Since $\zeta < 1/(\beta-1)$, the right hand side in \equaref{vincolo3} is positive,
and since $\zeta < \min\{\frac{1-\delta}{2}\}$, we can always find two sufficiently small constants
$\gamma$ and $c$ such that $\left( \frac{1-\delta}{2} - \zeta - c \log{\mu_1}-\gamma \right)$ is also
positive. Therefore, there exists an integer $s$ large enough such that both $s > \beta - 1$ and \equaref{vincolo3}
are satisfied. 
\end{itemize}
\end{proof}
\begin{corollary}\label{THM:FINAL}
For $t \geq 0$, let $\boldsymbol{Z}(t)$ be the state vector of the ASD dynamics on $\mathcal{N}^{(n)}$ and $z{\color{black}{_{\omega}^{(n)}}}(t) =\frac{1}{n}|\{v\in\V:Z_v(t)=\omega\}|$ be the fraction of state-$\omega$ adopters at time $t$. Under Assumptions \ref{ass1} for $t=o(\log n)$, for any $\eta>0$ % \textcolor{magenta}{Per essere piu' conservativi lo scriverei come  $t=o(\log^\gamma n)$  with $\gamma<1$} 
\begin{align*}
\mathbb{P}(|z{\color{black}{_{\omega}^{(n)}}}(t)-y_{\omega}(t)|>\eta)= o(1)\qquad \text{for } n\rightarrow\infty
\end{align*}
where $y_{\omega}(t)$ is the solution of \eqref{eq:zeta2}  over a GW tree  $\mathcal{T}_{t}$ with the 
asymptotic degree statistics $p$  and $q$.
\end{corollary}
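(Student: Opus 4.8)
The plan is to combine the three ingredients already assembled in the paper: the topological bound (Theorem~\ref{top_res_pl}), the concentration bound (Theorem~\ref{thm:concentration_modgraphs}), and the ODE-stability estimate (Proposition~\ref{prop:asympdist1}), via the master inequality of Proposition~\ref{thm:ASD_over_gen_net}. Concretely, I would first invoke Proposition~\ref{thm:ASD_over_gen_net} to write, for the finite network $\mathcal{N}^{(n)}$ of size $n$,
\begin{equation*}
\mathbb{P}\bigl(|z_{\omega}^{(n)}(t)-y_{\omega}^{(n)}(t)|\geq \eta\bigr)\leq \mathbb{P}\bigl(|z_{\omega}^{(n)}(t)-\overline{z}_{\omega}^{(n)}(t)|\geq \eta-\|\mu_{\mathcal{N}^{(n)}_t}-\mu_{\mathcal{T}^{(n)}_t}\|_{\mathrm{TV}}\bigr),
\end{equation*}
where $y_{\omega}^{(n)}(t)$ is the solution of \eqref{eq:zeta2} driven by the truncated statistics $p^{(n)},q^{(n)}$. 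Since $t=o(\log n)$, Theorem~\ref{top_res_pl} gives $\|\mu_{\mathcal{N}^{(n)}_t}-\mu_{\mathcal{T}^{(n)}_t}\|_{\mathrm{TV}}=o(1)$, so for $n$ large the threshold on the right is at least $\eta/2$, and I then bound that probability by Theorem~\ref{thm:concentration_modgraphs} with $b(t)=n z_{\omega}^{(n)}(t)$ and the choice $\eta\mapsto\eta/2$.

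The second step is to show the concentration bound actually tends to $0$. Here I would choose the free parameters in Theorem~\ref{thm:concentration_modgraphs} as in the proof of Theorem~\ref{top_res_pl}: take $x=x_n=n^{(1-\delta)/2-\gamma}$ for small $\gamma>0$, fix $\epsilon>0$ constant, and pick $s$ large enough (splitting into the two cases $\beta>\lfloor\frac{2}{1-\delta}\rfloor+2$ and $\beta\leq\lfloor\frac{2}{1-\delta}\rfloor+2$ exactly as before) so that $t n\, \mathbb{E}_v[|V_t^v|^s]/x_n^s$ and $\sum_{w}|\boldsymbol{\delta}_w|\,\mathbb{E}[|V_t^w|^s]/x_n^s=o(1)$, using Lemma~\ref{lemma:F_branching} and Lemma~\ref{lemma:powerlaw} to control the moments of the (size-biased) relevant-neighborhood sizes $V_t^v$ by those of the truncated GW trees. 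The exponential terms $\e^{-\eta^2 n/(1152(1+\epsilon)tx_n^2)}$, $\e^{-nt\epsilon^2/(2(1+\epsilon))}$, $\e^{-\eta^2 n/(288(t+\eta/12))}$, $\e^{-\eta^2 n/(128\overline d x_n^2)}$ all vanish because $n/x_n^2=n^{\delta+2\gamma}\to\infty$ (this is where $\delta<1/2$ and $t=o(\log n)$ enter) and $nt\to\infty$. Hence $\mathbb{P}(|z_{\omega}^{(n)}(t)-y_{\omega}^{(n)}(t)|\geq\eta)=o(1)$.

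The third step is the replacement of $y_{\omega}^{(n)}(t)$ by $y_{\omega}(t)$, the solution of \eqref{eq:zeta2} with the asymptotic statistics. I would apply Proposition~\ref{prop:asympdist1} together with Proposition~\ref{prop:asympdist2}: since $p^{(n)},q^{(n)}$ are truncations of $p,q$ with $k_{\max}=\Theta(n^{\zeta})$, we get $\|q^{(n)}-q\|_{\mathrm{TV}}=1-\sum_{k\le k_{\max}}q_k=O(k_{\max}^{1-\beta})=o(1)$ because $\beta>2$, and likewise $\|p^{(n)}-p\|_{\mathrm{TV}}=o(1)$; taking matched initial conditions $\|\boldsymbol\zeta_0^{(n)}-\boldsymbol\zeta_0\|_\infty\to0$ and $\|\boldsymbol y_0^{(n)}-\boldsymbol y_0\|_\infty\to0$, Proposition~\ref{prop:asympdist1} yields $\sup_{t\in[0,m\Delta]}\|\boldsymbol y^{(n)}(t)-\boldsymbol y(t)\|_\infty\to0$ on any fixed window $[0,m\Delta]$, and in particular $|y_{\omega}^{(n)}(t)-y_{\omega}(t)|=o(1)$. (One should note the Lipschitz hypotheses of Proposition~\ref{prop:asympdist1} hold here because the state space is finite and $\boldsymbol\Theta$ is a stochastic kernel, so $\boldsymbol\phi,\boldsymbol\psi$ are polynomials with bounded coefficients on $[0,1]^{|\mathcal A|}$, hence Lipschitz.) Combining the three steps via a triangle inequality $|z_{\omega}^{(n)}(t)-y_{\omega}(t)|\le|z_{\omega}^{(n)}(t)-y_{\omega}^{(n)}(t)|+|y_{\omega}^{(n)}(t)-y_{\omega}(t)|$ and absorbing the deterministic $o(1)$ term into $\eta$ gives the claim.

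I expect the main obstacle to be the bookkeeping in the second step: verifying that the \emph{same} choice of $(x_n,\epsilon,s)$ simultaneously kills every one of the several heterogeneous terms in the concentration bound of Theorem~\ref{thm:concentration_modgraphs}, in particular reconciling the moment-growth estimates $\mathbb{E}[N_{h_n}^s]=\Theta(\mu_s\mu_1^{s(h_n-1)})$ with the polynomial budget $x_n^s=n^{s((1-\delta)/2-\gamma)}$, which forces $c$ (in $h_n=c\log n$, equivalently the window implied by $t=o(\log n)$) and $\gamma$ to be taken small in a coordinated way — exactly the delicate constraint \eqref{eq:vincolo3} from the proof of Theorem~\ref{top_res_pl}, now also required to dominate the extra factor $tn$ appearing inside the concentration bound rather than the single factor $n$ in the topological one. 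Everything else is a routine assembly of results already proved.
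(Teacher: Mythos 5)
Your proposal is correct and follows essentially the same route as the paper's proof: a triangle-inequality decomposition through the truncated-statistics solution $y^{(n)}_{\omega}(t)$, Proposition~\ref{thm:ASD_over_gen_net} combined with Theorem~\ref{top_res_pl} to reduce to concentration around the mean, Theorem~\ref{thm:concentration_modgraphs} with moment bounds from Corollary~\ref{cor:mom} and Corollary~\ref{coro:powerlaw} for that concentration, and Propositions~\ref{prop:asympdist1}--\ref{prop:asympdist2} for $|y^{(n)}_{\omega}(t)-y_{\omega}(t)|\to 0$. The only divergence is in the free parameters of Theorem~\ref{thm:concentration_modgraphs} (the paper fixes $x=n^{4/9}$, $s=3$, whereas you take $x_n=n^{(1-\delta)/2-\gamma}$ and $s$ large via the case analysis of Theorem~\ref{top_res_pl}); your choice is if anything the more carefully justified one, since it makes explicit how the extra factor $tn$ is absorbed.
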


\begin{proof}
Let $\mathcal{T}^{(n)}_{t}$ be the continuous-time branching process truncated up to time $t$, $\mu_{\mathcal{T}^{(n)}_{t}}$ be its distribution. Denote by $\boldsymbol{y}^{(n)}(t)$ the solution of \eqref{eq:zeta2} with $p_{\textbf{k}|a}^{(n)}$ and initial condition $ \boldsymbol{y}^{(n)}_0$.
We have {\color{black}\begin{align}\begin{split}\label{eq:conver}
\mathbb{P}(|z^{(n)}_{\omega}(t)-y_{\omega}(t)|\geq \eta)\leq\mathbb{P}(|z^{(n)}_{\omega}(t)-y^{(n)}_{\omega}(t)|\geq \eta/2)+\mathbb{P}(|y^{(n)}_{\omega}(t)-y_{\omega}(t)|\geq \eta/2)
\end{split}\end{align}
From Theorem \ref{top_res_pl}  we obtain that if  $t=o(\log n)$ as $n\rightarrow\infty$ we have $\|\mu_{\mathcal{T}^{(n)}_{t}}-\mu_{\mathcal{N}^{(n)}_{t}}\|_{\mathrm{TV}}=o (1)$ when $n\rightarrow \infty$. We conclude that for any 
$\eta>0$ there exists a sufficiently large $n_0$ such that, if $n\geq n_0$,  then $\|\mu_{\mathcal{N}^{(n)}_{t}} -\mu_{\mathcal{T}^{(n)}_{t}}\|_{\mathrm{TV}}\leq  \mathbb{P}(\mathcal{T}_{t}\neq \mathcal{N}_{t})   \leq {\eta}.$
 Using Proposition \ref{thm:ASD_over_gen_net}, it follows that for any $\eta>0$ and large enough $n$:
 \begin{align}\label{eq:conver}
\mathbb{P}(|z^{(n)}_{\omega}(t)-y_{\omega}(t)|\geq \eta)\leq\mathbb{P}(|z^{(n)}_{\omega}(t)-\overline{z}^{(n)}_{\omega}(t)|\geq \eta/4)+\mathbb{P}(|y^{(n)}_{\omega}(t)-y_{\omega}(t)|\geq \eta/2)\nonumber
\end{align}
From Theorem \ref{thm:concentration_modgraphs}, by choosing $x=n^{4/9}$, $s=3$, and $\epsilon>0$,
  we get
 \begin{align}
 &\mathbb{P}(|z^{(n)}_{\omega}(t)-y_{\omega}(t)|\geq \eta)\leq \mathbb{P}(|y^{(n)}_{\omega}(t)-y_{\omega}(t)|\geq \eta/2)+o(1)\nonumber
\end{align}
where, we have  applying jointly Corollary \ref{cor:mom} and
Corollary \ref{coro:powerlaw} in Appendix~\ref{app:B} to bound the third moment of $V_t^v$. 
Finally  Propositions \ref{prop:asympdist1} and \ref{prop:asympdist2} guarantee that  $|y^{(n)}_{\omega}(t)-y_{\omega}(t)| \to 0 $ and $\mathbb{P}(|y^{(n)}_{\omega}(t)-y_{\omega}(t)|\geq \eta/2)=0$ for large enough $n$.}
 \end{proof}

\begin{remark}[Relation to Theorem 1 in \cite{RCF18}] In \cite{RCF18} a similar approximation result was proved in the specific case of binary synchronous LTM. More precisely Theorem 1 in \cite{RCF18} implies that, for sequences of networks whose network statistics converge to a given limit as the network size grows large, the concentration result is guaranteed on the configuration model for finite values of $t$, provided that the maximum in- and out-degrees remain bounded.  It is worth remarking that Theorem \ref{THM:FINAL}, proved in this paper, is much more general and the result applies to:
(a) asynchronous semi-anonymous dynamics;
(b) networks with maximum in- and out-degrees growing as a function of network size $n$;
(c) networks with power-law degree distribution.
\end{remark}

\begin{remark}[Transient behavior versus Asymptotic behavior]
Theorem \ref{THM:FINAL} guarantees the approximation result for values of $t$ growing at most as $o(\log n)$. However, using techniques devised in \cite{2011arXiv1111.5710B} for the exchange of limits in $t $ and $n$, it can be shown that with high probability, as the network size grows large, the asymptotic fraction of $\omega$-adopters concentrates on the set of all stationary points of the ODE in \eqref{prop:ODE}. This implies that, when the ODE in \eqref{prop:ODE} admits multiple stationary points, then from our result it is not possible to predict the asymptotic limit of the system. However, if the system in \eqref{prop:ODE} admits a unique (globally attractive) stationary point, convergence is guaranteed in that point for all initial conditions. 
\end{remark}
}

\subsection{Asymptotic behavior on Configuration Block Model}\label{section:CBM2}
{
In this section, we apply the bounds derived for Local Weak Convergence and Concentration Property in Section \ref{sec:rn}
to study the asymptotic ASD on a labeled configuration model
with community structure, which is a key feature of many real systems. 
In particular, we consider a Configuration Block Model (CBM) 
with $K$ communities of sizes $\{n_i\}_{i=1}^K$, which are mapped into corresponding classes with labels $\{a_i\}_{i=1}^K$.

\begin{comment}
a variation of the Stochastic Block model with $K$ communities, that we call Configuration
Block Model.
The CBM is a random network $G=G(\{n_i\}_{1\leq i\leq K},
\{\sigma_{ij}\}_{1\leq i,j\leq K})$, with $K,n_i\in\mathbb N:=\{1,2,\ldots\}$, 
$\sum_{i=1}^{K}n_i=n$ formed by the combination of $K$ random graphs in which edges are directed, and the statistics 
of in/out degrees from/to nodes of the same/different community is given.
The edges joining nodes of communities are independently added with probability $\sigma_{ij}$. 
\end{comment}

When the maximum in/out degree of nodes is finite both Local Weak Convergence 
and Concentration property can be easily proven by taking the simple worst-case
in which all nodes have in/out degree equal to the maximum in/out degree, so we
will consider here a more challenging case, where in/out degree of nodes is allowed to scale with $n$. 

However, to simplify the analysis, we assume no correlation between in-degree and out-degree of a node.
As a consequence, the law of $p$ is the same as the law of $q$,
and $\zeta_{\omega|a,a'}(t) = y_{\omega|a}(t)$ (see Proposition \ref{prop:ODE}).

Moreover, we assume that the number of edges established from a node of community $i$ towards
nodes of community $j$ is independent for any pair $(i,j)$, 
including the special case $i = j$, i.e., intra-community edges.
Therefore, $p_{\textbf{d},\textbf{k}|a}$ factorizes into:
$$p_{\textbf{d},\textbf{k}|a} = \prod_{i \in \mathcal{A}} p^{\texttt{in}}_{i,a}[d_i] \prod_{j \in \mathcal{A}} p^{\texttt{out}}_{a,j}[k_j]$$
We will require that in/out degree sequences of the nodes, although possibly dependent on the network size $n$,
generate empirical distributions $p^{\texttt{in}}_{i,a}[d]$ and $p^{\texttt{out}}_{a,j}[k]$ with a light tail, 
for any pair $(i,a)$ or $(a,j)$, thus having finite moments of any order.   

\begin{comment}
mimicking what happens 
when edges are independently added with assigned probabilities 
$\sigma_{ij}$, where $i$ is the label (community) of the tail node
and $j$ is the label (community) of the head node. 
As a consequence, a node in community $i$ will have
a number of outgoing edges directed towards nodes of community $j$ taken from 
the $\textsf{Bin}(n_j,\sigma_{ij})$ law, and an independent number of ingoing edges coming 
from nodes of community $j$ taken from the $\textsf{Bin}(n_j,\sigma_{j,i})$ law.
In general, $\sigma_{i,j}$ could be different
from $\sigma_{j,i}$ but for simplicity we will 
take $\sigma_{i,j} = \sigma_{j,i}$. 
{\color{violet} To apply our results, we require $\max_{i,j} \{n_j \sigma_{i,j}\} < M$, where $M$ is some constant independent of $n$.}
% 
Note that in CBM there is no correlation between in-degree and out-degree of a node. As a consequence, the law of $p$ is the same as the law of $q$,
and $\zeta_{\omega|a,a'}(t) = y_{\omega|a}(t)$ (see Proposition \ref{prop:ODE}).
However, the maximum in/out degree is of the order of $n$.
}
\end{comment}
 
In order to guarantee that the ASD over a network drawn uniformly at random from the CBM ensemble is well approximated by the solution of the ODE, it is sufficient that the terms in the upper bound derived in \equaref{bound_topologico} and in Theorem \ref{thm:concentration_modgraphs} 
go to zero when $n\rightarrow\infty$. In the following, let $\mathcal{N}^{(n)}=(\V^{(n)},\E^{(n)})$ be a sequence of networks,  each one sampled from the corresponding CBM ensemble  \\ $G^{(n)}=G(\{n_i\}_{i=1}^{K},p^{(n)}_{\textbf{d},\textbf{k},a})$,
and, for $t\geq0$, let $\mathcal{N}^ {(n)}_{t}$ be the  relevant neighborhood of a node $v$ chosen uniformly at random from the 
node set $\mathcal{V}^{(n)}$. Moreover, let $\mathcal{T}^{(n)}_{t}$ be the sequence of GW processes with offspring 
distribution following law $p^{(n)}$. Finally, let $p^{(n)} \stackrel{n \to \infty}{\to} p$.
\medskip

\begin{theorem}\label{top_res_sbm}
Under above definitions and assumptions on the CBM ensemble, 
let $\mu_{\mathcal{N}^{(n)}_{t}}$ and $\mu_{\mathcal{T}^{(n)}_{t}}$ be the distributions of $\mathcal{N}^{(n)}_{t}$ 
and $\mathcal{T}^{(n)}_{t}$, respectively.
For $t=o(\log n)$, we have $\|\mu_{\mathcal{T}^{(n)}_{t}}-\mu_{\mathcal{N}^{(n)}_{t}}\|_{\mathrm{TV}}=o(1)$ 
when $n\rightarrow \infty$.
\end{theorem}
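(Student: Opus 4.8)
The plan is to apply the general topological bound of Theorem \ref{thm:top_res_modgraphs} (equivalently, the classical configuration-model style bound specialized to the labeled setting) and show that each term on the right-hand side vanishes as $n\to\infty$ under the CBM assumptions, namely when $x = x_n$ is chosen to grow like a suitable power of $n$, and $t = o(\log n)$. Concretely, for each ordered pair of community labels $(b,a)$, the bound contributes three pieces: the tail term $\mathsf{F}_{W^{b,a}_t}(x_{b,a})$ counting the number of $\mathcal T_t$-edges from label $b$ to label $a$, a quadratic-in-$x$ term weighted by $\frac{1}{l_{b,a}}\sum_{\textbf d,\textbf k} d_b q^{b}_{\textbf d,\textbf k|a}$, and cross terms $x_{b,a}x_{b',a}\,\frac{1}{l_{b,a}}\sum_{\textbf d,\textbf k} d_b q^{b'}_{\textbf d,\textbf k|a}$. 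Since $|\mathcal A| = K$ is a fixed constant, it suffices to control one generic summand; the full sum is a constant multiple of the worst summand.

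The first step is to bound the non-tail terms. Because the CBM has $p_{\textbf d,\textbf k|a}$ factorizing across communities with light-tailed per-community in/out degree marginals $p^{\texttt{in}}_{i,a}$, $p^{\texttt{out}}_{a,j}$ having finite moments of every order, the relevant weighted averages $\frac{1}{l_{b,a}}\sum_{\textbf d,\textbf k} d_b q^{b}_{\textbf d,\textbf k|a}$ are essentially $\frac{1}{n}\cdot\frac{\mathbb E[(\text{in-deg from }b)^2]}{\mathbb E[\text{in-deg from }b]}$ up to constants, hence $\Theta(1/n)$ (here I would use the explicit formula \eqref{eq:qnew} for $q^{b}_{\textbf d,\textbf k|a}$ and the definition $l_{b,a} = n\sum_{\textbf d,\textbf k} d_b p_{\textbf d,\textbf k,a}$ to get the factor $1/n$ cleanly, with the second moment of the in-degree being $O(1)$ by the light-tail assumption). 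So the quadratic and cross terms are $O(x_n^2/n)$, which is $o(1)$ as soon as $x_n = o(\sqrt n)$.

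The second step, and the main obstacle, is bounding the tail term $\mathsf F_{W^{b,a}_t}(x_n) = \mathbb P(W^{b,a}_t > x_n)$, i.e., controlling the number of nodes/edges in the truncated branching process $\mathcal T_t$. This is exactly the role played by Lemma \ref{lemma:F_branching} and Lemma \ref{lemma:powerlaw} in the power-law section: by Markov's inequality on the $s$-th moment, $\mathsf F_{W^{b,a}_t}(x_n)\le \mathbb E[(W^{b,a}_t)^s]/x_n^s + o(1/n)$, and by the branching-process moment estimate $\mathbb E[N_{h_n}^s] = O(\mu_s\,\mu_1^{s(h_n-1)})$ with $h_n = c\log n$. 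Here the situation is actually \emph{easier} than the power-law case, because the light-tail assumption makes all moments $\mu_s$ of the offspring distribution $\Theta(1)$ (no scaling with $n$), so $\mathbb E[N_{h_n}^s] = O(\mu_1^{s\,c\log n}) = O(n^{sc\log\mu_1})$. The plan is then to take $x_n = n^{1/3}$ (or any fixed power strictly between $0$ and $1/2$), choose $c$ small enough that $c\log\mu_1 < 1/3$, and pick $s$ large enough that $s(1/3 - c\log\mu_1) > 1$; this forces $\mathbb E[N_{h_n}^s]/x_n^s = o(1/n) = o(1)$, while simultaneously $x_n^2/n = n^{-1/3} = o(1)$ handles the other terms. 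Since $t = o(\log n)$ we have $t = o(h_n)$, so Lemma \ref{lemma:F_branching} applies. Summing the finitely many $(b,a)$ contributions and invoking $\|\mu_{\mathcal T_t} - \mu_{\mathcal N_t}\|_{\mathrm{TV}} \le \mathbb P(\mathcal T_t\neq \mathcal N_t)$ from Theorem \ref{thm:top_res_modgraphs} concludes the proof. The one point requiring care is verifying that the offspring law of $\mathcal T^{(n)}_t$ inherits uniformly-in-$n$ finite moments from the light-tail hypothesis on $p^{(n)}$ — this is where the factorization of $p_{\textbf d,\textbf k|a}$ across the $K$ communities and the finite-moments assumption are used together, and it is what lets the branching-process moment bound have an $n$-independent base $\mu_1$.
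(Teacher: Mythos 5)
Your proposal follows essentially the same route as the paper's proof: apply the general topological bound of Theorem~\ref{thm:top_res_modgraphs} with $x_{b,a}=x_n$ equal to a power of $n$ strictly below $\sqrt{n}$, control the quadratic and cross terms by $O(x_n^2/n)$ using the light-tailed in-degree assumption, and control the tail term $\mathsf{F}_{W^{b,a}_t}(x_n)$ via Lemma~\ref{lemma:F_branching} combined with the finite-moment Galton--Watson estimate (Corollary~\ref{coro:finitemom}) at depth $h_n=c\log n$, choosing $c$ small and $s$ large. The one step you flag but do not carry out --- reducing the multi-type branching process, whose offspring law varies with the class, to a single-type one with an $n$-independent offspring law so that the single-type moment bound applies --- is handled in the paper by explicitly constructing a dominating distribution $p^{\texttt{out}}_*$ whose cumulative distribution function is the minimum of the per-class ones (hence stochastically dominating each class and still exponentially tailed), which is a routine completion of the observation you already make.
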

The proof is postponed to Appendix \ref{app:B}.

\begin{corollary}\label{THM:FINAL-CBM}
Under  above definitions and assumptions on the CBM ensemble, 
for $t \geq 0$, let $\boldsymbol{Z}(t)$ be the state vector of the ASD dynamics on $\mathcal{N}^{(n)}$ and $z_{\omega}^{(n)}(t) =\frac{1}{n}|\{v\in\V:Z_v(t)=\omega\}|$ be the fraction of state-$\omega$ adopters at time $t$. For $t=o(\log n)$ and any $\eta>0$:
\begin{align*}
\mathbb{P}(|z_{\omega}^{(n)}(t)-y_{\omega}(t)|>\eta)= o(1)\qquad \text{for } n\rightarrow\infty
\end{align*}
where $y_{\omega}(t)$ is the solution of \eqref{eq:zeta2}  over a GW tree  $\mathcal{T}_{t}$ with the 
asymptotic degree statistics $p$  and $q$.
\end{corollary}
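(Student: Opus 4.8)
The plan is to mirror the proof of Corollary~\ref{THM:FINAL}, replacing the power-law topological estimate by Theorem~\ref{top_res_sbm} and exploiting the light-tail assumption on the CBM degree sequences. First I would fix $\eta>0$ and split
\begin{align*}
\mathbb{P}(|z_{\omega}^{(n)}(t)-y_{\omega}(t)|\geq \eta)\leq \mathbb{P}(|z_{\omega}^{(n)}(t)-y_{\omega}^{(n)}(t)|\geq \eta/2)+\mathbb{P}(|y_{\omega}^{(n)}(t)-y_{\omega}(t)|\geq \eta/2),
\end{align*}
where $y_\omega^{(n)}$ solves \eqref{eq:zeta2} with the finite-$n$ statistics $p^{(n)}$. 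For the first term I would invoke Theorem~\ref{top_res_sbm} to get $\|\mu_{\mathcal{N}^{(n)}_t}-\mu_{\mathcal{T}^{(n)}_t}\|_{\mathrm{TV}}\leq \mathbb{P}(\mathcal{T}_t\neq\mathcal{N}_t)=o(1)$ for $t=o(\log n)$, so that, by Proposition~\ref{thm:ASD_over_gen_net}, for $n$ large it suffices to bound the concentration of $z^{(n)}_\omega(t)$ around its ensemble mean $\overline{z}^{(n)}_\omega(t)$ at level $\eta/4$.

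Second, I would apply Theorem~\ref{thm:concentration_modgraphs} with parameters $s=3$, $x=n^{4/9}$, and a fixed $\epsilon>0$. The work here is to control $\mathbb{E}_v[|V^{v}_t|^s]$ and $\sum_{w}|\boldsymbol{\delta}_w|\,\mathbb{E}[|V^{w}_t|^s]$. Under the light-tail hypothesis the empirical in/out degree distributions $p^{\texttt{in}}_{i,a},p^{\texttt{out}}_{a,j}$ have all moments bounded uniformly in $n$; in particular the size-biased offspring moments $\mu_s$ are $O(1)$ and $\overline{d}=O(1)$. Arguing as in Lemma~\ref{lemma:F_branching} and Lemma~\ref{lemma:powerlaw} (their CBM counterparts in Appendix~\ref{app:B}), truncating the branching process at height $h_n=c\log n$ yields $\mathbb{E}[|V^{v}_t|^s]=O(\mu_s\mu_1^{s(h_n-1)})+o(1/n)=n^{o(1)}$ since $t=o(\log n)$, and $\sum_{w}|\boldsymbol{\delta}_w|=l=n\overline{d}=O(n)$. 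With these estimates the exponential terms in Theorem~\ref{thm:concentration_modgraphs} vanish because $tx^2=n^{8/9+o(1)}=o(n)$ and $\overline{d}x^2=o(n)$, while the two moment terms are $O(tn\cdot n^{o(1)}/x^{3})+O(n\cdot n^{o(1)}/x^{3})=O(n^{1+o(1)-4/3})=o(1)$. Hence $\mathbb{P}(|z^{(n)}_\omega(t)-\overline{z}^{(n)}_\omega(t)|\geq\eta/4)=o(1)$.

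Third, for the ODE-comparison term I would use Propositions~\ref{prop:asympdist1} and~\ref{prop:asympdist2}. Taking $p^{(n)}$ to be a truncation of the light-tailed limit $p$, Proposition~\ref{prop:asympdist2} gives $\|p^{(n)}-p\|_{\mathrm{TV}}=1-\sum_{\mathbf{k}\in\mathcal{B}_n}p_{\mathbf{k}|a}$, which decays faster than any negative power of $n$ for a suitably (even slowly) growing $\mathcal{B}_n$. On the other hand the amplification factor $(1-\Delta)^{-m}$ in Proposition~\ref{prop:asympdist1} is $e^{O(m\Delta)}=e^{O(t)}=n^{o(1)}$ over the horizon $t=o(\log n)$, so $\sup_{t'\leq t}\|\boldsymbol{y}^{(n)}(t')-\boldsymbol{y}(t')\|_{\infty}\to 0$ and $\mathbb{P}(|y^{(n)}_\omega(t)-y_\omega(t)|\geq\eta/2)=0$ for $n$ large. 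Here one also verifies that $\boldsymbol{\phi}(\mathbf{z})-\mathbf{z}$ and $\boldsymbol{\psi}(\mathbf{z})$ are Lipschitz on $[0,1]^{|\mathcal{A}|}$, which follows from the light tail of $q$ making the series defining $\phi,\psi$ converge uniformly with Lipschitz partial sums. Combining the three bounds gives $\mathbb{P}(|z^{(n)}_\omega(t)-y_\omega(t)|>\eta)=o(1)$.

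I expect the main obstacle to be the moment control in the concentration step. Since the CBM allows vertex degrees to grow with $n$, one cannot fall back on the crude worst-case $k_{\max}$-bound available in the bounded-degree regime, and must instead bound the $s$-th moment of the size of the relevant neighborhood through the branching-process truncation at height $c\log n$; the light-tail assumption is precisely what keeps $\mathbb{E}[|V^{v}_t|^s]=n^{o(1)}$ and lets the single choice $(s,x)=(3,n^{4/9})$ simultaneously annihilate the exponential and the polynomial error terms of Theorem~\ref{thm:concentration_modgraphs}.
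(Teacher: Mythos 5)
Your proposal is correct and follows essentially the same route as the paper, which proves this corollary by stating that it "follows exactly the same lines" as Corollary~\ref{THM:FINAL}: the same three-way decomposition, Theorem~\ref{top_res_sbm} in place of the power-law topological bound, Theorem~\ref{thm:concentration_modgraphs} with $(s,x)=(3,n^{4/9})$ together with the finite-moment branching estimates (Corollary~\ref{cor:mom} and Corollary~\ref{coro:finitemom} rather than the power-law version), and Propositions~\ref{prop:asympdist1}--\ref{prop:asympdist2} for the ODE comparison. You have in fact spelled out the moment and exponent bookkeeping more explicitly than the paper does.
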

The proof follows exactly the same lines of   Corollary \ref{THM:FINAL}.
{\color{black}
\begin{remark}
Note that  the approach  followed in the proof of Theorem \ref{top_res_sbm} can be extended  to a 
significantly more general class of labeled configuration graphs $\mathfrak{C}_{n,p}$.
The key step of the approach pursued in Theorem \ref{top_res_sbm} is to  find a   distribution meeting the following two constraints: 
 i) it  stochastically  dominates  the out-degree distribution of every class; ii)  its tail  is not too heavy, so that we can  effectively bound the moments on the number of nodes in the corresponding GW truncated tree by exploiting \eqref{eq:mic1} (see Appendix \ref{app:B}).
 Under such conditions we can conclude that  $\|\mu_{\mathcal{T}_{t}}-\mu_{\mathcal{N}_{t}}\|_{\mathrm{TV}}=o (1)$ 
 when $n\rightarrow \infty$. In particular, the previous considerations apply every time the dominating distribution is a 
 power law meeting the constraints in Assumption \ref{ass1}.
\end{remark}

} 
} 
%Configuration model
\section{Analysis of mean-field ODE on regular random graphs}\label{sec:mf}
 
In the case of regular random graphs, where all nodes have the same out-degree, we can analytically derive some interesting properties
of the ODEs describing the temporal evolution of the system (see Proposition ~\ref{prop:ODE}), for each of the three examples of ASD introduced in
Section \ref{Examples_ASD}. In particular, we can characterize the equilibrium points of the system and their stability.      
 
\subsection{Ternary Linear Threshold Model (TLTM)}
We assume that all agents have the same out-degree $k$ and symmetric thresholds, i.e., $k_v=k$ and $a_v^{\pm}=r$ for all $v\in\V$. 
In this case, there is a single class $a=r$ for which $p_{k|r}= q_{k|r} = 1$, and 
\begin{align*}
\Theta^{(r)}_1(\xi_1,\xi_{-1},k-\xi_1-\xi_{-1})&=\boldsymbol{1}_{\{\xi_1-\xi_{-1}\geq r\}}\\
\Theta^{(r)}_0(\xi_1,\xi_{-1},k-\xi_1-\xi_{-1})&=\boldsymbol{1}_{\{\xi_1-\xi_{-1}\in(-r,r)\}}\\
\Theta^{(r)}_{-1}(\xi_1,\xi_{-1},k-\xi_1-\xi_{-1})&=\boldsymbol{1}_{\{\xi_1-\xi_{-1}\leq- r\}}
\end{align*}
From Proposition \ref{prop:ODE} we have that, given an initial distribution $(y_1(0),y_{-1}(0))$, the dynamics over the 
continuous-time branching process is described by 
\begin{equation}\begin{cases}
\frac{\mathrm{d}y_{1}}{\mathrm{d}t}=\phi^{(k,r)}_{+}(y_1(t),y_{-1}(t))-y_{1}(t),\\
\frac{\mathrm{d}y_{-1}}{\mathrm{d}t}=\phi^{(k,r)}_{-}(y_1(t),y_{-1}(t))-y_{-1}(t),\\
\end{cases}\label{ODE:TLTM}
\end{equation}
and $y_{0}=1-y_{1}-y_{-1}$, where
\begin{equation}\begin{split}\label{eq:out_reg_phi}
\phi_{+}^{(k,r)}(x,z) &= \sum_{u=r}^{k} \sum_{v=0}^{(k-u) \wedge (u-r)}{k \choose u}{k-u \choose v} x^{u} z^v (1-x-z)^{k-u-v}\\
&=\sum_{v=0}^{\lfloor\frac{k-r}{2}\rfloor}\sum_{u=v+r}^{k-v }{k \choose u}{k-u \choose v} x^{u} z^v (1-x-z)^{k-u-v}
\end{split}
\end{equation}
and $\phi_{-}^{(k,r)}(x,z) = \phi_{+}^{(k,r)}(z,x)$.

Some analytical properties of the dynamical system can be deduced from the analysis of $\phi_+^{(k,r)}(x,z)$ and $\phi_-^{(k,r)}(x,z)$. 
In particular, we are interested in finding stationary points, i.e. those points satisfying the following equations
$$
\begin{cases}\phi_{+}^{(k,r)}(x,z)=x\\
\phi_{-}^{(k,r)}(x,z)=z,         \qquad  x=z\le 1
\end{cases}
$$
and analysing their stability properties. 
Before presenting the main result we give some preliminary lemmas. 

\begin{lemma}\label{lemma:prop_phi}
Let $\phi_{+}^{(k,r)}(x,z)$ be as defined in \eqref{eq:out_reg_phi}. Then the following properties hold:
\begin{enumerate}
\item $\phi^{(k,r)}_{+}(x,z)$ is non decreasing in $x$ and strictly increasing if $0<r\leq k$;
\item $\phi^{(k,r)}_{+}(x,z)$ is non increasing in $z$ and strictly decreasing if $0\leq r< k$;
\item \label{lemma:prop_phi3}$\phi^{(k,r)}_{+}(0,z)=0$ for $0<r\leq k$, $\phi^{(k,0)}_{+}(x,0)=1$, $\phi^{(k,r)}_{+}(1,0)=1$, for $0\leq r\leq k$;
\item $\nabla_x\phi^{(k,r)}_{+}=\sum_{v=0}^{\lfloor{\frac{k-r}{2}}\rfloor}\chi(v+r-1,v)x^{v+r-1} z^v (1-x-z)^{k-2v-r}$ with
$$\chi(u,v)={k\choose u}{k-u\choose v}(k-u-v)$$
\item $\nabla_z\phi^{(k,r)}_{+}=-\sum_{u=r}^k\chi(u,v') x^{u} z^{v'} (1-x-z)^{k-u-v'-1}$ with $$v'=(k-u)\wedge (u-r).$$
\end{enumerate}
\end{lemma}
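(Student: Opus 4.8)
\textbf{Proof plan for Lemma \ref{lemma:prop_phi}.}

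The plan is to work directly with the closed-form expressions for $\phi_{+}^{(k,r)}(x,z)$ and its partial derivatives, treating the polynomial identity carefully. First, I would establish the derivative formulas in items (4) and (5), since (1) and (2) follow from them by inspecting signs. Starting from the first representation $\phi_{+}^{(k,r)}(x,z) = \sum_{u=r}^{k} \sum_{v=0}^{(k-u)\wedge(u-r)} \binom{k}{u}\binom{k-u}{v} x^{u} z^{v} (1-x-z)^{k-u-v}$, I would differentiate term by term in $x$. Each term contributes $u\, x^{u-1} z^{v}(1-x-z)^{k-u-v} - (k-u-v)\, x^{u} z^{v} (1-x-z)^{k-u-v-1}$ (times the binomial coefficients); the key observation is that the sum telescopes: the $-(k-u-v)x^u z^v(1-x-z)^{k-u-v-1}$ piece from index $u$ cancels against the $+u' x^{u'-1}z^{v}(\cdots)$ piece from index $u'=u+1$, once one checks the binomial identity $\binom{k}{u}(k-u-v) = \binom{k}{u+1}(u+1)$ on the relevant support. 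What survives is only the boundary contribution at the smallest $u$ for each fixed $v$, namely $u = v+r$, yielding $\nabla_x \phi_{+}^{(k,r)} = \sum_{v=0}^{\lfloor (k-r)/2\rfloor} \binom{k}{v+r}\binom{k-v-r}{v}(k-2v-r)\, x^{v+r-1} z^{v}(1-x-z)^{k-2v-r}$, which is exactly the stated formula with $\chi(v+r-1,v)$ after rewriting $\binom{k}{v+r}\binom{k-v-r}{v} = \binom{k}{v+r-1}\binom{k-v-r+1}{v}\cdot\frac{\cdots}{\cdots}$ — I would double-check this coefficient bookkeeping, as it is the most error-prone part. The computation of $\nabla_z \phi_{+}^{(k,r)}$ is analogous, differentiating in $z$ and telescoping in $v$ for each fixed $u$, leaving only the $v = (k-u)\wedge(u-r)$ boundary term with a negative sign.

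Next, items (1) and (2): from the derivative formula in (4), every summand of $\nabla_x\phi_{+}^{(k,r)}$ is a product of nonnegative factors on the domain $x,z\ge 0$, $x+z\le 1$, so $\phi_{+}^{(k,r)}$ is non-decreasing in $x$; strict monotonicity when $0<r\le k$ follows because the $v=0$ term equals $\binom{k}{r-1}\binom{k-r+1}{0}(k-r)x^{r-1}(1-x-z)^{k-r} = \binom{k}{r-1}(k-r)x^{r-1}(1-x-z)^{k-r}$, wait — when $r = k$ this term involves $(k-r)=0$, so I should instead note that for $r=k$ the only term is $\binom{k}{k}x^{k-1}$ from differentiating $x^k$ directly; in any case for $0<r\le k$ there is at least one strictly positive term on the interior. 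Similarly (2) follows from the non-positivity of each summand of $\nabla_z\phi_{+}^{(k,r)}$ in (5), with strictness for $0\le r<k$ coming from the $u=r$ (or $u=k-1$) term. I would phrase this by identifying explicitly a single term that is strictly positive (resp.\ negative) on the open region.

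Finally item (3): the boundary evaluations. For $\phi_{+}^{(k,r)}(0,z) = 0$ when $0<r\le k$, note that every term has a factor $x^{u}$ with $u\ge r\ge 1$, hence vanishes at $x=0$. For $\phi_{+}^{(k,0)}(x,0) = 1$: setting $r=0$ and $z=0$, the surviving terms are those with $v=0$, giving $\sum_{u=0}^{k}\binom{k}{u} x^{u}(1-x)^{k-u} = (x + (1-x))^{k} = 1$ by the binomial theorem. For $\phi_{+}^{(k,r)}(1,0) = 1$ with $0\le r\le k$: at $z=0$ only $v=0$ terms survive, leaving $\sum_{u=r}^{k}\binom{k}{u} x^{u}(1-x)^{k-u}$, and at $x=1$ every term with $u<k$ vanishes while the $u=k$ term (present since $r\le k$) equals $1$. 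The main obstacle throughout is purely the telescoping-cancellation bookkeeping in establishing (4) and (5)—getting the binomial-coefficient identities and the exact surviving boundary index right—whereas everything downstream is a routine sign and binomial-theorem check.
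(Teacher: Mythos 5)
The paper gives no proof of this lemma (it is dismissed as trivial), so there is nothing to compare against; your direct computation via term-by-term differentiation and telescoping is the natural argument and it does work. The key cancellation identity you cite, $\binom{k}{u}\binom{k-u}{v}(k-u-v)=\binom{k}{u+1}\binom{k-u-1}{v}(u+1)$, is correct, the negative boundary term at $u=k-v$ vanishes because $k-u-v=0$ there, and the analogous telescoping in $v$ for fixed $u$ (with the $v=0$ positive part vanishing) gives item (5) exactly. Items (1)--(3) then follow by sign inspection and the binomial theorem as you describe.

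The one slip is in the coefficient of the surviving boundary term for item (4), which you yourself flagged: differentiating $x^{u}$ at $u=v+r$ produces the factor $u=v+r$, not $k-2v-r$, so the surviving term is
$\binom{k}{v+r}\binom{k-v-r}{v}(v+r)\,x^{v+r-1}z^{v}(1-x-z)^{k-2v-r}$, and the bookkeeping closes via
$\binom{k}{v+r}(v+r)=\binom{k}{v+r-1}(k-v-r+1)$ together with $\binom{k-v-r}{v}(k-v-r+1)=\binom{k-v-r+1}{v}(k-2v-r+1)$, which yields exactly $\chi(v+r-1,v)$ since $k-(v+r-1)-v=k-2v-r+1$. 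Fixing this also dissolves your worry about the case $r=k$: the $v=0$ coefficient is $\binom{k}{k-1}\binom{1}{0}(k-r+1)=k$, not $0$, so $\nabla_x\phi^{(k,k)}_{+}=kx^{k-1}$ comes out of the general formula with no special-casing, and the strict monotonicity claims in items (1) and (2) follow from the single strictly signed term you identify ($v=0$ for item (1), $u=r$ for item (2)) on the open simplex.
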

The proof is trivial and we omit it for brevity. 

\begin{proposition}If $2 \leq r < k$, then
\begin{enumerate}
\item   the equation $\phi^{(k,r)}_+(x,0) = x$ has exactly three solutions $\{0,x^{\star},1\}$ with $x^{\star}\in(0,1)$;
\item  for every $x\in(x^{\star},\bar{x})$ with $0<x^{\star}< \bar{x}\leq1$ there exists a unique value   $z(x)$ such that
$
\phi^{(k,r)}_{+}(x,z(x))=x.
$
\item  the equation $\phi^{(k,r)}_-(0,z) = z$ has exactly three solutions $\{0,z^{\star},1\}$ with $z^{\star}\in(0,1)$;

\item for every $z \in (z^{\star}, \bar{z})$ with $0<z^{\star}< \bar{z}\leq1$,  
there exists a unique value  $x(z)$ such that
$
\phi^{(k,r)}_{-}(x(z), z) = z.
$
\end{enumerate}
\end{proposition}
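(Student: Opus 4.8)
The plan is to reduce all four assertions to one–variable facts about the function $g(x):=\phi^{(k,r)}_+(x,0)$, leaning on the monotonicity properties already collected in Lemma~\ref{lemma:prop_phi}. Setting $z=0$ in \eqref{eq:out_reg_phi} eliminates every term with $v\ge1$, so $g(x)=\sum_{u=r}^k\binom{k}{u}x^u(1-x)^{k-u}$ is the upper–tail binomial function, with $g(0)=0$, $g(1)=1$ and, by Lemma~\ref{lemma:prop_phi}(4) with $z=0$, derivative $g'(x)=c\,x^{r-1}(1-x)^{k-r}$ where $c:=k\binom{k-1}{r-1}>0$. Because $2\le r<k$, this $g'$ is a strictly unimodal bump on $[0,1]$ vanishing at both endpoints, so $g$ is strictly increasing on $(0,1)$. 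For Part~1 I would then analyse $h:=g-\mathrm{id}$: it has $h(0)=h(1)=0$ and $h'(0)=h'(1)=-1<0$, and since $\int_0^1 g'=1$ while $g'(0)=0$ and $g'$ is continuous, $\max_{[0,1]}g'>1$; unimodality of $g'$ makes $\{g'>1\}$ a single open interval $(x_1,x_2)\subset(0,1)$, hence $h$ is strictly decreasing on $[0,x_1]$, strictly increasing on $[x_1,x_2]$, strictly decreasing on $[x_2,1]$. With the boundary values $h(0)=h(1)=0$ this forces $h<0$ on $(0,x^\star)$ and $h>0$ on $(x^\star,1)$ for a single $x^\star\in(x_1,x_2)$, i.e.\ exactly the three roots $\{0,x^\star,1\}$. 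Part~3 is then immediate because $\phi^{(k,r)}_-(0,z)=\phi^{(k,r)}_+(z,0)=g(z)$, so it is the same equation and $z^\star=x^\star$.

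For Part~2 I would fix $x\in(x^\star,1)$ and consider $\psi_x(z):=\phi^{(k,r)}_+(x,z)$ on $[0,1-x]$: it is a polynomial whose derivative is strictly negative on the interior (the $u=r$ summand in Lemma~\ref{lemma:prop_phi}(5) is $-\binom{k}{r}(k-r)x^r(1-x-z)^{k-r-1}<0$ there), so $\psi_x$ is strictly decreasing on the closed interval, with $\psi_x(0)=g(x)>x$ by Part~1. The auxiliary function $F(x):=\phi^{(k,r)}_+(x,1-x)-x$ is a polynomial with $F(x^\star)<\psi_{x^\star}(0)-x^\star=g(x^\star)-x^\star=0$, so $\bar x:=\sup\{\tilde x\in(x^\star,1]:F<0 \text{ on }(x^\star,\tilde x)\}$ satisfies $\bar x>x^\star$; for each $x\in(x^\star,\bar x)$ we have $\psi_x(1-x)<x<\psi_x(0)$, and the intermediate value theorem together with strict monotonicity of $\psi_x$ yield a unique $z(x)\in(0,1-x)$ with $\phi^{(k,r)}_+(x,z(x))=x$. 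Part~4 is the mirror image: for fixed $z\in(z^\star,1)$ the map $x\mapsto\phi^{(k,r)}_-(x,z)=\phi^{(k,r)}_+(z,x)$ is, by Lemma~\ref{lemma:prop_phi}(2) read on the second argument of $\phi^{(k,r)}_+$, a strictly decreasing polynomial in $x$ on $[0,1-z]$ with value $g(z)>z$ at $x=0$, so the same scheme with $G(z):=\phi^{(k,r)}_+(z,1-z)-z$ and $\bar z:=\sup\{\tilde z\in(z^\star,1]:G<0 \text{ on }(z^\star,\tilde z)\}>z^\star$ produces a unique $x(z)\in(0,1-z)$ with $\phi^{(k,r)}_-(x(z),z)=z$.

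The crux is Part~1 — establishing that $g(x)=x$ has \emph{exactly} three solutions, not merely at least one interior one. This is precisely where the hypothesis $2\le r<k$ bites: $r\ge2$ forces $g'(0)=0$, so $h$ leaves the origin heading downward (for $r=1$ it would head upward and one gets only the two trivial roots), $r<k$ forces $g'(1)=0$, and the explicit unimodal form $g'=c\,x^{r-1}(1-x)^{k-r}$ is what upgrades the soft estimate $\max g'>1$ into the rigid down/up/down shape of $h$ that pins the count at three. Once that sign pattern is in hand, Parts~2 and~4 are routine intermediate–value arguments; the only subtlety there is to insist on strict monotonicity of $\phi^{(k,r)}_+(x,\cdot)$ (resp.\ $\phi^{(k,r)}_+(z,\cdot)$) on the \emph{closed} interval, so that the IVT can be invoked all the way to the boundary $z=1-x$ (resp.\ $x=1-z$).
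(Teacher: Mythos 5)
Your proof is correct, and it reaches the same conclusions by a somewhat more self-contained route than the paper. For Part~1 the paper invokes the ``lazy-S-shape'' of $\phi^{(k,r)}_+(\cdot,0)$ (increasing, convex up to $\widetilde x=(r-1)/(k-1)$, concave after, citing Lemma~4 of \cite{RCF18}) and reads off the three fixed points from that shape together with $g(0)=0$, $g(1)=1$; you instead work directly with $g'(x)=k\binom{k-1}{r-1}x^{r-1}(1-x)^{k-r}$, use $\int_0^1 g'=1$ with $g'(0)=g'(1)=0$ to get $\max g'>1$, and turn unimodality of $g'$ into the down/up/down sign pattern of $g-\mathrm{id}$. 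The two arguments encode the same underlying fact (the S-shape with slope $<1$ at both endpoints, which is exactly where $2\le r<k$ enters), but yours avoids the external reference. For Part~2 the difference is more substantive: the paper applies the implicit function theorem at $(x^\star,0)$, using $\nabla_z F(x^\star,0)\neq 0$, which yields a locally defined, locally unique (and smooth) branch $z(x)$ near $x^\star$; your IVT-plus-strict-monotonicity argument instead gives, for each admissible $x$, global uniqueness of $z(x)$ over the whole slice $[0,1-x]$ and an explicit characterization of $\bar x$ via the boundary function $F(x)=\phi^{(k,r)}_+(x,1-x)-x$. That is arguably a stronger reading of the (loosely quantified) statement, at the price of not immediately giving differentiability of $z(\cdot)$, which the IFT provides for free. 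Parts~3 and~4 via the symmetry $\phi^{(k,r)}_-(x,z)=\phi^{(k,r)}_+(z,x)$ coincide with the paper's treatment. One small point worth making explicit in your write-up of Part~2: strict negativity of the $u=r$ term of $\nabla_z\phi^{(k,r)}_+$ holds on the open slice $0<z<1-x$ (where $x>0$ and $1-x-z>0$), and strict monotonicity on the closed interval then follows by continuity, as you indicate.
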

\begin{proof}
 From definition \eqref{eq:out_reg_phi} we have
\begin{align*}
\phi^{(k,r)}_+(x,0)=\sum_{u=r}^{k}{k\choose u}x^u(1-x)^{k-u}
\end{align*}
If $2\leq r< k$, the function $\phi^{(k,r)}_+(x,0)$ has a lazy-S-shaped graph, i.e., it is increasing, with a unique inflection point at $\widetilde{x}= (r - 1)/(k - 1)$, it is convex on the left-hand side of $\widetilde{x}$ and concave on the right-hand side of $\widetilde{x}$ (see Lemma 4 in \cite{RCF18}). From this fact and the observation that $\phi^{(k,r)}_+(0,0)=0$ and $\phi^{(k,r)}_+(1,0)=1$ we get the assertion at Point 1. It can also be proved that $x^{\star}\in[(r-1)/k,r/k]$.
Denoting $F(x,z)=\phi^{(k,r)}_+(x,z)-x$ and observing that $F(x^{\star},0)=\phi^{(k,r)}_+(x^{\star},0)-x^{\star}=0$ and $\nabla_z F(x^{\star},0) \neq 0$ (see expression in Point \ref{lemma:prop_phi} of Lemma \ref{lemma:prop_phi}), the statement in Point 2. is obtained by the implicit function theorem \cite{Apostol}. Point 3. and 4. are straightforward from the relation $\phi_-^{(k,r)}(x,z)=\phi_+^{(k,r)}(z,x)$.
\end{proof}
In Figure \ref{fig:ContorniALTM} the functions $z(x)$ such that
$
\phi^{(k,r)}_{+}(x,z(x))=x
$ and $x(z)$ such that $\phi^{(k,r)}_{-}(x(z),z)=z$ are depicted for threshold values  $r=2$ (left) and $r=3$ (right) and degree $k=10$. In addition to stationary points $$\{(0,0),(x^{\star},0),(1,0),(0,z^{\star}),(0,1))\},$$ as derived in Lemma \ref{lemma:prop_phi}, extra stationary points are placed at  intersections between curves $z(x)$
and $x(z)$.
In the specific case with $k=10$ it can be noticed that if $r=2$ then there are two additional stationary points.
\begin{figure}[!ht]\centering
\includegraphics[width=0.45\columnwidth,trim = 20mm 10mm 20mm 10mm, clip]{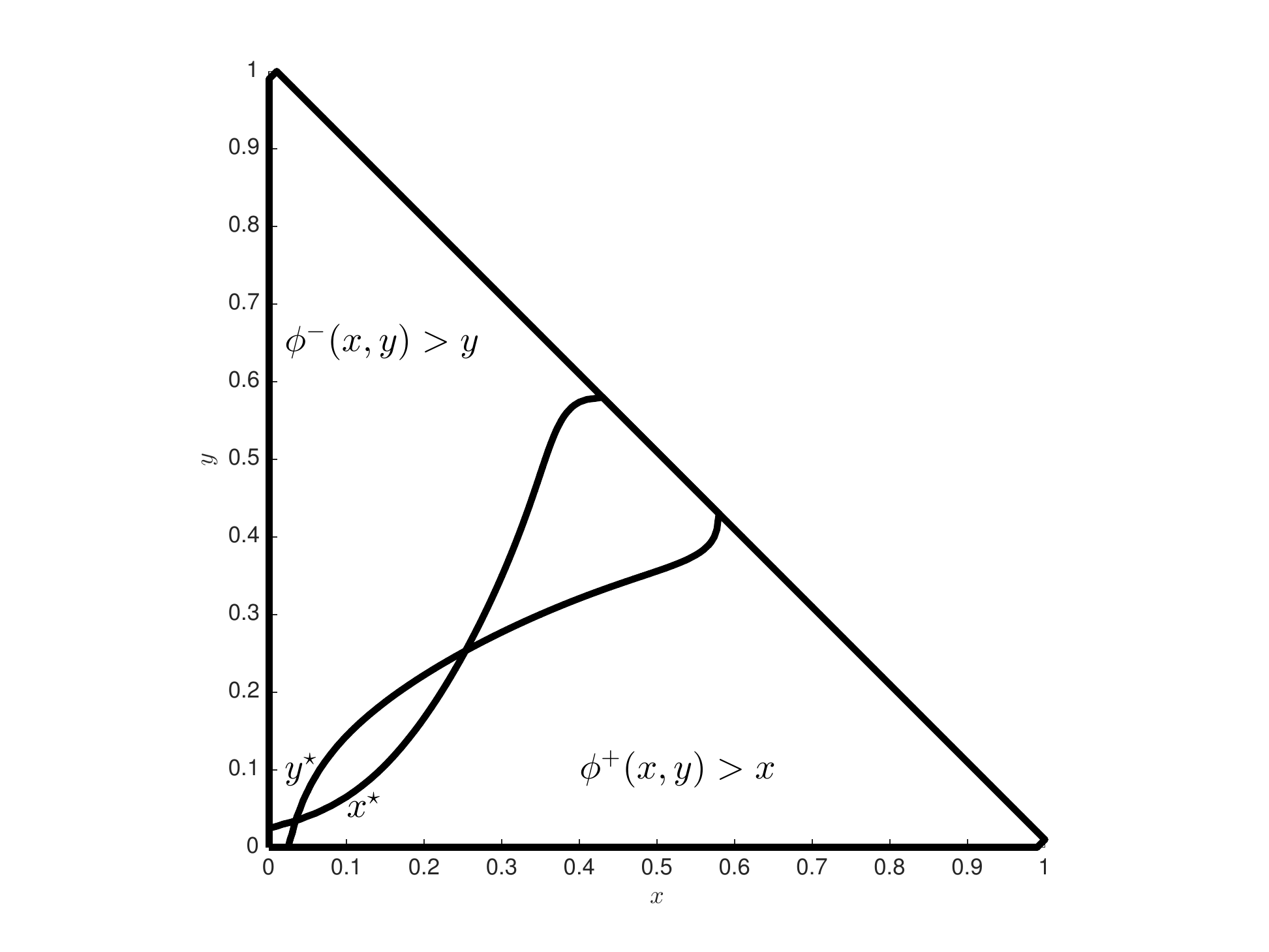}
\includegraphics[width=0.45\columnwidth,trim = 20mm 10mm 20mm 10mm, clip]{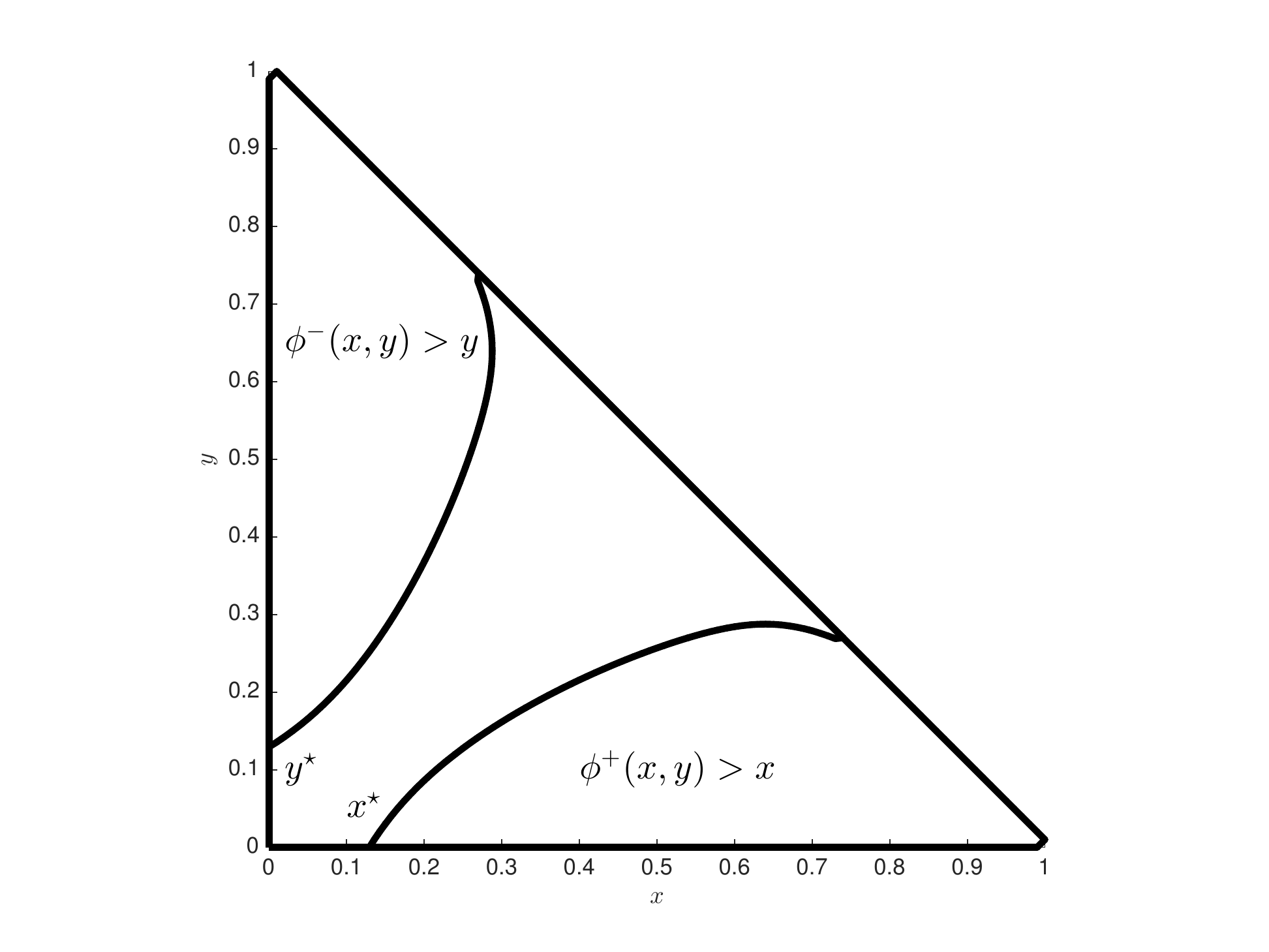}
\caption{$k=10,r=2$ (left), $k=10,r=3$ (right).}\label{fig:ContorniALTM}
\end{figure}
The following proposition gives sufficient conditions guaranteeing that the set of stationary points only contains the trivial points $\{(0,0),(x^{\star},0),(1,0),(0,z^{\star}),(0,1))\}$.
\begin{proposition} If $r\geq (k+1)/2$, 
$\{(0,0),(x^{\star},0),(1,0),(0,z^{\star}),(0,1))\}$ are the only fixed points of the system. 
\end{proposition}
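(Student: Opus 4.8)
The plan is to rule out any stationary point of \eqref{ODE:TLTM} lying in the invariant simplex $\Sigma=\{(x,z):\,x,z\ge 0,\ x+z\le 1\}$ with \emph{both} coordinates strictly positive; the points of $\Sigma$ with $x=0$ or $z=0$ then reduce to the already-solved one-dimensional fixed-point equation and produce exactly the five listed points (each of which is directly checked to be stationary using Lemma~\ref{lemma:prop_phi}).

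First I would record the one-dimensional picture for $\phi_+^{(k,r)}(\cdot,0)$. Under the standing assumption $2\le r<k$, the preceding proposition and the lazy-S shape of $\phi_+^{(k,r)}(\cdot,0)$ give that $x\mapsto\phi_+^{(k,r)}(x,0)-x$ vanishes only at $0,x^\star,1$, and since $\phi_+^{(k,r)}(x,0)=\binom{k}{r}x^r(1-x)^{k-r}+o(x^r)$ as $x\to 0^+$ with $r\ge 2$, this difference is negative just to the right of $0$, hence $\phi_+^{(k,r)}(x,0)\le x$ on the whole interval $[0,x^\star]$. The quantitative input I need is $x^\star\ge\tfrac12$, which follows from the estimate $\phi_+^{(k,r)}(\tfrac12,0)=2^{-k}\sum_{u=r}^{k}\binom{k}{u}\le\tfrac12$: because $r\ge(k+1)/2$ we have $k-r<r$, so the blocks $\{r,\dots,k\}$ and $\{0,\dots,k-r\}$ are disjoint, and $\binom{k}{u}=\binom{k}{k-u}$ makes their binomial sums equal, so $2\sum_{u=r}^{k}\binom{k}{u}\le\sum_{u=0}^{k}\binom{k}{u}=2^{k}$. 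Combined with the sign pattern of $\phi_+^{(k,r)}(\cdot,0)-\mathrm{id}$, the inequality $\phi_+^{(k,r)}(\tfrac12,0)\le\tfrac12$ forces $\tfrac12\le x^\star$.

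Now I would argue by contradiction: suppose $(x,z)\in\Sigma$ is stationary with $x>0$ and $z>0$; then $x\in(0,1)$ and $0<z\le 1-x$. By Lemma~\ref{lemma:prop_phi}(2), $\phi_+^{(k,r)}$ is strictly decreasing in its second argument (this uses $r<k$), so $x=\phi_+^{(k,r)}(x,z)<\phi_+^{(k,r)}(x,0)$; since $\phi_+^{(k,r)}(y,0)\le y$ for $y\in[0,x^\star]$, this forces $x>x^\star$. By the symmetry $\phi_-^{(k,r)}(x,z)=\phi_+^{(k,r)}(z,x)$ the same argument gives $z>x^\star$, whence $x+z>2x^\star\ge 1$, contradicting $(x,z)\in\Sigma$. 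Therefore every stationary point has $x=0$ or $z=0$. If $z=0$, then $\phi_-^{(k,r)}(x,0)=\phi_+^{(k,r)}(0,x)=0$ by Lemma~\ref{lemma:prop_phi}(3), so the system reduces to $\phi_+^{(k,r)}(x,0)=x$, i.e.\ $x\in\{0,x^\star,1\}$; the case $x=0$ is symmetric (with $z^\star=x^\star$), and together these yield exactly $\{(0,0),(x^\star,0),(1,0),(0,z^\star),(0,1)\}$.

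The one genuinely delicate point is the \emph{strictness} of the inequality $x=\phi_+^{(k,r)}(x,z)<\phi_+^{(k,r)}(x,0)$: one must confirm that the monotonicity of $\phi_+^{(k,r)}$ in its second argument is strict at the relevant value of $x\in(0,1)$ (not merely non-increasing), which is precisely where $r<k$ enters, and which — in the borderline situation $x^\star=\tfrac12$ (e.g.\ $k=3$, $r=2$, where the only remaining candidate would be $(\tfrac12,\tfrac12)$) — is exactly what eliminates the spurious fixed point. Everything else is routine bookkeeping with the lazy-S description of $\phi_+^{(k,r)}(\cdot,0)$, the symmetry $\phi_-^{(k,r)}(x,z)=\phi_+^{(k,r)}(z,x)$, and invariance of $\Sigma$ under \eqref{ODE:TLTM}.
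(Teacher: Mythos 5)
Your proof is correct, but it follows a genuinely different route from the paper's. The paper's argument is a two-line computation restricted to the diagonal: it bounds $\phi_{+}^{(k,r)}(x,x)$ from above by the binomial tail $\varphi^{(k,r)}(x)=\sum_{u= r}^{k}\binom{k}{u}x^{u}(1-x)^{k-u}$ and observes that $r\geq(k+1)/2$ forces $\varphi^{(k,r)}(x)<x$ for $x<1/2$, so no stationary point with $x=z>0$ can survive inside the simplex. Your argument rests on the same arithmetic kernel --- $\sum_{u=r}^{k}\binom{k}{u}\leq 2^{k-1}$, equivalently $\varphi^{(k,r)}(1/2)\leq 1/2$, equivalently $x^{\star}\geq 1/2$ --- but deploys it through the strict monotonicity of $\phi_{+}^{(k,r)}$ in its second argument (Lemma \ref{lemma:prop_phi}) to show that any stationary point with both coordinates positive would require $x>x^{\star}$ and $z>x^{\star}$, hence $x+z>2x^{\star}\geq 1$. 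This buys something real: the paper's computation, as written, only excludes stationary points lying on the diagonal, whereas the candidate extra equilibria are intersections of the curves $z(x)$ and $x(z)$ and are not a priori confined to the diagonal (only the \emph{pair} $(x,z)$, $(z,x)$ is forced by the symmetry $\phi_{-}^{(k,r)}(x,z)=\phi_{+}^{(k,r)}(z,x)$); your monotonicity argument excludes all off-axis candidates in the simplex and is therefore the more complete justification of the statement. The delicate points you flag --- strictness of the decrease in the second argument for $x\in(0,1)$, which is where $r<k$ enters, and the reduction of the axis cases to the one-dimensional equation via $\phi_{+}^{(k,r)}(0,\cdot)\equiv 0$ --- are handled correctly, so I see no gap.
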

\begin{proof}
Notice that
\begin{align*}
\phi_{+}^{(k,r)}(x,x)=\phi_{-}^{(k,r)}(x,x)&=\sum_{u=r}^{k}\sum_{v=0}^{(k-u)\wedge (u-r)}{k \choose u}{k-u \choose v}x^{u+v}(1-2x)^{k-u-v}\\
&\leq \sum_{u=r}^{k}{k \choose u}x^u(1-x)^{k-u}=\varphi^{(k,r)}(x).
\end{align*}
If $r\geq (k+1)/2$, we have $\varphi^{(k,r)}(x)< x$ for all $x<1/2$ and, consequently, $\phi_{+}^{(k,r)}(x,x)< x$ from which we conclude the assertion.
\end{proof}

The following corollary can be proved by linearization.
\begin{corollary}\label{corol:Local_stability_analysis}The following properties hold
\begin{enumerate}
\item $(0,0)$ is a locally stable stationary point for $2\leq r\leq k$, unstable if $r=1$.
\item $(0,1)$ and $(1,0)$ are locally stable stationary points for $1\leq r< k$, unstable otherwise. 
\item $(x^{\star},0)$ and $(0,z^{\star})$ are unstable stationary points.
\item The set of points $\{(x,z)\in \R^{2}: x=z\}$ is invariant.
\end{enumerate}
\end{corollary}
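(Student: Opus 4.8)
The plan is to linearize the planar system \eqref{ODE:TLTM} at each candidate equilibrium and read off the signs of the real parts of the eigenvalues of the Jacobian. Writing $F(x,z)=\phi_+^{(k,r)}(x,z)-x$ and $G(x,z)=\phi_+^{(k,r)}(z,x)-z$, the Jacobian is
\[
J(x,z)=\begin{pmatrix} \partial_1\phi_+^{(k,r)}(x,z)-1 & \partial_2\phi_+^{(k,r)}(x,z)\\ \partial_2\phi_+^{(k,r)}(z,x) & \partial_1\phi_+^{(k,r)}(z,x)-1\end{pmatrix}.
\]
Rather than plugging into the closed forms of Lemma~\ref{lemma:prop_phi} (which are valid only in the interior of the simplex, the exponents of $1-x-z$ turning negative on the boundary), I would use the probabilistic reading $\phi_+^{(k,r)}(x,z)=\mathbb{P}(\xi_1-\xi_{-1}\ge r)$ with $(\xi_1,\xi_{-1},\xi_0)\sim\mathrm{Multinomial}(k;x,z,1-x-z)$, and extract the leading-order behaviour of $\phi_+^{(k,r)}$ near each equilibrium. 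Two observations come first: claim~(4) is immediate since $\phi_-^{(k,r)}(x,x)=\phi_+^{(k,r)}(x,x)$ gives $\tfrac{d}{dt}(y_1-y_{-1})=0$ on $\{x=z\}$; and since $r\ge1$ forces $\phi_+^{(k,r)}(0,\cdot)\equiv0$ (Lemma~\ref{lemma:prop_phi}), we have $G(x,0)\equiv0$, so $\partial_2\phi_+^{(k,r)}(0,x)=0$ and hence the Jacobian at every equilibrium on the edge $\{z=0\}$ is upper triangular (and, by the $(x,z)\mapsto(z,x)$ symmetry that swaps $F\leftrightarrow G$, lower triangular at equilibria on $\{x=0\}$); in all these cases the eigenvalues are just the diagonal entries.

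For the vertex $(0,0)$ one notes that, since $r\ge1$, the lowest-degree monomial of $\phi_+^{(k,r)}(x,z)$ is $\binom{k}{r}x^r$ (it comes from $\xi_1=r$, all other neighbours in state $0$, and no lower-degree monomial can appear). Hence for $r\ge2$ all first-order partials of $\phi_+^{(k,r)}$ vanish at the origin, $J(0,0)=-I$, and $(0,0)$ is locally asymptotically stable; for $r=1$ the linear part is $kx$, so $J(0,0)=(k-1)I$, which has the positive eigenvalue $k-1$ whenever $k\ge2$, hence instability. For the vertex $(1,0)$ (and $(0,1)$ by symmetry), set $a=1-x-z\ge0$, $b=z\ge0$ near $(1,0)$; the complementary event $\{\xi_1-\xi_{-1}\le r-1\}=\{2\xi_{-1}+\xi_0\ge k-r+1\}$ has probability whose cheapest realisation is of order $\max(a,b)^{\lceil(k-r+1)/2\rceil}$. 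Thus for $r\le k-2$ this exponent is $\ge2$, so $\phi_+^{(k,r)}(x,z)=1-\text{(quadratic in }(1-x,z))$, giving $F(x,z)=(1-x)+\text{(quadratic)}$ and $J(1,0)=-I$: stable. For $r=k-1$ a single linear failure term $kz$ survives, so $\partial_zF(1,0)=-k$, but then $J(1,0)=\bigl(\begin{smallmatrix}-1&-k\\0&-1\end{smallmatrix}\bigr)$ is still Hurwitz (a Jordan block with eigenvalue $-1$): stable. Finally $r=k$ gives $\phi_+^{(k,k)}(x,z)=x^k$ exactly, so $J(1,0)=\operatorname{diag}(k-1,-1)$ and $(1,0)$ is unstable for $k\ge2$ — this is the meaning of ``unstable otherwise'' in (2).

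For the edge equilibria $(x^\star,0)$, $(0,z^\star)$ (present exactly when $2\le r<k$) put $g(x):=\phi_+^{(k,r)}(x,0)=\sum_{u\ge r}\binom{k}{u}x^u(1-x)^{k-u}$, so that $\partial_xF(x^\star,0)=g'(x^\star)-1$. By the earlier Proposition, $g$ is increasing with the lazy-S shape (convex on $[0,(r-1)/(k-1)]$, concave thereafter), $g(0)=0$, $g(1)=1$, $g'(0)=g'(1)=0<1$, and $\{0,x^\star,1\}$ are its only fixed points; hence $g(x)-x$ is negative on $(0,x^\star)$ and positive on $(x^\star,1)$, and applying Rolle on the monotone-convexity interval containing $x^\star$ (together with $g'(0)=0$ or $g'(1)=0$) one gets that the crossing at $x^\star$ is transversal, i.e. $g'(x^\star)>1$. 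Moreover $\partial_zG(x^\star,0)=\partial_1\phi_+^{(k,r)}(0,x^\star)-1=-1$, because $\phi_+^{(k,r)}(\tilde x,x^\star)=O(\tilde x^{\,r})=O(\tilde x^{\,2})$ for small $\tilde x$. Thus $J(x^\star,0)$ is upper triangular with diagonal $\{g'(x^\star)-1,-1\}$, a saddle, so $(x^\star,0)$ and symmetrically $(0,z^\star)$ are unstable. Together with the previous paragraph, this proves all four claims.

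The main obstacle I anticipate is the rigorous evaluation of the boundary derivatives of $\phi_+^{(k,r)}$: since Lemma~\ref{lemma:prop_phi} only holds in the interior, one must carry out the multinomial leading-order expansion at each vertex and edge carefully, in particular checking that no lower-order monomial has been missed — this is exactly where the threshold $r$ dictates whether a given direction carries a linear term. A secondary delicate point is the strict inequality $g'(x^\star)>1$, which is not local: it must be extracted from the global convex–concave structure of $g$ combined with the exact count of its fixed points, via the Rolle argument sketched above.
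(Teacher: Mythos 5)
Your proposal is correct and follows exactly the route the paper indicates: the paper states only that the corollary ``can be proved by linearization'' and supplies no details, and your linearization at each equilibrium --- with the triangular structure of the Jacobian on the edges and the careful leading-order multinomial expansions of $\phi_+^{(k,r)}$ at the boundary --- fills in precisely what is omitted. The one step worth tightening is the strict inequality $g'(x^{\star})>1$: rather than invoking Rolle loosely, apply the mean value theorem on $[0,x^{\star}]$ and $[x^{\star},1]$ to obtain $c_1<x^{\star}<c_2$ with $g'(c_1)=g'(c_2)=1$, and use that $g'(x)=k\binom{k-1}{r-1}x^{r-1}(1-x)^{k-r}$ is strictly unimodal on $(0,1)$ for $2\le r<k$, so that $g'>1$ throughout $(c_1,c_2)\ni x^{\star}$.
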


The basins of attraction for the ODE in \eqref{ODE:TLTM} are shown in Figure \ref{fig:univerise} for  degree $k  =10$ and threshold $r\in\{2,3\}$. 
Basins are  evaluated numerically, by solving the ODE system for a wide set of initial conditions.
More specifically, for each initial condition $x_0,z_0$ the color in the picture represents the asymptotically stable equilibrium point (yellow for (0,1), green for (0,0) and blue for state $(1,0)$) to which the trajectory tends. As to be expected by Corollary \ref{corol:Local_stability_analysis}, we have that points $(0,0)$, $(0,1)$ and $(1,0)$ are locally stable stationary points. 

\begin{figure}[!ht]
\centering
\includegraphics[width=0.47\columnwidth,trim = 20mm 10mm 20mm 10mm, clip]{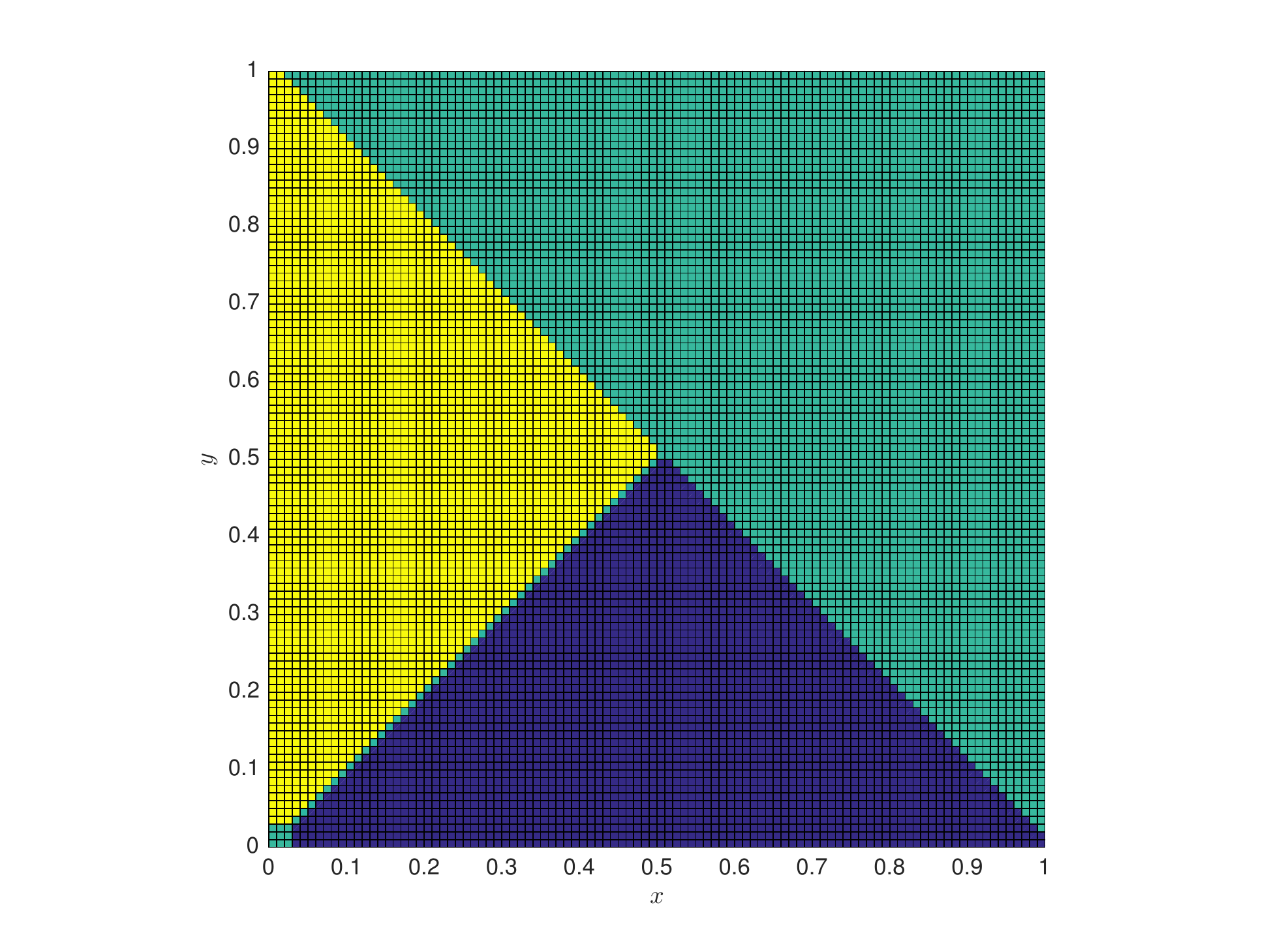}
\includegraphics[width=0.47\columnwidth,trim = 20mm 10mm 20mm 10mm, clip]{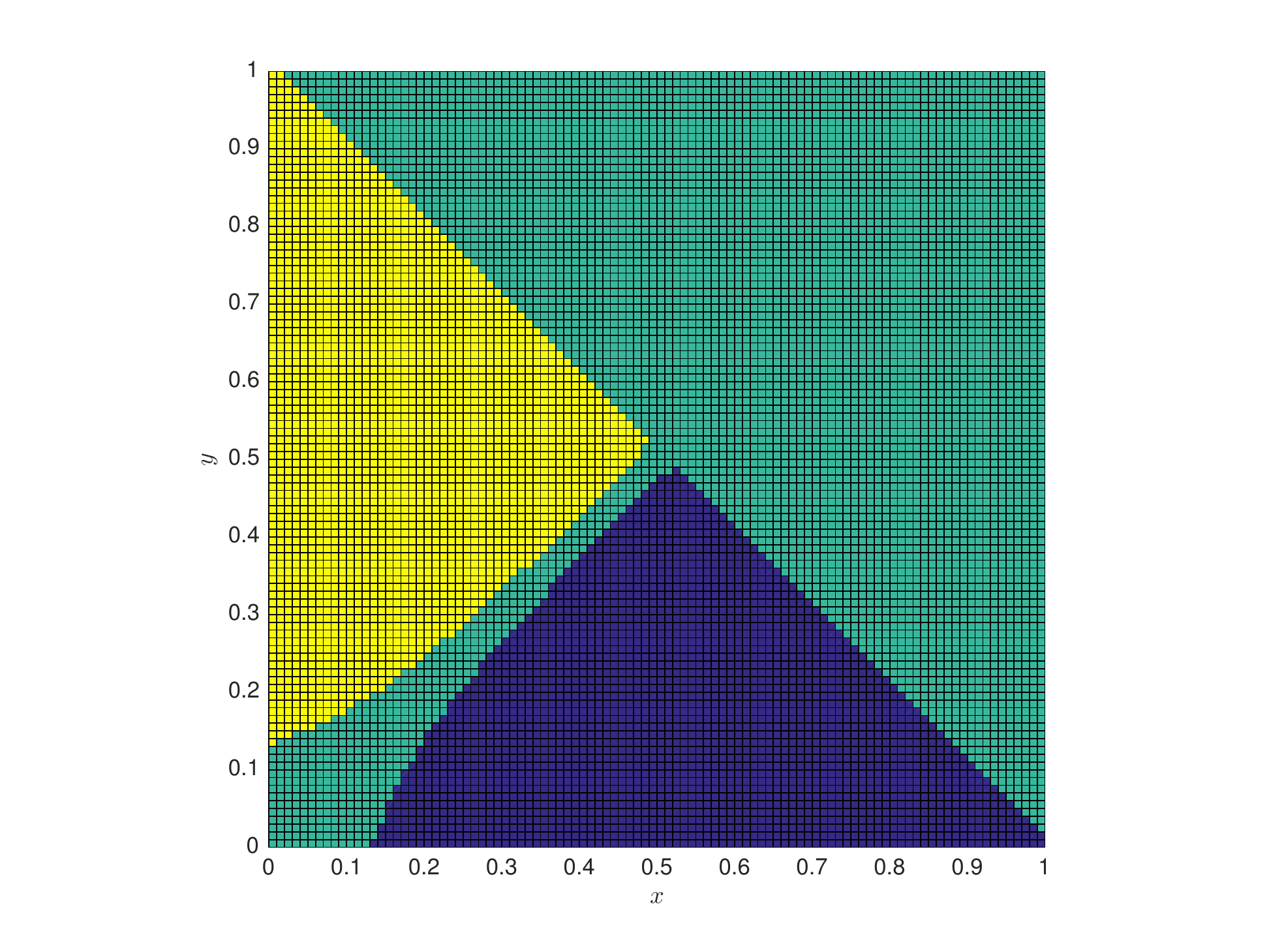}
\caption{$k=10$ and $r=2$ (left), $r=3$ (right)}
\label{fig:univerise}
\end{figure}

\subsection{Binary Response with Coordinating and Anti-coordinating agents (BRCA)}
We again consider the homogenous case in which all nodes have out-degree $k$. For simplicity, in the following, 
we will suppose that $k$ is odd, in order to avoid ties, although the extension to the case of even $k$ is straightforward. 
Let $\alpha$ be the fraction of coordinating nodes. From Proposition \ref{prop:ODE}, the evolution of the node states is governed by the following ODE:
\begin{equation} \label{eq:BRCA_ODE}
\frac{\mathrm{d}y_{1}(t)}{\mathrm{d}t} = \alpha \phi_1 \left(y_{1}(t) \right) + (1-\alpha) \left(1-\phi_1 \left(y_{1}(t) \right) \right) -y_{1}(t), 
\end{equation}
where $\phi_1 \left(y_{1}(t) \right)$ is the probability with which a coordinating node enters  state 1, given by
\begin{equation} \label{eq:BRCA_phi}
\phi_1 \left(y_{1} \right) = \sum_{k_1 = \lceil k/2 \rceil}^k {k \choose k_1} y_1^{k_1} (1-y_1)^{k-k_1}. 
\end{equation}
The above ODE derives from the fact that the probability of stepping to state 1 for an anti-coordinating node is equal to the probability of stepping to state -1 for a coordinating node.
Let us define $\ell_1\left(y_{1}(t) \right)$ the RHS of \eqref{eq:BRCA_ODE}. It is easily seen that, since  $\phi_1 \left(1/2\right) = 1/2$, $\ell_1\left(1/2 \right) = 0$, so that $y_1 = 1/2$ is a stationary point for every $\alpha$. 

The derivative of $\ell_1$ with respect to $y_{1}$ is given by:
\begin{eqnarray}
\frac{\mathrm{d}\ell_1}{\mathrm{d}y_{1}} &=& (2 \alpha-1) \frac{\mathrm{d}\phi_1}{\mathrm{d}y_{1}} - 1 \\
& = & (2 \alpha-1) k {k-1 \choose \lfloor \frac{k}{2} \rfloor} \left( y_{1}  (1-y_1) \right)^{\lfloor \frac{k}{2} \rfloor} -1
\end{eqnarray}

From the above equation, it turns out that we have two distinct regimes, according to whether $\alpha \leq \alpha_{\mathrm{th}}$ or $\alpha > \alpha_{\mathrm{th}}$, with
\begin{equation} \label{eq:BRCA_phi_2}
\alpha_{\mathrm{th}} = \frac1{2} \left( 1 + \frac{2^{k-1}}{k {k-1 \choose \lfloor \frac{k}{2} \rfloor}}\right) 
\end{equation}  

\begin{itemize}
\item If $\alpha \leq \alpha_{\mathrm{th}}$, then $\frac{\mathrm{d}\ell_1}{\mathrm{d}y_{1}} \leq 0$ for $0 < y_{1} < 1$. It then turns out that the only stationary point is $y_{1} = 1/2$, which is stable and has a basin of attraction equal to $[0,1]$.

\item If $\alpha > \alpha_{\mathrm{th}}$, then $\frac{\mathrm{d}\ell_1}{\mathrm{d}y_{1}}$ has two zeros, symmetric with respect to $z_1 = 1/2$, in the solutions of the quadratic equation
\[
y_{1}  (1-y_1) = \frac1{\sqrt[\lfloor \frac{k}{2} \rfloor]{(2 \alpha-1) k {k-1 \choose \lfloor \frac{k}{2} \rfloor}}}
\]
Because of that, there are three stationary points, out of which the one in $z_1 = 1/2$ becomes unstable. The other two are symmetric with respect to $z_1 = 1/2$, i.e., $y^{\pm} = 1/2 \pm \epsilon$, and stable, with basins of attraction $[0,1/2)$ and $(1/2,1]$.  Figure \ref{fig:brca} shows the value of $\epsilon$ as a function of $k$ and $\alpha$. 
\end{itemize}

\begin{figure}[htb]
\centering
\includegraphics[width=0.7\columnwidth]{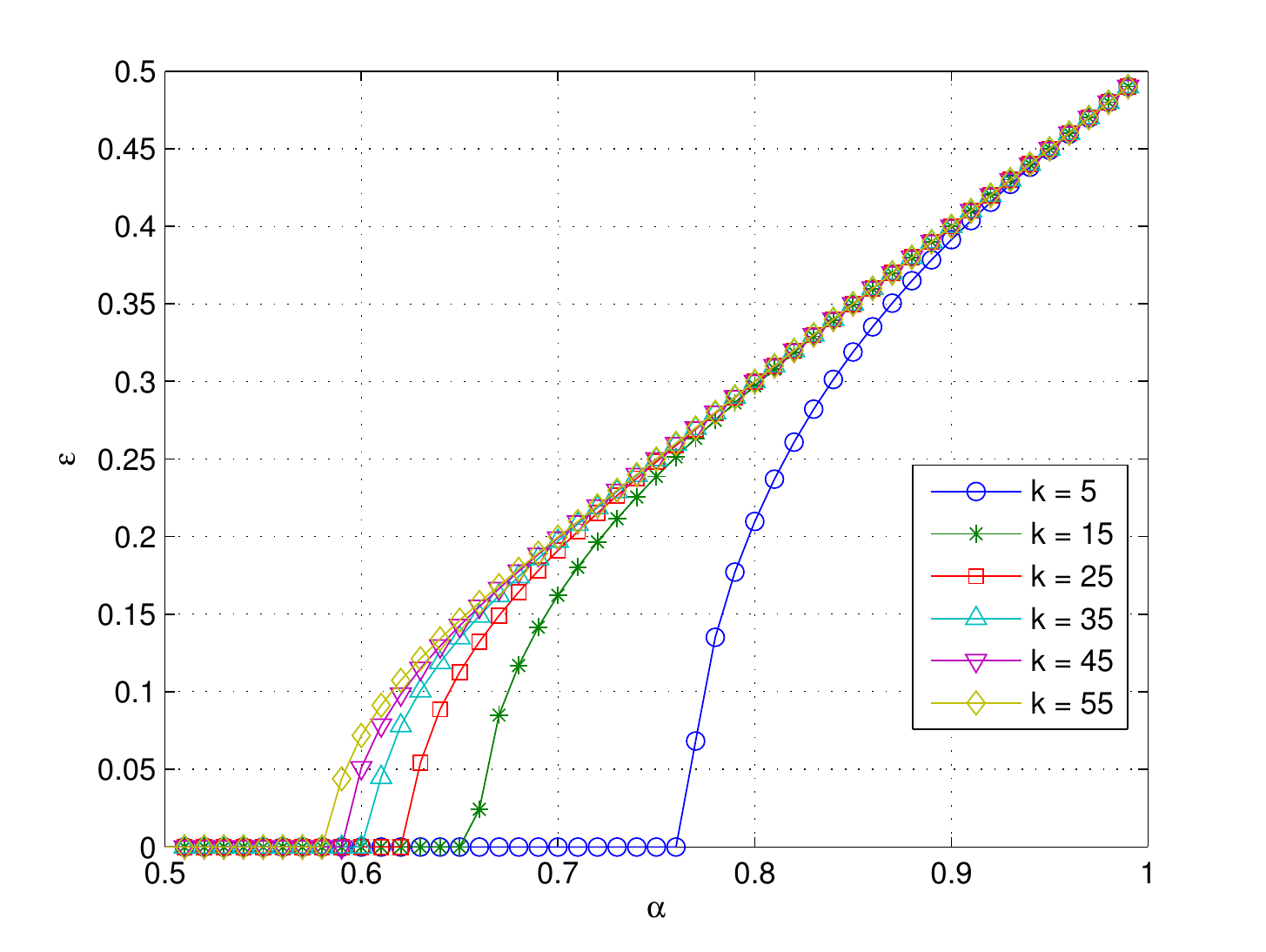}
\caption{Position of the stable stationary points as a function of the fraction of cooperative nodes and the common node degree.\label{fig:brca}}
\end{figure}

\subsection{Evolutionary Roshambo Game (ERG)} \label{sec:ERG_analytic}
In this section, we consider the ERG dynamics on a regular graph with $k_v = k$. 

Let $y_{\omega}$, $\omega \in \{R, P, S\}$ be the fraction
of nodes in state $\omega$. Moreover, let $\pi_{\omega}$, $\omega \in \{R, P, S\}$ be the probability that a given node, when activated, switches to state $\omega$ thanks to its neighbors' states. More precisely, for $\omega = R$:
\begin{equation}
\begin{split}
\pi_R =& \mathbb{P} \left\{ \xi_S > \frac{k}{3},  \,\,\,\xi_P < \frac{k}{3} \right\}  + \frac1{2} \mathbb{P} \left\{ \xi_S = \frac{k}{3},  \,\,\,\xi_P < \frac{k}{3} \right\} \\
& + \frac1{2} \mathbb{P} \left\{ \xi_S > \frac{k}{3},  \,\,\,\xi_P = \frac{k}{3} \right\}  + \frac1{3} \mathbb{P} \left\{ \xi_S = \frac{k}{3},  \,\,\,\xi_P = \frac{k}{3} \right\}
\end{split}
\end{equation}
and analogously for $\pi_S$ and $\pi_P$.

When it is activated, a node in state $\omega$ changes state with probability $1 - \pi_{\omega}$, while a node not in state $\omega$ switches to that state with probability $\pi_{\omega}$, by definition.  We thus have:
\begin{equation}
\frac{d y_{\omega}}{dt }  =  (1-y_{\omega}) \pi_{\omega}- y_{\omega} (1-\pi_{\omega}), \,\,\,\omega \in \{R, P, S\}  
\end{equation}
At an equilibrium point, $\frac{d y_{\omega}}{dt }  = 0$, so that $y_{\omega} = \pi_{\omega}$, $\omega \in \{R, P, S\}$. 

\begin{proposition} 
The only equilibrium point for the ERG dynamics on a regular graph is $(y_R, y_P, y_S) = (1/3,1/3,1/3)$.
\end{proposition}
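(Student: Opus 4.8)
The plan is to prove this in two steps: the barycentre is an equilibrium, and it is the only one. Existence is immediate from symmetry: the Roshambo payoff table $\varphi$ is invariant under the cyclic relabelling of actions $R\to P\to S\to R$, so the map $\boldsymbol{y}=(y_R,y_P,y_S)\mapsto\boldsymbol{\pi}=(\pi_R,\pi_P,\pi_S)$ commutes with the corresponding cyclic permutation of the coordinates. Evaluated at $y_R=y_P=y_S=1/3$, the neighbour counts $(\xi_R,\xi_P,\xi_S)$ are multinomial with equal cell probabilities, hence exchangeable under cyclic shifts, so $\pi_R=\pi_P=\pi_S$; since these sum to $1$, each equals $1/3$, and $(1/3,1/3,1/3)$ is a fixed point of $\boldsymbol{y}\mapsto\boldsymbol{\pi}$.

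For uniqueness I would first record a monotonicity/coupling lemma relating the $\pi_\omega$'s to the ordering of the $y_\omega$'s. Writing out the best-response regions explicitly (for $b=c/2$, a node plays $R$, with ties split fractionally, precisely when $\xi_S\ge k/3$ and $\xi_P\le k/3$, and cyclically for $P$ and $S$), one sees that the weight a node places on action $R$ is a functional $w_R(\xi_R,\xi_P,\xi_S)$ that is nondecreasing in $\xi_S$ and nonincreasing in $\xi_P$. Relabelling the cells of the multinomial, $\pi_P$ and $\pi_R$ are both expectations of this same functional $w_R$, but taken under the laws $\mathrm{Mult}(k;y_P,y_S,y_R)$ and $\mathrm{Mult}(k;y_R,y_P,y_S)$ respectively. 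Coupling these two multinomials through nested partitions of $[0,1]$ (placing the two relevant cells at the two ends of the interval), one obtains, whenever $y_S=\min\{y_R,y_P,y_S\}$, a realisation in which the $\xi_S$-coordinate only increases and the $\xi_P$-coordinate only decreases when passing to the $\pi_P$-law; hence $\pi_P\ge\pi_R$. The cyclic symmetry then also gives $\pi_P\ge\pi_S$ whenever $y_R=\max\{y_R,y_P,y_S\}$, together with the two cyclic shifts of each statement.

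Combining these at an equilibrium $\boldsymbol{y}=\boldsymbol{\pi}$ closes the argument, with ties causing no trouble. By a cyclic relabelling assume $y_R=\max$; then $\pi_P\ge\pi_S$ gives $y_P\ge y_S$, so $y_S=\min\{y_R,y_P,y_S\}$; then $\pi_P\ge\pi_R$ gives $y_P\ge y_R$, whence $y_P=y_R$; now $y_P$ is also a maximum, and the cyclic shift ``$y_P=\max\Rightarrow\pi_S\ge\pi_R$'' gives $y_S\ge y_R$, hence $y_S=y_R$. Thus $y_R=y_P=y_S$, and since they sum to $1$ the unique equilibrium is $(1/3,1/3,1/3)$. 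I expect the coupling lemma to be the delicate point: one has to check carefully that the fractional tie contributions preserve the monotonicity of $w_R$ in $\xi_S$ and $\xi_P$, and --- if one wants the statement for general $c>b$ rather than only the case $b=c/2$ used later --- that the correct pair of monotone directions is used, since this pair changes once $c>2b$.
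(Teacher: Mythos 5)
Your proof is correct, but it takes a genuinely different route from the paper's. The paper establishes the key comparison ($\pi_P > \pi_R$ on the open region $y_S<y_P<y_R$, and its analogues) by writing $\pi_P-\pi_R$ explicitly as a double sum over neighbour profiles and showing, term by term (pairing terms where necessary), that it is positive via the rearrangement inequality; the boundary regions with equal coordinates and the case $3\mid k$ are then dispatched with a brief ``similar argument''. You instead obtain the weak comparison $\pi_P\ge\pi_R$ on the closed region where $y_S$ is minimal by stochastic domination: you identify $\pi_P$ and $\pi_R$ as expectations of one functional $w_R$, jointly monotone in the predator count $\xi_P$ (nonincreasing) and the prey count $\xi_S$ (nondecreasing), under the two laws $\mathrm{Mult}(k;y_P,y_S,y_R)$ and $\mathrm{Mult}(k;y_R,y_P,y_S)$, and you couple them through nested interval partitions so that the predator coordinate only decreases and the prey coordinate only increases; I checked that this coupling and the monotonicity of $w_R$ (including the $\tfrac12$ and $\tfrac13$ tie weights) do hold. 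Both arguments rest on the same underlying monotonicity of the best response in the prey/predator counts, but yours buys a uniform treatment of ties and of the equal-coordinate orderings, and your closing chain of weak inequalities at a fixed point forces $y_R=y_P=y_S$ without case-splitting into strict orderings; the paper's computation is more elementary and avoids the coupling machinery. Two small remarks: the statement ``$y_R=\max\Rightarrow\pi_P\ge\pi_S$'' is not literally a cyclic shift of ``$y_S=\min\Rightarrow\pi_P\ge\pi_R$'' (cyclic shifts yield only the three ``min'' statements); it is a second application of the same coupling lemma to the pair $(P,S)$, which goes through exactly as you use it. And your caveat about $c>2b$ is apt: the $k/3$ thresholds, and hence both proofs as written, are specific to $b=c/2$, which is the case the paper actually analyses.
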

\begin{proof}
We limit the proof to the case in which $k$ is not divisible by 3, to deal with a simpler expression for $\pi_{\omega}$. The extension to the general case is straightforward.

The point $(y_R, y_P, y_S) = (1/3,1/3,1/3)$ is clearly of equilibrium since $y_{\omega} = \pi_{\omega} = 1/3$, $\omega \in \{R, P, S\}$.

We prove that there cannot be other equilibria by first considering the region for which $y_S < y_P < y_R$ and showing that, in this region, $\pi_P > \pi_R$. By writing down the expression of the probabilities,  we have
\begin{equation}
\pi_P - \pi_R = \sum_{u = \lceil k/3 \rceil}^k \sum_{v = 0}^{\lfloor k/3 \rfloor} {k \choose u,v} \left( y_R^u y_S^v y_P^{k-u-v} - y_S^u y_P^v y_R^{k-u-v}\right)
\end{equation}
Out of the terms in the above sum, those for which $ k-u-v \leq u $ are positive because of the rearrangement inequality. Instead, those for which $ k-u-v > u $ can be negative. However, for a given $v$, the sum of the pair of terms in which $k - u -v$ and $u$ take the pair of values $u$ and $w$ is given by
$$
{k \choose u,v} \left[ y_R^u \left(y_S^v y_P^w -  y_S^w y_P^v\right) +  y_R^w \left(y_S^v y_P^u -  y_S^u y_P^v\right)\right]
$$
which is positive, again for the rearrangement inequality. Thus, for $y_S < y_P < y_R$, $\pi_P > \pi_R$ and there cannot be any equilibrium point in that region.

Analogously, we can show that in the region for which $y_S > y_P > y_R$, $\pi_P < \pi_R$ and there cannot be any equilibrium point in that region either. A similar argument shows the same thing in the regions characterized by a pair of equal coordinates, such as $y_S = y_P < y_R$, and so on.   

Finally, by symmetry, we can show that for any other ordering of the coordinates, there cannot be an equilibrium point. Thus, the only point of equibrium is $(y_R, y_P, y_S) = (1/3,1/3,1/3)$.
\end{proof}

\section{Numerical experiments}\label{sec:experiments}
In this section we present a few interesting cases in which we run our examples of ASD
in large but finite networks, considering both synthetically generated graphs and a real-world social network.
Results are obtained either by numerically solving the differential equation in \eqref{eq:zeta2},
or by running detailed Monte-Carlo simulations using an ad-hoc event-driven simulator.   

In particular, we will focus on the online social network Epinions, for which a popular snapshot is publicly
available at the Stanford Large Network Dataset Collection \cite{snapnets}.
The online social network Epinions.com is a consumer review website where the users can review different
kind of items with the purpose of rating hundred thousand products and ranking the reviewers to be trusted.
The available dataset contains the who-trust-whom relationships of all the members, operating from 1999 until 2014. 
The network consists of $|\V|=75879 $ nodes and $|\E|=508837$ directed edges, it is highly connected and contains cycles.
The average clustering coefficient is 0.1378.
The maximum in-degree is 3035, maximum out-degree is 1801, the average in/out-degree is 6.7.
In and out degrees follow an approximate power law distribution with exponent 1.7.

\subsection{TLTM on the Epinions graph}
We assume that all nodes have symmetric threshold $a^{\pm}_v = 2$.
In Figure \ref{fig:Epinions_contour}, we show on the left the loci of the stationary solution of \eqref{eq:zeta} in the plane $(\zeta_{-1}, \zeta_{1})$, and on the right an arrow plot representing the gradient of the system of the two ODEs in each possible point for which $\zeta_{-1} + \zeta_{1} \leq 1$. As it can be seen from the combinations of the two plots, there are three stable stationary points, which are located at $(\zeta_{-1} , \zeta_{1}) = (0,0)$, $(\zeta_{-1} , \zeta_{1}) = (\bar{\zeta},0)$ and $(\zeta_{-1} , \zeta_{1}) = (0,\bar{\zeta})$, with $\bar{\zeta} \simeq 0.91$ while there are four \emph{unstable} stationary points at values $(\zeta_{-1} , \zeta_{1}) = (\widetilde{\zeta},0)$, $(\zeta_{-1} , \zeta_{1}) = (0,\widetilde{\zeta})$, $(\zeta_{-1} , \zeta_{1}) = (\widetilde{\zeta_1},\widetilde{\zeta_1})$ and $(\zeta_{-1} , \zeta_{1}) = (\widetilde{\zeta_2},\widetilde{\zeta_2})$, where $\widetilde{\zeta} \simeq 1.02 \times 10^{-4}$, $\widetilde{\zeta}_1 \simeq 1.066 \times 10^{-4}$ and $\widetilde{\zeta}_2 \simeq 0.341$. The arrow plot allows also to verify that the boundary of the two basins of attraction for the stationary points coincides with the line $\zeta_{-1} = \zeta_{1}$ for $\widetilde{\zeta_1} \leq \zeta_{1} \leq  1/2$.  

\begin{figure}[htb]
\centering
\includegraphics[width=0.48\columnwidth,height=0.46\columnwidth]{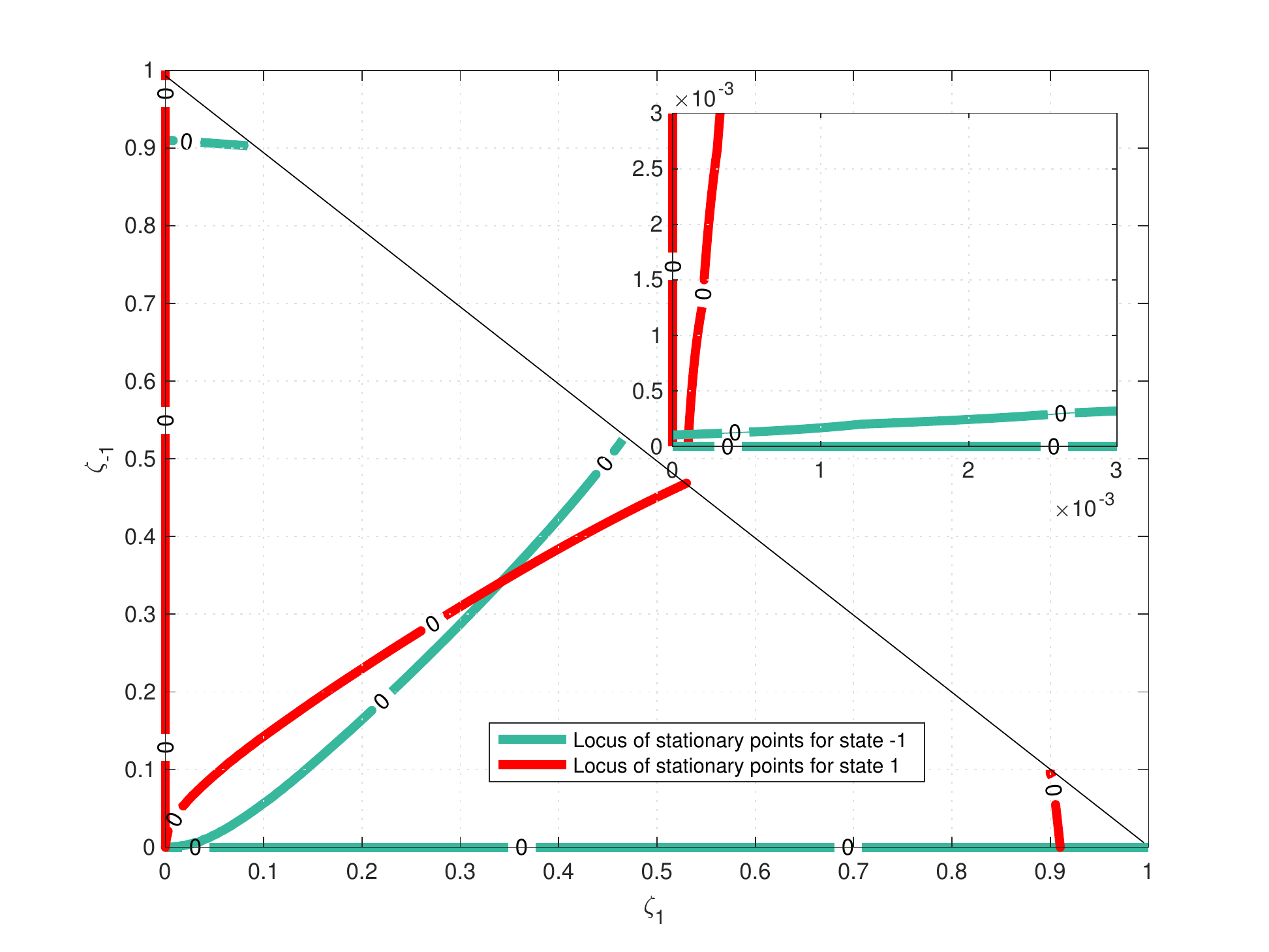}
\includegraphics[width=0.48\columnwidth,height=0.46\columnwidth]{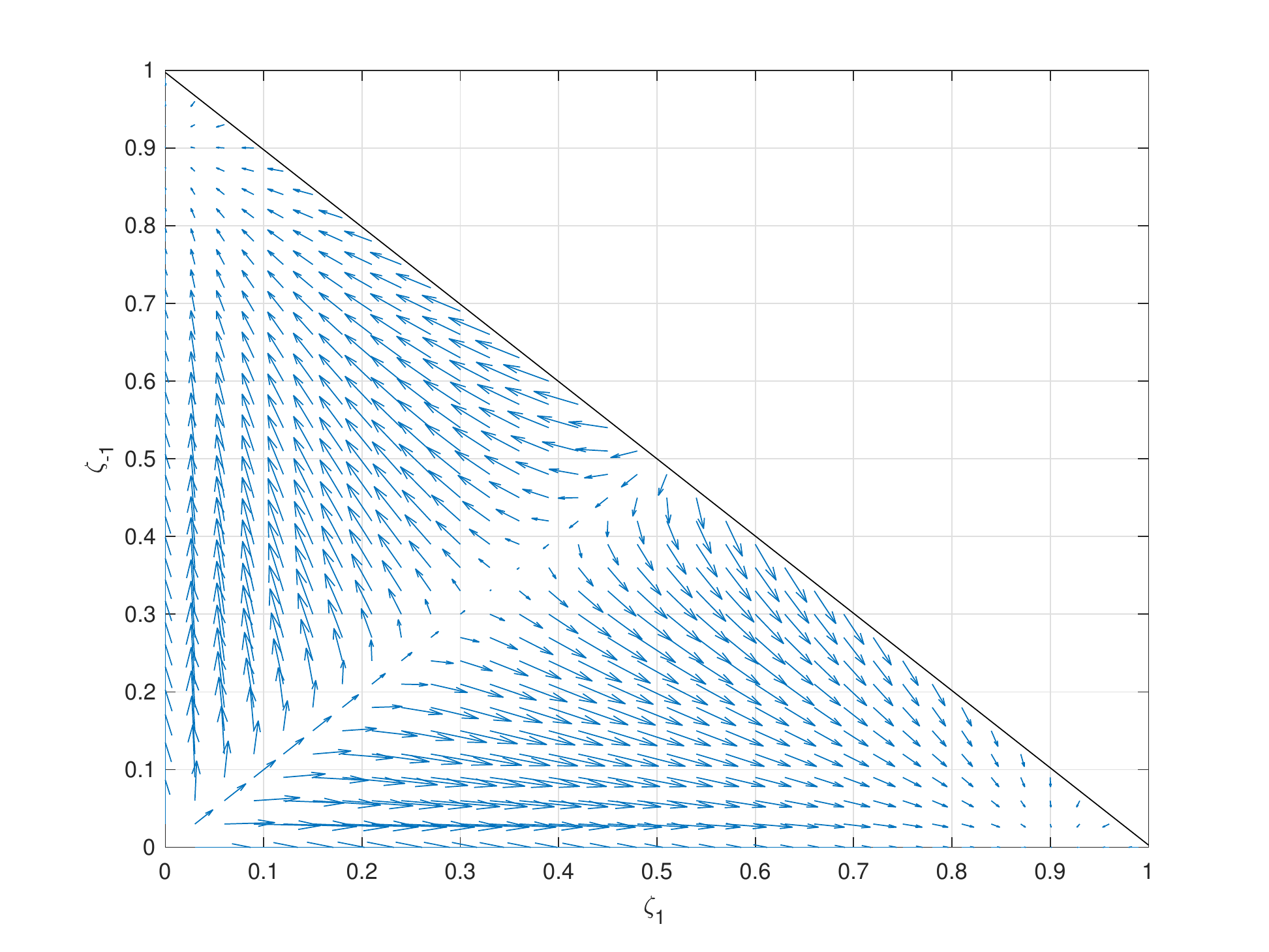}
\caption{TLTM in the Epinions social network with symmetric thresholds $r^{\pm} = 2$.\label{fig:Epinions_contour}: (a) Stationary solution of ODEs in \eqref{eq:zeta}. (b) Gradient of ODE system in \eqref{eq:zeta}}.
\end{figure} 

In Figure \ref{fig:Epinions_trans}, we show the evolution over time of the variables $\zeta_{\omega}(t)$ and $y_{\omega}(t)$, $\omega \in \{-1,0,1\}$, obtained through the numerical solution of ODEs in equations \eqref{eq:zeta}-\eqref{eq:zeta2}. In particular, $y_{\omega}(t)$ represents the fraction of nodes in state $\omega$ at time $t$, while $\zeta_{\omega}(t)$ represents the fraction of edges connected to a node in state $\omega$ at time $t$. At time $t = 0$, the fraction of nodes of any degree in states $-1,0,1$ is $0.3, 0.5, 0.2$, respectively. As it can be seen, the fraction of nodes in state $1$ decreases exponentially with time (the curve of $\zeta_1(t)$ is superimposed on that of $y_1(t)$). Instead, the curve for $\zeta_{-1}(t)$ reaches the fixed point $\overline{\zeta}  =0.91$, as predicted by Figure~\ref{fig:Epinions_contour}. The fixed point for $y_{-1}(t)$ is lower, at about $0.39$, implying that the fraction of nodes with higher degree that asymptotically reach state $-1$ is larger than for nodes with lower degree.

\begin{figure}[htb]
\centering
\includegraphics[width=0.48\columnwidth,height=0.46\columnwidth]{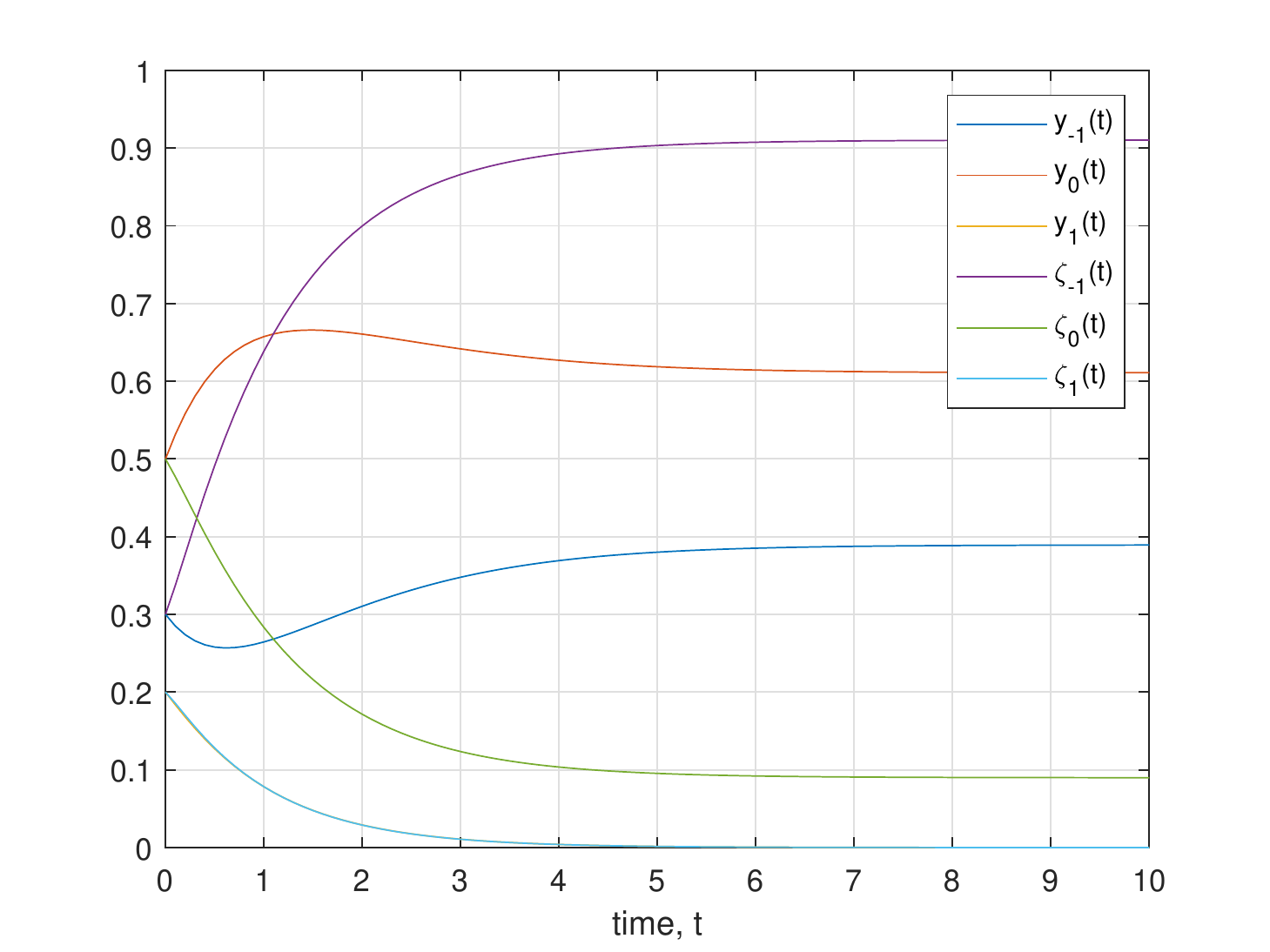}
\caption{TLTM in the Epinions social network with symmetric thresholds $a^{\pm}_v = 2$ \label{fig:Epinions_trans}: Transient solution of ODEs in \eqref{eq:zeta}-\eqref{eq:zeta2}}. 
\end{figure}

\subsection{TLTM on the Configuration Block Model}
In this subsection, we show results for a network with $n = 10^7$ nodes divided into two equal-size communities (classes), with size
$n/2 = 5 \cdot 10^6$. We consider the TLTM with symmetric thresholds
$a_v^{\pm} = 2$. We put $10^5$ seeds in state 1 in community 1, and $10^5$ seeds in state -1 in community 2.
The out-degree distribution of the first class is given by $p_{(k_1,k_2)|1} = p_{11}(k_1) p_{12}(k_2)$ where 
$$p_{11}(k_1) = {n/2-1 \choose k_1} p_A^{k_1} (1-p_A)^{1-k_1}$$ and $$p_{12}(k_2) = {n/2 \choose k_2} p_B^{k_2} (1-p_B)^{1-k_2}$$ $p_A$ and $p_B$ being chosen so that the average degree toward community 1 is 20, while the average degree toward community 2 is 6.
Analogously, the out-degree distribution of the second class is given by $p_{(k_1,k_2)|2} = p_{21}(k_1) p_{22}(k_2)$ where 
$$p_{21}(k_1) = {n/2 \choose k_1} p_C^{k_1} (1-p_C)^{1-k_1}$$ and $$p_{22}(k_2) = {n/2-1 \choose k_2} p_A^{k_2} (1-p_A)^{1-k_2}$$ $p_C$ being chosen so that the average degree toward community 1 is 5.
The network shows a slight asymmetry, since nodes in community 1 have slightly more  
edges directed towards nodes of community 2 than viceversa. 

Figure \ref{fig:sbm} compares the fraction of nodes in state 1 and -1 in either community,  
averaged across 100 simulation runs, against analytic results. We observe very good agreement between 
analysis (thin curves) and simulation (thick curves). Interestingly, in the beginning we have two 
weakly interfering percolation processes in the two communities, producing a significant 
increase of nodes in state -1 in community 2, and a significant increase of nodes in state 1 in community 1.
However, the percolation process in community 2 grows faster, because nodes in community 2 receive less
influence from nodes in community 1 than viceversa. As a consequence, the percolation process of nodes
in state -1 eventually invades also community 1, while nodes in state 1 vanish to zero throughout the network. 

\begin{figure}[htb]
\centering
\includegraphics[width=0.7\columnwidth]{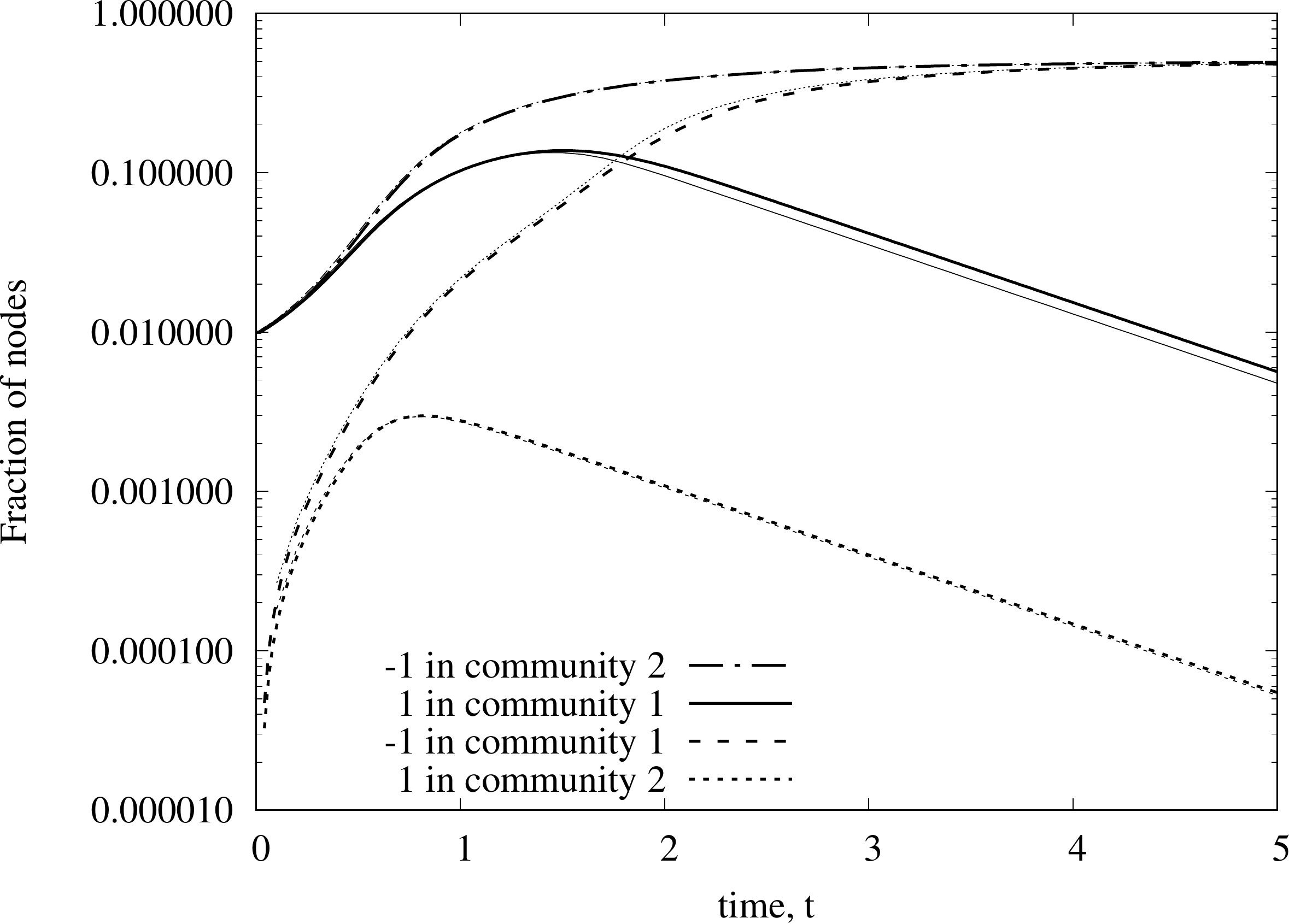}
\caption{Evolution over time of the fraction of nodes in states 1 and -1 in the two communities,
according to analysis (thin curves) and simulation (average of 100 runs) (thick curves).\label{fig:sbm}}
\end{figure}

\subsection{BRCA on the regular graph}
Here we consider a simple regular graph where all nodes have fixed out-degree $k = 21$ and fixed in-degree $d = 21$.
Note that, since the out-degree is odd, the best response is always deterministic (i.e., there are no ties).
We perform a single simulation run with $n=10^5$, with the following initial configuration: 30,000 coordinating nodes
in state 1, 10,000 coordinating nodes in state -1, 40,000 anti-coordinating nodes
in state 1, 20,000 anti-coordinating nodes in state -1.

\sidebyside{Figures/ABRD_sim}{Evolution over time of the fraction of nodes in the ABRD game, according to a single simulation run.}
{Figures/ABRD_mod}{Evolution over time of the fraction of nodes in the ABRD game, according to analysis.}

\begin{figure}[!ht]
\centering
\includegraphics[width=0.95\columnwidth]{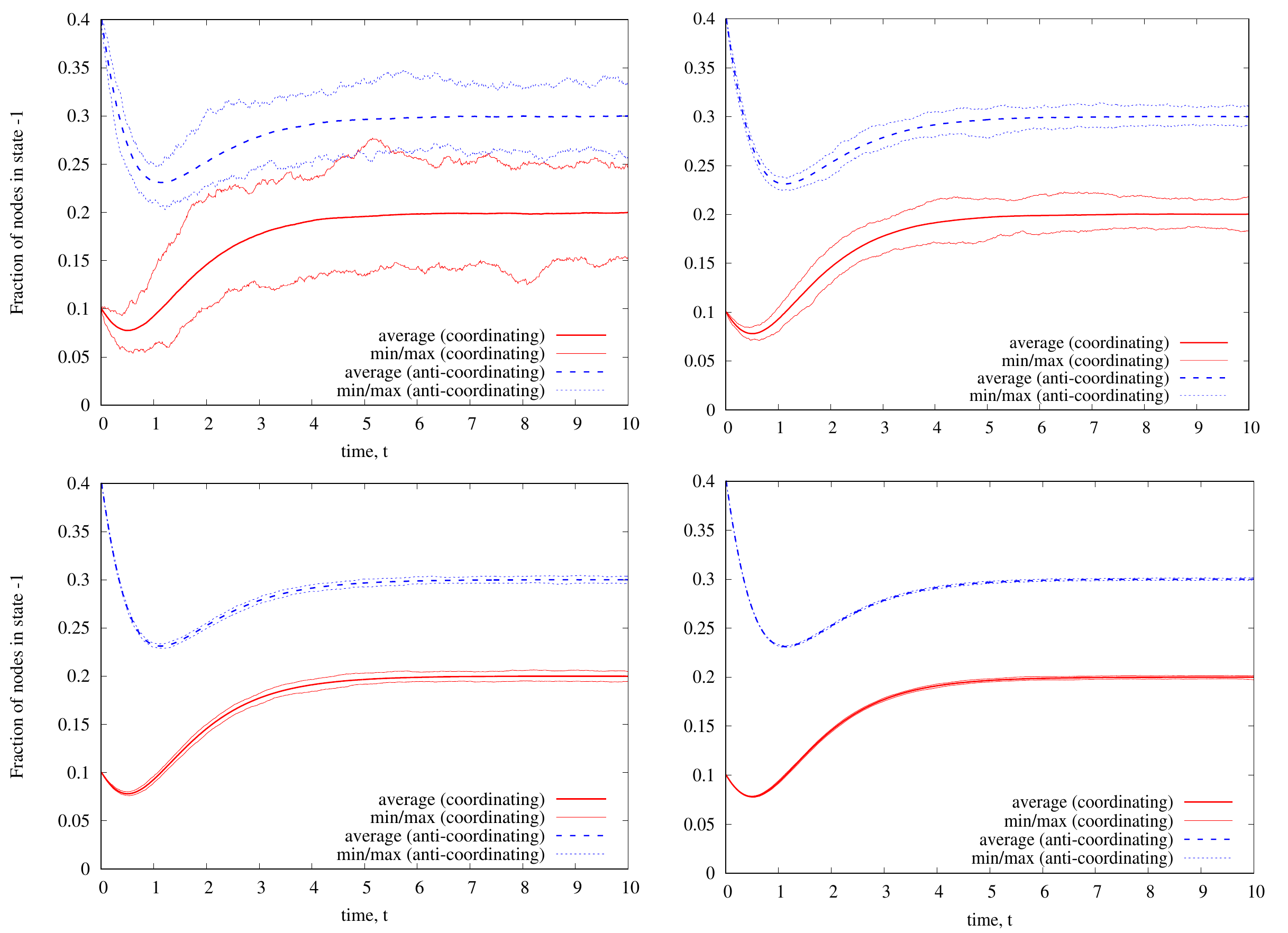}
\caption{Evolution over time of the fraction of nodes in state -1 in the ABRD game, 
across 400 simulation runs.\label{fig:combine}}
\end{figure}

In Figures \ref{fig:Figures/ABRD_sim} and \ref{fig:Figures/ABRD_mod} we show the fraction of nodes in each of the possible states
as function of time, according to simulation and analysis, respectively. We notice a perfect agreement between analytical 
prediction and simulation. Small fluctuations around the equlibrium configuration appear on the (single) simulation sample path.

To assess the degree of concentration of the process around its average, we carried out the following experiment: 
we performed 400 runs of the system, where the variability across runs is due to multiple reasons: i) the network topology generated by the configuration model; ii) the initial selection of nodes in the various states; iii) the temporal dynamics of the process (Poisson clocks). We then sampled the system with time granularity $\Delta t = 0.01$, and at each time instant we evaluated the average, the minimum, and the maximum fraction of nodes in each state, across the 400 runs.
In Figure \ref{fig:combine}
we show the results of the above experiment for the fraction of nodes in state -1 (either coordinating or anti-coordinating), with $n=10^3$ (top-left), $n=10^4$ (top-right), $n=10^5$ (bottom-left), $n=10^6$ (bottom-right).

Thin curves above and below the thicker line (denoting the average), correspond to maximum and minimum values.
Results for the fraction of nodes in state 1 are not shown, and they exhibit a similar variability.
We observe that results become more concentrated passing from $n=10^3$ to $n=10^6$ nodes. 
 
\subsection{BRCA on the Epinions graph}

We now investigate BRCA dynamics on the Epinions graph with a fraction of coordinating nodes equal to $\alpha = 0.7$, 
evenly distributed among nodes of any degree. Our main goal in this section will be to understand
better the origin of possible discrepancies between analysis and simulation. 

A numerical analysis of \eqref{eq:zeta}-\eqref{eq:zeta2} shows that, similarly to the regular case, the stationary points for the Epinions degree distribution are three, out of which the one in $\zeta_1 = 1/2$ is unstable. The other two are positioned at $\zeta_1 \simeq 0.33$ and $\zeta_1 \simeq 0.67$ and are stable. Such stationary points correspond to fractions of nodes in state 1 given by $y_1 \simeq 0.426 $ and $y_1 \simeq 0.574 $, respectively. 

Let us now move to the time evolution of fractions of nodes in a given state.
We consider the following initial condition: 42\% of coordinating nodes in state 1, 28\% of coordinating nodes in state -1,
10\% of anti-coordinating nodes in state 1, 20\% of anti-coordinating nodes in state -1.
\begin{figure}[htb]
\centering
\includegraphics[width=1.0\columnwidth]{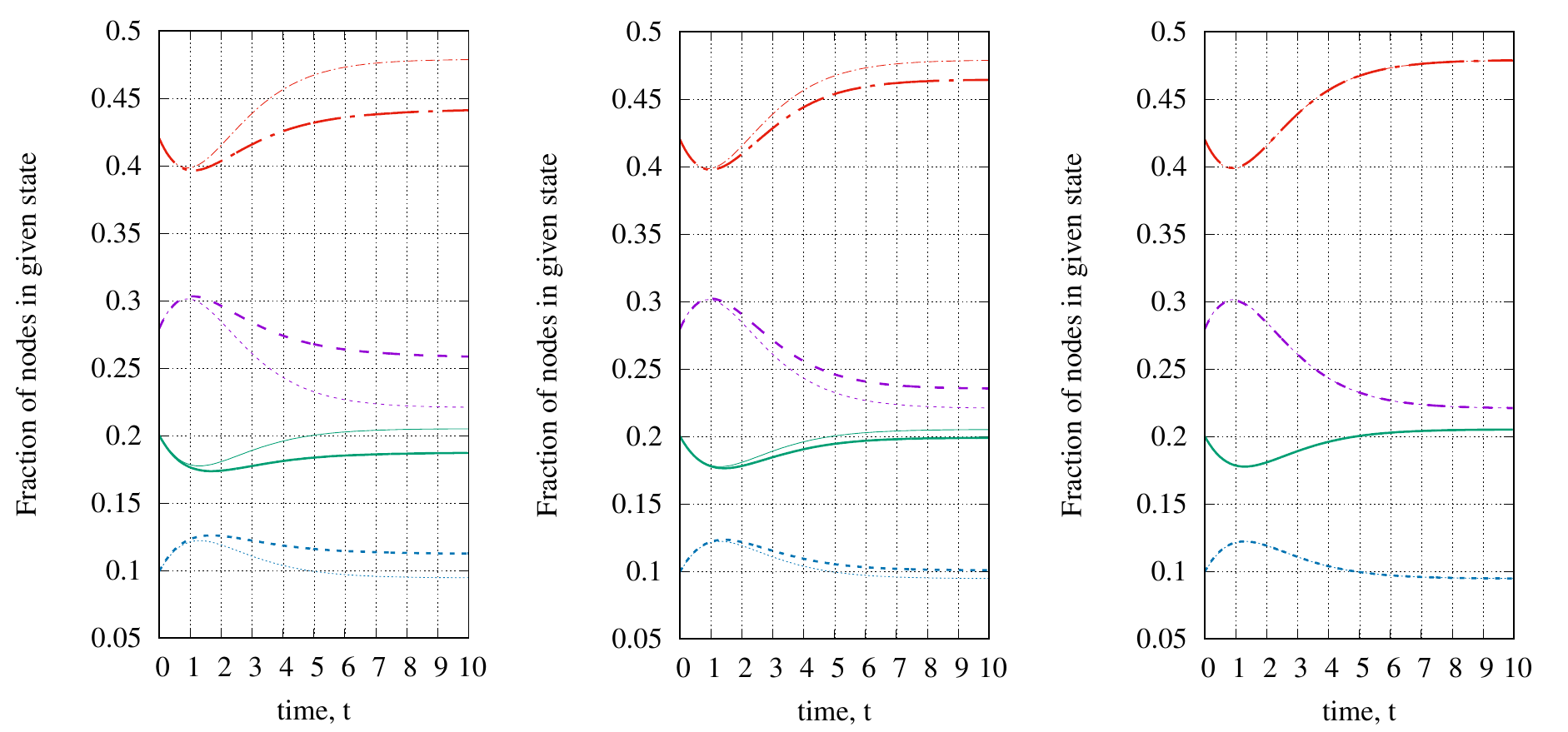}
\caption{BRCA dynamics: evolution over time of the fraction of nodes 
in each possible state, according to simulation (thick curves) and analysis (thin curves).
From top curve to bottom curve: coordinating nodes in state 1, coordinating nodes in state -1,
anti-coordinating nodes in state 1, anti-coordinating nodes in state -1.
From left to right plot: original Epinions graph (left); 
configuration model matched to the Epinions graph (middle); 
configuration model matched to the Epinions graph, with $10 n$ nodes (right).\label{fig:brca_transient}}
\end{figure}

The left plot in Figure \ref{fig:brca_transient} shows the fraction of nodes in each possible state as function of time, comparing
simulation (thick curves) and analysis (thin curves). For simulations, we have plotted the average of 400 runs, 
where in each run we randomly select a different seed set. 
We observe that, after a very similar initial behavior, simulations results tend to a different equilibrium point
with respect to analysis, as one can see by looking at the fraction of nodes at time $t=10$.  

We have identified two main reasons for the observed discrepancies: i) the first one due to the fact that 
the structure of the Epinions graph is not captured by the configuration model; ii) the second one 
due to the fact that the network size is not large enough to converge to a unique equilibrium point
across different runs, due to random effects.

To separate out the impact of the above two reasons, we have performed the following experiments. First,
we have run simulations in which, in each run, we randomly reshuffle the edges while maintaining the same
node statistics. Note that, by so doing, we generate graph according to the configuration model matched to the
Epinions graph. The results of this experiment are shown in the middle plot of Fig. \ref{fig:brca_transient}. As expected, now we observe 
a much better agreement between analysis and simulation. Still, there are non negligible discrepancies in the
final fraction of nodes in each possible state.

\begin{figure}[]
\centering
\includegraphics[width=1.0\columnwidth]{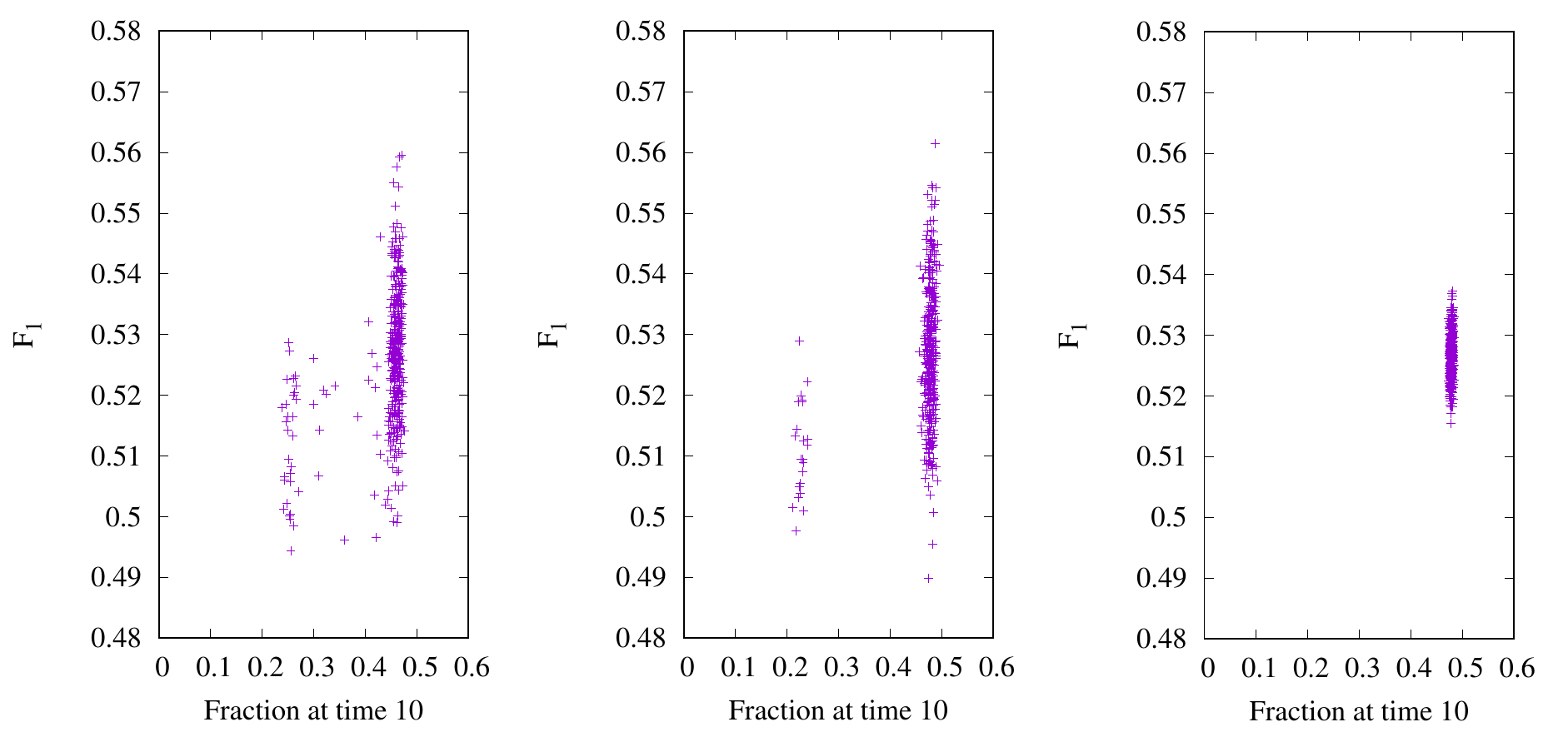}
\caption{$F_1$ metric vs fraction of coordinating node in state 1 (sampled at time $t=10$):
original Epinions graph (left); configuration model matched to the Epinions graph (middle); 
configuration model matched to the Epinions graph, with $10 n$ nodes (right).
\label{fig:Figures/brcaepi}}
\end{figure}
              An in-depth inspection of simulation results revealed that about 5\% of simulation runs
tend to a completely different equilibrium than the remaining 95\% of the runs. This fact is illustrated by
the middle plot in Figure \ref{fig:Figures/brcaepi}, where we have put a mark for each of the 400 runs, 
showing on the $x$ axes the fraction of coordinating nodes in state 1, sampled at time $t=10$.
For each run, we also computed the fraction of coordinating nodes that would transit to state $1$
if their clock would fire at time $t=0$. This fraction, denoted as $F_1$, is plotted on the $y$ axes,
and it is meant to capture a possible initial bias towards entering state 1, due to the initial 
network condition, which is especially dependent on the random seed allocation.

We can observe that simulations converge
to two main equilibria, and that simulations runs where the final fraction of coordinating
nodes in state 1 is smaller (around 0.2) have smaller values of the $F_1$ metric specified above, suggesting
that the initial seed allocation is the main responsible for driving the system into a different
configuration. We emphasize that a similar behavior can be observed also on the original 
Epinions graph, for which an analogous investigation of single simulation runs produced
the left plot in Fig. \ref{fig:Figures/brcaepi}. 
   
To remove the bi-stable outcome of simulations, we performed the following additional experiment: we considered
again the node statistics of the original Epinions graph, but this time we generated graphs
of size ten times larger than the original one (i.e., with $n = 758790$ nodes), using the configuration model.

The right plot in Fig. \ref{fig:Figures/brcaepi} shows that simulation results are 
now much more concentrated around a unique equilibrium. Moreover, 
in the right plot of Fig. \ref{fig:brca_transient} we observe an almost perfect agreement
between analysis and simulation for the evolution over time of the 
fraction of nodes in each possible state (in this plot thick and thin curves are 
essentially overlapped and thus indistinguishable).

We conclude that, when our analytic approach is used to predict 
the behavior of ASD dynamics on realistic (finite) graph, one must be aware of two 
main sources of errors: one due to the fact that real graphs are not completely
described by the configuration model; the other due to the fact that, in finite    
graphs, randomness can possibly drive the system to different equilibria, especially when initial 
conditions are close to the border of the attraction basin of the expected equilibrium.
However, our experiments with the Epinions graph suggest that our approach has remarkable
accuracy even in realistic graphs. Moreover, when the number of nodes exceeds, say, one million, 
results are sufficiently concentrated around their average to justify our mean-field approach, 
at least for the types of ASD dynamics that we have examined so far.

\subsection{ERG on the regular graph}
Here, we show numerical results for the ERG dynamics on a simple regular graph where all nodes have fixed out-degree $k = 50$ and fixed in-degree 
$d = 50$, and all nodes starts in the same state (the rock state). In simulation, we take 1 million nodes, and perform a single run of the system. In Figure \ref{fig:Figures/morra} we show the fraction of nodes in each of the three states as function of time, according to simulation.
In Figure~\ref{fig:Figures/morra_con}, in the plane representing the fraction of nodes in the paper and rock states, we show the loci of the stationary solutions for  \eqref{eq:zeta2} with $\omega \in \{R,P\}$. As shown in Section \ref{sec:ERG_analytic}, the only stationary solution for 
the Roshambo game is $y_{\omega} = 1/3$, $\omega \in \{R,P,S\}$, which is the only point lying in the intersection of the curves shown in Figure~\ref{fig:Figures/morra_con}.  Starting from $y_{R} = 1$, as it is shown in Figure~\ref{fig:Figures/morra}, there is a rapid increase in the fraction 
of nodes in the paper state, since it is the best response. When the node populations in the rock and paper states become 
of equal size, then nodes in the scissors state start to appear (in the plane, the trajectory deviates from the boundary $y_{R} + y_{P}= 1$). The three populations then tend to the equilibrium point in which they have the same size.

\sidebyside{Figures/morra}{Evolution over time of the fraction of nodes in states rock, paper, scissors, according 
to a single simulation run in a regular graph.} % with in-degree and out-degree both fixed to 50.}
{Figures/morra_con}{Stationary solution of ODEs in \eqref{eq:zeta2} for 
the ERG dynamics in a regular graph.} % with in-degree and out-degree both fixed to 50.}

%Stochastic Block Model, ODE, Numerical Experiments, Appendix

\section{Conclusions}

In this paper we have proposed a mathematical framework showing that general semi-anonynoums dynamics 
in large scale random graphs converge to the solution of ordinary differential equations, allowing
fast numerical prediction of the transient behavior of many cascading processes in complex systems and,
in some cases, analytical estimation of their points of equilibrium.    
With respect to existing literature, we have extended the above mean-field approximation
along several directions: i) asynchronous node activation; ii) arbitrary semi-anonymous dynamics, 
including noisy best-response and class-dependent behavior; iii) general random graph exhibiting a 
local tree-structure, possibly mixing heterogeneous nodes and unbounded in/out degrees.
Our main contribution is a rigorous mathematical proof of convergence, which requires a careful 
combination of many independent results related to the different framework components.    
Despite the generality of our approach, we have not considered important variations 
of semi-anonynoums dynamics such as non-reversible transitions. 
Moreover, it remains still largely open how to analytically characterize in a tractable way 
the behavior of ASD in undirected network.

\appendix

% !TEX root = ~/DB-Poli/Work/Threshold_models/Main/ASD-COVID-19.tex
\appendix
\section{Proofs of Section \ref{Sec:generalbounds}}

\subsection{Topological result: Proof of Theorem \ref{thm:top_res_modgraphs}}\label{app:Topological_result}
Let $\mathcal{N}_{t}$ be the relevant neighborhood of  a node  chosen uniformly at random from $\mathcal{V}$, respectively. Moreover, Let $\mathcal{T}_{t}$ be the truncated branching process at time $t$. 
In order to compare the distributions $\mu_{\mathcal{N}_t}$ and $\mu_{\mathcal{T}_t}$ of these two random variables, we define a  coupling between them.

\subsubsection*{Definition of coupling}

As a starting point, we  define  two different sequences of random variables.
For $a, a' \in \mathcal{A}$,  let  $(L^{a,a'}_{h} )_{h\in\mathbb{N}}$ be a sequence of i.i.d. random variables distributed according to a uniform distribution on the finite set $\mathcal{L}_{a,a'}$. Let $(M^{a,a'}_{h})_{h\in \mathcal{L}_{a,a'}}$ be a finite sequence of random variables such that
$$
\mathbb{P}(M^{a,a'}_{h}=L^{a,a'}_{h}|L^{a,a'}_{h}\notin\{M^{a,a'}_{1},\ldots,M^{a,a'}_{h-1}\})=1
$$
while $M^{a,a'}_{h}$  is uniformly distributed on the set $\mathcal{L}_{a,a'}\setminus \{M^{a,a'}_{1}, . . . , M^{a,a'}_{h-1}\}$, if $L^{a,a'}_{h} \in \{M^{a,a'}_{1},\ldots,M^{a,a'}_{h-1}\}$. 
Notice that the marginal distribution of the sequences $(L^{a,a'}_{h})_{h\in\mathbb{N}}$ and $(M^{a,a'}_{h})_{h\in \mathcal{L}_{a,a'}}$ are equivalent to sampling with replacement and sampling without replacement, respectively, from the set $\mathcal{L}_{a,a'}$.
We thus have
\begin{equation}\label{eq:eq_seq}
\mathbb{P}(L^{a,a'}_{h+1}\neq M^{a,a'}_{h+1}|(L^{a,a'}_{1},\ldots,L^{a,a'}_{h})=(M^{a,a'}_{1},\ldots,M^{a,a'}_{h}))=\frac{h}{l_{a,a'}}.
\end{equation}

We build  the  neighborhood  $\mathcal{N}_{t}$  with a dynamic exploration procedure  driven by  variables $(M^{a,a'}_{h})_{h\in\mathbb{N}}$. Our procedure   starts adding to $\mathcal{N}_t$
the root $v_0$ alone  (chosen uniformly at random over $\mathcal{V}$),   then $\mathcal{N}_t$ grows through the addition of new edges/nodes,  according to the following mechanism. Upon insertion, a newly added node is declared unexplored and its out-degree  is initialized  to $0$.  Then some of the unexplored nodes in $\mathcal{N}_t$ are  sequentially explored at random 
times (smaller than $t$). Upon its exploration, a node acquires a non-null out-degree and its out-neighbors are introduced in the network (if not already in the network).

Before describing in detail the algorithm, we introduce  the following variables:
\begin{itemize}
	\item $i$ denotes the iteration, which is determined by the number of explored  nodes in the network; 
	\item  $h_i^{a,a'}$  denotes  the number of $(a,a')$ edges at iteration  $i$;
	\item  $h_i$  denotes the  total number of edges in $\mathcal{N}_t$ at iteration $i$,  
	\item $\mathcal{S}_i$  denotes  the set of  nodes (except for the root $v_0$) in $\mathcal{N}_t$ at iteration $i$;
	\item $\mathcal{S}'_i$  denotes  the set of unexplored nodes in $\mathcal{N}_t$ at iteration $i$;
	\item  $T_i$ denotes  the time at which the $i$-th node activates (i.e., it is explored).
	\item  $\Gamma_i$  denotes  the time lag between the $(i-1)$-th and the $i$-th exploration.
	\item $V_i$  denotes the identity of the $i$-th explored node.
	\item $A_i$  denotes the  class of the $i$-th explored node.
	
\end{itemize}

Then $\mathcal{N}_{t}$ is generated  according to the following procedure:

\begin{enumerate} 
	\item  Set $i=0$. Start from the root $v_0$ (chosen uniformly at random from $\mathcal V$).  Assign it label $A_0=\lambda(v_0)$ and  out-degree vector ${\textbf{ J}}_0={\boldsymbol{ 0}}$, and declare it unexplored.
	Set $h_0^{a,a'}= 0$,  for all $a,a'\in\mathcal{A}$, $h_0=0$,  $\mathcal{S}_0=\mathcal{S}'_0=\emptyset$   and extract $T_0\sim\text{Exp}(1)$.
	
	\item If $T_0 > t$, stop the process. 
	If $T_0\le t$: declare $v_0$ explored, change its out-degree vector to ${\textbf{J}}_0={\textbf{K}}_{v_0}$, and add   $J_0=\sum_{a'\in\mathcal{A}}J_{0}^{a'}$ new edges  to  $\mathcal{N}_{t}$.
	In particular, for all $a'\in\mathcal{A}$, add exactly $J_{0}^{a'}$ out-going edges $M_{h'}^{A_0,a'}$, $h'\in\{1,\ldots J_{0}^{a'}\}$,  from $v_0$, pointing to nodes   $\nu_{A_0,a'}(M_{h'}^{A_0,a'})$. Note that such nodes are not necessarily different from each other. Newly introduced nodes  are declared unexplored and their
	 out-degree is set to $0$.%, while their  in-degree    is initialized to the number of edges   pointing to each of them.
	 
	Then, counters are updated as:  $i=1$, $h_1^{A_0,a'}=J_{0}^{a'}$ for all $a'\in\mathcal{A}$ and $h_1^{a,a'}=0$ for $a\neq A_0$, $h_1=\sum_{a,a'} h_1^{a,a'}$,  $\mathcal{S}_1= \cup_{a', h'}\{ \nu_{A_0,a'}(M_{h'}^{A_0,a'}\}\}$ and $\mathcal{S}'_1 = \mathcal{S}_1$.   
	
	\item Let us define $T_i=T_{i-1}+\Gamma_i$ and let 
	$$\Gamma_i \sim \text{Exp}\left(|\mathcal{S}'_i|\right)$$
	
	where $|\mathcal{S}'_i|$ represents  the number of unexplored  nodes  in $\mathcal{N}_t$ at iteration $i$;
	and let $V_i$ be chosen uniformly at random from $\mathcal{S}'_i$.
	
	\item If $T_i > t$, stop the process. 
	If $T_i \le t$: declare the node  $V_i$ explored,  assign it a degree ${\textbf{J}}_{i}={\textbf{K}}_{V_i}$ 
	and add   $J_i=\sum_{a'\in\mathcal{A}}J_{i}^{a'}$ new edges  to  $\mathcal{N}_{t}$.
	In particular, add for all $a'\in\mathcal{A}$ exactly $J_{i}^{a'}$ out-going edges $M_{h'}^{A_i,a'}$  from $V_i$ pointing to nodes $\nu_{A_i,a'}(M_{h'}^{A_i,a'}) $
with $h'\in \{h^{A_i,a'}_i+1,\ldots,  h^{A_i,a'}_{i}+ J_{i}^{a'} \}$. Note that such nodes  are not necessarily all distinct, nor they are distinct from  other nodes already inserted in $\mathcal{N}_t$. Among them,
	 those that are not already in $\mathcal{N}_t$  are added and   declared unexplored and their out-degree is set to 0. %The in-degree of all  nodes in $\mathcal{N}_t$ is  consistently updated to the number of edges in $\{M_{h'}^{A_i,a'}\}$ pointing to them.
	
	Then,  
	$h^{A_i,a'}_{i+1}= h_{i}^{A_i,a'}+J_{i}^{a'}$, 
	and, for all $a\neq A_i$,  $h^{a,a'}_{i+1}= h_{i}^{a,a'}$.
	Then  $h_{i+1}= h_{i}+\sum_{a'} J_{i}^{a'}$, $\mathcal{S}_{i+1}= \mathcal{S}_{i}\bigcup \{\cup_{a',h'}\{\nu_{A_i,a'}({\color{black}{M}}_{h'}^{A_i,a'})  \} $ and $\mathcal{S}'_{i+1}= \mathcal{S}'_{i} \backslash \{V_i\} \bigcup (\cup_{a',h'}\{\nu_{A_i,a'}({\color{black}{M}}_{h'}^{A_i,a'})  \}$.
	Finally, increment $i$ and go back to  Point 3.
	
\end{enumerate}

Note that by construction,  Point 4 is repeated for all $i \leq i_M$ where $i_M=\max\{i\geq 0| T_i \leq t\}$. 
Observe that the random network $\mathcal{N}_{t}$  generated in this way has, by construction,  the same structure 
and  desired distribution $\mu_{\mathcal{N}_{t}}$, of the relevant neighborhood of a random node in $\mathcal{G}$.
Let $E_t^{a,a'}$ be the total number of edges in $\mathcal{N}_{t}$ from nodes with label $a$ to nodes with label $a'.$
Notice that $E_t^{a,a'} = h_{i_M}^{a,a'}$.

Now we build  the tree $\mathcal{T}_{t}$  with a similar  dynamic exploration procedure  driven by  variables $(L^{a,a'}_{h})_{h\in\mathbb{N}}$,  which starts 
from a root $\widetilde{v}_0$ alone  and sequentially adds new edges/nodes $\widetilde{v}_h$,  $h=1,2,\dots$, to the tree.
Nodes $\{\widetilde{v}_h\}_{h\ge 0}$  are  assumed to be pairwise different. However a  correspondence  between    node $\{\widetilde{v}_h\}_h$  in the tree and  nodes   in the graph $\mathcal{G}$ is  dynamically established. We emphasize  that this correspondence  is, in general, non bijective: the same node in the network may correspond to several distinct nodes of $\mathcal{T}_t$, which are replicas of it.

More in detail, first we define  the following variables:
\begin{itemize}
	\item $i$ denotes the iteration, i.e., the number of activated nodes in the tree; 
	\item  $\widetilde{h}_i^{a,a'}$  denotes the number of $(a,a')$ edges at iteration  $i$;
	\item  $\widetilde{h}_i$  denotes the total number of edges in the tree at iteration $i$,   which, by construction, equals the total number of non-root nodes $S_i$ in the tree;
	\item $\widetilde{\mathcal{S}}_i$  denotes  the set of  nodes (except for the root $\widetilde{v}_0$) in $\mathcal{T}_t$ at iteration $i$;
	\item $\widetilde{\mathcal{S}}'_i$  denotes  the set of unexplored nodes in $\mathcal{T}_t$ at iteration $i$;
	\item  $\widetilde{T}_i$ denotes  the time at which the $i$-th node activates.
	\item  $\widetilde{\Gamma}_i$  denotes the time lag between the $(i-1)$-th and the $i$-th exploration.
	\item $\widetilde{V}_i$  denotes the identity of the $i$-th explored node.
	\item  $W(\cdot)$ represents the function that maps the nodes of the tree into $\mathcal V$.
    \item $\widetilde{A}_i$ is  the  class of the $i$-th activated node.	
	
\end{itemize}

Then $\mathcal{T}_{t}$ is generated  according to the following procedure:

\begin{enumerate} 
\item  Set $i=0$. Start from the root $\widetilde{v}_0$. Establish a correspondence between    $\widetilde{v}_0$ and $v_0$ (i.e., $W(\widetilde{v}_0)=v_0$).  Assign it label $\widetilde{A}_0=\lambda(v_0)$ and out-degree vector $\widetilde{{\textbf{J}}}_0={\boldsymbol{0}}$.
Set $\widetilde{h}_0^{a,a'}=\widetilde{h}_0=0$,  for all $a,a'\in\mathcal{A}$, $\widetilde{\mathcal{S}}_0 = \emptyset $ and $\widetilde{T}_0 = T_0$.

\item If $\widetilde{T}_0 > t$, stop the process. 
If $\widetilde{T}_0\le t$: change the out-degree vector of $\widetilde{v}_0$ to $\widetilde{{\textbf{J}}}_0={\textbf{K}}_{v_0}$, and add   $\widetilde{J}_0=\sum_{a'\in\mathcal{A}} \widetilde{J}_{0}^{a'}$ new nodes to  $\mathcal{T}_{t}$ (as children of $\widetilde{v}_0$).
In particular, for all $a'\in\mathcal{A}$, add exactly $\widetilde{J}_{0}^{a'}$ out-going edges from $\widetilde{v}_0$, pointing to different  nodes  $\widetilde{v}_h$, $h\in\{1,\ldots,\sum_{a'} \widetilde{J}_{0}^{a'}\}$. 
Establish a correspondence between newly inserted tree  nodes  and graph nodes as $W(\widetilde{v}_h)=\nu_{A_0,a'}(L_{h'}^{A_0,a'})$,  for $h= h'+\sum_{a'<a} \widetilde{J}_{0}^{a'}$. Set their out-degree to 0.% and their in-degree to the number of edges pointing to each of them.
Then, counters are updated as:  $i=1$, $\widetilde{h}_1^{A_0,a'} = \widetilde{J}_{0}^{a'}$ for all $a'\in\mathcal{A}$ and $\widetilde{h}_1^{a,a'}=0$ for $a\neq A_0$, $\widetilde{h}_1=\sum_{a,a'} \widetilde{h}_1^{a,a'}$, $\widetilde{\mathcal{S}}_i = \cup_h \{\widetilde{v}_h\}$ and $\widetilde{\mathcal{S}}'_i = \widetilde{\mathcal{S}}_i$.   

\item Let us define $\widetilde{T}_i=\widetilde{T}_{i-1}+ \widetilde{\Gamma}_i$ with $\widetilde{\Gamma}_i = \Gamma_i$ if $| \widetilde{\mathcal{S}}'_i| = |\mathcal{S}'_i|$, otherwise 
 $$\Gamma'_i \sim \text{Exp}\left(|\widetilde{\mathcal{S}}'_i|\right);$$
moreover, let $\widetilde{V}_i = V_i$ if $\widetilde{\mathcal{S}}'_i  =\widetilde{\mathcal{S}}_i$, otherwise let $\widetilde{V}_i$ be chosen uniformly at random from $\widetilde{\mathcal{S}}'_i $. 

\item If $\widetilde{T}_i > t$, stop the process. If $\widetilde{T}_i \leq t$: declare the node  $\widetilde{V}_i$ explored,  assign it a degree $\widetilde{{\textbf{J}}}_{i}={\textbf{K}}_{W(\widetilde{V}_i)}$  
and  add $\widetilde{J}_i = \sum_{a'\in\mathcal{A}}\widetilde{J}_{i}^{a'}$ new nodes to $\mathcal{T}_{t}$   (as children of  $\widetilde{V}_i$). In particular, for all $a'\in\mathcal{A}$, add  exactly $\widetilde{J}_{i}^{a'}$ out-going edges from $\widetilde{V}_i$, pointing to different nodes $v_{\widetilde{h}_i+h}$, $h\in\{1,\ldots,\sum_{a'} \widetilde{J}_{i}^{a'}\}$. Establish a correspondence between newly inserted tree  nodes  and graph nodes as 
$W(\widetilde{v}_{\widetilde{h}_i+h})= \nu_{A_i, a'}(L_{\widetilde{h}_i^{A_i,a'}+h'}^{A_i,a'})$, for $h= h'+ \sum_{a'<a} \widetilde{J}_{i}^{a'}$. Set their out-degree to 0.% and their in-degree to the number of edges pointing to each of them.
Then,   
$\widetilde{h}^{A_i,a'}_{i+1}=\widetilde{h}_{i}^{A_i,a'} + \widetilde{J}_{i}^{a'}$,
and for all $a\neq A_i$,  $\widetilde{h}^{a,a'}_{i+1}:=\overline{h}_{i}^{a,a'}$,  $\widetilde{h}_i= \widetilde{h}_{i-1}+\sum_{a'} \widetilde{J}_{i}^{a'}$, $\widetilde{\mathcal{S}}_{i+1}= \widetilde{\mathcal{S}}_{i}\bigcup (\cup_{h}\{\widetilde{v}_{\widetilde{h}_i+h}  \} $ and $\widetilde{\mathcal{S}}'_{i+1}= \widetilde{\mathcal{S}}'_{i} \backslash \widetilde{V}'_i \bigcup (\cup_{h}\{\widetilde{v}_{\widetilde{h}_i+h}  \})$.
Finally, increment $i$ and go back to  Point 3.
\end{enumerate}

\begin{figure}
\begin{center}\begin{tikzpicture} [every node/.style={circle,fill=gray!20,inner sep=4pt}]
  \node (n9) at (6.5,10) {\small$w_0$};
   \node (n18) at (7,8){\color{gray!20}\small$w_1$};
   \node (n20) at (6,6){\color{gray!20}\small$w_2$};
     \tikzset{every node/.style={}} 

    \node (n11) at (1,8) {\small$\nu_{++}(L_1^{++})$};
  \node (n12) at (3,8)   {$...$};
  \node (n13) at (4.5,8) {\small$\nu_{++}(L_{K^+_0}^{++})$};
   \node (n14) at (7,8){\small$\begin{array}{c}w_1=\nu_{+-}(L_1^{+-})\\
   A_1=-,{\textbf{K}_1}=(K_1^{+},K_1^{-})\end{array}$};
  \node (n15) at (9,8)   {$...$};
   \node (n17) at (11,8){\small$\nu_{+-}(L_{K_0^{-}}^{+-})$};

    \node (n1) at (2.5,6) {\small$\nu_{-+}(L_1^{-+})$};
  \node (n2) at (4,6)   {$...$};
  \node (n3) at (6,6) {\small$\begin{array}{c}w_2=\nu_{-+}(L_{K^+_1}^{-+})\\
  A_2=+,{\textbf{K}_2}=(K_2^{+},K_2^{-})
  \end{array}$};
   \node (n4) at (8,6){\small$\nu_{--}(L_1^{--})$};
  \node (n5) at (9.5,6)   {$...$};
   \node (n7) at (11,6){\small$\nu_{+-}(L_{K_1^{-}}^{--})$};

    \node (n1_last) at (2.5,4) {\small$\nu_{++}(L_{K_0^++1}^{++})$};
  \node (n2_last) at (4,4)   {$...$};
  \node (n3_last) at (6,4) {\small$\nu_{++}(L_{K^+_0+K_2^+}^{++})$};
   \node (n4_last) at (8.5,4){\small$\nu_{+-}(L_{K_0^-+1}^{+-})$};
  \node (n5_last) at (10,4)   {$...$};
   \node (n7_last) at (11.5,4){\small$\nu_{+-}(L_{K_0^{-}+K_2^-}^{+-})$};

\foreach \from/\to in {n9/n11,n9/n12,n9/n13,n9/n14,n9/n15,n9/n17}
\path (\from) edge[->,bend right=3] (\to);
\foreach \from/\to in {n14/n1,n14/n2,n14/n3,n14/n4,n14/n5,n14/n7}
\path (\from) edge[->,bend right=3] (\to);
\foreach \from/\to in {n3/n1_last,n3/n2_last,n3/n3_last,n3/n4_last,n3/n5_last,n3/n7_last}
\path (\from) edge[->,bend right=3] (\to);

\tikzset{mystyle/.style={->}} 
\tikzset{every node/.style={fill=cyan!10}} 

\tikzset{every node/.style={fill=cyan!10}} 
         \tikzset{every node/.style={}} 
  \node (n1) at (9.5,10) {$A_0=\lambda(w_0)=+, {\textbf{K}}_0=(K_0^+,K_0^{-})$};
%  \node (n17) at (8,2)   {\small$\nu_{+-}(L_1^{+-})$};

     \tikzset{every node/.style={fill=cyan!10}} 
      \path (n9) edge[->,bend right=3, thick] node { $E_1$}(n14);
      \path (n14) edge[->,bend right=3, thick] node { $E_2$}(n3);

\end{tikzpicture}
\end{center}
\begin{caption}
{Notations: Example with two classes $\mathcal{A}=\{+,-\}$}
\end{caption}
\label{fig:Notation}
\end{figure}

Note that by construction,  Point 4 is repeated for all $i \leq \widetilde{\imath}_M$ where $\widetilde{\imath}_M=\max\{i \geq 0| \widetilde{T}_i \leq t\}$. 
Observe that the random tree $\mathcal{T}_{t}$  generated in this way has, by construction,  the same structure 
and  desired distribution $\mu_{\mathcal{T}_{t}}$, of the relevant neighborhood over a labeled branching process.
Let $\widetilde{E}_t^{a,a'}$ be the total number of edges in $\mathcal{T}_{t}$ from nodes with label $a$ to nodes with label $a'.$
Notice that $\widetilde{E}_t^{a,a'} = \widetilde{h}_{\widetilde{\imath}_M}^{a,a'}$.

The whole process of generation of both $\mathcal{N}_t$ and $\mathcal{T}_t$ is summarized in Table \ref{table:coupling}, where the parallelism of the generation of the two graphs is apparent.

\begin{table}
\begin{center}
\begin{tabular}{|p{6cm}|p{6cm}|}
\hline
Pick node $v_0$ uniformly at random from $\mathcal{V}$, and set its out-degree $\textbf{J}_0 = \boldsymbol{0}$. 
\newline Set $i := 0$. 
Set $h_0^{a,a'}:=0$  for all $a,a'\in\mathcal{A}$. 
Set $V_0 = v_0$.
& 
Let $\widetilde{v}_0 = v_0$ be the tree root and set its out-degree $\widetilde{\textbf{J}}_0 = \boldsymbol{0}$. 
\newline
Set $i := 0$. 
Set $\widetilde{h}_0^{a,a'}:=0$  for all $a,a'\in\mathcal{A}$. 
Set $\widetilde{V}_0 = \widetilde{v}_0$.  \\
\hline
Extract $T_0\sim\text{Exp}(1)$. 
& 
Set $\widetilde{T}_0 = T_0$. \\
\hline
For $i  =0,1,2,\dots $ & For $i  =0,1,2,\dots $ \\
\hline
IF $T_i \leq t$
\newline - assign $V_i$ the out-degree ${\textbf{J}}_{i}={\textbf{K}}_{V_i}$ and the class $A_i = \lambda(V_i)$; 
\newline - for all $a\in\mathcal{A}$, add $J_{i}^{a}$ out-going edges from $V_i$, pointing to nodes  
$\nu_{A_{i},a}(M_{h_i^{A_i,a}+j}^{A_i,a})$, with   $j \in\{ 1,\dots,  J_i^{a}\}$; 
\newline - assign all newly inserted nodes an out-degree equal to $\boldsymbol{0}$; 
\newline - update 
$h^{A_i,a'}_{i+1}:=h_{i}^{A_i,a'}+J_{i}^{a'}$,
and for all $a\neq A_i$ $h^{a,a'}_{i+1}:=h_{i}^{a,a'}$, $\mathcal{S}_{i+1}= \mathcal{S}_{i}\bigcup \{\cup_{a',h'}\{\nu_{A_i,a'}({\color{black}{M}}_{h'}^{A_i,a'})  \} $ and $\mathcal{S}'_{i+1}= \mathcal{S}'_{i} \backslash V_i \bigcup (\cup_{a',h'}\{\nu_{A_i,a'}({\color{black}{M}}_{h'}^{A_i,a'})  \}$.
\newline 
ELSE end.
 & 
IF $\widetilde{T}_i \leq t$ 
\newline 
- assign $\widetilde{V}_i$ the out-degree $\widetilde{{\textbf{J}}}_{i}={\textbf{K}}_{W(\widetilde{V}_i)}$ and the class $\widetilde{A}_i = \lambda(W(\widetilde{V}_i))$; 
\newline 
- for all $a\in\mathcal{A}$, add $\widetilde{J}_{i}^{a}$ out-going edges from $\widetilde{V}_i$, pointing to $\widetilde{J}_{i}^{a}$ new nodes, which are replicas of 
$\nu_{A_i,a}(L_{\widetilde{h}_i^{A_i,a}+j}^{A_i,a})$, with   $j \in \{1,\dots, \widetilde{J}_{i}^{a}\}$;
\newline  
- assign all newly inserted nodes an out-degree equal to $\boldsymbol{0}$;
\newline
- update 
$\widetilde{h}^{A_i,a'}_{i+1}:=\widetilde{h}_{i}^{A_i,a'}+\widetilde{J}_{i}^{a'}$,
and for all $a\neq A_i$ $\widetilde{h}^{a,a'}_{i+1}:=\widetilde{h}_{i}^{a,a'}$, $\widetilde{\mathcal{S}}_{i+1}= \widetilde{\mathcal{S}}_{i}\bigcup (\cup_{h}\{\widetilde{v}_{\widetilde{h}_i+h}  \} $ and $\widetilde{\mathcal{S}}'_{i+1}= \widetilde{\mathcal{S}}'_{i} \backslash \widetilde{V}'_i \bigcup (\cup_{h}\{\widetilde{v}_{\widetilde{h}_i+h}  \}$.
\newline
ELSE end. \\
\hline
Generate $T_{i+1} = T_{i} + \Gamma_{i+1}$, where
$\Gamma_{i+1} \sim \text{Exp} \left(|\mathcal{S}'_{i+1}|\right)$ 
\newline
Sample node $V_{i+1}$ uniformly at random from $\mathcal{S}'_{i+1}$.
& 
Generate $\widetilde{T}_{i+1} = \widetilde{T}_{i} + \widetilde{\Gamma}_{i+1}$, where
$\widetilde{\Gamma}_{i+1} =\Gamma_{i+1}$, if $|\widetilde{\mathcal{S}}'_{i+1}| = |\mathcal{S}'_{i+1}|$ and $\widetilde{\Gamma}_{i+1} \sim \text{Exp} \left(|\widetilde{\mathcal{S}}'_{i+1}|\right)$ otherwise. 
\newline
Sample node $\widetilde{V}_{i+1}$ as follows. If $\widetilde{\mathcal{S}}'_{i+1} = \mathcal{S}'_{i+1}$ then $\widetilde{V}_{i+1} = V_{i+1}$, else $\widetilde{V}_{i+1}$ is taken uniformly at random from $\widetilde{\mathcal{S}}'_{i+1}$. 
\\
\hline
\end{tabular}
\end{center}
\caption{Coupling between the generation processes of \protect$\mathcal{N}_{t}$ (left column) and \protect$\mathcal{T}_{t}$ (right column)}
\label{table:coupling}
\end{table}

\medskip

\subsubsection{Proof of Theorem \ref{thm:top_res_modgraphs}}
Using the coupling inequality (see Proposition 4.7 in \cite{LevinPeresWilmer2006}) we have 
\begin{equation}\label{eq:TV}\|\mu_{\mathcal{T}_{t}}-\mu_{\mathcal{N}_{t}}\|_{\mathrm{TV}}\leq \mathbb{P}(\mathcal{N}_t\neq \mathcal{T}_t).\end{equation}
Define the two events
\[
\mathcal{ B}_1=\bigcap_{a,a' \in \mathcal{A}}\left\{(L^{a,a'}_1,L^{a,a'}_2,\ldots,L^{a,a'}_{\widetilde{E}_t^{a,a'}})=(M^{a,a'}_1,M^{a,a'}_2,\ldots,M^{a,a'}_{\widetilde{E}_t^{a,a'}})\right\}
\]
and 
\begin{align*}\mathcal{ B}_2&=\bigcap_{a\in\mathcal{A}}\left\{\bigcap_{(b,h)\neq (b',h'):b,b'\in\mathcal{A},h\in\{1,\ldots,\widetilde{E}_t^{b,a}\},h'\in\{1,..., \widetilde{E}_t^{b',a}\}}\{\nu_{b,a}({\color{black}{L^{b',a}_h}}))\neq \nu_{b',a}({\color{black}{L^{b,a}_h}}))\}\right\}\bigcap\\
&\qquad\qquad\qquad\bigcap\left\{\bigcap_{b\in\mathcal{A}}\bigcap_{h\in\{1,\ldots,\widetilde{E}_t^{b,a_0}\}}\{\nu_{b,A_0}(W(\widetilde{v}_h))\neq v_0 \}\right\},\end{align*}
which are, in words, the event that there are no repeated edges in $\mathcal{T}_t$ and that the map $W(\cdot)$ is bijective (i.e., just a single node in 
$\mathcal{T}_t$ corresponds to every node in 
$\mathcal{N}_t$). 
%$\mathcal{N}_n$ is a tree, respectively, provided that there is the same number of edges in $\mathcal{N}_n$ and $\mathcal{T}_n$, i.e., $\widetilde{E}_t^{a,a'} = E_t^{a,a'}$.

We are going to show that  $\{\mathcal{ B}_1\cap \mathcal{ B}_2  \} \subseteq \{\mathcal{N}_t= \mathcal{T}_t \}$. 
The assertion, indeed,  can be  easily checked by induction over the iteration $i$. First observe that at the end of  iteration $0$,  by construction, under $\mathcal{ B}_1$ and $\mathcal{ B}_2$  the structure of $\mathcal{T}_t$ and $\mathcal{N}_t$ are necessarily equal. Indeed by construction they can be different only if either  some $L^{A_0,a'}_{h'}\neq ,M^{A_0,a'}_{h'}$
for $h'\le J_0^{A_0,a'}= \tilde{J}_0^{A_0,a'}$ or there are $h'$ and $h''$ such that $\nu_{A_0,a}(L^{A_0,a'}_{h'})=\nu_{A_0,a'}(M^{A_0,a'}_{h'})= \nu_{A_0,a'}(M^{A_0,a'}_{h''})= \nu_{A_0,a'}(L^{A_0,a'}_{h''})$.
Now suppose that the structure of $\mathcal N_{t}$ is equal to the structure of $\mathcal{T}_t$ at the end of iteration $i-1$ (for $i\ge 1$). Then, by construction $\mathcal{S}_i=\widetilde{\mathcal{S}}_i$ and   $\mathcal{S'}_i=\widetilde{\mathcal{ S}}'_i$  $ h_i^{a,a'}=\widetilde{h}_i^{a,a'}$; therefore $V_i=\widetilde{V}_i$ and  $\Gamma_i=\widetilde{\Gamma}_i$ and $A_i=\widetilde{A}_i$ $J_i^{A_i,a'}=\widetilde{J}_i^{A_i,a'}$ . During iteration $i$ we add to $\mathcal{N}_t$ 
nodes 
$\nu_{A_i,a'}(M^{A_i,a'}_{h_i^{A_i,a'}+h'})= \nu_{A_i,a'}(L^{A_i,a'}_{h_i^{A_i,a'}+h'})$ for $h'\in\{1,\ldots,  J_i^{a'} \}$  (where the equality descends from ${\mathcal{B}}_1$), which, from ${\mathcal{B}}_2$, are all different and different from nodes already in $\mathcal{N}_t$. In $\mathcal{T}_t$ we add  brand-new  replicas of 
nodes  $\nu_{A_i,a'}(L^{A_i,a'}_{h_i^{A_i,a'}+h'})= \nu_{A_i,a'}(M^{A_i,a'}_{h_i^{A_i,a'}+h'})$. Therefore the structures of $\mathcal{N}_t$ and 
$\mathcal{T}_t$ are  still equal at the end of iteration $i$.

Thus:
\begin{align} \label{eq:notequaltree}
\mathbb{P}\left(\mathcal{N}_t\neq \mathcal{T}_t\right)&\leq\mathbb{P}\left(\mathcal{ B}_1^C\cup\mathcal{ B}_2^C\bigg|\bigcap_{b,a}\left\{\widetilde{E}_t^{b,a} \leq x_{b,a}\right\}\right)+\mathbb{P}\left(\bigcup_{b,a}\left\{\widetilde{E}_t^{b,a} > x_{b,a}\right\}\right)\nonumber\\& \leq \mathbb{P}\left(\mathcal{ B}_1^C\bigg|\bigcap_{b,a}\left\{\widetilde{E}_t^{b,a} \leq x_{b,a}\right\}\right)+\mathbb{P}\left(\mathcal{\tilde B}_2^C\bigg|\mathcal{\tilde B}_1,\bigcap_{b,a}\left\{\widetilde{E}_t^{b,a} \leq x_{b,a}\right\}\right)\nonumber\\
&+\sum_{b,a}\mathbb{P}\left(\left\{\widetilde{E}_t^{b,a} > x_{b,a}\right\}\right)\nonumber\\
\end{align}
where in the first inequality we have let out the probability that the number of nodes exceeds a fixed threshold.

Define the event
\begin{equation}
\mathcal{E}_h^{b,a} = \left\{ (L^{b,a}_1,\ldots,L^{b,a}_{h}) = (M^{b,a}_1,\ldots,M^{b,a}_{h}))\right\}
\end{equation}

The first term of \eqref{eq:notequaltree} is upper bounded by
\begin{align}\mathbb{P} 
\left(\mathcal{ B}_1^C\bigg|\bigcap_{b,a}\left\{\widetilde{E}_t^{b,a} \leq x_{b,a}\right\}\right)
&\ =\mathbb{P}\left(\bigcup_{b,a\in\mathcal{A}} \left(\mathcal{E}_{\widetilde{E}_t^{b,a}}^{b,a}\right)^c 
\bigg|\bigcap_{b,a}\left\{\widetilde{E}_t^{b,a} \leq x_{b,a}\right\}\right)\nonumber\\
&\ \leq \sum_{b,a\in\mathcal{A}} \mathbb{P}\left(\left(\mathcal{E}_{\widetilde{E}_t^{b,a}}^{b,a}\right)^c \bigg| \widetilde{E}_t^{b,a} \leq x_{b,a}\right)\nonumber\\
&\ \leq\sum_{b,a\in\mathcal{A}}\sum_{h_{b,a}=0}^{x_{b,a}-1}\mathbb{P}(L^{b,a}_{h_{b,a}+1}\neq M^{b,a}_{h_{b,a}+1}| \mathcal{E}_{h_{b,a}}^{b,a})\nonumber\\
&=\sum_{ b,a\in\mathcal{A}}\sum_{h_{b,a}=0}^{x_{b,a}-1}\frac{h_{b,a}}{l_{b,a}}\label{eq:bound_A11}
\end{align}
where the first inequlity is the union bound, the second inequality is the chain rule, while the last equality comes from  \eqref{eq:eq_seq}.
Using the same arguments, the second term of \eqref{eq:notequaltree} becomes
\begin{align}\mathbb{P}
&\left(\mathcal{B}_2^C\bigg| \mathcal{B}_1,\bigcap_{b,a}\left\{\widetilde{E}_t^{b,a} \leq x_{b,a}\right\}\right)\nonumber\\
&\ \leq\sum_{a,b\in\mathcal{A}}\sum_{h_{b,a}=1}^{x_{b,a}}
\mathbb{P}(\nu_{b,a}(M^{b,a}_{h_{b,a}})\in\{w_0,\nu_{b,a}(M^{b,a}_1),...,\nu_{b,a}(M^{b,a}_{h_{b,a}-1})\}|\mathcal{E}_{h_{b,a}}^{b,a})\nonumber\\
& + \sum_{a,b\in\mathcal{A}} \sum_{b'\neq b} \sum_{h_{b,a}=1}^{x_{b,a}} \sum_{h_{b',a}=1}^{x_{b',a}} \mathbb{P}\left(\nu_{b,a}(M^{b,a}_{h_{b,a}}) = \nu_{b',a}(M^{b',a}_{h_{b',a}}) \bigg| \mathcal{E}_{h_{b,a}}^{b,a},\mathcal{E}_{h_{b',a}}^{b',a}\right)
\end{align}
where the first term above gives the probability that two edges from the same class point to the same node or that a given edge points to the root, while the second term computes the probability that two edges from two different classes point to the same node. From the definition in \eqref{eq:statistics_edges}, we have that the first term is upper bounded by
\begin{align}
&\sum_{a,b\in\mathcal{A}}\sum_{h_{b,a}=1}^{x_{b,a}}
\mathbb{P}(\nu_{b,a}(M^{b,a}_{h_{b,a}})\in\{w_0,\nu_{b,a}(M^{b,a}_1),...,\nu_{b,a}(M^{b,a}_{h_{b,a}-1})\}|\mathcal{E}_{h_{b,a}}^{b,a})\nonumber\\
&\leq\sum_{a,b\in\mathcal{A}}\sum_{h_{b,a}=1}^{x_{b,a}}\frac{(h_{b,a}-1)\sum_{{\textbf{d}},{\textbf{k}}}(d_b-1)q^{b}_{{\textbf{d}},{\textbf{k}}|a}+\sum_{{\textbf{d}},{\textbf{k}}}d_bq^{b}_{{\textbf{d}},{\textbf{k}}|a} }{l_{b,a}} \nonumber\\
&\leq\sum_{a,b\in\mathcal{A}} \frac{x_{b,a}(x_{b,a}+1)}{2}\frac{\sum_{{\textbf{d}},{\textbf{k}}}d_bq^{b}_{{\textbf{d}},{\textbf{k}}|a}}{l_{b,a}} -\sum_{ a,b\in\mathcal{A}}\sum_{h=0}^{x_{b,a}-1}\frac{h}{l_{b,a}}.
\end{align}
Using similar arguments, we get
\begin{align}&\sum_{a,b\in\mathcal{A}} \sum_{b'\neq b} \sum_{h_{b,a}=1}^{x_{b,a}} \sum_{h_{b',a}=1}^{x_{b',a}} \mathbb{P}\left(\nu_{b,a}(M^{b,a}_{h_{b,a}}) = \nu_{b',a}(M^{b',a}_{h_{b',a}}) \bigg| \mathcal{E}_{h_{b,a}}^{b,a},\mathcal{E}_{h_{b',a}}^{b',a}\right)\nonumber\\
& \leq \sum_{a,b\in\mathcal{A}} \sum_{b'\neq b} x_{b,a}x_{b',a} \frac{\sum_{{\textbf{d}},{\textbf{k}}}d_bq^{b'}_{{\textbf{d}},{\textbf{k}}|a} }{l_{b,a}}.
\label{eq:bound_A12}\end{align}

Combining these bounds and inequalities in \eqref{eq:TV} and \eqref{eq:bound_A11} we conclude the thesis.

\subsection{Concentration property: Proof of Theorem \ref{thm:concentration_modgraphs}}\label{appA_concentration}
Before presenting the proof of the main result we fix some notations and we state some preliminary lemmas.
%
%
%%%%AZUMA VARIANTE
First, we recall a simple variant of Azuma's inequality which will be useful in our arguments. 
Let $\{Y_k : k = 0, 1, 2, 3, \ldots \}$ be a martingale. The classical Azuma's inequality \cite[Theorem 7.2.1]{alon2004probabilistic} states that if $|Y_k-Y_{k-1}|\leq c_k$ with probability one, then
$$\mathbb{P} (|Y_N -Y_0 | \geq\eta  )\leq 2\e^{-\frac{\eta^2}{2\sum_{\ell=0}^Nc_\ell^2}}.
$$
The following martingale concentration result generalizes the Azuma inequality to the  case in which  $|Y_k-Y_{k-1}|$ is
not bounded.
\begin{lemma}[Lemma 1 in \cite{chalker_godbole_hitczenko_radcliff_ruehr_1999}]\label{lemma:Tao2015}Let $\{Y_k : k = 0, 1, 2, 3, \ldots \}$ be a martingale. Then for all sequences of positive numbers $(c_{\ell})$ and $\eta>0$, we have the following inequality
$$\mathbb{P} (|Y_N -Y_0 | \geq\eta  )\leq 2\e^{-\frac{\eta^2}{32\sum_{\ell=1}^Nc_\ell^2}}+\left(1+\frac{2\Delta^{\star}}{\eta}\right)\sum_{\ell=1}^n\mathbb{P}(|Y_\ell-Y_{\ell-1}|\geq c_\ell),
$$
with $\Delta^{\star}=\sup_i|Y_i-Y_{i-1}|$.
\end{lemma}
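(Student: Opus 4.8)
The statement is the classical "Azuma with unbounded increments" inequality, and the natural route is a \emph{truncation argument} that reduces it to ordinary bounded-difference Azuma. Let $\mathcal{F}_\ell$ be the underlying filtration, write $D_\ell = Y_\ell - Y_{\ell-1}$ for the martingale increments, and introduce the bad events $A_\ell = \{|D_\ell| \ge c_\ell\}$ together with $A = \bigcup_{\ell=1}^N A_\ell$. On $A^c$ every increment is genuinely bounded by $c_\ell$, so Azuma "should" apply there; the obstruction is that conditioning on $A^c$ destroys the martingale property. The plan is therefore to build a surrogate martingale $\widetilde Y$ that coincides with $Y$ on $A^c$, bound $\widetilde Y$ by classical Azuma, and control the discrepancy $Y_N - \widetilde Y_N$ separately.

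Concretely, set $B_\ell = D_\ell \mathbf{1}_{\{|D_\ell| < c_\ell\}}$ and center it, $\widetilde{D}_\ell = B_\ell - \mathbb{E}[B_\ell \mid \mathcal{F}_{\ell-1}]$. Since $\mathbb{E}[D_\ell \mid \mathcal{F}_{\ell-1}] = 0$ we have $\mathbb{E}[B_\ell \mid \mathcal{F}_{\ell-1}] = -\mathbb{E}[D_\ell \mathbf{1}_{\{|D_\ell|\ge c_\ell\}} \mid \mathcal{F}_{\ell-1}]$, and $\widetilde{Y}_k := Y_0 + \sum_{\ell\le k} \widetilde{D}_\ell$ is a martingale with $|\widetilde{D}_\ell| \le |B_\ell| + |\mathbb{E}[B_\ell\mid\mathcal{F}_{\ell-1}]| \le 2c_\ell$. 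Applying the classical Azuma inequality to $\widetilde{Y}$ with deviation $\eta/2$ gives $\mathbb{P}(|\widetilde{Y}_N - Y_0| \ge \eta/2) \le 2\exp\!\big(-\tfrac{(\eta/2)^2}{2\sum_\ell (2c_\ell)^2}\big) = 2\e^{-\eta^2/(32\sum_\ell c_\ell^2)}$, which is exactly the first term of the claimed bound (and accounts for the constant $32$).

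It remains to compare $Y_N$ with $\widetilde{Y}_N$ on $A^c$. There $D_\ell = B_\ell$ for every $\ell$, so $Y_N - \widetilde{Y}_N = \sum_\ell \mathbb{E}[B_\ell\mid\mathcal{F}_{\ell-1}] = -\sum_\ell \mathbb{E}[D_\ell\mathbf{1}_{\{|D_\ell|\ge c_\ell\}}\mid\mathcal{F}_{\ell-1}]$, hence $|Y_N - \widetilde{Y}_N| \le \Delta^\star \sum_\ell \mathbb{P}(A_\ell\mid\mathcal{F}_{\ell-1}) =: \Delta^\star Z$ on $A^c$, using $|D_\ell|\le\Delta^\star$. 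Since $\mathbb{E}[Z] = \sum_\ell \mathbb{P}(A_\ell)$, Markov's inequality yields $\mathbb{P}(Z \ge \eta/(2\Delta^\star)) \le \tfrac{2\Delta^\star}{\eta}\sum_\ell \mathbb{P}(A_\ell)$. Finally one decomposes $\{|Y_N-Y_0|\ge\eta\} \subseteq A \cup \{|\widetilde{Y}_N - Y_0|\ge \eta/2\} \cup (\{|Y_N-\widetilde{Y}_N|\ge\eta/2\}\cap A^c)$, bounds $\mathbb{P}(A)\le\sum_\ell\mathbb{P}(A_\ell)$ by the union bound, and collects the three estimates; the two copies of $\sum_\ell\mathbb{P}(A_\ell)$ (one from $\mathbb{P}(A)$, one from the Markov step, weighted by $2\Delta^\star/\eta$) combine into the factor $\big(1 + 2\Delta^\star/\eta\big)$, giving the stated inequality.

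The only genuinely delicate point is the treatment of the centering terms $\mathbb{E}[B_\ell\mid\mathcal{F}_{\ell-1}]$: one must observe that each is controlled, uniformly, by $\Delta^\star$ times the conditional probability of the bad event $A_\ell$, so that the \emph{random} error bound $\Delta^\star Z$ has expectation exactly $\Delta^\star\sum_\ell\mathbb{P}(A_\ell)$ and Markov applies cleanly. Everything else is a routine assembly of the union bound, the classical Azuma inequality, and the triangle inequality $|Y_N - Y_0| \le |\widetilde{Y}_N - Y_0| + |Y_N - \widetilde{Y}_N|$ which holds on $A^c$.
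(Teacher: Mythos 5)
The paper does not prove this lemma at all: it is imported verbatim as Lemma 1 of the cited reference \cite{chalker_godbole_hitczenko_radcliff_ruehr_1999}, so there is no in-paper argument to compare against. Your truncation proof is correct and is essentially the standard derivation of this result (and the one in the cited source): the surrogate martingale with increments $\widetilde D_\ell = B_\ell - \mathbb{E}[B_\ell\mid\mathcal F_{\ell-1}]$ bounded by $2c_\ell$, classical Azuma at level $\eta/2$ producing the constant $32$, and Markov applied to $Z=\sum_\ell\mathbb{P}(A_\ell\mid\mathcal F_{\ell-1})$ yielding the factor $1+2\Delta^\star/\eta$ all check out. The one point to make explicit is that $\Delta^\star$ must be read as the essential supremum $\sup_i\|Y_i-Y_{i-1}\|_\infty$ (a deterministic constant), both so that it can be pulled out of the conditional expectation in the bound $\mathbb{E}[|D_\ell|\ind_{A_\ell}\mid\mathcal F_{\ell-1}]\le\Delta^\star\,\mathbb{P}(A_\ell\mid\mathcal F_{\ell-1})$ and so that the event $\{Z\ge\eta/(2\Delta^\star)\}$ and the final bound are well defined; this is consistent with how the paper later uses the lemma via the deterministic estimate $\Delta^\star\le n$.
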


We recall that we consider three sources of randomness: the dynamics defined by $\Theta$ in \eqref{eq:Theta}, the activation process and the labeled network. The concentration property is proved in two steps. First, we study concentration by sequentially unveiling the edges in the labeled network (Lemma \ref{lemma:unveiling_edges}) and then we consider the other sources of randomness for a fixed graph (see Lemma \ref{lemma:unveiling_activation}).
\medskip

%%%%%%
%%%%%%UNVEILING EDGES

{\subsubsection{Unveiling the network}}
We recall that for any $\ell\in\N$ we use the notation $[\ell]=\{1,\ldots, \ell\}$. 
Let $\Pi_{a,a'}$ be the set of all permutations of $\mathcal{L}_{a,a'}=\{1,\ldots,l_{a,a'}\}$ for any $a,a'\in\mathcal{A}$ and denote by $\Pi=\times_{a,a'\in\mathcal{A}}\Pi_{a,a'}$. Since each of permutation $\pi_{a,a'}\in\Pi_{a,a'}$ defines a specific pairing of out-links from nodes with label $a$ and in-links of nodes with label $a'$, there are exactly $\prod_{a,a'\in\mathcal{A}}l_{a,a'}!$ distinct elements in $\Pi$. 
We define the following %filtrations  
{ cylinder sets}: 
\begin{gather}
\mathcal{C}_{\ell}(\pi_{[\ell]})=\{ \upsilon\in\Pi:\upsilon_{[\ell]}=\pi_{[\ell]} \} ,\qquad \forall  \pi_{[\ell]} \label{eq:filtration_a}
\end{gather}
We notice that  $\mathcal{C}_{\ell}(\cdot)$ are disjoint and exhaustive events, i.e.,   $\mathcal{C}_{\ell}(\pi_{[\ell]}) \cap \mathcal{C}_{\ell}(\pi'_{[\ell]})=\emptyset$  if $ \pi_{[\ell]}\neq \pi'_{[\ell]}$  and $ \cup_{\pi_{[\ell]}} \mathcal{C}_{\ell}(\pi_{[\ell]})=\Pi$.  
 
{\color{black}{\begin{lemma}[Unveiling network] \label{lemma:unveiling_edges}
Let $\mathcal{N}=((\V,\E,\mathcal A,\lambda,\sigma,\tau))$ be a network sampled from the model ensemble $\mathfrak{C}_{n,p}$ of all labeled networks with given size $n$ and statistics $p$. 
We denote the induced graph obtained in the exploration process of the neighborhood of a node $v$ by $\mathcal{N}_{t}^v$, and with $V_t^{v}$ the number of nodes in it.  
For $t \geq 0$, let $Z(t)$ be the state vector of the ASD dynamics on $\mathcal{N}$, $b(t) =|\{v\in\V:Z_v(t)=\omega\}|$ be the number of state-$\omega$ adopters at time $t$. We denote the expectation over the ensemble of labeled graphs by $\widetilde{b}(t)$.
For any $s\geq 1$ we have
\begin{align*}
\mathbb{P}\left(|b(t)-\widetilde{b}(t)|\geq \eta n\right)\leq \inf_{x>0}\left\{2\e^{-\frac{\eta^2 n}{32\overline{d}x^2}}+\left(1+\frac{2}{\eta}\right)\frac{{{2^s}}}{x^s}\sum_{v\in\mathcal{V}}|\boldsymbol{\delta}_v|{\left[\mathbb{E}[|V_t^{v}|^s\right]}
\right\}
\end{align*}
\end{lemma}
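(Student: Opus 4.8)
The plan is to set up a Doob martingale by revealing the random matching of the configuration model one stub-pair at a time, and then to apply the generalized Azuma inequality of Lemma \ref{lemma:Tao2015}. Concretely, I would order the $l = \sum_{a,a'} l_{a,a'}$ stubs and fix the filtration generated by the cylinder sets $\mathcal{C}_\ell(\pi_{[\ell]})$ in \eqref{eq:filtration_a}: at step $\ell$ we learn the partner of the $\ell$-th stub. Define $Y_\ell = \mathbb{E}[b(t) \mid \mathcal{C}_\ell(\pi_{[\ell]})]$, so that $Y_0 = \widetilde b(t)$ and $Y_l = b(t)$; this is a martingale with respect to the revealing filtration. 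The key structural fact — which follows from the locality of the ASD dynamics established in Section \ref{sec:rn} — is that $Z_v(t)$ depends only on the relevant neighborhood $\mathcal{N}_t^v$, so swapping the endpoints of a single pair of edges can only change the state at time $t$ of those nodes $v$ whose relevant neighborhood $\mathcal{N}_t^v$ contains one of the (at most two) edges touched by that swap. A standard switching/coupling argument (comparing the chain before and after resampling the $\ell$-th pair) then bounds the martingale increment: $|Y_\ell - Y_{\ell-1}|$ is at most the number of nodes whose relevant neighborhood hits a fixed edge, which in turn is controlled by $\sum_v |\boldsymbol{\delta}_v| \ind\{e \in \mathcal{N}_t^v\}$-type sums.

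Next I would make the increment bound quantitative. For each edge $e$, the number of nodes $v$ with $e$ in their relevant neighborhood is, up to constants, $\sum_{v} |\boldsymbol{\delta}_v| \ind\{e \in \mathcal{N}_t^v\}$, and summing over the (two) edges affected by the $\ell$-th swap and over $\ell$ gives $\sum_\ell \mathbb{E}[|Y_\ell - Y_{\ell-1}|^{\text{(something)}}]$ bounded by a multiple of $\sum_{v\in\mathcal V} |\boldsymbol{\delta}_v|\,\mathbb{E}[|V_t^v|]$ — and, more to the point, the tail $\mathbb{P}(|Y_\ell - Y_{\ell-1}| \geq c_\ell)$ is controlled by Markov's inequality applied to the $s$-th moment: $\mathbb{P}(|Y_\ell - Y_{\ell-1}| \geq x) \leq 2^s x^{-s} \sum_v |\boldsymbol{\delta}_v| \mathbb{E}[|V_t^v|^s] / (\text{normalization})$, the factor $2^s$ absorbing the at-most-two edges per swap. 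Choosing the constant threshold $c_\ell \equiv x$ uniformly, the first term of Lemma \ref{lemma:Tao2015} becomes $2\exp\!\big(-\eta^2 n^2 / (32\, l\, x^2)\big) = 2\exp\!\big(-\eta^2 n / (32\bar d x^2)\big)$ since $l = n\bar d$, and with $\eta n$ in place of $\eta$ and $\Delta^\star$ crudely bounded by $n$ (so $2\Delta^\star/(\eta n) \leq 2/\eta$) the second term becomes $\big(1 + \tfrac{2}{\eta}\big)\, l \cdot 2^s x^{-s} \cdot \big(\text{avg of } |\boldsymbol\delta_v|\mathbb{E}[|V_t^v|^s]\big) = \big(1+\tfrac2\eta\big)\tfrac{2^s}{x^s}\sum_{v} |\boldsymbol\delta_v|\mathbb{E}[|V_t^v|^s]$. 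Taking the infimum over $x>0$ yields the claimed bound.

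The main obstacle I anticipate is making the martingale-increment estimate rigorous: one must argue carefully that resampling a single stub-pair is realized by a coupling (a "switching") under which the two configurations differ only in a bounded number of edges, and that the resulting change in $b(t)$ — which a priori involves re-running the ASD dynamics — is nonetheless dominated by the count of nodes whose relevant neighborhoods are perturbed. This requires invoking the exploration-process description of $\mathcal{N}_t^v$ from Section \ref{sec:rn} together with the fact that $Z_v(t) = \chi_\omega(\mathcal{N}_t^v)$ is a deterministic function of the (structure and initial states of the) relevant neighborhood, so that the only affected nodes are those "reached" through the swapped edges, of which there are at most $\sum_v |\boldsymbol\delta_v| \ind\{\text{swap edge} \in \mathcal{N}_t^v\}$. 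A secondary technical point is bounding $\Delta^\star \leq n$ and verifying that $\sum_\ell c_\ell^2 = l x^2$ plugs correctly into the exponent; these are routine once the increment structure is in place.
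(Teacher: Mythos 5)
Your proposal follows essentially the same route as the paper's proof: a Doob martingale obtained by sequentially revealing the stub permutation, a resampling coupling (the paper's $\pi^\ell$ construction) showing the two configurations differ in at most two edges, a bound on the increment via the locality of the dynamics and the size $|V_t^v|$ of the relevant neighborhoods of the affected edge endpoints, Markov's inequality on the $s$-th moment, and Lemma~\ref{lemma:Tao2015} with $c_\ell\equiv x$ and $\Delta^\star\le n$. The quantitative bookkeeping (the exponent $\eta^2 n/(32\bar d x^2)$, the factor $2^s$, and the conversion of the sum over edges into $\sum_v|\boldsymbol{\delta}_v|\mathbb{E}[|V_t^v|^s]$) matches the paper's argument, so the proposal is correct and essentially identical in approach.
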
}}

\begin{proof} Let $\pi  \in\Pi$ 	 be the random element  of $\Pi$ (uniformly extracted by $\Pi$)  which  describes the network $\mathcal{N}=((\V,\E,\mathcal A,\lambda,\sigma,\tau))$.
	We denote  with $\mathcal{F}_\ell$ the natural filtration  generated by  $\pi _{[\ell]}$, with $\mathcal{F}_0$ equal to the 
trivial  $\sigma$-algebra.
		Let $\pi^{\ell} \in\Pi$,  for any given $\ell\in \{1\cdots |\mathcal{E}|\}$,  a  random  element  in $\Pi$ satisfying the following properties:
	i)   $\pi^\ell_{[\ell]}= \pi _{[\ell]}$; ii)  $\pi^\ell_{\ell+1}$ and  $\pi _{\ell+1}$  are conditionally independent given $ \pi _{[\ell]}$; iii) for any $i>1$,
   $\pi^\ell_{\ell+i}= \pi _{\ell+i}$   if $ \pi _{\ell+i}\neq \pi^\ell_{\ell+1}$  and  $\pi^\ell_{\ell+i}= \pi _{\ell+1}$  if   $ \pi _{\ell+i}= \pi^\ell_{\ell+1}$. Observe that by construction  $\pi^\ell$  is extracted uniformly from $\Pi$, as well. Furthermore the conditional law of both  $\pi $  and $\pi^{\ell}$, given  $\pi _{[\ell]}$ with   $\pi _{[\ell]}=\pi^\ell_{[\ell]}$ is uniform in   $C_\ell(\pi _{[\ell]})$.
  Let  $\mathcal{G}^{\ell}_{\ell'}$  be the  natural filtration  induced by  $\pi ^{\ell}_{[\ell']}$. Observe that by construction  $\mathcal{G}^{\ell}_{\ell'}=\mathcal{F_\ell} $  for $\ell'\le \ell$.
At last let $\hat{\mathcal{F}}_{\ell+1} $  and 	 $\hat{\mathcal{G}}^\ell_{\ell+1} $ the  natural filtration  induced by  $\pi _{\ell+1}$ and  $\pi ^\ell_{\ell+1}$. 
Of course   $\mathcal{F}_{\ell+1}= \sigma(\mathcal{F}_{\ell}\cup\hat{\mathcal{F}}_{\ell+1})$ and  $\mathcal{G}^\ell_{\ell+1}= \sigma(\mathcal{G}^\ell_{\ell}\cup\hat{\mathcal{G}}^\ell_{\ell+1}).$

We have
$$\mathbb{P}(|b_\pi (t)-\widetilde{b}(t)|\geq\eta n)= \mathbb{P}\left(\left|\mathbb{E}[b_\pi (t)|\mathcal{F}_{|\mathcal{E}|}]-\mathbb{E}[b_\pi (t)|\mathcal{F}_{0}]\right|\geq{\eta n}\right).
$$
Let us emphasize the dependence of the number of $\omega$-adopters $b(t)$ on a specific graph $\pi \in\Pi$ with notation $b_{\pi }(t)$
and define
{$A_{\ell}=\mathbb{E}[b_{\pi }(t)| \mathcal{F}_{\ell}]$.}  Note, indeed, that $\{A_\ell\}_\ell$  is a martingale.

{\color{black}
In order to estimate the above probability we apply Lemma \ref{lemma:Tao2015}.
First, we compute
\begin{align*}
\mathbb{P}\left(|A_{\ell+1}-A_{\ell}|\geq c_{\ell}\right)= \mathbb{P}\left(|A_{\ell+1}-A_{\ell}|^s\geq
c_{\ell}^s\right)\leq \frac{\mathbb{E}\left[|A_{\ell+1}-A_{\ell}|^s\right]}{c_{\ell}^s}
\end{align*}
Notice that by construction
$$
A_{\ell}=\mathbb{E}[b_{\pi }(t)|\mathcal{F}_{\ell}]= \mathbb{E}[b_{\pi^\ell}(t)|\mathcal{F}_{\ell}]= \mathbb{E}[b_{\pi^\ell}(t)|\mathcal{F}_{\ell+1}]
$$ 
where the first equation holds because $\pi $ and   $\pi^\ell$ are both uniform on $\mathcal{C}_\ell(\pi _{[\ell]}) $,
 and last equation descends from the fact that $\hat{G}^\ell_{\ell+1}$ and  $\hat{F}^\ell_{\ell+1}$
 are conditionally independent given $\mathcal{F}_\ell=\mathcal{G}^\ell_\ell$.
Furthermore we have:
$$
A_{\ell+1}=\mathbb{E}[b_{\pi }(t)|\mathcal{F}_{\ell+1}]. 
$$
Therefore  
\begin{align}
A_{\ell+1}-A_{{\ell}}&=\mathbb{E}[b_{\pi }(t)\mid\mathcal{F}_{\ell+1}]-\mathbb{E}[b_{\pi }(t)\mid\mathcal{F}_{\ell}],\nonumber\\
&=\mathbb{E}[b_{\pi }(t)\mid\mathcal{F}_{\ell+1}]-\mathbb{E}[b_{\pi^\ell}(t)\mid\mathcal{F}_{\ell+1}]\nonumber\\
&=\mathbb{E}[b_{\pi }(t)-b_{\pi^\ell}(t)\mid\mathcal{F}_{\ell+1}]\label{eq:bound_diff}
\end{align}
Now observing that by construction $\pi $ and $\pi^\ell$ differ in at most two positions, hence we have:
$$
\mathbb{E}[b_{\pi }(t)-b_{\pi^\ell}(t)\mid\mathcal{F}_{\ell+1}]\leq 2 \mathbb{E}[|\mathcal{N}_t^{\mathrm{v}(\pi _{\ell+1})}|\mid\mathcal{F}_{\ell+1}]
$$
and
$$
\mathbb{E}\left[|A_{\ell+1}-A_{\ell}|^s\right]\leq {2^s}\mathbb{E}\left[ \left(\mathbb{E}[\mid\mathcal{N}_t^{\mathrm{v}(\pi _{\ell+1})} \mid |\mathcal{F}_{\ell+1}]\right)^s\right] \leq {2^s}\mathbb{E}\left[|V_t^{\mathrm{v}(\pi _{\ell}+1)}|^s\right]
$$
where $\mathrm{v}({\pi _{\ell}})$ is the in-node of edge $\pi _{\ell}$.
We conclude that 
$$
\mathbb{P}\left(|A_{\ell+1}-A_{\ell}|\geq c_{\ell}\right)\leq {2^s}\frac{\left[\mathbb{E}[|V_t^{\mathrm{v}(\pi _{\ell+1})}|^s\right]}{c_{\ell}^s}
$$
For any $x>0$ let $c_{\ell}=x$ for all $\ell$. Then, by applying Lemma \ref{lemma:Tao2015} and observing that $\Delta^{\star}\leq n$, we obtain
\begin{align*}
\mathbb{P}\left(|b(t)-\widetilde{b}(t)|\geq \eta n\right)&\leq 2\e^{-\frac{\eta^2 n}{32\overline{d}x^2}}+\left(1+\frac{2}{\eta}\right)\frac{2^s}{x^s}\sum_{\ell=1}^{\overline d n}{\left[\mathbb{E}[|V_t^{v(\pi _{\ell})}|^s\right]}\\
&=2\e^{-\frac{\eta^2 n}{32\overline{d}x^2}}+\left(1+\frac{2}{\eta}\right)\frac{{{2^s}}}{x^s}\sum_{v\in\mathcal{V}}|\boldsymbol{\delta}_v|{\left[\mathbb{E}[|V_t^{v}|^s\right]}
\end{align*}
from which the thesis.
}
\end{proof}

%%%%%
%%%%%UNVEILING DYNAMICS

	\begin{remark}
		The approach followed in Lemma \ref{lemma:unveiling_edges} %to bound $\mathbb{P}\left(|b(t)-\widetilde{b}(t)|\geq \eta n\right)$
		  can be potentially 
		extended to  more general  classes of random graphs, with a variable number of edges,   as long as: i) the number of edges  in the graph is sufficiently concentrated around its expectation;  ii)  random variables associated to the presence of different edges  in the graph are  sufficiently weakly correlated, so that we can effectively bound $\mathbb{E}[|A_{\ell+1}- A_{\ell}|]$, through a coupling argument similar to the one  established  between $\pi$ and $\pi^\ell$  in    Lemma \ref{lemma:unveiling_edges}.
	\end{remark}

Let $\mathcal{N}=((\V,\E,\mathcal A,\lambda,\sigma,\tau))$ be a labeled graph. We denote the random times at which the opinion update occurs, random node sequence activated, and the random state, by $\{{T}_{\ell}\}_{\ell\in\N},\{{w}_{\ell}\}_{\ell\in\N}$, $\{{z}_{\ell}\}_{\ell\in\N}$, respectively. 
For $t \geq 0$, let $Z(t)$ be the state vector of the ASD dynamics on $\mathcal{N}$ and $b(t) =|\{v\in\V:Z_v(t)=\omega\}|$ be the number of state-$\omega$ adopters at time $t$.

\begin{lemma}\label{lemma:Ex.1.13_Schwartz}
 Let $\{{T}_{\ell}\}_{\ell\in\N}$ be the random times at which the opinion update occurs. For $t>0$ define the random variable $$\iota(t)=\sup\{k\in\N:T_k\leq t\}.$$ Then for any $\epsilon>0$ and  $\Delta_n<tn$
the following bounds hold
\begin{gather*}
\mathbb{P}(\iota(t)\geq (1+\epsilon)tn)\leq\e^{-\frac{nt\epsilon^2}{2(1+\epsilon)}}\\
\mathbb{P}(|\iota(t)-tn|\geq \Delta_n)\leq2\e^{-\frac{\Delta_n^2}{2(tn+\Delta_n)}}.
\end{gather*}
\end{lemma}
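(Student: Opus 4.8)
The plan is to recognize $\iota(t)$ as a Poisson-type counting variable. Since each of the $n$ nodes carries an independent Poisson clock of rate $\gamma=1$, the superposition of all these clocks is a single Poisson process of rate $n$, and $\iota(t)=\sup\{k:T_k\le t\}$ is exactly the number of points of this process in $[0,t]$, hence $\iota(t)\sim\mathrm{Poisson}(tn)$. Both inequalities are then standard Chernoff bounds for the Poisson distribution, so the task reduces to writing them out cleanly.

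For the first bound, I would apply the exponential Markov inequality: for any $\theta>0$,
\[
\mathbb{P}(\iota(t)\ge(1+\epsilon)tn)\le e^{-\theta(1+\epsilon)tn}\,\mathbb{E}[e^{\theta\iota(t)}]=\exp\big(tn(e^{\theta}-1)-\theta(1+\epsilon)tn\big),
\]
using the Poisson moment generating function $\mathbb{E}[e^{\theta\iota(t)}]=\exp(tn(e^{\theta}-1))$. Optimizing over $\theta$ gives $\theta=\log(1+\epsilon)$ and the exponent $tn\big((1+\epsilon)-1-(1+\epsilon)\log(1+\epsilon)\big)$; then I would invoke the elementary inequality $(1+\epsilon)\log(1+\epsilon)-\epsilon\ge \epsilon^2/(2(1+\epsilon))$, valid for $\epsilon>0$, to conclude $\mathbb{P}(\iota(t)\ge(1+\epsilon)tn)\le e^{-nt\epsilon^2/(2(1+\epsilon))}$.

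For the second (two-sided) bound, I would treat the upper and lower deviations separately. The upper tail $\mathbb{P}(\iota(t)\ge tn+\Delta_n)$ is handled as above with $1+\epsilon$ replaced by $1+\Delta_n/(tn)$, and the lower tail $\mathbb{P}(\iota(t)\le tn-\Delta_n)$ by the analogous Chernoff bound with $\theta<0$; in both cases one arrives at an exponent of the form $-tn\,h(\pm\Delta_n/(tn))$ where $h(u)=(1+u)\log(1+u)-u$. Using $h(u)\ge u^2/(2(1+u_+))$ — more precisely, for the lower tail $h(-u)\ge u^2/2$ and for the upper tail $h(u)\ge u^2/(2(1+u))$, so that both are $\ge \Delta_n^2/(2(tn+\Delta_n))$ after substitution — and adding the two contributions yields
\[
\mathbb{P}(|\iota(t)-tn|\ge\Delta_n)\le 2e^{-\Delta_n^2/(2(tn+\Delta_n))}.
\]
The restriction $\Delta_n<tn$ is just what makes the lower-tail event nontrivial. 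The only mildly delicate point is getting the constants in the elementary logarithmic inequalities to land exactly on the stated $2(1+\epsilon)$ and $2(tn+\Delta_n)$ denominators; everything else is routine. Since these are classical Poisson concentration facts, one could alternatively simply cite them (e.g., from a standard reference such as the one already used in the paper for Chernoff-type bounds).
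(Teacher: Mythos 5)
Your proposal is correct and matches the paper's approach: the paper's entire proof is a one-line appeal to the Chernoff bound for the Poisson variable $\iota(t)\sim\mathrm{Poisson}(tn)$, and your write-up simply carries out that Chernoff computation in detail (with the correct elementary inequalities $h(u)\ge u^2/(2(1+u))$ and $h(-u)\ge u^2/2$ for $h(u)=(1+u)\log(1+u)-u$).
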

\begin{proof}
This is a straightforward consequence of Chernoff bound \cite{feller1}. 
\end{proof}

\begin{lemma}[Unveiling dynamics]\label{lemma:unveiling_activation}
{\color{black}{Let $\mathcal{N}=((\V,\E,\mathcal A,\lambda,\sigma,\tau))$ be a labeled graph. Let $\{ {T}_{\ell}\}_{\ell\in\N},\{ {w}_{\ell}\}_{\ell\in\N}, \{{z}_{\ell}\}_{\ell\in\N}$ be the random times at which the opinion update occurs, random node sequence activated, and random state of activated sequence, respectively. We denote the size of the induced graph obtained in the exploration process of the neighborhood of a node $v$ with $V_t^v$ at time $t$.  }}
For $t \geq 0$, let $Z(t)$ be the state vector of the ASD dynamics on $\mathcal{N}$, $b(t) =|\{v\in\V:Z_v(t)=\omega\}|$ be the number of state-$\omega$ adopters at time $t$ conditioned to $\mathcal{N}$. We denote the expectation over the activation process by $\overline{b}(t)=\mathbb{E}[b(t)|\mathcal{N}]$. 
For any $\epsilon>0$ we have 
\begin{align*}
&\mathbb{P}(|b(t)-\overline{b}(t)|> \eta n)\\
&\quad\leq2 \inf_{x>0}\left\{2\e^{-\frac{\eta^2 n}{288(1+\epsilon)tx^2}} +\left(1+\frac{6}{\eta}\right)(1+\epsilon)tn\frac{2^s\mathbb{E}_{\color{black}{v}}\left[|V^{v}_t|^s\right]}{x^s}\right\}\\
&\quad+2\e^{-\frac{nt\epsilon^2}{2(1+\epsilon)}}+2\e^{-\frac{\eta^2 n}{72(t+\eta/6)}}
\end{align*}
with $v$ chosen uniformly at random in $\V$.
\end{lemma}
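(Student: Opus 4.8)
The plan is to condition on the graph $\mathcal N$ and on the number of updates $N:=\iota(t)$, build a Doob martingale by revealing the activation marks one update at a time, and then apply the generalized Azuma bound of Lemma~\ref{lemma:Tao2015} together with the Poisson tail estimates of Lemma~\ref{lemma:Ex.1.13_Schwartz}. Observe first that, since an update only lets a node read the current states of its out-neighbors, the value $b(t)$ depends on the clock times only through their count $N$ and on the ordered sequence of marks $(w_\ell,z_\ell)_{\ell\le N}$, where $w_\ell\in\mathcal V$ is the (uniform, i.i.d.) node that fires at the $\ell$-th update and $z_\ell$ is the randomness fed to the kernel $\Theta$ at that update. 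Conditionally on $N$, the marks are i.i.d., so $A_\ell:=\mathbb{E}[b(t)\mid N,(w_j,z_j)_{j\le\ell}]$, $0\le\ell\le N$, is a martingale with $A_N=b(t)$ and $A_0=\mathbb{E}[b(t)\mid\mathcal N,N]$.

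For the increments I would use a resampling coupling: $A_\ell-A_{\ell-1}=\mathbb{E}[\,b(t)-b^{(\ell)}(t)\mid N,(w_j,z_j)_{j\le\ell}\,]$, where $b^{(\ell)}$ is the count produced after replacing the $\ell$-th update by a fresh uniform node firing with fresh randomness. The two runs agree except at the $\ell$-th update, so $|b(t)-b^{(\ell)}(t)|$ is at most twice the number of nodes $v$ whose state at time $t$ is influenced by the node firing at that update, i.e.\ whose relevant neighborhood $\mathcal N_t^v$ contains it. Since that node is uniform on $\mathcal V$ and $\mathcal N$ is fixed, Jensen's inequality gives $\mathbb{E}[|A_\ell-A_{\ell-1}|^s]\le 2^s\,\mathbb{E}_v[|V_t^v|^s]$, hence $\mathbb{P}(|A_\ell-A_{\ell-1}|\ge x)\le 2^s\mathbb{E}_v[|V_t^v|^s]/x^s$ for every $x>0$, while trivially $\Delta^\star:=\sup_\ell|A_\ell-A_{\ell-1}|\le n$. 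On $\{N\le m\}$ with $m:=(1+\epsilon)tn$ the martingale has at most $m$ steps, so Lemma~\ref{lemma:Tao2015} with $c_\ell\equiv x$ controls $\mathbb{P}(|b(t)-A_0|>\cdot)$ by a term of the form $2\e^{-\eta^2 n^2/(32 m x^2)}$ plus $(1+2\Delta^\star/(\eta n))\,m\,2^s\mathbb{E}_v[|V_t^v|^s]/x^s$, and then optimizing over $x$.

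To assemble: the event $\{N>m\}$ costs $\e^{-nt\epsilon^2/(2(1+\epsilon))}$ by the first inequality of Lemma~\ref{lemma:Ex.1.13_Schwartz}; and since one extra update moves any conditional count by at most one, $|A_0-\overline b(t)|=|\mathbb{E}[b(t)\mid\mathcal N,N]-\mathbb{E}[b(t)\mid\mathcal N]|$ is at most an average of $|N-N'|$ over an independent copy $N'$ of $N$, so the event that it exceeds a fraction of $\eta n$ is absorbed, via the second inequality of Lemma~\ref{lemma:Ex.1.13_Schwartz} with deviation $\eta n/6$, into the term $2\e^{-\eta^2 n/(72(t+\eta/6))}$. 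Splitting $\{|b(t)-\overline b(t)|>\eta n\}$ into these three pieces at level $\eta n/3$ each, pushing the factor $3$ into the Azuma exponent ($32\cdot 9=288$) and into $1+2\Delta^\star/(\eta n/3)\le 1+6/\eta$, and retaining the two-sided factor, reproduces exactly the stated bound.

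The step I expect to be the main obstacle is the increment estimate: one must show that perturbing a single activation alters $b(t)$ by at most an absolute constant times the size of a \emph{relevant neighborhood}, and that the $s$-th moment of this count — for a uniformly chosen firing node, over the randomness of the exploration delays — is dominated by $\mathbb{E}_v[|V_t^v|^s]$. A naive union bound over the influenced nodes is far too wasteful; the point is that, after the coupling, this set is exactly $\{v:\text{the perturbed node}\in\mathcal N_t^v\}$, so its size is governed by the same relevant-neighborhood statistics entering Theorem~\ref{thm:top_res_modgraphs}, and this is precisely where the local tree-likeness of the ensemble is used.
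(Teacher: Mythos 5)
Your proposal follows essentially the same route as the paper's proof: a Doob martingale over the revealed activation data, the resampling coupling bounding each increment by twice a relevant-neighborhood size, Lemma~\ref{lemma:Tao2015} on the event $\{\iota(t)\le(1+\epsilon)tn\}$, and the coupling with an independent copy of $\iota(t)$ plus Lemma~\ref{lemma:Ex.1.13_Schwartz} (with $\Delta_n=\eta n/6$) for the conditioning-on-the-count term. The only organizational difference is that the paper reveals the node sequence $w$ and the state randomness $z$ in two separate martingale passes (terms (T1) and (T2), each at level $\eta n/3$), which is exactly where the leading factor $2$ in front of the infimum comes from, whereas your single joint revelation would produce one such term with slightly different constants; this is bookkeeping, not a gap.
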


\begin{proof} For $t\in\R^{+}$ let $\iota(t)=\sup\{k\in\N:T_k\leq t\}$, $w$ and $z$ be the random sequences of activated nodes and the corresponding random state. We recall that for any $\ell>0$ the sequence $w_{[\ell]}$ is uniformly distributed over $\mathcal{V}^{[\ell]}$.
We denote by $\mathcal{F}_{\ell,s}$ the natural filtration generated by $w_{[\ell]}$ and $z_{[s]}$.
 Given  $\iota(t)$,  let ${w}$  be a random vector uniformly distributed in $\mathcal{V}^{[\iota(t)] }$ (let $ {w}_{\ell+1}=v$) and 
  $\hat{ {w}}^{\ell}$  be a random vector in 
$\mathcal{V}^{[\iota(t)]}$ which is obtained by choosing some $v'$ uniformly at random from the set of nodes $\V$ and putting $\hat{ {w}}^{\ell}_{\ell+1}=v'$ and $\hat{ {w}}^{\ell}_i =  {w}_i$ for all $i\in[\iota(t)]\setminus\{\ell+1\}$. 
It should be noticed that $w_{\ell+1}$ and $\hat w^{\ell}_{\ell+1}$ are conditionally independent given $w_{[\ell]}$. Furthermore, by construction  $\hat w^{\ell}$ is uniformly distributed over $\mathcal{V}^{[\iota(t)]}$. 

In an analogous way, recall that $ {z}_{s}=Z_{w_s}$ is a random variable distributed as defined in Definition \ref{def:ASD}.
 Given $\iota(t)$,  let $ {z}$ be a vector of length  $\iota(t)$, \textcolor{black}{whose components  are  independent with the $\ell$-component distributed  as $\Theta^{(\lambda(w_\ell))}$ in Definition \ref{def:ASD}}
 (let ${z}_{s+1}=\omega$) 
 %be assigned with   $ {z}_{s}=\omega$ 
 and, let  $\tilde{ {z}}^{s}$  be a random vector which is obtained by choosing some $\omega'$ according to \textcolor{black}{$\Theta^{(\lambda(w_\ell))}$} in Definition \ref{def:ASD} and putting $\tilde{ {z}}^{s}_{s+1}=\omega'$ and $\tilde{ {z}}^{s}_i =  {z}_i$ for all $i\in[\iota(t)]\setminus\{s+1\}$. 
 
 %Let $ {z}_s$  
 %be assigned with   $ {z}_{s}=\omega$ and,let  $\tilde{ {z}}^{s}$  be a random vector which is obtained by choosing some $\omega'$ according to $\Theta$ in Definition \ref{def:ASD} and putting $\tilde{ {z}}^{s}_{s+1}=\omega'$ and $\tilde{ {z}}^{s}_i =  {z}_i$ for all $i\in[\iota(t)]\setminus\{s+1\}$. 

Let us emphasize the dependence of the number of $\omega$-adopters $b(t)$ on a specific sequence of activated nodes $w$ and states $z$ with notation $b_{w,z}(t)$.  Given $\iota(t)$, we define for any $(\ell,s)\in[\iota(t)]\times [\iota(t)]$
 $$B^{\iota(t),\mathcal{N}}_{\ell,s}=\mathbb{E}[b(t)|\iota(t), \mathcal{F}_{\ell,s},\mathcal{N}],$$ then 
\begin{align*}
\mathbb{P}\left(|b_{w,z}(t)-\overline{b}(t)|\geq \eta n\right)
&\leq\mathbb{P}\left(\left|\mathbb{E}[b_{w,z}(t)|\iota(t), \mathcal{F}_{\iota(t),\iota(t)},\mathcal{N}]-\mathbb{E}[b_{w,z}(t)|\iota(t),\mathcal{F}_{\iota(t),0},\mathcal{N}]\right|\geq\frac{\eta n}{3}\right)\nonumber\\
&+\mathbb{P}\left(\left|\mathbb{E}[b_{w,z}(t)|\iota(t), \mathcal{F}_{\iota(t),0},\mathcal{N}]-\mathbb{E}[b_{w,z}(t)|\iota(t), \mathcal{F}_{0,0},\mathcal{N}]\right|\geq \frac{\eta n}{3}\right)\nonumber\\
\label{eq:bound_martingale1}
&+\mathbb{P}\left(\left|\mathbb{E}[b_{w,z}(t)|\iota(t),\mathcal{N}]-\mathbb{E}[b_{w,z}(t)|\mathcal{F}_{0,0},\mathcal{N}]\right|\geq \frac{\eta n}{3}\right)\\
&= \underbrace{\mathbb{P}\left(|B^{\iota(t),\mathcal{N}}_{\iota(t),\iota(t)}-B^{\iota(t),\mathcal{N}}_{\iota(t),0}|> \frac{\eta n}{3}\right)}_{\text{(T1)}}+\underbrace{ \mathbb{P}\left(|B^{\iota(t),\mathcal{N}}_{\iota(t),0}-B^{\iota(t),\mathcal{N}}_{0,0}|> \frac{\eta n}{3}\right)}_{\text{(T2)}}\\
&\quad+ \underbrace{\mathbb{P}\left(\left|B^{\iota(t),\mathcal{N}}_{0,0}-\mathbb{E}[b_{w,z}(t)|\mathcal{F}_{0,0},\mathcal{N}]\right|\geq \frac{\eta n}{3}\right)}_{\text{(T3) }}
\end{align*}

We now evaluate (T1) and (T2)  by applying Lemma \ref{lemma:Tao2015} and (T3) using simple arguments.
\begin{itemize}\item
In order to estimate (T1) we first consider
\begin{align*}
B^{\iota(t),\mathcal{N}}_{\ell+1,0}-B^{\iota(t),\mathcal{N}}_{\ell,0}&=\mathbb{E}[b_{w,z}(t)|\iota(t), \mathcal{F}_{\ell+1,0},\mathcal{N}]-\mathbb{E}[b_{w,z}(t)|\iota(t),  \mathcal{F}_{\ell,0},\mathcal{N}]\\
&=\mathbb{E}[b_{w,z}(t)|\iota(t), \mathcal{F}_{\ell+1,0},\mathcal{N}]-\mathbb{E}[b_{\hat{w}^{\ell+1},z}(t)|\iota(t),  \mathcal{F}_{\ell+1,0},\mathcal{N}]\\
&=\mathbb{E}[b_{w,z}(t)-b_{\hat{w}^{\ell},z}(t)|\iota(t),  \mathcal{F}_{\ell+1,0},\mathcal{N}].
\end{align*}
By observing that, by construction, $w$ and $\hat{w}^{\ell-1}$ differ in at most one position, we get
$$
\mathbb{E}[b_{w,z}(t)-b_{\hat{w}^{\ell},z}(t)|\iota(t),  \mathcal{F}_{\ell+1,0},\mathcal{N}]\leq 2\mathbb{E}[|V^{v}_t||\mathcal{F}_{\ell,0}]
$$
where $v$ is chosen uniformly at random in $\V$.
We thus have for any $x>0$ 
\begin{align*}
\mathbb{P}\left(|B^{\iota(t),\mathcal{N}}_{\ell+1,0}-B^{\iota(t),\mathcal{N}}_{\ell,0}|>x\right)&= 
\mathbb{P}\left(|B^{\iota(t),\mathcal{N}}_{\ell+1,0}-B^{\iota(t),\mathcal{N}}_{\ell,0}|^m>x^m\right)\\
&\leq\frac{\mathbb{E}\left[|B^{\iota(t),\mathcal{N}}_{\ell+1,0}-B^{\iota(t),\mathcal{N}}_{\ell,0}|^m\right]}{x^m}\\
&\leq\frac{2^m\mathbb{E}\left[\left(\mathbb{E}[|V^{v}_t||\mathcal{F}_{\ell+1,0}]\right)^m\right]}{x^m}\\
& \leq\frac{2^m\mathbb{E}\left[|V^{v}_t|^m\right]}{x^m}
\end{align*}
where $v$ is chosen uniformly at random.
We thus have for any $\epsilon>0$
\begin{align}&\mathbb{P}\left(|B^{\iota(t),\mathcal{N}}_{\iota(t),0}-B^{\iota(t),\mathcal{N}}_{0,0}|> \frac{\eta n}{3}\right)\nonumber\\
 &\quad\leq\mathbb{P}\left(|B^{\iota(t),\mathcal{N}}_{\iota(t),0}-B^{\iota(t),\mathcal{N}}_{0,0}|> \frac{\eta n}{3}\big|\iota(t)<(1+\epsilon)tn\right)+\mathbb{P}\left(\iota(t)\geq(1+\epsilon)tn\right)\nonumber\\
&\quad\leq  \inf_{x>0}\left\{2\e^{-\frac{\eta^2 n}{288(1+\epsilon)tx^2}} +\left(1+\frac{6}{\eta}\right)\textcolor{black}{(1+\epsilon)}tn\frac{2^m\mathbb{E}\left[|V^{v}_t|^m\right]}{x^m}\right\}+\e^{-\frac{nt\epsilon^2}{2(1+\epsilon)}}
\end{align}
with $v$ chosen uniformly at random in $\V$.

\item (T2):
Following the same arguments used in the previous point and observing that $\tilde{z}_s^{s}$  differs from $z_{[\iota(t)]}$ only at position $s+1$, we get
 \begin{align}\begin{split}\label{eq:unv_state}
 &\mathbb{P}\left(|B^{\iota(t),\mathcal{N}}_{\iota(t),\iota(t)}-B^{\iota(t),\mathcal{N}}_{\iota(t),0}|> \frac{\eta n}{3}\right)\leq  \\
 &\inf_{x>0}\left\{2\e^{-\frac{\eta^2 n}{288(1+\epsilon)tx^2}} +\left(1+\frac{6}{\eta}\right)(1+\epsilon)tn\frac{2^m\mathbb{E}\left[|V^{v}_t|^m\right]}{x^m}\right\}+\e^{-\frac{nt\epsilon^2}{2(1+\epsilon)}}.\end{split}\end{align}
 
\item (T3):
Let now $\hat{\iota}(t)$ a random variable taking values in $\mathbb{Z}^+$, distributed as $\iota(t)$, and independent of it; let   $ {w}_{[\iota(t)]} \in  \V^{\iota(t)}$  and $\hat{ {w}}_{[\hat{\iota}(t)]}\in  \V^{\hat{\iota}(t)}$ random uniform sequences  of activation of length $\iota(t)$ and $\hat{\iota}(t)$ respectively and  
$b(t)$  and $\hat{b}_{\ell}(t)$ the corresponding 
the numbers of state-$\omega$ adopters at time $t$ on the network $\mathcal{N}$:

For any $\Delta_n$ we have
\begin{align}
&\mathbb{P}\left(\left|\mathbb{E}[b(t)|\iota(t),\mathcal{F}_{0,0},\mathcal{N}]-\mathbb{E}[b(t)|\mathcal{F}_{0,0},\mathcal{N}]\right|
\geq \frac{\eta n}{3}\right)\\
&  =  \mathbb{P}\left(\left|\mathbb{E}[b(t)|\iota(t),\mathcal{F}_{0,0},\mathcal{N}]-\mathbb{E}[\hat{b}_{\ell}(t)\mid \iota(t),\mathcal{F}_{0,0},\mathcal{N}]\right|
\geq \frac{\eta n}{3}\right)\nonumber\\
&\leq\mathbb{P}\left(|\iota(t)-\hat{\iota}(t)|\geq \frac{\eta n}{3}\right)\nonumber\\
&\leq\mathbb{P}\left(|\iota(t)-\hat{\iota}(t)|\geq \frac{\eta n}{3}\Big| |\hat{\iota}(t)-tn|\leq \Delta_n,|{\iota}(t)-tn|\leq \Delta_n\right)\nonumber\\
&+2\mathbb{P}\left(|{\iota}(t)-tn|> \Delta_n\right)\leq2\e^{-\frac{\Delta_n^2}{2(tn+\Delta_n)}}
\label{eq:bound_iota}\end{align}
where  the second inequality  descend from the fact that we can establish a coupling between   the sequences of activation 
$ {w}_{[\iota(t)]}$ and $\hat{ {w}}_{[\hat{\iota}(t)]}$
 by forcing them to have  common initial part  of length $ \min(\iota(t),\hat{\iota}(t))$   while
the last inequality follows from Lemma \ref{lemma:Ex.1.13_Schwartz}. {\color{black}Choosing $\Delta_n=\frac{\eta n}{6}$ we get 
$$\mathbb{P}\left(|\iota(t)-\hat{\iota}(t)|\geq \frac{\eta n}{3}\Big| |\hat{\iota}(t)-tn|\leq \Delta_n,|{\iota}(t)-tn|\leq \Delta_n\right)=0$$ and we conclude the proof combining with \eqref{eq:unv_state}.}
\end{itemize}
\end{proof}

\textbf{Proof of Theorem \ref{thm:concentration_modgraphs}}
For any $\epsilon>0$ we have 
\begin{align*}
&\mathbb{P}(|b(t)-\mathbb{E}[{b}(t)]|> \eta n)\\
&\quad\leq\mathbb{P}(|b(t)-\mathbb{E}[{b}(t)|\mathcal{N}]|> \eta n/2)+\mathbb{P}(|\mathbb{E}[{b}(t)|\mathcal{N}]-\mathbb{E}[{b}(t)]|> \eta n/2)
\end{align*}
Let $v$ is sampled with a probability proportional with its in-degree.  
Combining Lemma \ref{lemma:unveiling_activation} with Lemma \ref{lemma:unveiling_edges}
 we get that for any $\epsilon>0$, $\eta>0$ and $x>0$ we have 
{\color{black}{\begin{align*}
\mathbb{P}(|b(t)-\mathbb{E}[{b}(t)]|> \eta n)&\leq
4\e^{-\frac{\eta^2 n}{1152(1+\epsilon)tx^2}} +\left(1+\frac{12}{\eta}\right)(1+\epsilon){tn}\frac{2^s\mathbb{E}_v\left[|V^{v}_t|^s\right]}{x^s}\\
&+2\e^{-\frac{nt\epsilon^2}{2(1+\epsilon)}}+2\e^{-\frac{\eta^2 n}{288(t+\eta/12)}}\\
&+\left(1+\frac{4}{\eta}\right)\frac{{{2^s}}}{x^s}\sum_{w\in\mathcal{V}}|\boldsymbol{\delta}_w|{\left[\mathbb{E}[|V_t^{w}|^s\right]}
+2\e^{-\frac{\eta^2 n}{128\overline{d}x^2}}.
\end{align*}
}}

\section{Convergence to ODE solution with asymptotic degree distribution}
\label{Convergence_ODE}

\textbf{Proof of Proposition \ref{prop:asympdist1}.}\quad
Let ${ \textbf f}^{(n)}({\textbf z})={\boldsymbol \phi}^{(n)}({\textbf z})-{\textbf z}$ and ${\textbf f}({\textbf z})={\boldsymbol \phi}({\textbf z})-{\textbf z}$.  
For any $\Delta>0$ and $t \in[0,m\Delta]$ we have
\begin{align*}
\boldsymbol{\zeta}^{(n)}(t)&=\boldsymbol{\zeta}^{(n)}(0)+\int_0^{(m-1)\Delta} {\textbf f}^{(n)}(\boldsymbol{\zeta}^{(n)}(s))\textrm{d}s+\int_{(m-1)\Delta}^{t}{\textbf f}^{(n)}(\boldsymbol{\zeta}^{(n)}(s))\textrm{d}s\\
\boldsymbol{\zeta}(t)&=\boldsymbol{\zeta}(0)+\int_0^{(m-1)\Delta}{\textbf f}(\boldsymbol{\zeta}(s))\textrm{d}s+\int_{(m-1)\Delta}^{t}{\textbf f}(\boldsymbol{\zeta}(s))\textrm{d}s
\end{align*}
from which
\begin{align*}
\|\boldsymbol{\zeta}^{(n)}(t)-\boldsymbol{\zeta}(t)\|_{m\Delta}:&=\sup_{t\in[0,m\Delta]}\|\boldsymbol{\zeta}^{(n)}(t)-\boldsymbol{\zeta}(t)\|_{\infty}\\
&\leq\sup_{t\in[0,(m-1)\Delta]}\|\boldsymbol{\zeta}^{(n)}(t)-\boldsymbol{\zeta}(t)\|_{\infty}+\Delta \left\|{\textbf f}^{(n)}-{\textbf f}\right\|_{\infty}\\
&+\sup_{t\in[(m-1)\Delta,m\Delta]}  \int_{(m-1)\Delta}^{t}\left\|{\textbf f}(\boldsymbol{\zeta}^{(n)}(s))-{\textbf f}(\boldsymbol{\zeta}(s))\right\|_{\infty}\textrm{d}s.
\end{align*}
Since ${\textbf f}$ is Lipschitz on a compact and invariant set $B=[0,1]^{|\mathcal A|}$, then there exists $L>0$ such that for any $\boldsymbol{\zeta}_1, \boldsymbol{\zeta}_2\in B$
it holds that $
\left\|f(\boldsymbol{\zeta}_1)-f(\boldsymbol{\zeta}_2)\right\|_{\infty}\leq L\|\boldsymbol{\zeta}_1- \boldsymbol{\zeta}_2\|_{\infty}$.
Then for any $\Delta<1/L$ we have
\begin{align*}
\|\boldsymbol{\zeta}^{(n)}(t)-\boldsymbol{\zeta}(t)\|_{m\Delta}
&\leq\frac{\|\boldsymbol{\zeta}^{(n)}(t)-\boldsymbol{\zeta}(t)\|_{(m-1)\Delta}}{1-L\Delta}+\frac{\Delta \left\|{\textbf f}^{(n)}-{\textbf f}\right\|_{\infty}}{1-L\Delta}\\
& \leq\frac{\|\boldsymbol{\zeta}^{(n)}(0)-\boldsymbol{\zeta}(0)\|_{\infty}}{(1-L\Delta)^m}+\frac{1}{1-L\Delta}\sum_{j=0}^{m-1}\frac{\Delta \left\|{\textbf f}^{(n)}-{\textbf f}\right\|_{\infty}
}{(1-L\Delta)^{j}}\\
&=\frac{\|\boldsymbol{\zeta}^{(n)}(0)-\boldsymbol{\zeta}(0)\|_{\infty}}{(1-L\Delta)^m}+\frac{\left\|{\textbf f}^{(n)}-{\textbf f}\right\|_{\infty}}{L}\left(\frac{1}{(1-\Delta L)^m}-1 \right)\\
&\leq\frac{\|\boldsymbol{\zeta}^{(n)}_0-\boldsymbol{\zeta}_0\|_{\infty}}{(1-L\Delta)^m}+\frac{\|q^{(n)}_{\textbf{k}|a}-q_{\textbf{k}|a}\|_{\mathrm{TV}}}{L}\left(\frac{1}{(1-\Delta L)^m}-1\right).
\end{align*}
with a  similar procedure we obtain the bound on $\|\boldsymbol{y}^{(n)}(t) - \boldsymbol{y}(t) \| $.

% !TEX root = ~/DB-Poli/Work/Threshold_models/Main/ASD-COVID-19.tex
\section{Proofs of Sections \ref{sec:asdconfig} and \ref{section:CBM2}}\label{app:B}

\textbf{Proof of Lemma \ref{lemma:F_branching}}
%\textcolor{magenta}{$\widetilde{P}(t)$ e  $H_v(t)$ hanno notazioni simili ma rappresentano cose molto diverse.}
Let $\mathrm{d}(w_1,w_2)$ be the geodesic distance (i.e. the number of edges in a shortest path) between nodes $w_1$ and $w_2$. We denote the maximum number of hops traversed from the root $v$ to a node $w$ in 
$\mathcal{T}_t$ with $H_v(t)=\max_{w\in \mathcal{T}_t}\mathrm{d}(v,w).$ Equivalently, $H_v(t)$ is the depth of the tree $\mathcal{T}_t$. Let us fix $h_n=c\log n$ with $c>0$ then
\begin{align*}
\mathsf{F}_{\widetilde{W}_t}(x_n)&\leq\mathbb{P}(\widetilde{W}_t >x_n|H_v(t)< h_n)+\mathbb{P}(H_v(t)\geq h_n)\\
& =\mathbb{P}(\widetilde{W}_t >x_n|H_v(t)< h_n)+\mathbb{P}\left(\max_{w\in \mathcal{T}_t}\mathrm{d}(v,w)\geq h_n\right)
\\
&= \mathbb{P}(\widetilde{W}_t >x_n|H_v(t)< h_n)+\mathbb{P}\left(\exists {w\in \mathcal{T}_t}:\mathrm{d}(v,w)= h_n\right)
\\
&\leq \mathbb{P}(N_{h_n} >x_n)+n\mathbb{P}(\widetilde{P}(t)\geq h_n)
\end{align*}
where $\{N_{h}\}_{h\in\N}$ is a truncated  GW process of maximum depth $h$, in which the offspring distribution of the root follows law $p$,  while the degree of remaining nodes follow law $q$, and
$\widetilde{P}(t)$ is a variable representing the number of points falling in $[0,t)$ according to a homogeneous Poisson  process with constant parameter $\gamma=1$. 
Note that  $n\mathbb{P}(\widetilde{P}(t)\geq h_n)$  represents an obvious upper-bound  to the probability that the depth of $\mathcal{T}_t$ 
exceeds $h_n$, 
since by conduction   $\mathbb{P}(\widetilde{P}(t)\geq h_n)$  is equal to the probability that  a given branch of  $\mathcal{T}_t$  has depth larger than $h_n$.

 We have
\begin{align*}
\mathbb{P}(\widetilde{P}(t)\geq h_n)&\leq \sum_{h=h_n}^{\infty}\frac{\e^{-  t}t^h}{h!}= \frac{\e^{-  t}t^{h_n}}{h_n!}\sum_{h\geq h_n}\frac{t^{h-h_n}}{h(h-1)\ldots (h_n+1)}\\
&= \frac{\e^{-  t}t^{h_n}}{h_n!}\sum_{s\geq 0}\frac{t^{s}}{(h_n+s)(h_n+s-1)\ldots (h_n+1)}\\
&\leq \frac{\e^{-  t}t^{h_n}}{h_n!}\sum_{s\geq 0}\left(\frac{  t}{h_n}\right)^s= \frac{\e^{-  t}t^{h_n}}{h_n!}\left(1-\frac{  t}{h_n}\right)^{-1}\quad\text{for } h_n\rightarrow\infty
\end{align*}
where the last inequality follows from $  t/h_n<1$ definitely, being $  t=o(h_n)$  for $h_n\rightarrow\infty$. Using Stirling's approximation \cite{feller1} $$h_n!\geq \sqrt{2\pi}h_n^{h_n+1/2}\e^{-h_n}$$ we obtain
\begin{align}\label{eq:Stirling}
\mathbb{P}(\widetilde{P}(t)\geq h_n)&\leq  \e^{-t+h_n\log t-h_n(\log h_n)+h_n-\frac{1}{2}\log(2\pi h_n)-\log(1-t/h_n)}.
\end{align}
Using bound in  \eqref{eq:Stirling}, we obtain for any $s>0$ 
\begin{align*}
\mathsf{F}_{\widetilde{W}_t}(x_n)&\leq \mathbb{P}(N_{h_n} >x_n)+n\e^{-h_n\log h_n+o(h_n\log h_n)}\leq \frac{\mathbb{E}[N_{h_n}^s]}{x_n^s}+o(1/n)\quad n\rightarrow\infty,
\end{align*}
where the second last inequality follows from the Markov inequality \cite{feller1}.
At last, we emphasize that the an analogous bound holds for the number $W_t$ of edges,
since $\widetilde{W}_t=W_t+1$.
\qed

\begin{lemma}\label{bound_number_nodes_conf}
Let $\mathcal{N}=(\V,\E)$ be a network sampled from the configuration model ensemble $\mathfrak{C}_{n,p}$ of compatible size $n$ and statistics $p$ and $q$, $\mathcal{N}^{w_0}_{h}$ be the induced graph obtained by the exploration process of the $h$-depth  neighborhood of a node $w_0$ chosen uniformly at random from the node set $\mathcal{V}$.
%  and and $V_t$ be the number of nodes in $\mathcal{N}_{p,q,t}$. Let $ q_{k}= \sum_{r} q_{k,r} $
Let $\dot{q}$ be the distribution defined as follows:   
$\sum_{h=0} ^k \dot{q}_h =\min (\sum_{h=0} ^k p_h, \sum_{h=0} ^k q_h)$, $\forall k$.
Note that $\dot{q}$ stochastically dominates both $p$ and $q$.
Moreover, let $\widehat{q}_k$ be the distribution related to $\dot{q}_k$ 
as follows: $\widehat{q}_{k+k_0}=\dot{q}_k$, with 
$k_0= \min_k: \sum_{j=1}^k \dot{q}_j> \epsilon$.
 %obtained from $q$ by removing the $\lfloor \epsilon n \rfloor $ nodes with smallest degree (and renomalizing it).
Let ${N}^{w_0}_{h}$  be the number of nodes  in $\mathcal{N}^{w_0}_h$.
We have that for every $x_n\le \lfloor \epsilon n\rfloor$:
\[
\mathbb{P}({N}^{w_0}_{h}> x_n) \le\mathbb{P}( {\overline N}^{w_0}_{h}> x_n)
\]
where  ${\overline N}_{h_0}^{w_0}$ is  the total number of nodes over  a tree of  depth $h_0$, in which  the 
degree of all the nodes  follow law $\widehat{q}$. \end{lemma}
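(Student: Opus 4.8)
The statement is a stochastic‑domination bound, so the natural route is a coupling between the breadth‑first exploration that produces $\mathcal{N}^{w_0}_{h}$ in the configuration model and a truncated Galton–Watson tree $\overline{\mathcal{T}}$ of depth $h$ whose offspring law is $\widehat q$ (root included). The plan is to run the two processes level by level and to show that the GW tree dominates the explored neighbourhood hop by hop — in particular that $\{N^{w_0}_{h}>x_n\}\subseteq\{\overline N^{w_0}_{h}>x_n\}$ — for as long as at most $\lfloor\epsilon n\rfloor$ vertices have been uncovered, which is exactly the regime that matters since we only need the inequality for $x_n\le\lfloor\epsilon n\rfloor$.

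First I would fix the exploration rule: keep a queue of uncovered vertices; to explore $v$, reveal its out‑stubs one at a time and pair each with a uniformly chosen still‑unmatched in‑stub, declaring the owner of that in‑stub a newly discovered vertex unless it has already been seen. The elementary but crucial remark is that exploring $v$ creates at most $\kappa_v$ new vertices, regardless of self‑loops, multi‑edges, or stubs returning into the already‑revealed set $R$; hence it suffices to couple so that the GW vertex associated with $v$ has at least $\kappa_v$ children. For the root, $\kappa_{w_0}$ has the uniformly‑picked‑vertex out‑degree law $p$; since by construction the c.d.f.\ of $\dot q$ lies pointwise below those of $p$ and of $q$, one has $p\preceq\dot q\preceq\widehat q$ in the usual stochastic order, and a monotone coupling gives $\kappa_{w_0}\le$ (root degree of $\overline{\mathcal{T}}$).

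The heart of the argument — and the step I expect to be the main obstacle — is the analogous domination for a non‑root vertex $v$ reached by following a fresh in‑stub: conditionally on the exploration history, the law of $\kappa_v$ is the in‑degree–size‑biased out‑degree law (which equals $q$ unconditionally) reweighted so as to exclude the vertices already in $R$, and one must show it is still stochastically dominated by $\widehat q$ whenever $|R|<\epsilon n$. This is precisely where $\epsilon$ and $k_0$ are tailored to one another. The key point is that the fewer than $\epsilon n$ \emph{distinct} uncovered vertices carry only a controlled fraction of the total in‑degree $l$ of the graph, so excluding them distorts the size‑biased law only mildly; shifting $\dot q$ to the right by $k_0=\min\{k:\sum_{j\le k}\dot q_j>\epsilon\}$ yields a law whose tail dominates the distorted one at every threshold. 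Concretely this reduces to the pointwise c.d.f.\ comparison $\sum_{k>t}\tfrac{\sum_{v\notin R:\kappa_v=k}d_v}{\sum_{v\notin R}d_v}\le\sum_{k>t-k_0}\dot q_k$, which I would obtain by bounding the numerator by $l\,\dot q(\{>t\})$ (using $q\preceq\dot q$), bounding the denominator from below by $l$ minus the in‑degree mass of $R$, and using $\sum_{j\le k_0}\dot q_j>\epsilon$ to absorb the resulting deficit. A monotone coupling then gives $\kappa_v\le$ (number of children of the corresponding GW vertex) at every step in which $|R|<\epsilon n$.

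Finally, running the two processes in lockstep and freezing the coupling at the first instant $N^{w_0}_{h}$ reaches $x_n$ — legitimate because $x_n\le\lfloor\epsilon n\rfloor$, so the domination has been in force throughout the relevant part of the exploration — one gets $\overline N^{w_0}_{h}\ge N^{w_0}_{h}$ on $\{N^{w_0}_{h}\le x_n\}$ and $\overline N^{w_0}_{h}>x_n$ on $\{N^{w_0}_{h}>x_n\}$. Taking probabilities yields the asserted inequality; the same coupling also transfers the bound to moments $\mathbb{E}[(N^{w_0}_{h})^{s}]$, which is how the lemma will be used downstream.
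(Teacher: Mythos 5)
Your overall strategy --- a breadth-first coupling with a Galton--Watson tree whose offspring law is the shifted distribution $\widehat q$, the shift $k_0$ being there to absorb the depletion caused by already-revealed vertices --- is the same idea the paper uses, and your treatment of the root (monotone coupling via $p\preceq\dot q\preceq\widehat q$) and your remark that collisions only reduce the number of newly discovered vertices are both fine. The gap is precisely in the step you yourself flag as the main obstacle, and the reduction you sketch does not close it. You need, for every threshold $t$ and every admissible history,
\[
\sum_{k>t}\frac{\sum_{v\notin R:\,\kappa_v=k}d_v}{\sum_{v\notin R}d_v}\;\le\;\widehat q(\{>t\})=\dot q(\{>t\})+\dot q(\{t-k_0<j\le t\}),
\]
and you propose to obtain it from the numerator bound $l\,\dot q(\{>t\})$ and the denominator bound $l-\sum_{v\in R}d_v$. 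Rearranging, this would require
\[
\dot q(\{>t\})\,\frac{\sum_{v\in R}d_v}{l-\sum_{v\in R}d_v}\;\le\;\dot q(\{t-k_0<j\le t\})
\]
for \emph{all} $t$. The right-hand side is a window mass that can be arbitrarily small, indeed zero, for $t$ above the bulk of $\dot q$ (take $\dot q$ supported on $\{1,K\}$ with $\dot q_1>\epsilon$, so that $k_0=1$, and $t=K-1$), while the left-hand side is strictly positive as soon as $R\neq\emptyset$ and $\dot q(\{>t\})>0$: a multiplicative inflation of the tail cannot be absorbed by a fixed additive shift uniformly in $t$, and the condition $\sum_{j\le k_0}\dot q_j>\epsilon$ only controls the comparison at the bottom of the support. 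A second, independent problem is the denominator: $|R|<\epsilon n$ does not control the in-degree mass $\sum_{v\in R}d_v$ when in-degrees are unbounded --- exactly the regime of Assumption \ref{ass1} --- and $R$ is built by size-biased sampling, so it is tilted toward high in-degree vertices and $l-\sum_{v\in R}d_v$ need not stay bounded away from zero.

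For comparison, the paper's proof splits the coupling into two stages, which separates the two difficulties you merge into a single step. It first dominates $\mathcal{N}^{w_0}_h$ by a tree obtained by pairing out-stubs with in-stubs sampled \emph{without replacement from all $l$ stubs}, re-attaching every colliding edge to a fresh replica node: this disposes of the exclude-the-set-$R$ issue entirely (only individual matched stubs are ever removed, never a whole vertex's in-stubs at once) at the harmless price of adding nodes. Only then does it compare without-replacement sampling with i.i.d.\ sampling from $\widehat q$, by indexing stubs in increasing order of the out-degree of their owner and asserting $\delta(i+k)\le\delta'(i)$ for $k$ up to the number of extractions performed. If you want to repair your argument you should adopt this decomposition; be aware, however, that the sorted-index inequality in the paper's second stage is the exact analogue of your c.d.f.\ comparison and requires the same kind of window-mass control over the whole support of $\dot q$, not just near its bottom, so this is the point where any complete proof must do real work.
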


% \textcolor{magenta}{Implicitamente  che $\hat{q}$ maggiora 
% stocasticamente $p$, altrimenti dobbiamo definire $\min(p,q)$
% (cioe' una distribuzione che 
% maggiora $p$ e $q$ e definire $\hat{q}$ partendo da $\min(p,q)$}

\begin{proof}
First note that, as long as $k_0$ is bounded, the order of magnitude of the 
moments of $\dot{q}$ and $\widehat{q}$ is the same.
We prove  the assertion through  coupling arguments. First we show that $N_h^{w_0}\le _{st} \widehat{N}_h^{w_0}$
where   $ \widehat{N}_h^{w_0}$ is the number of nodes in a tree in which the root has a degree distributed as $p$ and the degree of the other nodes are obtained by extractions without  repetitions from an empirical distribution matching $q $. Then we show that  
$\mathbb{P}(\widehat{N}^{w_0}_{h_0}> x_n) \le\mathbb{P}( {\overline N}^{w_0}_{h_0}> x_n)$ under the assumption that 
 $x_n\le \lfloor \epsilon n\rfloor$.  
 
 To show  that  $N_h^{w_0}\le _{st} \widehat{N}_h^{w_0}$, we start  performing a breadth-first exploration of $\mathcal{N}_h$
 and we denote with  $\widetilde{\mathcal{T}}_h$  the induced  spanning tree of $\mathcal{N}_h$.  Note that this spanning tree is obtained as result of the exploration of the  edges  
 $\{M_{i,j}\}_{i,j}$ of $\mathcal{N}_t$ (we use a double index, where $i$  represents the level, i.e., the distance from the root of the tree, 
 while $j$ is an ordinal number induced by the breadth-first exploration among edges of the same level)
 by retaining only those $M_{i,j}$ satisfying the following condition:   $\{ \nu(M_{i, j})\neq w_0 ,\nu(M_{i,j})\neq \nu(M_{k,q}),\forall (k,q) \text { with }  k < i   \text{ or }  k=i \text{ and  } q<j\}$.   Let $\mathcal{M}$ be the set of  explored
 edges at the end of the exploration process and $\widetilde{\mathcal{M}}$ be the corresponding  set of retained edges.  
 
 Now, let's consider  a tree  $\widehat{\mathcal{T}}_h$
 which is obtained from $\widetilde{\mathcal{T}}_h$  by adding to it  edges and nodes according to the following
 procedure.   First we \lq\lq re-add"  all edges in $\mathcal{M} \setminus \widetilde{\mathcal{M}}$ and we append to  each of them 
 a new node with an output degree equal to  $\delta(\nu(M_{i,j}))$. Then we append to
 every new added node whose distance from the  root is less that $h$ a number of edges $\widetilde{M}_{i,j}$ equal to the degree of the node, which are extracted uniformly without repetition from  $\mathcal{L}\setminus \mathcal{M}$. A node with degree $\delta(\nu(\widetilde{M}_{i,j}))$ is then appended to every newly introduced edge. The procedure is iterated until  level $h$ is reached.  By construction $ N_h^{w_0}=|\mathcal{N}_t| =| \widetilde{\mathcal{T}}_h| \le 
  |\widehat{\mathcal{T}}_h|= \widehat{N}_h^{w_0}$. Furthermore  $\widehat{\mathcal{T}}_h$ has by construction the desired structure,  so we have proved that $N_h^{w_0}\le _{st} \widehat{N}_h^{w_0}$.
  
Now through coupling we  show that,  conditionally over the event $|{\mathcal{T}}_h |\le \lfloor \epsilon n\rfloor $, we have 
$ |\widehat {\mathcal{T}}_h |\le_{st} |\mathcal{T}_h|$.  We couple the process of generation of the two trees in the following 
way: \textcolor{black}{first we assign to the root of $\widehat {\mathcal{T}}_h $ a degree extracted from $p$, and to the 
the root of $\mathcal{T}_h$ a degree extracted from $\widehat{q}$. Since $p\le_{st}\widehat{q}$  we can couple the two extractions so as to guarantee that the latter is non smaller than the former.}
Then,  let $\{M_i\}_i$ be a sequence of links extracted without repetition from $\{1, \ldots,  l \}$; without loss of generality we assume that 
$M_i<M_j$ implies $\delta(\nu(M_i))\le \delta(\nu(M_j))$.  Similarly let ${L}_i$ be a sequence of i.i.d. random variables
taking value in $\{1, \ldots, l \}$. Without loss of generality we assume that 
$L_i<L_j$ implies $\delta'(\nu(L_i))\le \delta'(\nu(L_j))$. By construction $\delta(\mathcal{L})$ follows  the empirical distribution $q$, while 
$\delta'(\mathcal{L})$ follows the distribution $\widehat{q}$ and therefore $\delta(i+k)\le \delta'(i)$ for every $i\le l- k$ and
$k<k_0:=\lfloor \epsilon n\rfloor$.  

 We generate $ \widehat {\mathcal{T}}_h$ and $\mathcal{T}_h$ by coupling the sequences $\{L_i\}_i$ and $\{M_i\}_i$.  Let 
   $\mathcal{M}_i= \mathcal{L}\setminus \{M_j\}_{j< i}$  and $F(\mathcal{M}_i,k)$  a function that  for any $k\le l -i+1$ returns the $k$-th   ordered element in  $\mathcal{M}_i$.  Now we couple   $\{L_i\}_i$ and $\{M_i\}_i$ in the following way:
   if $L_i\le  l -i+1$ then   $M_i=F(\mathcal{M}_i,k)$, otherwise $M_i$ is chosen uniformly in $\mathcal{M}_i$.
   It is of immediate verification that for any $i \le  k_0 $  we have $\delta'(\nu(L_i))\ge \delta(\nu(M_i))$, from which it immediately  descends that $ |\widehat {\mathcal{T}}_h|\le |\mathcal{T}_h|$.

 \end{proof}

\begin{corollary}\label{cor:mom}
For any $h_{0}$ and  for any $s\ge 1$ we have:
\[
\mathbb{E}[({N}^{w_0}_{h_{0}})^s]=O( \mathbb{E}[({\overline N}^{w_0}_{h_{0}})^s])
\]
\end{corollary}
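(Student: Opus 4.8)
The plan is to use the layer--cake (tail--sum) representation of the $s$-th moment together with the trivial bound $N^{w_0}_{h_0}\le n$, valid because $N^{w_0}_{h_0}$ counts nodes of a subgraph of the $n$-vertex graph $\mathcal{G}$. Writing
$$\mathbb{E}\big[(N^{w_0}_{h_0})^s\big]=\int_0^{n} s\,x^{s-1}\,\mathbb{P}(N^{w_0}_{h_0}>x)\,\mathrm{d}x,$$
I would split the integration range at $\lfloor\epsilon n\rfloor$. On $[0,\lfloor\epsilon n\rfloor]$ Lemma \ref{bound_number_nodes_conf} gives $\mathbb{P}(N^{w_0}_{h_0}>x)\le\mathbb{P}(\overline N^{w_0}_{h_0}>x)$, so this piece is at most $\int_0^\infty s\,x^{s-1}\,\mathbb{P}(\overline N^{w_0}_{h_0}>x)\,\mathrm{d}x=\mathbb{E}[(\overline N^{w_0}_{h_0})^s]$, which is exactly the desired order. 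It then remains to show that the tail piece, bounded crudely by
$$\int_{\lfloor\epsilon n\rfloor}^{n} s\,x^{s-1}\,\mathbb{P}(N^{w_0}_{h_0}>x)\,\mathrm{d}x\ \le\ n^{s}\,\mathbb{P}\big(N^{w_0}_{h_0}>\lfloor\epsilon n\rfloor\big),$$
is itself $O(\mathbb{E}[(\overline N^{w_0}_{h_0})^s])$.

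For this last point I would not use the (merely conditional) comparison of Lemma \ref{bound_number_nodes_conf}, but rather the unconditional domination implicit in its proof: the breadth--first exploration of the $h_0$-neighbourhood in the configuration model is stochastically dominated by the Galton--Watson tree $N_{h_0}$ whose root offspring follows $p$ and whose remaining offspring follow $q$, since identifying repeated hits on the same graph node with distinct tree vertices can only enlarge the tree. Hence, for any integer $r\ge s$, Markov's inequality and Lemma \ref{lemma:powerlaw} yield
$$\mathbb{P}(N^{w_0}_{h_0}>\lfloor\epsilon n\rfloor)\ \le\ \mathbb{P}(N_{h_0}>\lfloor\epsilon n\rfloor)\ \le\ \frac{\mathbb{E}[N_{h_0}^{r}]}{\lfloor\epsilon n\rfloor^{r}}\ =\ O\!\left(\frac{\mu_r\,\mu_1^{\,r(h_0-1)}}{n^{r}}\right),$$
and under Assumption \ref{ass1} one has $\mu_r=O(n^{\zeta(r+1-\beta)})$ with $\zeta<1$, so choosing $r$ large enough makes the right--hand side $o(n^{-s})$. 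Therefore $n^{s}\,\mathbb{P}(N^{w_0}_{h_0}>\lfloor\epsilon n\rfloor)=o(1)=O(\mathbb{E}[(\overline N^{w_0}_{h_0})^s])$, because $\overline N^{w_0}_{h_0}\ge1$ a.s. gives $\mathbb{E}[(\overline N^{w_0}_{h_0})^s]\ge1$. Adding the two pieces gives the corollary.

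The main obstacle is exactly the regime $x\in(\epsilon n,n]$: the stochastic comparison produced by Lemma \ref{bound_number_nodes_conf} is only legitimate below $\lfloor\epsilon n\rfloor$, so one cannot simply integrate it over all of $[0,n]$, and one must separately establish a genuinely super--polynomial (indeed $o(n^{-s})$) bound on the probability that the relevant neighbourhood already occupies a constant fraction of the graph. This is the place where the power--law moment growth (Lemma \ref{lemma:powerlaw} under Assumption \ref{ass1}) is needed; in settings where it is unavailable one would instead invoke a deterministic bound such as $N^{w_0}_{h_0}\le 2k_{\max}^{h_0}$ whenever $k_{\max}^{h_0}=o(n)$, which makes that integral vanish outright. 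Everything else --- the layer--cake identity, the split at $\lfloor\epsilon n\rfloor$, and Markov's inequality --- is routine.
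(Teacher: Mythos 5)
Your decomposition (layer--cake, split at $\lfloor\epsilon n\rfloor$, Markov for the tail) is structurally the same as the paper's one--line sandwich argument, and your treatment of the piece on $[0,\lfloor\epsilon n\rfloor]$ is fine. The gap is in the tail piece. You bound $n^{s}\,\mathbb{P}(N^{w_0}_{h_0}>\lfloor\epsilon n\rfloor)$ by asserting an \emph{unconditional} stochastic domination $N^{w_0}_{h_0}\le_{\mathrm{st}}N_{h_0}$, where $N_{h_0}$ is the Galton--Watson tree with root law $p$ and offspring law $q$. The replica argument ("repeated hits become distinct tree vertices") only handles collisions; it does not handle the without--replacement bias in the degrees revealed along the exploration. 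In the configuration model, once some in-stubs have been matched, the out-degree of the node attached to the next uniformly chosen \emph{unmatched} in-stub is no longer distributed as $q$, and it need not be stochastically dominated by $q$ (if the consumed stubs happened to sit on low-out-degree nodes, the residual empirical law is tilted upward). This is exactly why Lemma \ref{bound_number_nodes_conf} introduces the shifted law $\widehat q$ (via the offset $k_0$) and only delivers a comparison valid for thresholds up to $\lfloor\epsilon n\rfloor$. So the inequality $\mathbb{P}(N^{w_0}_{h_0}>\lfloor\epsilon n\rfloor)\le\mathbb{P}(N_{h_0}>\lfloor\epsilon n\rfloor)$ is not established, and the subsequent appeal to Lemma \ref{lemma:powerlaw} and Assumption \ref{ass1} additionally restricts your proof to the power-law setting, whereas the corollary is stated (and reused, e.g.\ along the lines of Corollary \ref{THM:FINAL-CBM}) without that assumption.

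The repair is simpler than what you attempted, and it is what the paper does. The threshold $\lfloor\epsilon n\rfloor$ is still within the range of validity of Lemma \ref{bound_number_nodes_conf}, so
\[
n^{s}\,\mathbb{P}\bigl(N^{w_0}_{h_0}>\lfloor\epsilon n\rfloor\bigr)\le n^{s}\,\mathbb{P}\bigl(\overline N^{w_0}_{h_0}>\lfloor\epsilon n\rfloor\bigr)\le \Bigl(\tfrac{n}{\lfloor\epsilon n\rfloor}\Bigr)^{s}\,\mathbb{E}\bigl[(\overline N^{w_0}_{h_0})^{s}\bigr]=O\bigl(\epsilon^{-s}\,\mathbb{E}[(\overline N^{w_0}_{h_0})^{s}]\bigr),
\]
by Markov's inequality applied to $\overline N^{w_0}_{h_0}$ itself. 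This needs no power-law hypothesis, no unconditional GW domination, and gives a constant $1+\epsilon^{-s}$ that is uniform in $h_0$ (your route would make the constant, and even the validity, depend on $h_0$ through $\mu_1^{r(h_0-1)}$). The paper's proof encodes precisely this via the pointwise sandwich $N^{w_0}_{h_0}\le_{\mathrm{st}}\overline N^{w_0}_{h_0}\ind_{\{\overline N^{w_0}_{h_0}<\lfloor\epsilon n\rfloor\}}+n\,\ind_{\{\overline N^{w_0}_{h_0}\ge\lfloor\epsilon n\rfloor\}}$ followed by the same Markov step.
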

\begin{proof}
note that by construction:
%\[
% {N}^{w_0}_{h_0}\ind_{\{{N}^{w_0}_{h_0}< \lfloor \epsilon n\rfloor \}}+ \lfloor \epsilon n\rfloor \ind_{\{{N}^{w_0}_{h_0}< \lfloor \epsilon n\rfloor\}} 
% \le {N}^{w_0}_{h_0}\le {N}^{w_0}_{h_0}\ind_{\{{N}^{w_0}_{h_0}< \lfloor \epsilon n\rfloor \}}+ n\ind_{\{{N}^{w_0}_{h_0}\ge \lfloor \epsilon n\rfloor\}} 
%\] and similarly 

\[
 \overline{N}^{w_0}_{h_0}\ind_{\{\overline{N}^{w_0}_{h_0}< \lfloor \epsilon n\rfloor \}}+ \lfloor \epsilon n\rfloor \ind_{\{\overline{N}^{w_0}_{h_0} \geq \lfloor \epsilon n\rfloor\}} 
 \le \overline{N}^{w_0}_{h_0}\le \overline{N}^{w_0}_{h_0}\ind_{\{\overline{N}^{w_0}_{h_0}< \lfloor \epsilon n\rfloor \}}+ n\ind_{\{\overline{N}^{w_0}_{h_0}\ge  \lfloor \epsilon n\rfloor\}} 
\]
Moreover note that from Lemma \ref{bound_number_nodes_conf} we have:
\[
{N}^{w_0}_{h_0}\le_{st}  \overline{N}^{w_0}_{h_0}\ind_{\{\overline{N}^{w_0}_{h_0}< \lfloor \epsilon n\rfloor \}}+ n\ind_{\{\overline{N}^{w_0}_{h_0}\ge \lfloor \epsilon n\rfloor\}} 
\]
The assertion follows  immediately.
\end{proof}

%\textcolor{red}{Ci siamo dimenticato del problema dei due stadi.... qualcosa bisogna dire.  Basterebbe dire  che $q$ domina stocasticamente $p$;  temo pero' che %nel caso piu' generale possa non essere vero (di fatto se hai correlazione negativa fra gradi n ingresso e in uscita)  }
%\textcolor{magenta}{Se usassimo una notazione diversa al posto di q?}

Now we introduce a technical result that characterizes the moments of the total number of nodes 
generated in a GW process in which the offspring distribution
follows a generic law $\widehat{q}$. Such result will later on be used to
prove more specific results valid when $\widehat{q}$ either follows a truncated 
power law distribution (Corollary \ref{coro:powerlaw})
or it has all finite moments (Corollary \ref{coro:finitemom}).

\begin{lemma}\label{conjmic}
Let  $\{N_h\}_{h\geq0}$ be a supercritical GW process with power-law degree distribution $\widehat{q} = \{\widehat{q}_k\}_{k\geq0}$ (with $\sum_{k=0}^\infty k\widehat{q}_k>1$). Let $n_h$ be the number of nodes at depth $h$, and $N_h$ be the total number of nodes
generated up to generation $h$.  
These quantities are defined recursively as follows:
$$n_{h+1}=\sum_{i=1}^{n_h}D_i \qquad ; \qquad N_{h+1}= N_h + n_{h+1} $$ 
where $D_i$ are i.i.d. according to $\widehat{q}$, and we start with $n_0 = N_0 = 1$.
Let \mbox{$\widetilde{\mu}_j=\mathbb{E}[(D+1)^j]$}, with $D$ distributed according to $\widehat{q}$. 
We have 
\begin{align*}%\label{eq:mic_lemma}
\mathbb{E}[N_h^s] = O \left( \sum_{(k_1,...,k_s)\in\mathcal{K}_s}{\widetilde{\mu}}_{\textcolor{black}{k}} {\widetilde{\mu}}_1^{k_1} {\widetilde{\mu}}_2^{k_2} \ldots {\widetilde{\mu}}_s^{k_s}{\widetilde{\mu}}_1^{s(h-2)}\right).
\end{align*} 
where $k= \sum_{j=1}^sk_j$, and the summation is over the following set
$$\mathcal{K}_s=\{(k_1,\ldots,k_s):\sum_{j=1}^sjk_j=s\}.$$
\end{lemma}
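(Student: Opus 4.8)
The plan is to combine a one-step distributional recursion with a double induction. The first step is the \emph{subtree decomposition}: exploring the tree generation by generation, as in the statement, produces the same random object as attaching to the root $D\sim\widehat q$ independent depth-$h$ trees, so that
\[
N_{h+1}\;\stackrel{d}{=}\;1+\sum_{i=1}^{D}N_h^{(i)},
\]
with $N_h^{(1)},N_h^{(2)},\dots$ i.i.d. copies of $N_h$, independent of $D$. Raising to the power $s$, expanding $\bigl(1+\sum_i N_h^{(i)}\bigr)^{s}$ by the multinomial theorem, conditioning on $D$ and using independence yields a recursion of the form
\[
\mathbb E[N_{h+1}^{s}]\;=\;\sum_{r=0}^{s}\binom{s}{r}\sum_{\mathbf k\,\vdash\, r}\kappa_{\mathbf k}\;\mathbb E\bigl[D(D-1)\cdots(D-|\mathbf k|+1)\bigr]\prod_{j\ge1}\bigl(\mathbb E[N_h^{j}]\bigr)^{k_j},
\]
where $\kappa_{\mathbf k}$ are explicit combinatorial coefficients not depending on $\widehat q$, and $\mathbf k\vdash r$ runs over partitions of $r$, so $\sum_j jk_j=r$ and $|\mathbf k|:=\sum_j k_j$ is the number of parts. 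I would then bound the falling factorial by $\mathbb E[(D+1)^{|\mathbf k|}]=\widetilde\mu_{|\mathbf k|}$. The structural point is that the \emph{only} term involving the top moment $\mathbb E[N_h^{s}]$ is the ``carrying'' term $\mathbf k=(s)$ (one part of size $s$, coefficient $1$), equal to $\mathbb E[D]\cdot\mathbb E[N_h^{s}]$; every other term involves only moments $\mathbb E[N_h^{j}]$ with $j<s$, which are handled by the induction on $s$.

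Next I would collect elementary facts about the moments of $X:=D+1\ge1$: since $m\mapsto\log\mathbb E[X^{m}]$ is convex (Cauchy--Schwarz), the sequence $(\widetilde\mu_m)$ is non-decreasing, satisfies $\widetilde\mu_1\ge1$, and is log-convex, whence for any multiset $S$ of positive integers with prescribed cardinality and sum, $\prod_{i\in S}\widetilde\mu_i$ is maximised by the ``most spread'' multiset $\{\sum S-|S|+1,\,1,\dots,1\}$. Abbreviating by $\Phi^{(s)}:=\sum_{\mathbf k\in\mathcal K_s}\widetilde\mu_{|\mathbf k|}\prod_j\widetilde\mu_j^{k_j}$ the quantity in the statement, these give at once $\Phi^{(s)}\ge\widetilde\mu_s\widetilde\mu_1^{s}$ (the term with $s$ parts equal to $1$), and, by expanding the left-hand side into a sum of products of $\widetilde\mu$'s each running over a multiset of at most $2s+1$ positive integers of total size at most $3s$ and applying the spread inequality term by term, the estimate
\[
\widetilde\mu_{|\mathbf k|}\prod_{j}\bigl(\Phi^{(j)}\bigr)^{k_j}\;\le\;C_s\,\widetilde\mu_1^{\,s}\,\Phi^{(s)}\qquad\text{for every }\mathbf k\vdash r,\ r\le s,
\]
with $C_s$ depending only on $s$ (each such product is $\le\widetilde\mu_r\widetilde\mu_1^{\,2s}\le\widetilde\mu_s\widetilde\mu_1^{\,2s}\le\widetilde\mu_1^{\,s}\Phi^{(s)}$, and their number depends only on $s$).

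With these tools I would establish, by induction on $h$ (with constants $c_s$ fixed recursively in $s$: $c_1=1$ and $c_s$ chosen large relative to $C_s$ and to $\max_{j<s}c_j$), the bound $\mathbb E[N_h^{s}]\le c_s\,\Phi^{(s)}\,\widetilde\mu_1^{\,s(h-2)}$, which is exactly the assertion. The cases $h\le2$ are direct computations ($\mathbb E[N_0^{s}]=1$, $\mathbb E[N_1^{s}]=\widetilde\mu_s$, and $\mathbb E[N_2^{s}]$ a fixed polynomial in the $\widetilde\mu_j$), all absorbed using $\Phi^{(s)}\ge\widetilde\mu_s\widetilde\mu_1^{s}\ge\widetilde\mu_1^{2s}$. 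In the step $h\ge2\to h+1$, substituting the inductive hypothesis into the recursion, the carrying term contributes at most $c_s\Phi^{(s)}\widetilde\mu_1^{\,s(h-2)+1}$, which is at most half of the target $c_s\Phi^{(s)}\widetilde\mu_1^{\,s(h-1)}$ precisely because supercriticality forces $\widetilde\mu_1=\mathbb E[D]+1>2$, so $\widetilde\mu_1^{\,s-1}>2$; each remaining term, bounded via the inductive hypothesis for moments $j<s$ together with the estimate above, carries a net factor $\widetilde\mu_1^{-(s-r)(h-2)}\le1$ and thus contributes at most an $s$-dependent fraction of the target, and $c_s$ is chosen so that all these fractions sum to at most $1/2$. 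I expect the main obstacle to be exactly this last bookkeeping, namely checking that the cascade of error terms generated by unrolling the recursion reassembles into the partition sum $\Phi^{(s)}$ with a constant \emph{uniform in $h$} (and in the distribution, beyond the moments $\widetilde\mu_j$ already appearing in the bound); this is where supercriticality (a geometric gain of the target over the carrying term at each level) and the log-convexity ``spread'' inequality are both indispensable.
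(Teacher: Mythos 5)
Your proposal is correct, but it takes a genuinely different route from the paper's. The paper first shows, via a coupling/induction argument, that the total progeny $N_h$ is stochastically dominated by the $h$-th \emph{generation size} $Z_h$ of a modified branching process with offspring $D+1$; it then exploits the composition identity $\Phi_h=\Psi_{\widetilde q}\circ\Phi_{h-1}$ for probability generating functions and Fa\`a di Bruno's formula to get a recursion for the factorial moments $F_h^s=\Phi_h^{(s)}(1)$, closed by induction on $h$. Fa\`a di Bruno hands over the partition sum $\sum_{\mathbf k\in\mathcal K_s}\widetilde\mu_{|\mathbf k|}\prod_j\widetilde\mu_j^{k_j}$ of the statement directly. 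You instead work with the total progeny itself through the branching recursion $N_{h+1}\stackrel{d}{=}1+\sum_{i=1}^{D}N_h^{(i)}$ and a multinomial expansion, which yields essentially the same partition-indexed recursion but obliges you to (i) isolate the single carrying term $\mathbb{E}[D]\,\mathbb{E}[N_h^s]$ and beat it using supercriticality ($\widetilde\mu_1>2$), and (ii) reassemble the lower-order cross terms into $\Phi^{(s)}$ via the log-convexity ``spread'' inequality. I checked your key estimate $\widetilde\mu_{|\mathbf k|}\prod_j(\Phi^{(j)})^{k_j}\le C_s\,\widetilde\mu_1^{s}\Phi^{(s)}$: it does hold, because every monomial in the expansion is a product over a multiset $S$ of indices with $\sum S-|S|+1=r\le s$, so the spread inequality gives exactly the $\widetilde\mu_r\widetilde\mu_1^{2s}$ bound you claim. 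Your version is more elementary (no generating functions, no stochastic domination step) and makes explicit two things the paper leaves implicit, namely where supercriticality enters and why the constant is uniform in $h$ and in $\widehat q$; the price is the heavier double induction on $s$ and $h$ with the base cases $h\le 2$ handled separately. Both arguments are sound.
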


\begin{proof}
We first show through a coupling argument that the total number of nodes
$N_h$ is stochastically dominated by the number $Z_h$ of nodes
at depth $h$ in a modified branching process in which the outdegree 
is augmented by one.
Indeed, in such modified branching process we have, recursively:
$$ Z_{h+1}=\sum_{i=1}^{Z_h}(D_i + 1) $$ 
starting again from $Z_0 = 1$.
We prove by induction that $N_{h}\leq Z_{h}$. For $h = 0$ we have
$N_1 = 1 + D = Z_1$, and the assertion is verified.  
We prove now that, if the assertion is true for $h$, then it is 
true for $h+1$. By inductive hypothesis $N_h \leq Z_h$ and
$$
N_{h+1} = N_h + n_{h+1} = N_h + \sum_{i=1}^{n_h}D_i \leq 
N_h + \sum_{i=1}^{N_h}D_i \leq Z_h + \sum_{i=1}^{Z_h}D_i = Z_{h+1}  
$$ 
and thus the assertion is verified. Therefore we will analyze in the
following how the number $Z_h$ of nodes at depth $h$ in the modified
branching process depends on the moments 
$\widetilde{\mu}_j=\mathbb{E}[(D+1)^j]$ with $D\sim \widehat{q}$.

Denote by 
$\Phi_h(t)=\mathbb{E}[ t^{Z_h}]$ 
 the probability generating function of the random variable $Z_h$, and by $\Psi_{\widetilde{q}}$ the probability generating function 
of the modified offspring distribution (i.e., of random variable $D+1$). 
Then the following recursion holds:
$\Phi_0(t)=t$ and 
\begin{equation}\label{eq:rec_prob_gen_function_pl}
\Phi_h(t)=\Phi_{h-1}(\Psi_{\widetilde{q}}(t))=\Psi_{\widetilde{q}}(\Phi_{h-1}(t))\quad\forall h\geq1.
\end{equation}

Let $F_h^s=\mathbb{E}[Z_h(Z_h-1)\cdots (Z_h-s +1)]$ be the factorial moments of $Z_h$. 
$F_h^s$ is obtained by evaluating at $t = 1$ the $s$-th derivative of $\Phi_h(t)$, i.e., we have $F_h^s = \Phi^{(s)}_h(1)$. 

We thus have
$$ F_h^1 =\Psi^{(1)}_{\widetilde{q}}(\Phi_{h-1}(1))\Phi^{(1)}_{h-1}(1)=\widetilde{\mu}_1 F_{h-1}^1 $$
Since $F_1^1 =  \widetilde{\mu}_1$, we obtain by induction $ F_h^1 =  \widetilde{\mu}_1^h$, hence the result is true
for the first moment of $Z_h$.
To show that the result holds for a generic moment $s$, we first show that it is true for the
factorial moments of $Z_h$. 
First, notice that $F_1^{s} = O(\widetilde{\mu}_s)$, $\forall s \geq 1$.
We use the strong induction on $h$, showing that, if the relation holds for $F_{h-1}^s$,  it holds also for  $F_h^s$. 

To differentiate $s$-times $\Phi_{h}(t)$, we apply Fa\`{a} di Bruno formula \cite{Comtet} to \eqref{eq:rec_prob_gen_function_pl} \begin{align*}
\Phi^{(s)}_h(t)=\sum_{\overline k=(k_1,\ldots,k_s) \in\mathcal{K}_s}b_{\overline k}\Psi_{\widetilde{q}}^{(k_1+k_2\ldots+k_s)}(t)\prod_{j=1}^s\left(\Phi_{h-1}^{(j)}(t)\right)^{k_j}
\end{align*}
where $$b_{\overline k}=\frac{s!}{k_1!k_2!\cdots k_s!1!^{k_1}\cdots s!^{k_s}}$$  and the summation is over the following set
$$\mathcal{K}_s=\{(k_1,\ldots,k_s):\sum_{j=1}^sjk_j=s\}.$$

Neglecting factorial terms, which are just some constants for any finite $s$, and using
the inductive step $F_{h-1}^s = O({\widetilde{\mu}}_s \cdot {\widetilde{\mu}}_1^{s(h-2)})$,
we obtain
\begin{align}\label{eq:mic1}
&F_h^s = O \left( \sum_{(k_1,...,k_s)\in\mathcal{K}_s} {\widetilde{\mu}}_{k_1  + \ldots + k_s} \cdot ({\widetilde{\mu}}_1 {\widetilde{\mu}}_1^{h-2})^{k_1} \cdot ({\widetilde{\mu}}_2 {\widetilde{\mu}}_1^{2(h-2)})^{k_2}  
 \cdot \ldots \cdot ({\widetilde{\mu}}_s {\widetilde{\mu}}_1^{s(h-2)})^{k_s} \right) \nonumber\\
 &=O \left( \sum_{(k_1,...,k_s)\in\mathcal{K}_s}{\widetilde{\mu}}_{\textcolor{black}{k}} {\widetilde{\mu}}_1^{k_1} {\widetilde{\mu}}_2^{k_2} \ldots {\widetilde{\mu}}_s^{k_s}{\widetilde{\mu}}_1^{s(h-2)}\right).
\end{align} 
where $k= \sum_{j=1}^sk_j$.
At last we observe that an analogous result holds for the moments $\mathbb{E}[N_h^s]$, whose scaling order is the same as $F_h^s$.
\end{proof}

\begin{corollary}[\textbf{Proof of Lemma \ref{lemma:powerlaw}}]\label{coro:powerlaw}
Let $\{N_h\}_{h\geq0}$ be a supercritical GW process with  
power-law degree distribution $\widehat{q} = \{\widehat{q}_k\}_{k\geq0}$ of exponent $\beta > 1$, truncated
at $\widehat{k}_{\max} = \Theta(n^\zeta)$, $\zeta > 0$. 
We have:
$\mathbb{E}[N_h^s] = O(\widehat{\mu}_s \cdot \widehat{\mu}_1^{s(h-1)})$, $\forall \beta > 1$, 
where $\widehat{\mu}_j$ is the $j$-th moment of $q$. 
\end{corollary}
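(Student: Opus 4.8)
The plan is to obtain Corollary \ref{coro:powerlaw} by specialising the general moment bound of Lemma \ref{conjmic} to the truncated power law $\widehat q$ and collapsing the resulting sum over partitions of $s$ to its leading term $(k_1,\dots,k_s)=(s,0,\dots,0)$, which already contributes $\widetilde\mu_s\widetilde\mu_1^{\,s(h-1)}$. First I would record the comparison between the shifted moments $\widetilde\mu_j=\mathbb E[(D+1)^j]$ appearing in Lemma \ref{conjmic} and the moments $\widehat\mu_j$ of $\widehat q$: since $\widehat q$ is supercritical, $\widehat\mu_1>1$ and by Jensen $\widehat\mu_j\ge\widehat\mu_1^{\,j}>1$, hence $\widehat\mu_j\le\widetilde\mu_j\le 2^j\widehat\mu_j$, so $\widetilde\mu_j=\Theta(\widehat\mu_j)$ with constants depending only on the fixed integer $s$. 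Because $|\mathcal K_s|$ also depends only on $s$, it then suffices to bound each summand and show
\[
\widehat\mu_k\prod_{j=1}^{s}\widehat\mu_j^{\,k_j}=O\big(\widehat\mu_s\,\widehat\mu_1^{\,s}\big),\qquad k:=\textstyle\sum_j k_j,
\]
for every $(k_1,\dots,k_s)\in\mathcal K_s$; multiplying by the common factor $\widetilde\mu_1^{\,s(h-2)}=\Theta(\widehat\mu_1^{\,s(h-2)})$ then gives $\mathbb E[N_h^s]=O(\widehat\mu_s\,\widehat\mu_1^{\,s(h-1)})$.

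Next I would bring in the explicit asymptotics of the moments of a truncated power law: with $K=\widehat k_{\max}=\Theta(n^{\zeta})$ one has $\widehat\mu_j=\Theta(K^{\,e_j})$, $e_j:=(j+1-\beta)^+$, exactly as in \eqref{eq:mus1} (the logarithmic factor that occurs when $\beta=j+1$ touches only finitely many of $\widehat\mu_1,\dots,\widehat\mu_s$ and is absorbed, e.g.\ by an arbitrarily small perturbation of $\beta$). With this substitution the desired bound on each summand becomes the purely arithmetic inequality
\[
e_k+\sum_{j=1}^{s}k_j\,e_j\ \le\ e_s+s\,e_1\qquad\text{whenever }\textstyle\sum_j j k_j=s,
\]
where $e_j=(j-c)^+$ and $c:=\beta-1>0$. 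I would verify it by elementary case analysis. If $c<1$ (i.e.\ $1<\beta<2$) every part $j\ge1$ exceeds $c$, so $e_j=j-c$ and, after cancellation, the inequality reduces to $k(1-c)\le s(1-c)$, true since $k\le s$. If $c\ge1$ (i.e.\ $\beta\ge2$) then $e_1=0$; writing a partition of $s$ as parts $j_1,\dots,j_k\ge1$ one has $e_k+\sum_i e_{j_i}=(k-c)^++s-\sum_i\min(j_i,c)$, and I would use that (i) $\min(j_i,c)\ge1$ for each $i$ and (ii) $x\mapsto\min(x,c)$ is subadditive (concave, zero at $0$), so $\sum_i\min(j_i,c)\ge\max(k,\min(s,c))=\max(k,c)$ when $s>c$; since $(k-c)^+=\max(k,c)-c$ this yields $e_k+\sum_i e_{j_i}\le s-c=e_s$, and the case $s\le c$ is trivial (all the $e$'s vanish).

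I expect the one genuinely delicate step to be precisely this exponent inequality: the two cheap estimates available for abstract moment sequences — log-convexity ($\widehat\mu_a\widehat\mu_b\le\widehat\mu_{a+b}$) and monotonicity ($\widehat\mu_k\le\widehat\mu_s$) — are each far too lossy here (they would only give $\widehat\mu_k\widehat\mu_s$, not $\widehat\mu_s\widehat\mu_1^{\,s}$), so the argument must genuinely exploit that $e_j=(j+1-\beta)^+$ is piecewise affine with slopes $0$ and $1$; that is exactly what makes the trade-off ``$k$ large $\Rightarrow$ the partition consists of many small, hence $e$-cheap, parts'' quantitative. Everything else is bookkeeping — the $\widetilde\mu$-versus-$\widehat\mu$ comparison, the count $|\mathcal K_s|=O(1)$, and resubstitution into Lemma \ref{conjmic}; I would only flag that replacing $\widetilde\mu_1^{\,s(h-2)}$ by $\widehat\mu_1^{\,s(h-2)}$ costs an innocuous factor $2^{s(h-2)}$, harmless in the sole downstream use of the corollary (Theorem \ref{top_res_pl}, where $h=O(\log n)$ and a free constant in the exponent is optimised afterwards), and that the version of Lemma \ref{lemma:powerlaw} with root law $p$ follows at once via the stochastic domination of Lemma \ref{bound_number_nodes_conf} and Corollary \ref{cor:mom} together with $\widehat\mu_j=\Theta(\mu_j)$.
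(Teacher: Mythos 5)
Your proposal is correct and follows essentially the same route as the paper: specialize the moment bound of Lemma \ref{conjmic} to the truncated power law via $\widehat{\mu}_j=\Theta(n^{\zeta(j+1-\beta)^+})$ and show that the partition $k_1=s$ dominates the sum over $\mathcal{K}_s$. The only differences are presentational refinements in your favor: where the paper checks dominance by inspection in two regimes ($1<\beta<2$ with all moments diverging, and $\beta>j$ with the first $j-1$ moments finite), you prove the single exponent inequality $e_k+\sum_j k_j e_j\le e_s+s e_1$ uniformly for all $\beta>1$, and you make explicit the $\widetilde{\mu}$-versus-$\widehat{\mu}$ comparison and the $h$-dependence of the implied constants, both of which the paper leaves implicit.
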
   
\begin{proof}
Let's first consider the extreme case in which all moments $\widehat{\mu}_j$ of $\widehat{q}$
are infinite, including the first one, which happens for $1 < \beta < 2$.
From \equaref{mus1} we have ${\widetilde{\mu}}_j = \Theta(\widehat{\mu}_j) = \Theta(n^{\zeta (j+1-\beta)})$, 
$\forall j \geq 1$.
Plugging the above expression of  ${\widetilde{\mu}}_s$ into \eqref{eq:mic1}, we obtain:
\begin{align}\label{eq:mic2}
 & F_h^s = O \left( \sum_{(k_1,...,k_s)\in\mathcal{K}_s}  {\mu}_k n^{\zeta (k_1(1+1-\beta) + k_2(2+1-\beta) + \ldots k_s(s+1-\beta))} {\mu}_1^{s(h-2)}\right) \nonumber \\
&= O \left( {\mu}_1^{s(h-2)}  \sum_{(k_1,...,k_s)\in\mathcal{K}_s}  n^{\zeta [ s+ 1 - \beta + k (2-\beta)]}  \right) \nonumber \\
 &= O \left( \widehat{\mu}_s \widehat{\mu}_1^{s(h-2)}  \sum_{(k_1,...,k_s)\in\mathcal{K}_s}  n^{\zeta k (2-\beta)}  \right) \nonumber \\
 &= O \left( \widehat{\mu}_s \widehat{\mu}_1^{s(h-2)}   n^{\zeta s (2-\beta)}  \right) = 
O \left(  \widehat{\mu}_s  \widehat{\mu}_1^{s(h-2)}  \widehat{\mu}_1^s  \right) =  O \left(  \widehat{\mu}_s  \widehat{\mu}_1^{s(h-1)} \right)
\end{align}
where we have used the fact that, since we are assuming $\beta < 2$, the dominant term in the summation is the one associated 
to the largest possible value of $k$, obtained when $k_1 = s$, while all others $k_i = 0$, $i > 1$.

Let us now assume that all moments of the degree distribution are finite up to moment $j-1$, whereas
moments of order $j$ or higher are infinite. This happens when $\beta > j$.
%% In this case, we essentially need to show that  $F_h^s = O(\widetilde{\mu}_s)$, since $\widetilde{\mu}_1 = \Theta(1)$. 
Repeating the same passages as before, adding and subtracting the \lq missing' terms corresponding
to finite moments, we get:
\begin{align}\label{eq:mic3}
& F_h^s  = O \left(  \widehat{\mu}_s \widehat{\mu}_1^{s(h-2)}  \sum_{(k_1,...,k_s)\in\mathcal{K}_s}  n^{\zeta ((k-k_1)(2-\beta) - k_2(3-\beta) - \ldots - k_{j-1}(j-\beta))} \right) = \nonumber \\
& O \left(  \widehat{\mu}_s \widehat{\mu}_1^{s(h-2)} \right) = O \left( \widehat{\mu}_s \widehat{\mu}_1^{s(h-1)} \right)
\end{align}
where we have used the fact that, since $\beta > j$, the dominant term is obtained again by choosing
$k = k_1 = s$, while all others $k_i = 0$, $i > 1$. 
\end{proof}
\begin{remark}
In our application to the single-class configuration-model 
with (truncated) power law distribution (Section \ref{sec:asdconfig}),
we are only interested to the case $\beta > 2$ (so that the average degree is finite),
for which we could obtain the  stricter bound $F_h^s = O \left(\widehat{\mu}_s \widehat{\mu}_1^{s(h-2)} \right)$.
However, since we take $h = c \log{n}$,  we are not penalized  by using  looser bound stated in Corollary
\ref{coro:powerlaw}. 
\end{remark}    
\begin{remark}
To apply Corollary \ref{coro:powerlaw} to the single-class configuration-model 
with (truncated) power law distribution (Section \ref{sec:asdconfig}),
one should also consider the fact that the first generation of nodes
in the GW process follows law $p^{(n)}_{k}$, while
the following generations follow law $q^{(n)}_{k}$.
However, by Assumption \ref{ass1}, we have that $p^{(n)}_{k}$ and $q^{(n)}_{k}$
are both $O(k^{-\beta})$, hence $p^{(n)}_{k}$ and $q^{(n)}_{k}$ are 
both stochastically dominated by a power law distribution $\widehat{q}$
of exponent $\beta$, which allows us to apply Corollary \ref{coro:powerlaw}
and obtain a valid bound for our configuration-model.  In particular we can define   $\widehat{q}$ as follows:   
$\sum_{h=0} ^k \widehat{q}_h  =\min (\sum_{h=0} ^k p_h, \sum_{h=0} ^k q_h)$.
\end{remark}

\begin{corollary}\label{coro:finitemom}
Let  $\{N_h\}_{h\geq0}$ be a supercritical GW process with 
degree distribution $\widehat{q} = \{\widehat{q}_k\}_{k\geq0}$ (with $\sum_{k=0}^\infty k\widehat{q}_k>1$)
having all finite moments. 
Let $N_h$ be the total number of nodes
generated up to generation $h$.
We have $\mathbb{E}[N_h^s] = O(\widehat{\mu}_1^{s(h-2)})$, 
where $\widehat{\mu}_1$ is the first moment of $\widehat{q}$. 
\end{corollary}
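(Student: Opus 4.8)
The cleanest route is to read off Corollary~\ref{coro:finitemom} from Lemma~\ref{conjmic} and then slightly sharpen the base of the exponential. Lemma~\ref{conjmic} (whose proof uses only finiteness of the moments $\widetilde\mu_j=\mathbb{E}[(D+1)^j]$, $D\sim\widehat q$, not the power-law form) gives
\[
\mathbb{E}[N_h^s]=O\!\left(\sum_{(k_1,\dots,k_s)\in\mathcal{K}_s}\widetilde\mu_k\,\widetilde\mu_1^{k_1}\widetilde\mu_2^{k_2}\cdots\widetilde\mu_s^{k_s}\;\widetilde\mu_1^{s(h-2)}\right),\qquad k=\textstyle\sum_{j=1}^s k_j .
\]
When every moment of $\widehat q$ is finite, so is every $\widetilde\mu_j$, and none depends on $n$; since $\mathcal{K}_s$ is a fixed finite set (its size is the number of partitions of $s$) and $k\le s$, the bracketed prefactor is a constant $C_s<\infty$ depending only on $s$ and on finitely many moments of $\widehat q$. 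Hence $\mathbb{E}[N_h^s]=O(\widetilde\mu_1^{s(h-2)})$, which already proves the statement with $\widetilde\mu_1=\widehat\mu_1+1$ in place of $\widehat\mu_1$.

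To obtain the stated base $\widehat\mu_1$ one should avoid the ``$+1$''-inflated layer process used inside Lemma~\ref{conjmic} and argue instead directly from the exact branching self-similarity $N_{h+1}\stackrel{d}{=}1+\sum_{i=1}^{D_0}N_h^{(i)}$, where $D_0\sim\widehat q$ and the $N_h^{(i)}$ are i.i.d.\ copies of $N_h$ independent of $D_0$. Expanding $(1+Y)^s$ with $Y=\sum_{i=1}^{D_0}N_h^{(i)}$ and using the identity
\[
\mathbb{E}\Big[\big(\textstyle\sum_{i=1}^{D_0}N_h^{(i)}\big)^{\!r}\Big]=\sum_{\pi\vdash[r]}\mathbb{E}\big[D_0^{\downarrow|\pi|}\big]\prod_{B\in\pi}\mathbb{E}\big[N_h^{|B|}\big],
\]
one isolates the unique partition with $|\pi|=1$, which contributes $\widehat\mu_1\,\mathbb{E}[N_h^s]$; every other partition has all blocks of size $\le s-1$, and the terms with $r<s$ in the binomial expansion only involve moments of order $\le s-1$ of $N_h$. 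So, by a strong induction on $s$ (base case $\mathbb{E}[N_h]=(\widehat\mu_1^{h+1}-1)/(\widehat\mu_1-1)=O(\widehat\mu_1^h)$), all of these are $O(\widehat\mu_1^{sh})$, using $\widehat\mu_1>1$ to absorb powers $\widehat\mu_1^{lh}$ with $l\le s-1$ and the boundedness of $\widehat\mu_2,\dots,\widehat\mu_s$. This yields the one-step recursion $\mathbb{E}[N_{h+1}^s]\le\widehat\mu_1\,\mathbb{E}[N_h^s]+A_s\widehat\mu_1^{sh}$ with $A_s$ a constant, which unrolls (again via a geometric series that converges because $\widehat\mu_1>1$) to $\mathbb{E}[N_h^s]=O(\widehat\mu_1^{s(h-1)})=O(\widehat\mu_1^{s(h-2)})$, the last equality because $\widehat\mu_1^{s}$ is a constant.

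The only genuinely delicate point is bookkeeping the uniformity of the $O(\cdot)$ constants: they must not depend on $h$ (and, in the intended application, not on $n$ when $\widehat q$ is itself an $n$-dependent truncation with uniformly bounded moments and mean bounded away from $1$), which is precisely why supercriticality $\widehat\mu_1>1$ is used both to dominate the geometric summations and to collapse powers $\widehat\mu_1^{lh}$, $l<s$, into $\widehat\mu_1^{sh}$. I expect this --- verifying that the contribution of every non-trivial set partition is absorbed into $O(\widehat\mu_1^{sh})$ without an extra factor growing in $h$ --- to be the main obstacle; everything else is a routine induction.
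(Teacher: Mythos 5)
Your first paragraph is, essentially verbatim, the paper's own proof: the authors note that $\widetilde{\mu}_j=\Theta(\widehat{\mu}_j)=\Theta(1)$ for every $j$ and invoke Lemma \ref{conjmic}, so that the sum over the fixed finite set $\mathcal{K}_s$ contributes only a constant prefactor. Where you go beyond the paper is in observing that this direct application only yields $O(\widetilde{\mu}_1^{s(h-2)})$ with $\widetilde{\mu}_1=\widehat{\mu}_1+1$ as the base — Lemma \ref{conjmic} dominates $N_h$ by the $h$-th generation of an inflated process with offspring $D+1$, and $(\widehat{\mu}_1+1)^{s(h-2)}$ is not $O(\widehat{\mu}_1^{s(h-2)})$ as $h\to\infty$. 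Your patch is correct: the exact self-similarity $N_{h+1}\stackrel{d}{=}1+\sum_{i=1}^{D_0}N_h^{(i)}$, the set-partition moment formula for random sums (isolating the single-block partition, which contributes $\widehat{\mu}_1\,\mathbb{E}[N_h^s]$), and strong induction on $s$ give the one-step recursion $\mathbb{E}[N_{h+1}^s]\le\widehat{\mu}_1\,\mathbb{E}[N_h^s]+A_s\widehat{\mu}_1^{sh}$, which unrolls to $O(\widehat{\mu}_1^{s(h-1)})=O(\widehat{\mu}_1^{s(h-2)})$ precisely because $\widehat{\mu}_1>1$ lets the geometric sums and the lower-order powers $\widehat{\mu}_1^{rh}$, $r<s$, be absorbed. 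The paper glosses over this base discrepancy; it is harmless in the only place the corollary is used (the proof of Theorem \ref{top_res_sbm}), where $h_n=c\log n$ with $c$ a free small constant, so replacing $\widehat{\mu}_1$ by $\widehat{\mu}_1+1$ merely shrinks the admissible range of $c$. In short: same route as the paper for the main step, plus a legitimate sharpening that the paper asserts but does not actually carry out.
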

\begin{proof}
By assumption we have ${\widetilde{\mu}}_j = \Theta(\widehat{\mu}_j) = \Theta(1)$ for any $j \geq 1$.
As direct application of Lemma \ref{conjmic} we get 
\begin{align}\label{eq:mic4}
& \mathbb{E}[N_h^s] =  O \left(\widehat{\mu}_1^{s(h-2)} \right).
\end{align}
%Notice that we cannot conclude that $\mathbb{E}[N_h^s]$ is $\Theta(1)$ 
%because in our application we take $h = c \log{n}$.   
\end{proof}

%\begin{proposition}[Bound on the number of {\color{red} nodes/edges} in $\mathcal{T}_t$] \textcolor{red}{Mismatch dovremmo parlare di edges e non di nodi}
%\label{proposition:F_branching}
%Let us consider the continuous-time two-stage branching process $\mathcal{T}$ with  network statistics $p$ and the corresponding random 
%tree $\mathcal{T}_t$ obtained by truncating $\mathcal{T}$ at time $t$, and $\widetilde{W}_t$ be the number of nodes in $\mathcal{T}_t$.  If $h_n=c\log n$ for some $c>0$ and $t=o(h_n)$ as $n\rightarrow\infty$ then for any $s>0$
%we have 
%$$
%\mathsf{F}_{\widetilde{W}_t}(x_n)\leq \frac{\mathbb{E}[N_{h_n}^s]}{x_n^s}+o(1/n)\qquad n\rightarrow\infty
%$$
%where  $\{N_{h}\}_{h\in\N}$  is  (the number of nodes of) a truncated   GW process of maximal width $h$, in which the offspring distribution of the root follows law $p$, while the degree of remaining nodes follow law $q$.
%\end{proposition}
{\color{black}
\textbf{Proof of Theorem \ref{top_res_sbm}}

First observe that the number of edges $W^{b,a}_{t}$   in $\mathcal{T}_t$ between any pair of classes $(a, b)$ 
con be upper bounded by the total number of edges in $\mathcal{T}_t$, which itself can be 
bounded by the total number of nodes in  $\mathcal{T}_t$.

Recall that the number of edges between a node of community $a$ and nodes of community $j$
conforms to the empirical distribution $p^{\texttt{out}}_{a,j}[k]$.

Let $P^{\texttt{out}}_{a,j}$ be the cumulant of $p^{\texttt{out}}_{a,j}$,
and define $P^{\texttt{out}}_*[k]= \min_{a,j} P^{\texttt{out}}_{a,j}[k]$.
Then, let $p^{\texttt{out}}_*[k]$ be the distribution whose cumulant is $P^{\texttt{out}}_*[k]$.

% while  the out degree of nodes in community $i$ follows the empirical law
% $p^{(i)}_k$.}

%the binomial law $\mathrm{Bin}(n_j,\sigma_{ij})$.
The  outgoing degree  of a generic node  is then 
stochastically dominated by a r.v. distributed as $p^{\texttt{out}}_*$ which has, 
by construction, all finite moments. Indeed note that, since all $p^{\texttt{out}}_{a,j}$
have, by assumption, an exponential tail, $p^{\texttt{out}}_*$ has an exponential tail as well. 

% $\mathrm{Bin}(n,\max_{i,j}\{\sigma_{ij}\}) $. 
% {\color{purple} $p^*_k$.}\footnote{\textcolor{purple}{Given $p^{(i)}_k$ and the associated cumulants $P^{(i)}_k$
% we denote with $p^*_k$, the distribution  whose  cumulant $P_*^k=\min_{i} P_k^{(i)}$.}}
Therefore, we can bound the number of 
edges in $\mathcal{T}_t$  with those in  
a truncated GW tree $\overline{\mathcal{T}}_t$ in which the outgoing degree 
distribution of nodes is $p^{\texttt{out}}_*$.
%$\mathrm{Bin}(n,\max_{i,j}\{\sigma_{ij}\})$.

Then, setting  $x_{b,a} = x_n=n^{\frac{1}{2}-\gamma}$, for any $(a,b) \in \mathcal{A}\times\mathcal{A}$,
from \equaref{bound_topologico} and Lemma \ref{lemma:F_branching} 
we have, for any $s>1$:

%\begin{align}\label{eq:bound_topologico2}
%&&\|\mu_{\mathcal{T}_{t}}-\mu_{\mathcal{N}_{t}}\|_{\mathrm{TV}}
% \leq + \textcolor{black}{|\mathcal A|^2} \left[ S\, \frac{x_n(x_n+1)}{2 n \bar{d}} + 
% \frac{\mathbb{E}[\overline{N}_{h_n}^s]}{x_n^s}\right]+o(1/n)\quad n\rightarrow\infty
%\end{align}

\begin{align}\label{eq:bound_topologico2}
&&\|\mu_{\mathcal{T}_{t}}-\mu_{\mathcal{N}_{t}}\|_{\mathrm{TV}}
 \leq  S\, \frac{x_n(x_n+1)}{2 n \bar{d}} + \textcolor{black}{|\mathcal A|^2} 
 \frac{\mathbb{E}[\overline{N}_{h_n}^s]}{x_n^s} +o(1/n)\quad n\rightarrow\infty
\end{align}

where $\overline{N}_{h_n}^s$  is the number of nodes in a truncated  GW process 
of maximum depth $h_n$, in which the offspring distribution of every node is 
$p^{\texttt{out}}_*$.
%$\mathrm{Bin}(n,\max_{i,j}\{\sigma_{ij}\})$.
Moreover, we have introduced 
$S = \sum_{a,b\in\mathcal{A}} \sum_{d} d\,p^{\texttt{in}}_{a,b}[d]$
which is by assumption a finite constant since the number of classes is finite
and the average number of edges going into a node of class $b$ from nodes of class $a$ is finite.  

%  \textcolor{purple}{non avevamo detto che $p=q$ e che fosse in forma prodotto?}. 
% $ S = \sum_{a,b\in\mathcal{A}} \sum_{{\bf{d}},{\bf{k}}}d_bq^{b,a}_{{\bf{d}},{\bf{k}},a}$
% which is by assumption a finite constant \textcolor{purple}{non avevamo detto che $p=q$ e che fosse in forma prodotto?}. 

% which can be used 
%to construct a truncated GW process having a larger number of edges/nodes.
Now we can apply Corollary \ref{coro:finitemom} to bound $\mathbb{E}[\overline{N}_{h_n}^s] $, 
since all moments of
% $\mathrm{Bin}(n,\max_{i,j}\{\sigma_{ij}\})$ 
$p^{\texttt{out}}_*$ are finite. 
% Indeed $ p^*_k$ has an exponential tail given that all $ p_k^{(i)}$ have an exponential tail  by assumption.   

In particular, from Corollary \ref{coro:finitemom} we 
have $\mathbb{E}[\overline N_{h_n}^s]=\Theta(n^{{cs}\log\mu_1})$, therefore: 
\begin{align*}
\frac{\mathbb{E}[\overline N_{h_n}^s]}{x_n^s}=\Theta(n^{-s\left(\frac{1}{2}-\gamma-c\log\mu_1\right)})=o(1/n)\qquad n\rightarrow\infty
\end{align*}
and we can conclude  
\begin{align*}
\|\mu_{\mathcal{T}_{t}}-\mu_{\mathcal{N}_{t}}\|_{\mathrm{TV}}&\leq\frac{|{\mathcal A}|^2S}{2\overline{d}n^{2\gamma}}+o(1/n)=o(1)\qquad n\rightarrow\infty.
\end{align*}
\qed
} % color brown

\end{document}